\documentclass[oneside,english]{amsart}
\usepackage[T1]{fontenc}
\usepackage[latin9]{inputenc}
\usepackage{geometry}
\geometry{verbose,tmargin=3.00cm,bmargin=3.00cm,lmargin=2.75cm,rmargin=3.00cm,headheight=1cm,headsep=1.5cm,footskip=1.5cm}
\pagestyle{headings}
\setcounter{tocdepth}{2}
\setlength{\parskip}{\medskipamount}
\setlength{\parindent}{0pt}
\usepackage{verbatim}
\usepackage{amsthm}
\usepackage{amstext}
\usepackage{amssymb}
\usepackage{graphicx}
\usepackage{esint}
\usepackage{url}
\usepackage{yhmath}
\usepackage[all]{xy}
\usepackage{color}
\makeatletter
\numberwithin{equation}{section}
\numberwithin{figure}{section}
\theoremstyle{plain}
\newtheorem*{thm*}{\protect\theoremname}
\theoremstyle{plain}
\newtheorem{thm}{\protect\theoremname}[section]
\theoremstyle{plain}
\newtheorem{lem}[thm]{\protect\lemmaname}
\theoremstyle{remark}
\newtheorem{rem}[thm]{\protect\remarkname}
\theoremstyle{plain}
\newtheorem{prop}[thm]{\protect\propositionname}
\theoremstyle{plain}
\newtheorem{cor}[thm]{\protect\corollaryname}
\theoremstyle{plain}

\theoremstyle{plain}
\newtheorem{proc}[thm]{\protect\procname}

\usepackage{mathrsfs}
\usepackage{subfigure}

\AtBeginDocument{

}

\makeatother

\usepackage{babel}
\providecommand{\corollaryname}{Corollary}
\providecommand{\lemmaname}{Lemma}
\providecommand{\propositionname}{Proposition}
\providecommand{\remarkname}{Remark}
\providecommand{\theoremname}{Theorem}
\providecommand{\conjecturename}{Conjecture}
\providecommand{\procname}{Sampling procedure}

\begin{document}
\global\long\def\SLE{\mathrm{SLE}}
\global\long\def\SLEk{\mathrm{SLE}_{\kappa}}
\global\long\def\SLEkappa#1{\mathrm{SLE}_{#1}}
\global\long\def\SLEkapparho#1#2{\mathrm{SLE}_{#1}(#2)}
\global\long\def\SLEmeasure{\mathsf{P}}
\global\long\def\chordal{\oslash}

\global\long\def\PR{\mathsf{P}}
\global\long\def\EX{\mathsf{E}}

\global\long\def\sF{\mathcal{F}}
\global\long\def\sZ{\mathcal{Z}}
\global\long\def\sD{\mathcal{D}}
\global\long\def\sC{\mathcal{C}}
\global\long\def\sL{\mathcal{L}}
\global\long\def\sA{\mathcal{A}}
\global\long\def\sR{\mathcal{R}}
\global\long\def\sS{\mathcal{S}}
\global\long\def\sP{\mathcal{P}}

\global\long\def\bR{\mathbb{R}}
\global\long\def\bRpos{\mathbb{R}_{> 0}}
\global\long\def\bRnn{\mathbb{R}_{\geq 0}}
\global\long\def\bZ{\mathbb{Z}}
\global\long\def\bN{\mathbb{N}}
\global\long\def\bZpos{\mathbb{Z}_{> 0}}
\global\long\def\bZnn{\mathbb{Z}_{\geq 0}}
\global\long\def\bQ{\mathbb{Q}}
\global\long\def\bC{\mathbb{C}}

\global\long\def\Rsphere{\overline{\bC}}
\global\long\def\bD{\mathbb{D}}
\global\long\def\bH{\mathbb{H}}
\global\long\def\re{\Re\mathfrak{e}}
\global\long\def\im{\Im\mathfrak{m}}
\global\long\def\arg{\mathrm{arg}}
\global\long\def\ii{\mathfrak{i}}
\global\long\def\domain{\Lambda}
\global\long\def\bdrypt{\xi}
\global\long\def\bdryptb{\eta}
\global\long\def\bdry{\partial}
\global\long\def\cl#1{\overline{#1}}
\global\long\def\Mob{\mu}
\global\long\def\confmap{\phi}

\global\long\def\OO{\mathcal{O}}
\global\long\def\oo{\mathit{o}}

\global\long\def\ud{\mathrm{d}}
\global\long\def\der#1{\frac{\ud}{\ud#1}}
\global\long\def\pder#1{\frac{\partial}{\partial#1}}
\global\long\def\pdder#1{\frac{\partial^{2}}{\partial#1^{2}}}
\global\long\def\pddder#1{\frac{\partial^{3}}{\partial#1^{3}}}

\global\long\def\set#1{\left\{  #1\right\}  }
\global\long\def\setcond#1#2{\left\{  #1\;\big|\;#2\right\}  }

\global\long\def\sl{\mathfrak{sl}}
\global\long\def\Uqsltwo{\mathcal{U}_{q}(\mathfrak{sl}_{2})}
\global\long\def\Hcp{\Delta}
\global\long\def\qnum#1{\left[#1\right] }
\global\long\def\qfact#1{\left[#1\right]! }
\global\long\def\qbin#1#2{\left[\begin{array}{c}
	#1\\
	#2 
	\end{array}\right]}
\global\long\def\Wd{\mathsf{M}}
\global\long\def\HWsp{\mathsf{H}}
\global\long\def\Wbas{e}
\global\long\def\Tbas{\tau}
\global\long\def\MTbas{\theta}
\global\long\def\Sbas{s}
\global\long\def\TRbas{t}
\global\long\def\Bdryvec{\mathfrak{v}}

\global\long\def\Projection{\widehat{\mathfrak{p}}_{L,R}}
\global\long\def\Embedding{\mathfrak{I}_{L,R}}

\global\long\def\Arch{\mathrm{LP}}
\global\long\def\LP{\Arch}
\global\long\def\Conn{\mathrm{Conn}}
\global\long\def\arch#1#2{[\wideparen{#1,#2}]}
\global\long\def\link#1#2{\arch{#1}{#2}}
\global\long\def\nested{\boldsymbol{\underline{\Cap}}}
\global\long\def\unnested{\boldsymbol{\underline{\cap\cap}}}
\global\long\def\walks{\mathcal{W}}
\global\long\def\Catalan{\mathrm{C}}

\global\long\def\sciOp{\text{\Rightscissors}}
\global\long\def\tieOp{\wp}
\global\long\def\removeArch{/}
\global\long\def\removeLink{\removeArch}

\global\long\def\dmn{\mathrm{dim}}
\global\long\def\spn{\mathrm{span}}
\global\long\def\tens{\otimes}
\global\long\def\unitmat{\mathbb{I}}
\global\long\def\id{\mathrm{id}}
\global\long\def\isom{\cong}
\global\long\def\Kern{\mathrm{Ker}}

\global\long\def\chamber{\mathfrak{X}}
\global\long\def\FWint{\varphi}
\global\long\def\SurfFW{\mathfrak{L}^{\Supset}}
\global\long\def\PartF{\sZ}
\global\long\def\Sol{\sR}
\global\long\def\ConvSet{\sC}
\global\long\def\FKdual{\mathscr{L}}
\global\long\def\Quantumdual{\psi}

\global\long\def\Ampl{\zeta}
\global\long\def\Corr{\chi}
\global\long\def\Orders{\mathrm{VO}}
\global\long\def\SymmGrp{\mathfrak{S}}
\global\long\def\driving{D}

\global\long\def\Pf{\mathrm{Pf}}
\global\long\def\sgn{\mathrm{sgn}}
\global\long\def\dist{\mathrm{dist}}
\global\long\def\const{\mathrm{const.}}
\global\long\def\eps{\varepsilon}
\global\long\def\half{\frac{1}{2}}

\global\long\def\GFF{\mathrm{GFF}}
\global\long\def\Ising{\mathrm{Ising}}
\global\long\def\perco{\mathrm{perco}}


\author{K.~Kytölä and E.~Peltola}

\

\vspace{2.5cm}

\begin{center}
\LARGE \bf \scshape {Pure partition functions of multiple $\SLE$s}
\end{center}

\vspace{0.75cm}

\begin{center}
{\large \scshape Kalle Kyt\"ol\"a}\\
{\footnotesize{\tt kalle.kytola@aalto.fi}}\\
{\small{Department of Mathematics and Systems Analysis}}\\
{\small{P.O. Box 11100, FI-00076 Aalto University, Finland}}\bigskip{}
\\
{\large \scshape Eveliina Peltola}\\
{\footnotesize{\tt hanna.peltola@helsinki.fi}}\\
{\small{Department of Mathematics and Statistics}}\\
{\small{P.O. Box 68, FI-00014 University of Helsinki, Finland}}
\end{center}

\vspace{0.75cm}

\begin{center}
\begin{minipage}{0.85\textwidth} \footnotesize
{\scshape Abstract.}
Multiple Schramm-Loewner Evolutions (SLE) are conformally invariant
random processes of several curves, whose construction by growth processes
relies on partition functions --- M\"obius covariant solutions to
a system of second order partial differential equations. In this article,
we use a quantum group technique to construct a distinguished basis of solutions,
which conjecturally correspond to the extremal points of the
convex set of probability measures of multiple $\SLE$s.
\end{minipage}
\end{center}

\vspace{0.75cm}
\setcounter{tocdepth}{3}

\bigskip{}

\section{\label{sec: intro}Introduction}


In the 1980s, it was recognized that conformal symmetry should emerge in the
scaling limits of two-dimensional models of statistical
physics at critical points of continuous phase transitions
\cite{BPZ-infinite_conformal_symmetry_of_critical_fluctiations, Cardy-conformal_invariance_and_statistical_mechanics}.
This belief has provided an important motivation for remarkable developments in conformal
field theories (CFT) during the past three decades.
Especially in the past fifteen years, also progress in rigorously controlling scaling
limits of lattice models and showing their conformal invariance has been made.
Some such results concern correlations of local fields as in CFT
\cite{Kenyon-conformal_invariance_of_the_domino_tiling,
Hongler-conformal_invariance_of_Ising_model_correlations,
HS-energy_density, 
CHI-conformal_invariance_of_spin_correlations_in_the_planar_Ising_model}
or more general observables
\cite{Smirnov-critical_percolation,
CS-universality_in_2d_Ising, CI-holomorphic_spinor_observables_in_the_critical_Ising_model},
but the majority has been formulated in terms of random interfaces in the
models \cite{LSW-LERW_and_UST, CN-critical_percolation_exploration_path,
Zhan-scaling_limits_of_planar_LERW,
HK-Ising_interfaces_and_free_boundary_conditions,
CDHKS-convergence_of_Ising_interfaces_to_SLE, Izyurov-Smirnovs_observable_for_free_boundary_conditions,
Izyurov-critical_Ising_interfaces_in_multiply_connected_domains}.

The focus on interfaces and random geometry in general was largely inspired by
the seminal article \cite{Schramm-LERW_and_UST} of Schramm.
Schramm observed that if scaling limits of random interfaces between two marked boundary
points of simply connected domains satisfy two natural assumptions,
conformal invariance 
and domain Markov property, 
the limit must fall into a one-parameter family of random curves.
The parameter $\kappa>0$ is the most important characteristic of such interfaces,
and the corresponding curves are known as chordal $\SLEk$,
an abbreviation for Schramm-Loewner Evolution. This classification result is
sometimes referred to as Schramm's principle.

The simplest setup for Schramm's principle is the chordal case described above:
a simply connected domain $\domain$ with a curve connecting two marked
boundary points $\bdrypt, \bdryptb \in \bdry \domain$.
This setup arises in models of statistical mechanics 
when opposite boundary conditions are imposed on two complementary
arcs $\overline{\bdrypt \bdryptb}$ and $\overline{\bdryptb \bdrypt}$ of the domain boundary $\bdry \domain$,
forcing the existence of an interface between $\bdrypt$ and $\bdryptb$.
Some variations of the chordal $\SLE$ setup are obtained if one considers one marked point
on the boundary and another in the bulk, which leads to radial $\SLE$ \cite{Schramm-LERW_and_UST,RS-basic_properties_of_SLE},
or three marked boundary points, which leads to dipolar $\SLE$ \cite{Zhan-PhD_thesis,BBH-dipolar_SLEs}.
For the purposes of statistical mechanics, one of the most natural
generalizations involves dividing the boundary to an even number $2N$ of arcs,
and imposing alternating boundary conditions of opposite type on the arcs.
This forces interfaces to start at $2N$ marked boundary points, and
such interfaces will connect the points pairwise without crossing. 
In this situation, one has $N$ random interfaces,
see Figure~\ref{fig: Ising connectivities N equals 3}.
Such generalizations of $\SLE$-type processes have been considered in
\cite{BBK-multiple_SLEs, Graham-multiple_SLEs, Dubedat-commutation},
and they are generally termed multiple $\SLE$s or $N$-$\SLE$s.
Results about convergence of lattice model interfaces to multiple $\SLE$s
have been obtained in
\cite{CS-universality_in_2d_Ising, Izyurov-PhD_thesis}.

\subsection*{Classification of multiple $\SLE$s by partition functions}

Multiple $\SLE$s should still satisfy conformal invariance
and domain Markov property. An important difference arises, however,
when a classification along the lines of Schramm's principle is attempted.
While any configuration of a simply connected domain with two marked
boundary points $(\domain; \bdrypt, \bdryptb)$ can be conformally mapped to any other
according to Riemann mapping theorem, the same no longer holds with configurations
$(\domain; \bdrypt_1 , \ldots, \bdrypt_{2N})$ of a domain with $2N$ marked boundary points for $N \geq 2$
--- such configurations have nontrivial conformal moduli.
The requirement of conformal invariance is therefore less restrictive, and 
one should expect to find a larger family of random processes of $N$ curves.
These processes form a convex set, as one can randomly select between 
given possibilities. A natural suggestion was put forward in \cite{BBK-multiple_SLEs}:
the extremal points of this convex set should be processes  
which have a deterministic pairwise connectivity of the $2N$ boundary points by the $N$ non-crossing 
curves in the domain. Such extremal processes were termed pure geometries of multiple $\SLE$s.
Pure geometries of multiple $\SLE$s are thus labeled by non-crossing connectivities,
or equivalently, planar pair partitions $\alpha$. For fixed $N$, the number of them is the
Catalan number $\Catalan_N = \frac{1}{N+1} \binom{2N}{N}$.

\begin{figure}
\includegraphics[width=2.5cm]{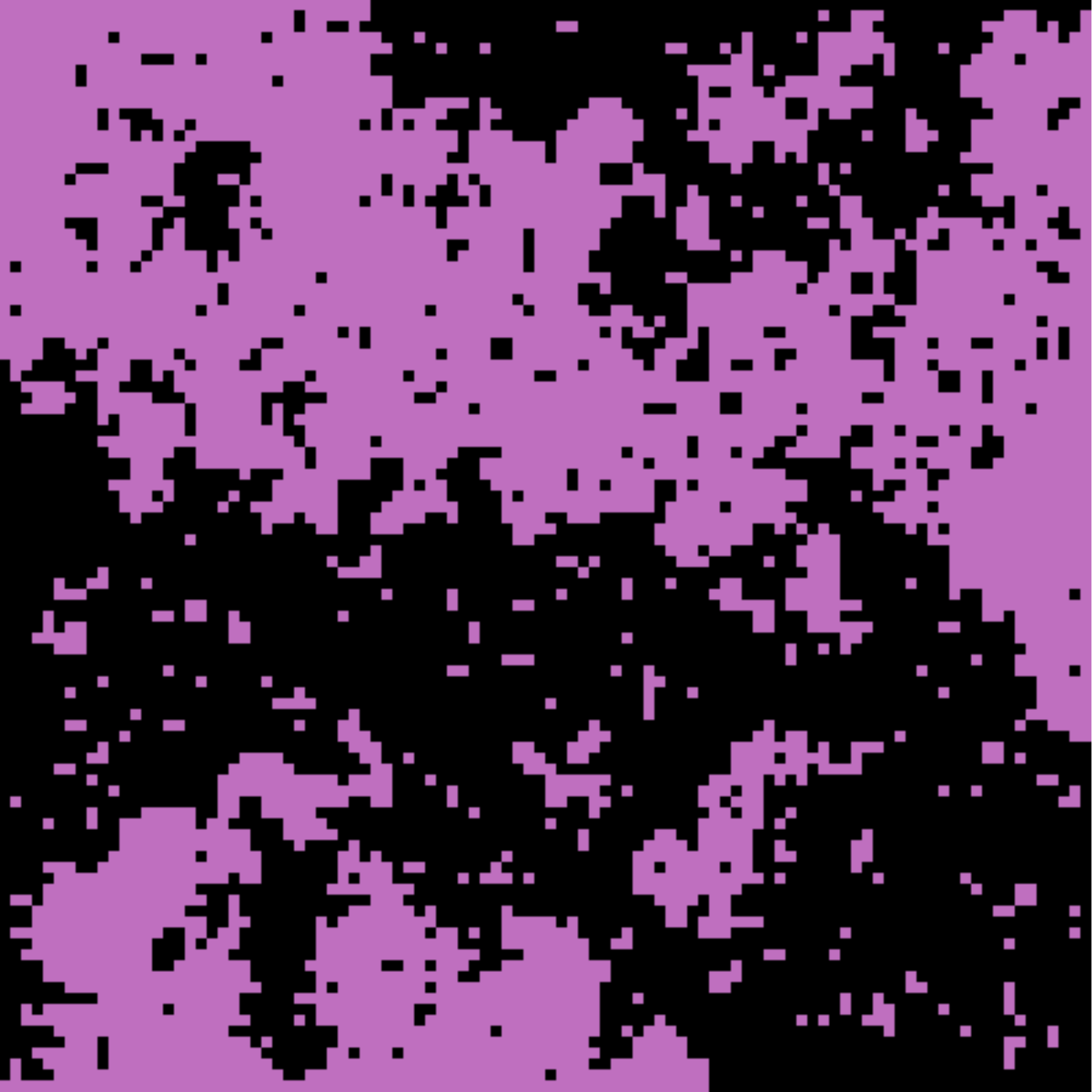} $\quad$
\includegraphics[width=2.5cm]{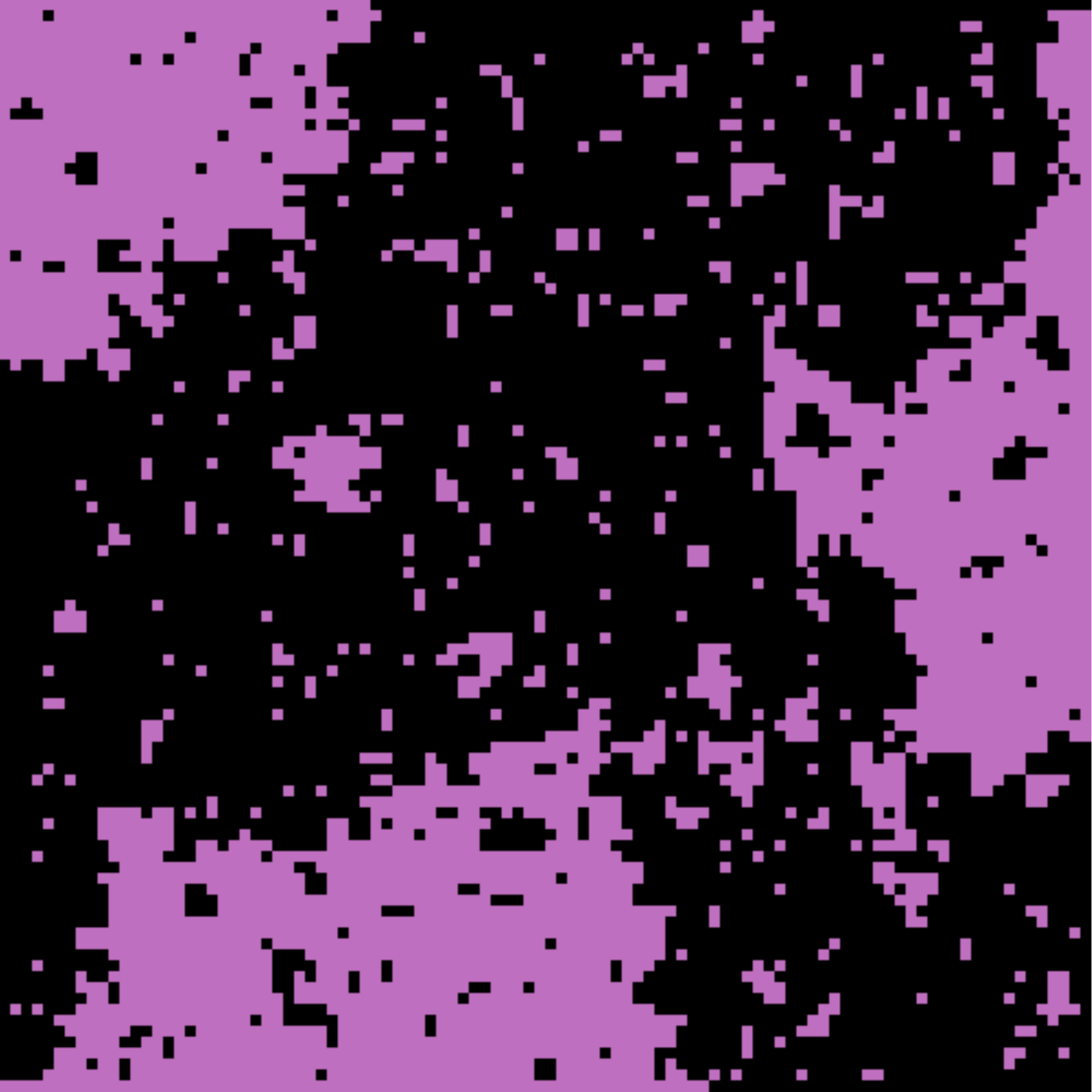} $\quad$
\includegraphics[width=2.5cm]{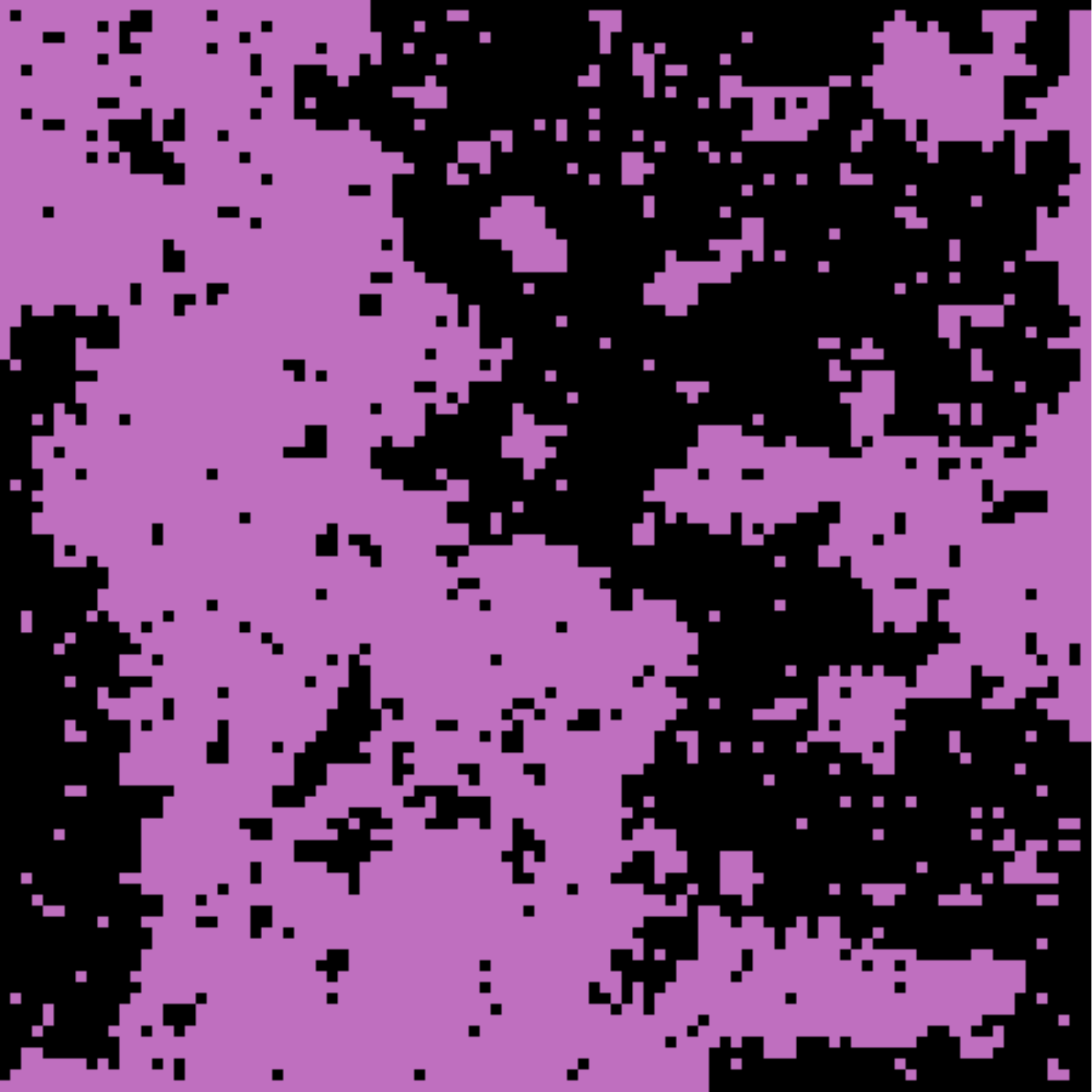} $\quad$
\includegraphics[width=2.5cm]{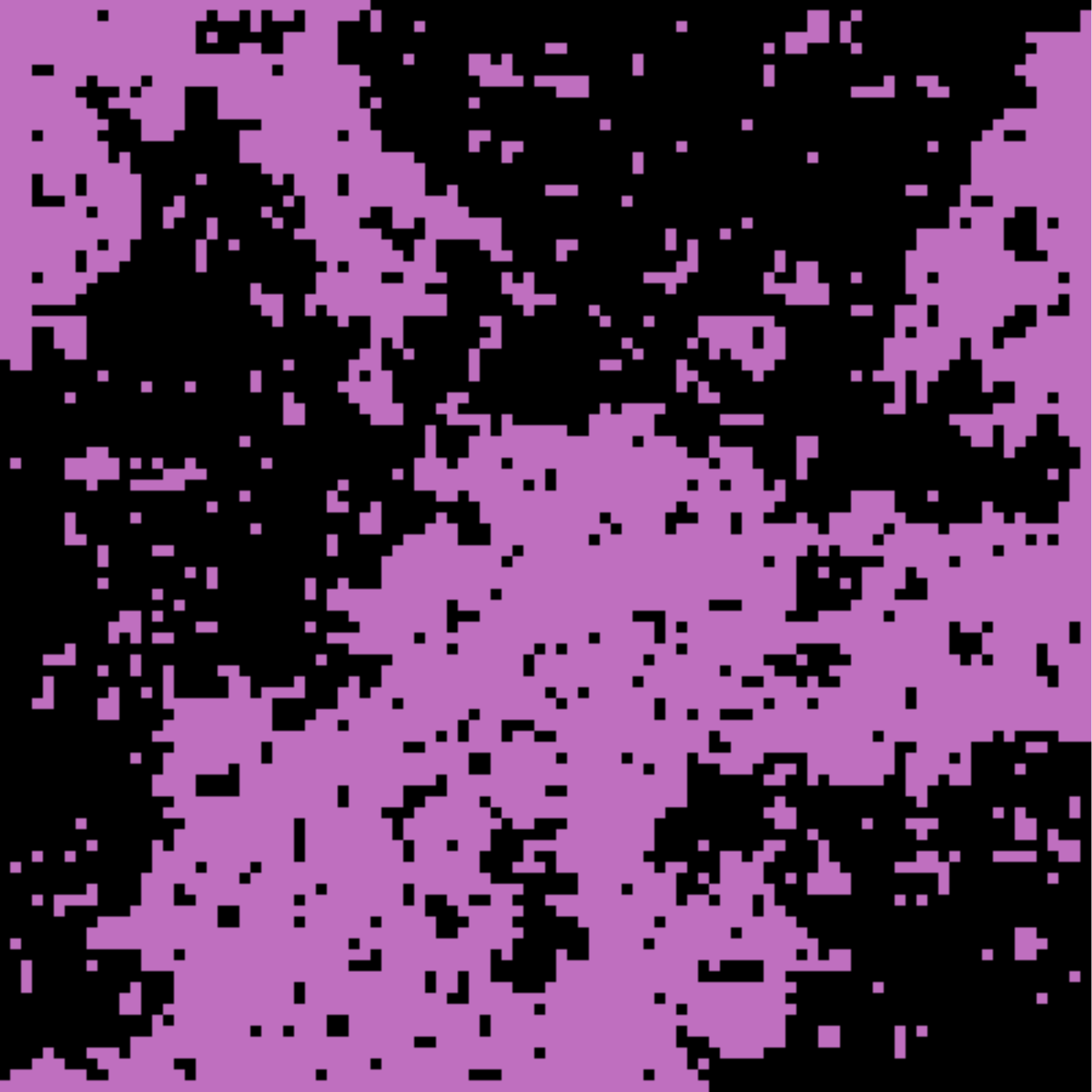} $\quad$
\includegraphics[width=2.5cm]{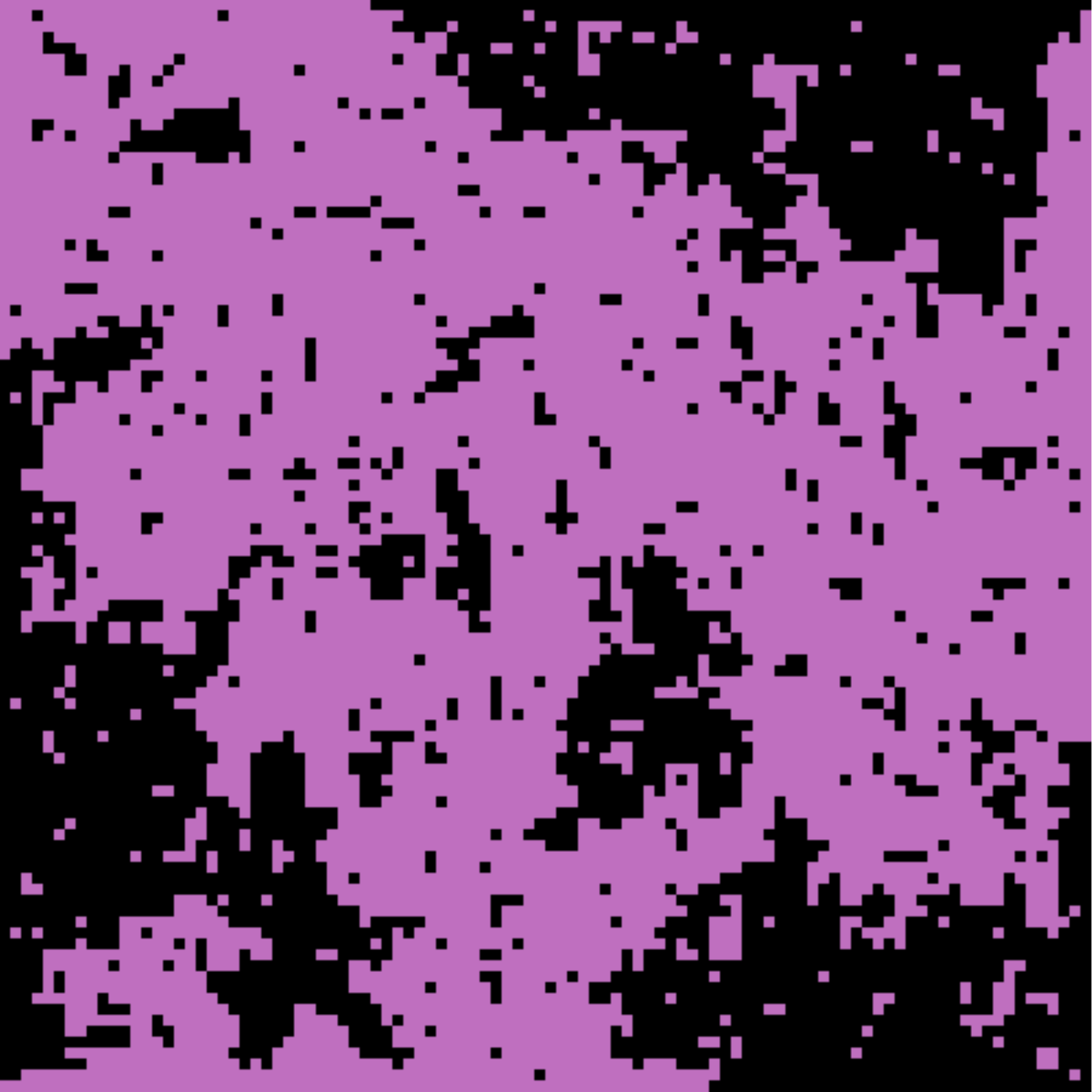} 
\\ \vspace{.5cm}
\includegraphics[width=2.5cm]{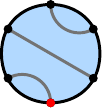} $\quad$
\includegraphics[width=2.5cm]{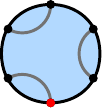} $\quad$
\includegraphics[width=2.5cm]{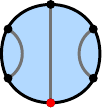} $\quad$
\includegraphics[width=2.5cm]{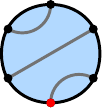} $\quad$
\includegraphics[width=2.5cm]{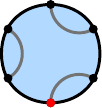} 
\caption{\label{fig: Ising connectivities N equals 3}
In statistical mechanics
models with alternating boundary conditions on $2N$ boundary segments,
interfaces form one of the $\Catalan_N$ possible planar connectivities.
The figure exemplifies the $N=3$ case with critical Ising model simulations
in a $100\times100$ square. Such critical Ising interfaces tend to multiple
$\SLEk$ with $\kappa=3$ in the scaling limit
\cite{Izyurov-PhD_thesis,Izyurov-critical_Ising_interfaces_in_multiply_connected_domains}.}
\end{figure}

This article pertains to an explicit
description and construction of the pure geometries of multiple $\SLE$s.
We follow the approach of
\cite{Dubedat-commutation, BBK-multiple_SLEs, Graham-multiple_SLEs, Dubedat-Euler_integrals},
in which local multiple $\SLE$s are constructed by growth processes of the curves starting from
the marked points $\bdrypt_1, \ldots, \bdrypt_{2N} \in \bdry \domain$.
The probabilistic details are postponed to Appendix~\ref{app: Local multiple SLEs},
where we give the precise definition of local multiple $\SLE$s, the ingredients of
their construction and classification, as well as relevant properties.
Crucially,
the construction of a local $N$-$\SLEk$ uses a
partition function, a function $\PartF$ defined on the chamber
\begin{align}
\chamber_{2N}=\; & \set{(x_{1},\ldots,x_{2N})\in\bR^{2N}\;\Big|\; x_{1}<\cdots<x_{2N}}.
\label{eq: chamber}
\end{align}
The partition function $\PartF$ is subject to a number of requirements.
First of all, it should be positive, $\PartF (x_1, \ldots, x_{2N}) > 0$,
since it is used in expressing the Radon-Nikodym derivatives of initial segments
of the local $N$-$\SLEk$ curves with respect to ordinary chordal $\SLE$s.
It has to satisfy a system of $2N$ linear partial differential equations (PDE)
of second order,
\begin{align}
 & \left[\frac{\kappa}{2}\pdder{x_{i}}+\sum_{j\neq i}\left(\frac{2}{x_{j}-x_{i}}\pder{x_{j}}-\frac{2h}{(x_{j}-x_{i})^{2}}\right)\right]\PartF(x_{1},\ldots,x_{2N})=0\qquad\text{for all }i=1,\ldots,2N,
\label{eq: multiple SLE PDEs}
\end{align}
where $h = \frac{6-\kappa}{2\kappa}$, to
guarantee a stochastic reparametrization invariance
\cite{Dubedat-commutation}.\footnote{These PDEs also arise in conformal field theory --- see, e.g.,
\cite{BBK-multiple_SLEs, FK-solution_space_for_a_system_of_null_state_PDEs_1}. The conformal weight
$h = h_{1,2}$ appears in the Kac table, and the PDEs are the null-field equations associated with the
degeneracy at level two of the boundary changing operators at the $2N$ marked boundary points.}
It has to transform covariantly (COV) under M\"obius transformations $\Mob(z) = \frac{az+b}{cz+d}$,
\begin{align}
\PartF(x_{1},\ldots,x_{2N})=\; & \prod_{i=1}^{2N}\Mob'(x_{i})^{h}\times\PartF(\Mob(x_{1}),\ldots,\Mob(x_{2N})) ,
\label{eq: multiple SLE Mobius covariance}
\end{align}
in order for the constructed process to be invariant under conformal self-maps
of the simply connected domain $\domain$.
The results of Appendix~\ref{app: Local multiple SLEs} state that local multiple $\SLE$s
are classified by such partition functions $\PartF$, and the convex structure of the
multiple $\SLE$s corresponds to the convex structure of the set of such functions.
\begin{thm*}[Theorem~\ref{thm: local multiple SLEs}] \
\begin{itemize}
\item Any partition function $\PartF$
can be used to construct a local multiple $\SLEk$,
and two  
functions $\PartF,\tilde{\PartF}$ give rise to the same local 
multiple $\SLEk$ if and only if they 
are constant multiples of each other.

\medskip

\item Any local multiple $\SLEk$ can be constructed from some partition function $\PartF$,
which is unique up to a multiplicative constant.

\medskip

\item For any $\,0\leq r\leq 1\,$, if
$\PartF=r\,\PartF_1+(1-r)\,\PartF_2$
is a convex combination of two
partition functions $\PartF_1$ and $\PartF_2$, 
then the local multiple $\SLEk$ probability measures 
associated to $\PartF$ are convex combinations of the probability measures
associated to $\PartF_1$ and $\PartF_2$,
with coefficients proportional to $r$ and $1-r$, and proportionality constants
depending on the conformal moduli of the domain with the marked points.
\end{itemize}
\end{thm*}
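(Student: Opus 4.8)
The plan is to unpack the growth-process construction of a local multiple $\SLEk$ and its correspondence with the partition function, along the lines of \cite{Dubedat-commutation, BBK-multiple_SLEs, Graham-multiple_SLEs}; the probabilistic details are carried out in Appendix~\ref{app: Local multiple SLEs}, so here I only indicate the structure of the argument.

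\textbf{The construction.} Fix a localization, i.e.\ neighbourhoods $U_i \ni \bdrypt_i$ in which the $i$-th curve is grown, and work in a standard representative $(\bH; x_1, \ldots, x_{2N})$. To a partition function $\PartF$ one associates the process obtained by running a Loewner evolution from $x_i$ and weighting the law of an initial segment of an ordinary chordal $\SLEk$ by
\begin{align*}
M^{(i)}_t \;=\; \frac{1}{\PartF(x_1,\ldots,x_{2N})}\;\prod_{j\neq i} g_t'(x_j)^{h}\;\PartF\bigl(g_t(x_1),\ldots, W_t,\ldots, g_t(x_{2N})\bigr) ,
\end{align*}
where $g_t$ is the Loewner map of the growing hull and $W_t$ the driving function, inserted in the $i$-th slot. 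By It\^o's formula, the $i$-th equation of~\eqref{eq: multiple SLE PDEs} is \emph{equivalent} to $M^{(i)}_t$ being a local martingale under chordal $\SLEk$; since $\PartF>0$, Girsanov's theorem makes the weighting legitimate, and under the tilted law the driving function acquires the drift $\kappa\,\partial_i\log\PartF$ evaluated along the flow. Growing the curves one at a time in some order produces a candidate law on the initial curve segments.

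\textbf{Consistency, covariance, and item~1.} The candidate must be shown independent of the order of growth; this is Dub\'edat's commutation relation, and it is exactly here that the \emph{full} system~\eqref{eq: multiple SLE PDEs} together with the covariance~\eqref{eq: multiple SLE Mobius covariance} enters — the commutator of the two infinitesimal growth generators annihilates $\PartF$ if and only if the remaining equations and the covariance hold. Covariance~\eqref{eq: multiple SLE Mobius covariance} also renders the construction independent of the chosen representative of $(\domain; \bdrypt_1, \ldots, \bdrypt_{2N})$. This yields a well-defined local multiple $\SLEk$, proving the first half of item~1. For uniqueness, the drift of the $i$-th driving function is $\kappa\,\partial_i\log\PartF$, so if $\PartF,\tilde{\PartF}$ produce the same local multiple $\SLEk$ then $\partial_i\log(\PartF/\tilde{\PartF})\equiv 0$ for all $i$; since $\chamber_{2N}$ is connected, $\PartF=\const\cdot\tilde{\PartF}$.

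\textbf{Item~2.} Conversely, given a local multiple $\SLEk$, one reads off $\PartF$ from the marginal law of an initial segment of a single curve: by the domain Markov property this marginal is absolutely continuous with respect to chordal $\SLEk$, and its density is a positive martingale, which by the Markov property has the form $M^{(i)}$ above for some function $\PartF$ (positive, being a density). The martingale property forces the $i$-th PDE; running over $i$ gives the whole system~\eqref{eq: multiple SLE PDEs}, conformal invariance gives~\eqref{eq: multiple SLE Mobius covariance}, smoothness of $\PartF$ follows from hypoellipticity (H\"ormander) of the second-order operators in~\eqref{eq: multiple SLE PDEs}, and consistency guarantees that the functions extracted from the different curves agree; uniqueness up to a constant is the statement just proved.

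\textbf{Item~3, and the main obstacle.} The construction expresses the law $\PR^{\PartF}$ on the $\sigma$-algebra of the grown segments as $\ud\PR^{\PartF} = \PartF(x_1,\ldots,x_{2N})^{-1}\,\Phi\;\ud\PR^{\mathrm{ref}}$, where $\PR^{\mathrm{ref}}$ is the reference law (independent chordal $\SLEk$'s) and $\Phi\geq 0$ is a functional of the curve segments depending on $\PartF$ \emph{only} through the numerator value $\PartF(\text{evolved marked points})$ and the conformal factors — in particular $\Phi$ is linear in $\PartF$. Hence for $\PartF=r\,\PartF_1+(1-r)\,\PartF_2$ the numerators add, and a telescoping computation yields
\begin{align*}
\PR^{\PartF} \;=\; \frac{r\,\PartF_1(x_1,\ldots,x_{2N})}{\PartF(x_1,\ldots,x_{2N})}\,\PR^{\PartF_1} \;+\; \frac{(1-r)\,\PartF_2(x_1,\ldots,x_{2N})}{\PartF(x_1,\ldots,x_{2N})}\,\PR^{\PartF_2} ,
\end{align*}
a convex combination whose coefficients are proportional to $r$ and $1-r$ with proportionality constants $\PartF_1(x_1,\ldots,x_{2N})$ and $\PartF_2(x_1,\ldots,x_{2N})$; in an arbitrary domain these are the values of $\PartF_1,\PartF_2$ at the image points, hence functions of the conformal moduli of $(\domain; \bdrypt_1, \ldots, \bdrypt_{2N})$, as claimed. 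The technical heart of the whole statement is the consistency step: rigorously establishing the commutation of the growth processes and the accompanying regularity, so that ``the local multiple $\SLEk$ associated to $\PartF$'' is a genuine object rather than an order-dependent recipe. Once that is secured — the content of Appendix~\ref{app: Local multiple SLEs} — items~1--3 follow from the explicit form of the Radon-Nikodym derivatives above.
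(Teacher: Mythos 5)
Your proposal follows essentially the same route as the paper's Appendix~\ref{app: Local multiple SLEs}: the Girsanov/martingale construction \eqref{eq: martingale} with Dub\'edat's commutation supplying order-independence and the converse direction, uniqueness read off from the drift $\kappa\,\partial_j\log\PartF$, and the convex-combination property from linearity of the weighting in $\PartF$ (the paper reduces to a single-curve marginal via (DMP) rather than telescoping, but this is the same computation). One small imprecision: the commutation condition requires only the PDEs~\eqref{eq: multiple SLE PDEs}, not the M\"obius covariance~\eqref{eq: multiple SLE Mobius covariance}, which enters separately to make the construction independent of the choice of half-plane representative.
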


\subsection*{Multiple $\SLE$ pure partition functions}

The above results form a general classification of local multiple $\SLE$s by
the solution space of
the system \eqref{eq: multiple SLE PDEs}~--~\eqref{eq: multiple SLE Mobius covariance}.
This solution space can be shown to be finite dimensional \cite{Dubedat-Euler_integrals},
and indeed, the correct dimension $\Catalan_N$ 
has been established in the articles
\cite{FK-solution_space_for_a_system_of_null_state_PDEs_1,
FK-solution_space_for_a_system_of_null_state_PDEs_2,
FK-solution_space_for_a_system_of_null_state_PDEs_3}
(for solutions with at most polynomial
growth rate at diagonals and infinity).

The task is to construct multiple $\SLE$ pure geometries, for each
possible connectivity of the $N$ curves.
The sequence of marked boundary points $\bdrypt_1, \ldots, \bdrypt_{2N}$
is from now on assumed to appear in a positive (counterclockwise) order
along the boundary $\bdry \domain$.
We encode the connectivities of the non-crossing curves by
planar pair partitions $\alpha$ of the indices $1, \ldots, 2N$ of the
marked points $\bdrypt_1, \ldots, \bdrypt_{2N}$. As has become standard in the literature,
we refer to the pairs as links and
the pair partitions as link patterns.
A link formed by the pair $\set{a,b}$ of indices will be denoted by
$\link{a}{b}$.
A link pattern will be denoted by
\begin{align*}
\alpha=\set{\link{a_1}{b_1},\ldots,\link{a_N}{b_N}}.
\end{align*}
The non-crossing condition, i.e., the planar property of the pair partition,
can be expressed as the requirement $(a_j-a_k)(b_j-b_k)(b_j-a_k)(a_j-b_k) > 0$
whenever $j \neq k$. The set of link patterns of $N$ links is
denoted by $\LP_N$, and we recall that the number of these is a Catalan number,
$\#\LP_N = \Catalan_N = \frac{1}{N+1} \binom{2N}{N}$.
By convention, we include
the empty link pattern $\emptyset \in \LP_0$ in the case $N=0$.
The set of link patterns of any possible size
is denoted by $\LP = \bigsqcup_{N \geq 0} \LP_N$, and
for $\alpha\in\LP_N$, we denote $|\alpha|=N.$

We seek pure geometries of multiple $\SLE$s corresponding
to each link pattern, so in view of the
theorem above, the task is to construct their
corresponding partition functions $(\PartF_\alpha)_{\alpha \in \LP}$.
Each $\PartF_\alpha$ must solve the system
\eqref{eq: multiple SLE PDEs}~--~\eqref{eq: multiple SLE Mobius covariance}.
This system, which is the same for all link patterns $\alpha$ of the same
number of links, 
is supplemented by boundary conditions which depend on $\alpha$.

The boundary conditions are stated in terms of removing links from the
link pattern. After removal of one link from 
$\alpha \in \LP_N$, the indices 
must be relabeled by $1,\ldots,2N-2$.
When $\link{j}{j+1} \in \alpha$,
we denote by $\alpha \removeLink \link{j}{j+1}$ the link pattern of
$N-1$ links where the link $\link{j}{j+1}$ is removed and indices greater
than $j+1$ are reduced by two, see Figure~\ref{fig: removing link}.

\begin{figure}
\includegraphics[scale=1.5]{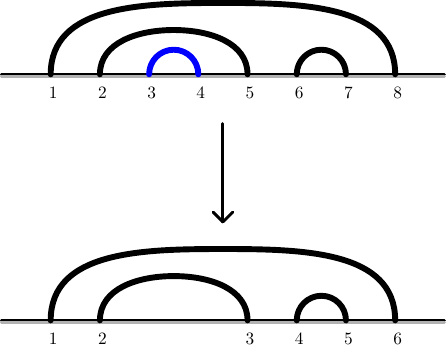}
\caption{\label{fig: removing link}
When removing the link $\link{3}{4}$ 
from the link pattern 
$\alpha=\set{\link{1}{8},\link{2}{5},\link{3}{4},\link{6}{7}}\in\LP_4$,
we obtain 
$\alpha\removeLink\link{3}{4}=\set{\link{1}{6},
\link{2}{3},\link{4}{5}}\in\LP_3$ after relabeling the remaining endpoints.}
\end{figure}

The boundary conditions are recursive in the number of links:
the conditions for $\PartF_\alpha$, $\alpha \in \LP_N$,
depend on the solutions $\PartF_{\hat{\alpha}}$, $\hat{\alpha} \in \LP_{N-1}$.
Specifically, we require of $\PartF_\alpha$ that
for all $j = 1,\ldots,2N-1$ and any $\xi \in (x_{j-1},x_{j+2})$, we have
\begin{align}
\lim_{x_{j},x_{j+1}\to\xi}\frac{\PartF_{\alpha}(x_{1},\ldots,x_{2N})}{(x_{j+1}-x_{j})^{-2h}}=\; & \begin{cases}
0\quad &
    \text{if } \link{j}{j+1} \notin \alpha\\
\PartF_{\alpha\removeLink\link{j}{j+1}}(x_{1},\ldots,x_{j-1},x_{j+2},\ldots,x_{2N}) &
    \text{if } \link{j}{j+1} \in \alpha .
\end{cases}  \label{eq: multiple SLE asymptotics}
\end{align}
These conditions were proposed in \cite{BBK-multiple_SLEs},
where pure geometries were introduced, and they were conjectured to be
sufficient to determine a solution to the system
\eqref{eq: multiple SLE PDEs}~--~\eqref{eq: multiple SLE Mobius covariance}.
The probabilistic meaning of these conditions is clarified in
Appendix~\ref{app: Local multiple SLEs}, Propositions~\ref{prop: local abs continuity wrt Bessel processes}
and \ref{prop: cascade relation}.
Solutions to the system
\eqref{eq: multiple SLE PDEs}~--~\eqref{eq: multiple SLE asymptotics}
will be called multiple $\SLE$ pure partition functions.
The collection $(\PartF_\alpha)_{\alpha \in \Arch}$ solving these conditions turns out to
be unique up to one multiplicative factor, and although positivity is neither required
nor proved here, we conjecture that the factor can be chosen so that all
$\PartF_\alpha$ are simultaneously positive.

The main result of this article is the explicit construction of
these M\"obius covariant solutions to the system of PDEs with boundary conditions.
The statement concerns 
the physically relevant parameter range $\kappa \leq 8$, and generic values $\kappa \notin \bQ$,
which avoids certain algebraic and analytic degeneracies.
\begin{thm*}[Theorem~\ref{thm: pure partition functions}]
For $\kappa \in (0,8) \setminus \bQ$, there exists a collection $\left(\PartF_{\alpha}\right)_{\alpha\in\LP}$ 
of functions 
\begin{align*}
\PartF_{\alpha}\;\colon\;\chamber_{2|\alpha|}\rightarrow\bC
\end{align*}
such that the system of 
equations~\eqref{eq: multiple SLE PDEs}~--~\eqref{eq: multiple SLE asymptotics} 
holds for all $\alpha\in\LP$.
%
For any $N\in\bZnn$, the collection 
$\left(\PartF_{\alpha}\right)_{\alpha\in\LP_N}$ is linearly independent 
and it spans a $\Catalan_N$-dimensional space of solutions to
\eqref{eq: multiple SLE PDEs}~--~\eqref{eq: multiple SLE Mobius covariance}.
\end{thm*}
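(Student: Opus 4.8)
The plan is to realize the pure partition functions $\PartF_\alpha$ as Coulomb-gas (Dotsenko--Fateev) multiple contour integrals, organized according to the representation theory of the quantum group $\Uqsltwo$ at $q = e^{\ii\pi\cdot 4/\kappa}$, which is generic precisely when $\kappa\notin\bQ$. First I would develop the algebraic side: let $\Wd$ be the two-dimensional type-$1$ $\Uqsltwo$-module (``quantum spin $\tfrac12$''), form $\Wd^{\tens 2N}$, and consider the subspace $\HWsp_{2N}$ of highest-weight vectors of weight zero. Standard $q$-deformed $\mathfrak{sl}_2$ representation theory (equivalently, $q$-Schur--Weyl duality with the Temperley--Lieb algebra) gives $\dim\HWsp_{2N} = \Catalan_N$ for generic $q$, together with a distinguished \emph{link pattern basis} $(\Bdryvec_\alpha)_{\alpha\in\LP_N}$ of $\HWsp_{2N}$ obtained by composing the quantum Clebsch--Gordan projectors $\Wd\tens\Wd\to\bC$ (the Temperley--Lieb cup-caps) according to the nesting structure of $\alpha$.

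Next I would attach to each weight-zero vector $v\in\Wd^{\tens 2N}$ a function $\PartF_v$ on $\chamber_{2N}$ via a ``spin chain / Coulomb gas'' dictionary: $\PartF_v$ is a sum of integrals whose integrand is $\prod_{i<j}(x_j-x_i)^{2/\kappa}$ times screening factors $\prod_{k=1}^{N}\prod_{i=1}^{2N}(u_k-x_i)^{-4/\kappa}\prod_{k<l}(u_k-u_l)^{8/\kappa}$, integrated over a configuration of $N$ integration contours (real segments, or their Pochhammer regularizations) whose endpoints among $x_1,\ldots,x_{2N}$ are dictated by the tensor/intertwiner structure of $v$. A direct differentiation --- the standard screening computation, using that the screening current has conformal weight $1$ --- shows that every such $\PartF_v$ solves the PDE system \eqref{eq: multiple SLE PDEs}, and the same weight bookkeeping makes $\prod_k\ud u_k$ times the integrand a M\"obius-covariant density, yielding \eqref{eq: multiple SLE Mobius covariance}. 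I would then set $\PartF_\alpha := \PartF_{\Bdryvec_\alpha}$, normalized recursively so that $\PartF_\emptyset\equiv 1$.

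The core of the argument is the asymptotics \eqref{eq: multiple SLE asymptotics}, and this is the step I expect to be the \emph{main obstacle}. When $x_j,x_{j+1}\to\xi$, the collision of the two weight-$h$ insertions produces --- in the ``identity fusion channel'' --- exactly the singular prefactor $(x_{j+1}-x_j)^{-2h}$, and this channel is selected by the quantum projector $\Wd\tens\Wd\to\bC$ acting on the $j$-th and $(j+1)$-st tensor slots. In the link pattern basis that projector returns a nonzero vector if and only if $\link{j}{j+1}\in\alpha$, in which case it removes that link and identifies the remaining $(2N-2)$ tensor factors with $\Bdryvec_{\alpha\removeLink\link{j}{j+1}}$; on the integral side, the $N$ screening contours split into the one trapped between $x_j$ and $x_{j+1}$ --- which, by a beta-integral / residue computation, produces precisely the divergent identity-channel factor --- and the remaining $N-1$ contours, which reassemble into the integral representation of $\PartF_{\alpha\removeLink\link{j}{j+1}}$. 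If $\link{j}{j+1}\notin\alpha$, the projector annihilates those slots, so the coefficient of $(x_{j+1}-x_j)^{-2h}$ vanishes, and using $\kappa<8$ to control the ordering of the remaining exponents (the subleading term is $(x_{j+1}-x_j)^{-2h+\delta}$ with $\delta>0$) the renormalized limit is $0$. Carrying out this matching of the algebraic ``remove a link'' operation with the analytic degeneration, term by term and with the correct numerical constants, is the technically demanding part; it is also where both hypotheses enter --- $\kappa\notin\bQ$ for the genericity of $\Uqsltwo$ (irreducibility of the relevant Verma modules, absence of resonances in the integrals and contour manipulations), and $\kappa\le 8$ for convergence of the screening integrals and the correct ordering of the asymptotic exponents.

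Finally, linear independence and the spanning statement. Every link pattern in $\LP_N$ has a ``short'' link $\link{j}{j+1}$ of two adjacent endpoints, and $\alpha\mapsto\alpha\removeLink\link{j}{j+1}$ is a bijection from $\{\alpha\in\LP_N:\link{j}{j+1}\in\alpha\}$ onto $\LP_{N-1}$. Hence if $\sum_{\alpha}c_\alpha\PartF_\alpha\equiv 0$ with $c_{\alpha_0}\neq 0$, picking a short link of $\alpha_0$ and applying the corresponding limit from \eqref{eq: multiple SLE asymptotics} yields a vanishing linear combination of the $\PartF_\beta$, $\beta\in\LP_{N-1}$; by induction on $N$ (base case $N=0$, where $\PartF_\emptyset\equiv 1$) all those coefficients vanish, in particular $c_{\alpha_0}=0$, a contradiction. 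Thus $(\PartF_\alpha)_{\alpha\in\LP_N}$ is a linearly independent family of $\Catalan_N$ solutions of \eqref{eq: multiple SLE PDEs}~--~\eqref{eq: multiple SLE Mobius covariance}; since that solution space is known to have dimension exactly $\Catalan_N$ \cite{FK-solution_space_for_a_system_of_null_state_PDEs_1,FK-solution_space_for_a_system_of_null_state_PDEs_2,FK-solution_space_for_a_system_of_null_state_PDEs_3}, the family is a basis. (Alternatively, one can show directly that $v\mapsto\PartF_v$ is injective on $\HWsp_{2N}$ and that $(\Bdryvec_\alpha)_{\alpha\in\LP_N}$ is a basis of $\HWsp_{2N}$.)
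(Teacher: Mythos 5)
Your overall architecture matches the paper's: both route the problem through the quantum group $\Uqsltwo$ at $q=e^{\ii 4\pi/\kappa}$ and a ``spin chain~--~Coulomb gas'' map $\sF$ from the weight-zero highest-weight space $\HWsp_1(\Wd_2^{\tens 2N})$ to functions on $\chamber_{2N}$, with (PDE), (COV) and the fusion asymptotics (ASY) holding for every vector in that space; the linear-independence argument via iterated collapse limits is also essentially the paper's (its dual functionals $\FKdual_\alpha$). The step you flag as the main obstacle --- the analytic asymptotics of the screening integrals --- is in fact imported wholesale from the companion paper and is not where the work lies here.

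The genuine gap is in the algebraic input: you take for the vectors $\Bdryvec_\alpha$ the standard Temperley--Lieb link-pattern (cup) states and assert that the singlet projector $\hat{\pi}_j$ on slots $j,j+1$ ``returns a nonzero vector if and only if $\link{j}{j+1}\in\alpha$.'' That is false for the cup basis. Already for $N=2$ and the nested pattern $\beta=\set{\link{1}{4},\link{2}{3}}$, the turn-back relation $(\hat{\pi}\tens\id)(\Wbas_l\tens\Sbas)=-\frac{1}{\qnum{2}}\Wbas_l$ gives $\hat{\pi}_1$ of the nested cup state equal to a \emph{nonzero} multiple of the single cup, even though $\link{1}{2}\notin\beta$; so the renormalized limit in \eqref{eq: multiple SLE asymptotics} would not vanish where it must. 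What the theorem actually requires is the basis \emph{dual} to the iterated cap functionals, i.e.\ vectors $v_\alpha$ with $\hat{\pi}_j(v_\alpha)=v_{\alpha\removeLink\link{j}{j+1}}$ when $\link{j}{j+1}\in\alpha$ \emph{and} $\hat{\pi}_j(v_\alpha)=0$ otherwise, and proving that such a collection exists is the core of the paper: an explicit closed formula for the rainbow patterns $\nested_N$ (the unique maximal elements), followed by a downward recursion
$v_\alpha=\qnum{2}\,(\id-\pi_j)(v_{\alpha^{\tieOp_j}})-\sum_{\beta}v_\beta$
along the partial order on $\LP_N$ defined via walks, with uniqueness coming from the fact that a weight-zero highest-weight vector annihilated by all $\hat{\pi}_j$ is zero. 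Your proposal assumes this construction away. (A smaller point: your closing claim that the family is a basis of the \emph{full} solution space needs the Flores--Kleban upper bound as an external input; the theorem as stated only asserts spanning of a $\Catalan_N$-dimensional space of solutions, which follows from injectivity of $\sF$ and $\dmn\HWsp_1=\Catalan_N$.)
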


Our solution is based on a systematic quantum group technique developed in
\cite{KP-conformally_covariant_boundary_correlation_functions_with_a_quantum_group}.
By the ``spin chain~--~Coulomb gas correspondence'' of that article,
we translate the problem of solving the system
\eqref{eq: multiple SLE PDEs}~--~\eqref{eq: multiple SLE asymptotics} 
to a linear problem in a $2^N$-dimensional representation of a quantum group,
the $q$-deformation of $\mathfrak{sl}_2$. With this translation,
we first exhibit an explicit formula for all maximally nested (rainbow) link
patterns, and then obtain the solutions for other link patterns
using a recursion on the partially ordered set of link patterns
with a given number $N$ of links.

\subsection*{Solutions to $\SLE$ boundary visit amplitudes}

Using our solution to the problem of multiple $\SLE$ pure partition functions,
we also derive the existence of solutions to another question to which 
the ``spin chain~--~Coulomb gas correspondence'' is applied --- the chordal $\SLE$
boundary visit amplitudes, treated in \cite{JJK-SLE_boundary_visits}.
These amplitudes are functions indexed by possible orders of visits,
$\Orders = \bigsqcup_{N' \in \bN} \Orders_{N'}$,
where $\Orders_{N'} = \set{-,+}^{N'}$ is the set of possible orders of
visits to $N'$ boundary points. Like multiple $\SLE$ pure partition
functions, the boundary visit amplitudes
$\left(\Ampl_{\omega}\right)_{\omega\in\Orders}$
are required to satisfy partial differential equations, covariance
(translation invariance and homogeneity), and boundary conditions
(prescribed asymptotic behaviors).
We postpone the precise statement of the problem to
Section~\ref{sec: SLE boundary visits}.
Uniqueness of solutions in the class of functions obtained by the
``spin chain~--~Coulomb gas correspondence'' is relatively easy,
and existence was shown for $N' \leq 4$ in
\cite{JJK-SLE_boundary_visits} by
direct calculations. We prove the existence for all $N'$.
\begin{thm*}[Theorem~\ref{thm: bdry visit amplitudes}]
For $\kappa \in (0,8) \setminus \bQ$, there exists a collection
$\left(\Ampl_{\omega}\right)_{\omega\in\Orders}$ of functions 
such that the system of 
partial differential equations, covariance, 
and boundary conditions required in \cite{JJK-SLE_boundary_visits}
holds for all $\omega\in\Orders$.
\end{thm*}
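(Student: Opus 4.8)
The plan is to leverage the same ``spin chain~--~Coulomb gas correspondence'' of \cite{KP-conformally_covariant_boundary_correlation_functions_with_a_quantum_group} that produced the pure partition functions $(\PartF_\alpha)$, now applied to the boundary visit problem of \cite{JJK-SLE_boundary_visits}. First I would recall from Section~\ref{sec: SLE boundary visits} the precise characterization of the amplitudes $\Ampl_\omega$: they must solve a system of second-order PDEs (one ``null-field'' equation at the SLE-type marked point and one at each visited boundary point, with the appropriate conformal weights), satisfy translation invariance and homogeneity of the correct degree, and obey a recursive family of asymptotic boundary conditions indexed by the order-of-visit words $\omega \in \Orders_{N'}$, $\Orders_{N'} = \set{-,+}^{N'}$. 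The key structural observation is that each such word $\omega$ determines a vector in a $2^{N'}$-dimensional (or slightly larger, accounting for the SLE-marked point) tensor-product representation of $\Uqsltwo$, and that the PDEs and covariance conditions translate, via the correspondence, into the statement that the associated Coulomb gas integral is a highest-weight-like vector annihilated by the relevant quantum-group generators. Thus existence of $\Ampl_\omega$ reduces to constructing the correct family of quantum-group vectors together with a choice of integration cycle rendering the Coulomb gas integrals convergent and nonzero.

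The main steps, in order, are: (1) set up the representation-theoretic dictionary — identify which $\Uqsltwo$-module and which distinguished vectors encode the words $\omega$, mirroring the role that link patterns $\alpha$ play for the pure partition functions, and verify that the homogeneity degree and the conformal weights at the visited points match the eigenvalues dictated by the correspondence; (2) exhibit an explicit ``base case'' analogous to the rainbow/maximally-nested formula of Theorem~\ref{thm: pure partition functions} — here I expect the natural base case to be the all-$+$ (or all-$-$) word, for which the Coulomb gas integral should factor or admit a closed form; (3) run a recursion on $N'$, or on a suitable partial order on $\Orders_{N'}$, expressing $\Ampl_\omega$ in terms of lower-order amplitudes by the same algebraic manipulations (applying quantum-group generators, using $q$-deformed relations) that generated $\PartF_\alpha$ from the rainbow pattern; (4) check that the recursively constructed integrals satisfy the prescribed asymptotics of \cite{JJK-SLE_boundary_visits} — the asymptotic as two consecutive visited points collide, or as a visited point approaches the SLE point, should reproduce the ``child'' amplitude $\Ampl_{\omega'}$ with the appropriate power prefactor, exactly as \eqref{eq: multiple SLE asymptotics} does for link patterns; and (5) confirm, using the genericity hypothesis $\kappa \in (0,8)\setminus\bQ$, that no resonances degrade the integrals — the irrationality of $\kappa$ is precisely what keeps the relevant $q$ not a root of unity and keeps the screening exponents off the integer lattice, so the integration contours and Selberg-type evaluations behave.

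I expect the main obstacle to be step (4): verifying that the boundary conditions of \cite{JJK-SLE_boundary_visits} — which are genuinely different in form from the link-pattern asymptotics \eqref{eq: multiple SLE asymptotics}, since the visited points interact with a single moving SLE point rather than pairing among themselves — are reproduced by the Coulomb gas integrals. This requires a careful fusion/operator-product analysis of the integrand as the relevant points merge, controlling which screening charges get ``trapped'' between the colliding insertions and extracting the leading power and the residual integral. A secondary difficulty is bookkeeping: the partial order on $\Orders_{N'}$ governing the recursion, and the combinatorics of which child words appear, must be set up so that the recursion is well-founded and the base case is reachable from every $\omega$. Once the correspondence is in place, however, much of the analytic work is inherited verbatim from the proof of Theorem~\ref{thm: pure partition functions} and from \cite{KP-conformally_covariant_boundary_correlation_functions_with_a_quantum_group}, so the novelty is concentrated in the translation of the boundary-visit data into quantum-group language and in the fusion computation at the visited points.
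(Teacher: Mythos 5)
Your overall framing --- that existence of the $\Ampl_\omega$ should be obtained through the correspondence $\sF$ by constructing appropriate quantum-group vectors --- matches the paper, but the route you propose for constructing those vectors is genuinely different from, and substantially harder than, the one actually taken. You plan to redo, inside the representation $\Wd_{3}^{\tens R}\tens\Wd_2\tens\Wd_{3}^{\tens L}$ (note: the relevant module is this mixed tensor product, not a $2^{N'}$-dimensional one), the whole machinery of the pure-partition-function construction: a distinguished base case, a partial order on $\Orders_{N'}$, a recursion, and a fresh verification of the asymptotics by fusion analysis of Coulomb gas integrals. The paper instead \emph{reduces} the boundary visit problem to the already-solved link pattern problem: it defines a combinatorial map $\omega\mapsto\alpha(\omega)$ sending a visit order on $N'$ points to a link pattern in $\LP_{N'+1}$ (each visited point is split into an entry index and an exit index, with one auxiliary point added), and sets $\Bdryvec_\omega = R_+\big(\Projection(v_{\alpha(\omega)})\big)$, where $\Projection$ fuses pairs of two-dimensional tensorands into three-dimensional ones and $R_+$ is the isomorphism onto the doublet highest-weight space. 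All the required projection conditions for $\Bdryvec_\omega$ --- and hence all the boundary conditions for $\Ampl_\omega$ --- then follow from the projection identities \eqref{eq: multiple sle projection conditions} already established for $v_{\alpha(\omega)}$, transported through a few commutative diagrams (Lemmas~\ref{lem: doublet diagram}, \ref{lem: triplet diagram}, \ref{lem: singlet diagram}) whose verification is a short Schur's lemma computation. No new recursion and no new analytic fusion computation is needed; the theorem then follows by citing the (PDE), (COV), (ASY) parts of the correspondence.

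This matters because the step you yourself flag as the main obstacle --- your step (4), verifying the collapse asymptotics of \cite{JJK-SLE_boundary_visits} by fusion analysis of the integrand --- is exactly the step your plan leaves unresolved, and it is not clear it can be carried out at the generality you need without essentially reproving the (ASY) part of the correspondence for the mixed representation. Likewise your proposed base case (the all-$+$ word) does not obviously admit a closed form analogous to the rainbow vector of Proposition~\ref{prop: solution for nested}, and the well-foundedness of a recursion on $\Orders_{N'}$ is asserted rather than established. So as written the proposal has a real gap at its central step; the missing idea is precisely the reduction $\omega\mapsto\alpha(\omega)$ together with the projection and embedding $\Projection$, $\Embedding$ and the isomorphism $R_+$, which convert every boundary-visit asymptotic condition into a link-removal identity that Theorem~\ref{thm: existence of multiple SLE vectors} has already delivered.
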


\subsection*{Relation to other work}

We follow the approach of constructing multiple $\SLE$s by
a growth process, where a partition function $\PartF$ is
needed as an input.
An alternative approach is the so called configurational measure of
multiple $\SLE$s, where the Radon-Nikodym derivative of the law of
the full global configuration of curves (w.r.t. $N$ chordal $\SLE$s)
is written in terms of Brownian loop measure.
For $\kappa \leq 4$, the configurational measures for the maximally nested (rainbow)
link patterns were constructed in \cite{LK-configurational_measure}.
However, constructions of configurational measures with $4 < \kappa \leq 8$ have not been given. 
The most obvious advantage of the configurational measure is the direct treatment of the
global configuration. In contrast, the growth process construction straightforwardly only
allows to define localized versions of multiple $\SLE$s 
(see Appendix~\ref{app: Local multiple SLEs}),
and the extension to globally defined random curves poses challenges
similar to the reversibility of chordal $\SLE$ \cite{Zhan-reversibility, MS-imaginary_geometry_3}.  
The partition function approach, however, is somewhat more explicit and better suited 
for example for sampling multiple $\SLE$s.
The two approaches should of course produce the same results, and in a
future work we plan to show this.
The total masses of the unnormalized configurational measures
will, in particular, be explicitly expressible in terms of our partition functions.

The pure partition functions are intimately related to crossing probabilities in
models of statistical physics \cite{BBK-multiple_SLEs}, and they have also been called 
``connectivity weights'' in the literature \cite{FK-solution_space_for_a_system_of_null_state_PDEs_4,
FSK-multiple_SLE_connectivity_weights_for_4_6_8}.
Note that for $N=2$, M\"obius covariance \eqref{eq: multiple SLE Mobius covariance} allows to reduce
the partial differential equations \eqref{eq: multiple SLE PDEs} to ordinary differential
equations, and pure partition functions for $2$-SLE are then straightforwardly solvable
in terms of hypergeometric functions \cite{BBK-multiple_SLEs}. They generalize
Cardy's formula \cite{Cardy-critical_percolation_in_finite_geometries}
for crossing probabilities for critical percolation, to which they reduce at $\kappa=6$.
For percolation and $N=3$, a recent numerical study \cite{FZS-percolation_crossing_probabilities_in_hexagons}
confirms the validity of crossing probability formulas found 
in \cite{Simmons-logarithmic_operator_intervals_in_the_boundary_theory_of_critical_percolation}.
Very recently, explicit formulas for connectivity weights for cases $N=3$ and $N=4$
and a formula for rainbow connectivity weights for general $N$ were worked out in
\cite{FSK-multiple_SLE_connectivity_weights_for_4_6_8}.
Virtually all work for $N>2$ relies on some form of Coulomb gas integral solutions
\cite{DF-multipoint_correlation_functions}, 
which also underlie the correspondence
\cite{KP-conformally_covariant_boundary_correlation_functions_with_a_quantum_group}
that is crucially employed in the present work.

The solutions to the system
\eqref{eq: multiple SLE PDEs}~--~\eqref{eq: multiple SLE Mobius covariance} for all $N$
have also been studied in the series of articles
\cite{FK-solution_space_for_a_system_of_null_state_PDEs_1,
FK-solution_space_for_a_system_of_null_state_PDEs_2,
FK-solution_space_for_a_system_of_null_state_PDEs_3,
FK-solution_space_for_a_system_of_null_state_PDEs_4}.
The main results are closely parallel to ours, although the works have been
independent. 
Our integrals treat all marked points symmetrically
at the expense of having dimension one higher than those of Flores and Kleban.
Importantly, by different analytic methods, Flores and Kleban prove also the upper
bound for the dimension of the solution space, and by a limiting procedure, they
obtain results about rational $\kappa$.

We include a probabilistic justification for the system
\eqref{eq: multiple SLE PDEs}~--~\eqref{eq: multiple SLE Mobius covariance}
and boundary conditions \eqref{eq: multiple SLE asymptotics},
in the form of a classification and properties of local multiple $\SLE$s, in
Appendix~\ref{app: Local multiple SLEs}.
Our method of solving this system has the advantage
of being completely systematic, in translating the problem
to algebraic calculations in finite dimensional representations of a quantum group.
The translation, based on \cite{KP-conformally_covariant_boundary_correlation_functions_with_a_quantum_group},
allows to deduce properties of the solutions conceptually, using representation theory.
This systematic formalism will also be valuable
in further applications, which require establishing for example bounds
or monodromy properties of the solutions.


In \cite{JJK-SLE_boundary_visits}, the quantum group method
of \cite{KP-conformally_covariant_boundary_correlation_functions_with_a_quantum_group}
was applied to formulas for the probability that a chordal $\SLE$ curve passes
through small neighborhoods of given points on the boundary.
Solutions 
were given for cases having up to four visited points.
In Section~\ref{sec: SLE boundary visits}, we present a solution for arbitrary number
of visited points,
making use of natural representation theoretic mappings to reduce
the problem to the main results of this article.

Related questions of random connectivities in planar non-crossing
ways appear also in various other interesting contexts. 
Notably, the famous Razumov-Stroganov conjecture has a reformulation in terms of probabilities of the
different planar connectivities alternatively in a percolation model
on a semi-infinite lattice cylinder, or in a fully packed loop model in a lattice square
\cite{RS-combinatorial_nature_of_ground_state_vector_of_O1_loop_model,CS-proof_of_RS_conjecture}.
As another example, the boundary conditions
that enforce the existence of $N$ interfaces connecting $2N$ boundary points
can be studied in models of statistical mechanics on random lattices, i.e.,
in discretized quantum gravity. 
Partition functions of the various
connectivities for the Ising model on random lattices have been found by matrix model
techniques in \cite{EO-mixed_correlation_functions_in_2_matrix_model}.
In both of the above mentioned problems, some relations to the present work can be
anticipated, since conformal field theory is expected to underlie each of the models.
Unveiling precise connections to these, however, is left for future research.

\subsection*{Organization of the article}

In Section~\ref{sec: quantum group method}, we give the definition of 
the quantum group $\Uqsltwo$, summarize needed facts about its representation theory, and
present a special case of the ``spin chain~--~Coulomb gas correspondence'', 
which is our main tool in the construction of multiple $\SLE$ partition functions.
Section~\ref{sec: Multiple SLE pure geometries} contains the solution of
a quantum group reformulation of the problem of multiple $\SLE$ pure partition functions. 
In Section~\ref{sec: Multiple SLE partition functions},
with the help of the correspondence of Section~\ref{sec: quantum group method}, 
this solution is translated to the construction of
the pure partition functions. 
We also discuss symmetric partition functions relevant for
models invariant under cyclic permutations of the marked points, and give
examples of such symmetric partition functions for the Ising model, 
Gaussian free field, and percolation.

Section~\ref{sec: SLE boundary visits} is devoted to the proof
of the existence of solutions to the 
$\SLE$ boundary visit problem. 
Finally, in Appendix~\ref{app: Local multiple SLEs}, we include
an account of the probabilistic aspects: the
classification and construction of local multiple $\SLE$s by
partition functions, and the role of the
requirements~\eqref{eq: multiple SLE PDEs}~--~\eqref{eq: multiple SLE asymptotics}.

\subsection*{Acknowledgments}
K.~K. is supported by the Academy of Finland, and 
E.~P. by Vilho, Yrj\"o and Kalle V\"ais\"al\"a Foundation.
During this work, we have benefited from interesting discussions with
Steven Flores, Cl\'ement Hongler, Yacine Ikhlef, Kostya Izyurov, Peter Kleban,
Michael Kozdron, Greg Lawler, and Jake Simmons.

\bigskip

\section{\label{sec: quantum group method}The quantum group method}

In this section, we present the quantum group method in the form
it will be used for the solution of the problem
\eqref{eq: multiple SLE PDEs}~--~\eqref{eq: multiple SLE asymptotics}.
The method was developed more generally in
\cite{KP-conformally_covariant_boundary_correlation_functions_with_a_quantum_group}.

The relevant quantum group is a $q$-deformation $\Uqsltwo$ of the
Lie algebra $\sl_{2}(\bC)$, 
and the deformation parameter $q$ is
related to $\kappa$ by $q = e^{\ii 4 \pi / \kappa}$.
We assume that $\kappa \in (0,8) \setminus \bQ$, so that $q$ is not
a root of unity. The method associates functions of $n$ variables
to vectors in a tensor product of $n$ irreducible representations of
this quantum group.

\subsection{\label{subsec: quantum group}The quantum group and its representations}

Define, for $m\in\bZ$, $n\in\bZpos$, the
$q$-integers as
\begin{align}
\qnum{m} = \; & \frac{q^{m}-q^{-m}}{q-q^{-1}} \label{eq: q num basic def}
    =  q^{m-1}+q^{m-3}+\cdots+q^{3-m}+q^{1-m} 
\end{align}
and the $q$-factorials as $\qfact{n} = \qnum{n} \qnum{n-1} \cdots \qnum{2} \qnum{1}$. 

The quantum group $\Uqsltwo$ is the associative unital algebra
over $\bC$ generated by $E,F,K,K^{-1}$ subject to the relations
\begin{align*}
 KK^{-1}=&\; 1=K^{-1}K,\qquad KE=q^{2}EK,\qquad KF=q^{-2}FK,\\
 EF-FE=&\; \frac{1}{q-q^{-1}}\left(K-K^{-1}\right).\nonumber 
\end{align*}
There is a unique Hopf algebra structure on $\Uqsltwo$ with the coproduct,
an algebra homomorphism
\begin{align*}
\Delta\colon\; & \Uqsltwo\rightarrow\Uqsltwo\tens\Uqsltwo,
\end{align*}
given on the generators by the expressions
\begin{align}\label{eq: coproduct}
\Hcp(E)=\;E\tens K+1\tens E,\qquad\Hcp(K)=\;K\tens K,\qquad\Hcp(F)=\;F\tens1+K^{-1}\tens F.
\end{align}
The coproduct is used to define a representation structure on
the tensor product $\mathsf{M}'\tens \mathsf{M}''$
of two representations $\mathsf{M}'$ and $\mathsf{M}''$ as follows.
When we have
\begin{align*}
\Hcp(X) = \; & \sum_{i}X_{i}'\tens X_{i}''\,\in\,\Uqsltwo\tens\Uqsltwo
\end{align*}
and $v'\in \mathsf{M}'$, $v''\in \mathsf{M}''$, we set
\begin{align*}
X.(v'\tens v'') = \; & \sum_{i}(X_{i}'.v')\tens(X_{i}''.v'')\,\in\, \mathsf{M}'\tens \mathsf{M}''.
\end{align*}
For calculations with tensor products of $n$ representations, one similarly uses
the $(n-1)$-fold coproduct 
\begin{align*}
\Hcp^{(n)}=\; & (\Hcp\tens\id^{\tens(n-2)})\circ(\Hcp\tens\id^{\tens(n-3)})\circ\cdots
\circ(\Hcp\tens\id)\circ\Hcp ,
\qquad \Hcp^{(n)} \colon\Uqsltwo\rightarrow\Big(\Uqsltwo\Big)^{\tens n} .
\end{align*}
The coassociativity $(\id\tens\Hcp)\circ\Hcp=(\Hcp\tens\id)\circ\Hcp$ of the coproduct ensures that
we may speak of multiple tensor products without specifying the positions of
parentheses.

We will use representations $\Wd_d$, $d \in \bZpos$, which can be thought of as $q$-deformations
of the $d$-dimensional irreducible representations of the semisimple Lie
algebra $\sl_{2}(\bC)$.
In our chosen basis $\Wbas_{0}^{(d)},\Wbas_{1}^{(d)},\ldots,\Wbas_{d-1}^{(d)}$ of $\Wd_d$,
the actions of the generators are
\begin{align*}
K.\Wbas_{l}^{(d)}=\; & q^{d-1-2l}\,\Wbas_{l}^{(d)}\\
F.\Wbas_{l}^{(d)}=\; & \begin{cases}
\Wbas_{l+1}^{(d)} & \text{if }l\neq d-1\\
0 & \text{if }l=d-1
\end{cases}\\
E.\Wbas_{l}^{(d)}=\; & \begin{cases}
\qnum l\qnum{d-l}\,\Wbas_{l-1}^{(d)} & \text{if }l\neq0\\
0 & \text{if }l=0
\end{cases}.
\end{align*}
The representation $\Wd_d$ thus defined is irreducible, see e.g.
\cite[Lemma~2.3]{KP-conformally_covariant_boundary_correlation_functions_with_a_quantum_group}.
For simplicity of notation, we often omit the superscript reference to the dimension $d$,
and denote the basis vectors simply by $\Wbas_0, \ldots, \Wbas_{d-1}$.

Tensor products of these representations decompose to direct sums of irreducibles according to
$q$-deformed Clebsch-Gordan formulas, stated for later use in the lemma below.
\begin{lem}[{see e.g. \cite[Lemma~2.4]{KP-conformally_covariant_boundary_correlation_functions_with_a_quantum_group}}]
\label{lem: tensor product representations of quantum sl2}
Let $d_1, d_2 \in \bZpos$, and 
consider the representation $\Wd_{d_{2}}\tens\Wd_{d_{1}}$.
Let
$m\in \set{0,1,\ldots,\min(d_{1},d_{2})-1}$,
and denote $d=d_{1}+d_{2}-1-2m$.
The vector
\begin{align} 
\Tbas_{0}^{(d;d_{1},d_{2})} = \; & \sum_{l_{1},l_{2}}T_{0;m}^{l_{1},l_{2}}(d_{1},d_{2})\times(\Wbas_{l_{2}}\tens\Wbas_{l_{1}})\label{eq: tensor product hwv}, \\
\nonumber
\text{where } \qquad
T_{0;m}^{l_{1},l_{2}}(d_{1},d_{2})=\; &
    \delta_{l_{1}+l_{2},m}\times(-1)^{l_{1}}\frac{\qfact{d_{1}-1-l_{1}}\,\qfact{d_{2}-1-l_{2}}}{\qfact{l_{1}}\qfact{d_{1}-1}\qfact{l_{2}}\qfact{d_{2}-1}}\,\frac{q^{l_{1}(d_{1}-l_{1})}}{(q-q^{-1})^{m}} , 
\end{align}
satisfies $E.\Tbas_{0}^{(d;d_{1},d_{2})}=0$ and $K.\Tbas_{0}^{(d;d_{1},d_{2})} = q^{d-1} \, \Tbas_{0}^{(d;d_{1},d_{2})}$, i.e.,
$\Tbas_{0}^{(d;d_{1},d_{2})}$ is a highest weight vector of a subrepresentation of $\Wd_{d_{2}}\tens\Wd_{d_{1}}$
isomorphic to $\Wd_{d}$. 
The subrepresentations corresponding to different $d$ span the tensor product $\Wd_{d_2} \tens \Wd_{d_1}$,
which thus has a decomposition
\begin{align}\label{eq: decomposition of tensor product}
\Wd_{d_{2}}\tens\Wd_{d_{1}}\isom\; & \Wd_{d_{1}+d_{2}-1}\oplus\Wd_{d_{1}+d_{2}-3}\oplus\cdots\oplus\Wd_{|d_{1}-d_{2}|+3}\oplus\Wd_{|d_{1}-d_{2}|+1} .
\end{align}
\end{lem}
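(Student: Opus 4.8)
The plan is to verify directly that the explicit vector $\Tbas_{0}^{(d;d_{1},d_{2})}$ of \eqref{eq: tensor product hwv} is a highest weight vector of weight $q^{d-1}$, and then to extract the decomposition \eqref{eq: decomposition of tensor product} from the generic representation theory of $\Uqsltwo$ together with a dimension count.

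First I would treat the $K$-eigenvalue: since $\Hcp(K)=K\tens K$, each $\Wbas_{l_{2}}\tens\Wbas_{l_{1}}$ is a $K$-eigenvector with eigenvalue $q^{(d_{2}-1-2l_{2})+(d_{1}-1-2l_{1})}$, which on the support $l_{1}+l_{2}=m$ of the sum equals $q^{d_{1}+d_{2}-2-2m}=q^{d-1}$, giving $K.\Tbas_{0}=q^{d-1}\Tbas_{0}$; and $\Tbas_{0}\neq0$, since for $q$ not a root of unity no $q$-factorial in $T_{0;m}^{l_{1},l_{2}}$ vanishes, so, e.g., the coefficient of $\Wbas_{0}\tens\Wbas_{m}$ (the term $(l_{1},l_{2})=(m,0)$, allowed because $m\leq d_{1}-1$) is nonzero. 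The substantive point is $E.\Tbas_{0}=0$. Using $\Hcp(E)=E\tens K+1\tens E$ and the generator actions,
\begin{align*}
E.(\Wbas_{l_{2}}\tens\Wbas_{l_{1}})=\;&\qnum{l_{2}}\qnum{d_{2}-l_{2}}\,q^{d_{1}-1-2l_{1}}\,(\Wbas_{l_{2}-1}\tens\Wbas_{l_{1}})+\qnum{l_{1}}\qnum{d_{1}-l_{1}}\,(\Wbas_{l_{2}}\tens\Wbas_{l_{1}-1}),
\end{align*}
so the coefficient of $\Wbas_{a}\tens\Wbas_{b}$ with $a+b=m-1$ in $E.\Tbas_{0}$ is
\begin{align*}
T_{0;m}^{b,a+1}(d_{1},d_{2})\,\qnum{a+1}\qnum{d_{2}-a-1}\,q^{d_{1}-1-2b}+T_{0;m}^{b+1,a}(d_{1},d_{2})\,\qnum{b+1}\qnum{d_{1}-b-1}.
\end{align*}
I would finish this by substituting the formula for $T_{0;m}^{l_{1},l_{2}}$ and observing that the two summands are negatives of one another: the ratio $T_{0;m}^{b+1,a}/T_{0;m}^{b,a+1}$ equals $-\,q^{d_{1}-1-2b}\,\qnum{a+1}\qnum{d_{2}-1-a}/(\qnum{b+1}\qnum{d_{1}-1-b})$, so multiplying the second summand out and using $\qnum{d_{1}-b-1}=\qnum{d_{1}-1-b}$ and $\qnum{d_{2}-a-1}=\qnum{d_{2}-1-a}$ cancels it against the first. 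Hence $E.\Tbas_{0}=0$.

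With $\Tbas_{0}^{(d;d_{1},d_{2})}$ a highest weight vector of weight $q^{d-1}$, the submodule $\Uqsltwo.\Tbas_{0}^{(d;d_{1},d_{2})}$ is a highest weight module, hence a quotient of the Verma module of highest weight $q^{d-1}$. Because $d-1\in\bZnn$ and $q$ is not a root of unity, that Verma module has a unique finite-dimensional quotient, the irreducible $\Wd_{d}$; since the submodule sits inside the finite-dimensional $\Wd_{d_{2}}\tens\Wd_{d_{1}}$, it must be $\isom\Wd_{d}$. Letting $m$ range over $\set{0,\ldots,\min(d_{1},d_{2})-1}$ produces submodules isomorphic to $\Wd_{d_{1}+d_{2}-1},\Wd_{d_{1}+d_{2}-3},\ldots,\Wd_{|d_{1}-d_{2}|+1}$, which are pairwise non-isomorphic simple modules; a sum of pairwise non-isomorphic simple submodules is automatically direct (by Schur's lemma), and the dimension count $\sum_{m=0}^{\min(d_{1},d_{2})-1}(d_{1}+d_{2}-1-2m)=d_{1}d_{2}=\dmn(\Wd_{d_{2}}\tens\Wd_{d_{1}})$ shows this direct sum is all of the tensor product, which is \eqref{eq: decomposition of tensor product}.

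I expect the only real obstacle to be the $q$-factorial bookkeeping behind $E.\Tbas_{0}=0$: it is elementary, but the simplification of the coefficient ratio must be done with care. Conceptually there is nothing delicate --- for generic $q$, finite-dimensional $\Uqsltwo$-modules decompose exactly as finite-dimensional $\sl_{2}(\bC)$-modules do, so once the highest weight vectors are written down the Clebsch--Gordan rule is forced. (Conversely, the conditions $E.\Tbas_{0}=0$ and $K.\Tbas_{0}=q^{d-1}\Tbas_{0}$ fix the coefficients $T_{0;m}^{l_{1},l_{2}}$ up to a global scalar, so the formula \eqref{eq: tensor product hwv} merely records a normalization.)
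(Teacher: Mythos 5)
Your proof is correct. Note that the paper itself gives no proof of this lemma --- it is quoted from \cite[Lemma~2.4]{KP-conformally_covariant_boundary_correlation_functions_with_a_quantum_group} --- so there is no in-paper argument to compare against; your direct verification (the $K$-eigenvalue from $\Hcp(K)=K\tens K$, the two-term cancellation in $E.\Tbas_{0}^{(d;d_1,d_2)}$ via the ratio $T_{0;m}^{b+1,a}/T_{0;m}^{b,a+1}=-q^{d_1-1-2b}\qnum{a+1}\qnum{d_2-1-a}/(\qnum{b+1}\qnum{d_1-1-b})$, which I checked, and the dimension count $\sum_{m}(d_1+d_2-1-2m)=d_1d_2$) is the standard route and is sound for $q$ not a root of unity, which is exactly the paper's standing assumption.
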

For the subrepresentation $\Wd_{d} \subset \Wd_{d_{2}}\tens\Wd_{d_{1}}$,
we often use the basis vectors $\Tbas_{l}^{(d;d_{1},d_{2})} = F^l . \Tbas_{0}^{(d;d_{1},d_{2})}$.

\subsection{Tensor products of two-dimensional irreducibles}

For the solution of multiple $\SLE$ pure partition functions, 
we use in particular the two-dimensional representation $\Wd_2$ 
and its tensor powers $\Wd_2^{\tens 2 N}$.
A special case of Lemma~\ref{lem: tensor product representations of quantum sl2} 
states that
\begin{align*}
\Wd_{2} \tens \Wd_{2} \isom \Wd_{3}\oplus\Wd_{1} .
\end{align*}
For this tensor product, we select the following basis 
that respects the decomposition:
the singlet subspace $\Wd_1 \subset \Wd_{2}\tens\Wd_{2}$ is spanned by
\begin{align}\label{eq: singlet basis vector}
\Sbas := \Tbas_{0}^{(1;2,2)}=\frac{1}{q-q^{-1}}\left(\Wbas_{1}\tens\Wbas_{0}-q\,\Wbas_{0}\tens\Wbas_{1}\right)
\end{align}
and the triplet subspace $\Wd_3 \subset \Wd_{2}\tens\Wd_{2}$ by
\begin{align}\label{eq: triplet basis vectors}
\Tbas_{0}^{(3;2,2)}=\Wbas_{0}\tens\Wbas_{0},\qquad
\Tbas_{1}^{(3;2,2)}=q^{-1}\,\Wbas_{0}\tens\Wbas_{1}+\Wbas_{1}\tens\Wbas_{0},\qquad
\Tbas_{2}^{(3;2,2)}=[2]\,\Wbas_{1}\tens\Wbas_{1} .
\end{align}

More generally, the $n$-fold tensor product $\Wd_2^{\tens n}$ decomposes 
to a direct sum of irreducibles as follows.
\begin{lem}\label{lem: multiplicity of singlet}
We have, for $n\in\bZnn$, a decomposition
\begin{align*}
\Wd_2^{\otimes n}\isom\bigoplus_{d}m_d^{(n)}\Wd_{d},\qquad\text{where}\qquad
m_d^{(n)}=\; & \begin{cases}\frac{2d}{n+d+1}\binom{n}{\frac{n+d-1}{2}}\quad & \text{if }n+d-1\in2\,\bZnn\text{ and }1\leq d\leq n+1\\
0 & \text{otherwise.}
\end{cases}
\end{align*}
\end{lem}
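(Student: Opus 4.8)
The plan is to compute the multiplicities $m_d^{(n)}$ by an induction on $n$, using the Clebsch--Gordan rule $\Wd_d \tens \Wd_2 \isom \Wd_{d+1} \oplus \Wd_{d-1}$ (the $d_1 = 2$ case of Lemma~\ref{lem: tensor product representations of quantum sl2}, valid for $d \geq 2$, together with $\Wd_1 \tens \Wd_2 \isom \Wd_2$). Tensoring $\Wd_2^{\tens n}$ with one more copy of $\Wd_2$ gives the recursion $m_d^{(n+1)} = m_{d-1}^{(n)} + m_{d+1}^{(n)}$ for $d \geq 2$, and $m_1^{(n+1)} = m_2^{(n)}$, with the initial data $m_d^{(0)} = \delta_{d,1}$. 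Since the claimed formula only involves combinatorial quantities and not $q$ at all, this is a purely combinatorial identity: I would verify that the proposed closed form satisfies the same recursion and initial condition. This is most transparent via the standard ballot/reflection identification of $m_d^{(n)}$ with the number of lattice paths from $(0,0)$ of length $n$ with $\pm 1$ steps that stay strictly positive (weight tracked by the ``spin'' $d-1$); the count of such paths ending at height $d-1$ is given by the reflection principle as $\binom{n}{(n+d-1)/2} - \binom{n}{(n+d+1)/2}$, and a one-line manipulation with factorials shows this equals $\frac{2d}{n+d+1}\binom{n}{(n+d-1)/2}$. The parity and range constraints $n+d-1 \in 2\bZnn$ and $1 \leq d \leq n+1$ are exactly the nonvanishing conditions for these binomials.

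Concretely I would proceed in three steps. First, establish the decomposition $\Wd_2^{\tens n} \isom \bigoplus_d m_d^{(n)} \Wd_d$ with the recursion above, which follows immediately from Lemma~\ref{lem: tensor product representations of quantum sl2} and the associativity of the tensor product; the base cases $n=0,1,2$ are read off directly (and the $n=2$ case is the displayed $\Wd_2 \tens \Wd_2 \isom \Wd_3 \oplus \Wd_1$). Second, check that $M_d^{(n)} := \binom{n}{(n+d-1)/2} - \binom{n}{(n+d+1)/2}$ (interpreted as $0$ when the parity fails) satisfies $M_d^{(n+1)} = M_{d-1}^{(n)} + M_{d+1}^{(n)}$ for $d \geq 2$ and $M_1^{(n+1)} = M_2^{(n)}$, together with $M_d^{(0)} = \delta_{d,1}$; this is Pascal's rule applied to each binomial, with the subtlety at $d=1$ handled by the observation that $\binom{n}{(n-1)/2}$ and $\binom{n}{(n+1)/2}$ are equal, so the ``reflected'' term that would come from $m_0^{(n)}$ automatically cancels. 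By uniqueness of the solution of the recursion, $m_d^{(n)} = M_d^{(n)}$. Third, simplify: writing $k = (n+d-1)/2$, one has
\begin{align*}
M_d^{(n)} = \binom{n}{k} - \binom{n}{k+1} = \binom{n}{k}\left(1 - \frac{n-k}{k+1}\right) = \binom{n}{k}\,\frac{2k+1-n}{k+1} = \frac{2d}{n+d+1}\binom{n}{\tfrac{n+d-1}{2}},
\end{align*}
which is the asserted formula.

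The main obstacle, such as it is, is bookkeeping rather than conceptual: one must be careful about the boundary of the recursion (the $d=1$ term, where the Clebsch--Gordan rule does not produce a $\Wd_0$, so the recursion is ``folded''), and about the parity conventions so that vanishing cases are consistently handled. Everything else is routine. An alternative, equally short route avoiding lattice paths entirely is to note $\sum_d m_d^{(n)} \qnum{d}_{\!y} = (y + y^{-1})^n$ as a character identity (with $y$ a formal weight variable, $\qnum{d}_y = \sum_{j=0}^{d-1} y^{d-1-2j}$), expand $(y+y^{-1})^n$ by the binomial theorem, and collect coefficients; this directly yields the difference-of-binomials formula and then the closed form as above. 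Either way the argument is a few lines once the decomposition and its recursion are in hand.
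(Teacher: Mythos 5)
Your proposal is correct and follows essentially the same route as the paper: induction on $n$ via the Clebsch--Gordan rule to obtain the recursion $m_{d-1}^{(n)}+m_{d+1}^{(n)}=m_d^{(n+1)}$ with $m_d^{(0)}=\delta_{d,1}$, after which the closed form is identified as the unique solution. The paper simply asserts that the stated formula solves this recursion, whereas you supply the (correct) verification via the difference of binomials $\binom{n}{(n+d-1)/2}-\binom{n}{(n+d+1)/2}$ and Pascal's rule, including the boundary case $d=1$; this is a welcome elaboration rather than a different argument.
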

\begin{proof}
A standard proof proceeds by induction on $n$. Clearly the assertion is true for $n=0$, 
as $\Wd_2^{\otimes 0}\isom\bC\isom\Wd_1$. Assuming the decomposition formula
for $\Wd_2^{\otimes n}$ and using 
Equation~\eqref{eq: decomposition of tensor product} we obtain
\begin{align*}
\Wd_2^{\otimes (n+1)}\isom\Wd_2\otimes\left(\bigoplus_{d'}m_{d'}^{(n)}\Wd_{d'}\right)
=\bigoplus_{d'}m_{d'}^{(n)}\left(\Wd_{d'+1}\oplus\Wd_{d'-1}\right).
\end{align*}
There are $m_{d-1}^{(n)}+m_{d+1}^{(n)}$ subrepresentations contributing to $\Wd_d$,
giving the recursion $m_{d-1}^{(n)}+m_{d+1}^{(n)}=m_d^{(n+1)}$.
The solution to this recursion with the initial condition $m_d^{(0)} = \delta_{d,1}$
is the asserted formula.
\end{proof}
We will in particular use the sum of all one-dimensional
subrepresentations,
\begin{align}\label{eq: highest vector space}
\HWsp_1=\HWsp_1\left(\Wd_{2}^{\tens 2N}\right)=\set{v\in\Wd_{2}^{\tens 2N}\;\Big|\;E.v=0 ,\;K.v=v} .
\end{align}
For brevity, the dependence on $N$ is suppressed in the notation $\HWsp_1$.
From the lemma above we get that the dimension of $\HWsp_1$ is a Catalan number
\[ \dmn(\HWsp_1) = m_1^{(2N)}=\frac{1}{N+1}\binom{2N}{N}=\Catalan_N . \]

In the decomposition $\Wd_{2} \tens \Wd_{2} \isom \Wd_{3}\oplus\Wd_{1}$,
we denote the projection to the singlet subspace by
\begin{align*}
\pi\colon\; &\Wd_{2}\tens\Wd_{2}\to\Wd_{2}\tens\Wd_{2} , \qquad 
\pi(\Sbas)=\Sbas \qquad \text{ and } \qquad
\pi(\Tbas^{(3;2,2)}_l) = 0  \text{ for }  l = 0,1,2 . 
\end{align*}
More generally, in the tensor product $\Wd_2^{\tens n}$, with $n \geq 2$,
we denote by $\pi_j$
this projection acting in the components $j$ and $j+1$ counting from the right, i.e.,
\begin{align*}
\pi_j = \id^{\tens (n-1-j)} \tens \pi \tens \id^{\tens (j-1)} \colon\; &\Wd_{2}^{\tens n}\to\Wd_{2}^{\tens n}.
\end{align*}
The one-dimensional irreducible $\Wd_1$ is called the trivial representation ---
by Lemma~\ref{lem: tensor product representations of quantum sl2}, it is the neutral
element of the tensor product operation.
With the identification 
$\Wd_{1}\isom\bC$ via $\Sbas\mapsto1$,
we denote the projection to the trivial subrepresentation of $\Wd_2 \tens \Wd_2$ by
\begin{align*}
\hat{\pi}\colon\; &\Wd_{2}\tens\Wd_{2}\to \bC , \qquad
\hat{\pi}(\Sbas) = 1 \qquad \text{ and } \qquad
\hat{\pi}(\Tbas^{(3;2,2)}_l) = 0  \text{ for }  l = 0,1,2 , 
\end{align*}
and similarly
\begin{align*}
\hat{\pi}_j = \id^{\tens (n-1-j)} \tens \hat{\pi} \tens \id^{\tens (j-1)} \colon\; &\Wd_{2}^{\tens n}\to\Wd_{2}^{\tens (n-2)}.
\end{align*}

Lemmas~\ref{lem: projection formulas} and \ref{lem: all projections vanish},
and Corollary~\ref{cor: all projections vanish gives zero} below contain auxiliary results that will be used later on.

The formulas given in the next lemma are essentially a reformulation
of the fact that the projections to subrepresentations in tensor powers
of $\Wd_2$ form a Temperley-Lieb algebra.
\begin{lem}\label{lem: projection formulas}
The maps $\pi$ and $\hat{\pi}$
satisfy the following relations.
\begin{description}
\item[(a)] We have $\,\pi(v)=\hat{\pi}(v)\Sbas\,$ 
for any $v\in\Wd_{2}\tens\Wd_{2}$.
The values of $\hat{\pi}$ on the tensor product basis are
\begin{align*}
&\hat{\pi}(\Wbas_{0}\tens\Wbas_{0})=0,\qquad\qquad\quad\qquad\qquad\hat{\pi}(\Wbas_{1}\tens\Wbas_{1})=0,\\
&\hat{\pi}(\Wbas_{0}\tens\Wbas_{1})=\frac{q^{-1}-q}{\qnum 2},\qquad\qquad\qquad\hat{\pi}(\Wbas_{1}\tens\Wbas_{0})=\frac{1-q^{-2}}{\qnum 2}.
\end{align*}
\item[(b)] On the triple tensor product 
$\Wd_{2}\tens\Wd_{2}\tens\Wd_{2},$ we have
\begin{align*}
(\id_{\Wd_{2}}\tens\pi)\,(\Sbas\tens\Wbas_{l})=-\frac{1}{\qnum 2}\,\Wbas_{l}\tens\Sbas,\qquad\qquad(\pi\tens\id_{\Wd_{2}})\,(\Wbas_{l}\tens\Sbas)=-\frac{1}{\qnum 2}\,\Sbas\tens\Wbas_{l}
\end{align*}
and consequently,
\begin{align*}
(\id_{\Wd_{2}}\tens\hat{\pi})\,(\Sbas\tens\Wbas_{l})=-\frac{1}{ \qnum 2}\,\Wbas_{l},\qquad\qquad(\hat{\pi}\tens\id_{\Wd_{2}})\,(\Wbas_{l}\tens\Sbas)=-\frac{1}{\qnum 2}\,\Wbas_{l}.
\end{align*}
\end{description}
\end{lem}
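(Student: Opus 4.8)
The plan is to prove both parts by direct computation in the explicit basis of $\Wd_2\tens\Wd_2$ given in \eqref{eq: singlet basis vector}--\eqref{eq: triplet basis vectors}; conceptually part~(b) is the Temperley--Lieb relation for the (suitably normalized) projections $\pi_j$, but the shortest route is to compute.

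For part~(a), observe first that $\pi$ is by definition the projection of $\Wd_2\tens\Wd_2\isom\Wd_3\oplus\Wd_1$ onto the one-dimensional summand $\Wd_1 = \bC\,\Sbas$ along $\Wd_3$. Hence $\pi(v)$ is automatically a scalar multiple of $\Sbas$, and under the identification $\Wd_1\isom\bC$, $\Sbas\mapsto1$, that scalar is by definition $\hat{\pi}(v)$; this already gives $\pi(v)=\hat{\pi}(v)\,\Sbas$. To evaluate $\hat{\pi}$ on the four tensor-product basis vectors, I would invert the change of basis between $\set{\Wbas_0\tens\Wbas_0,\ \Wbas_0\tens\Wbas_1,\ \Wbas_1\tens\Wbas_0,\ \Wbas_1\tens\Wbas_1}$ and $\set{\Tbas_0^{(3;2,2)},\Tbas_1^{(3;2,2)},\Tbas_2^{(3;2,2)},\Sbas}$. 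Since $\Wbas_0\tens\Wbas_0=\Tbas_0^{(3;2,2)}$ and $\Wbas_1\tens\Wbas_1=\frac{1}{\qnum 2}\Tbas_2^{(3;2,2)}$ lie in $\Wd_3$, the map $\hat{\pi}$ annihilates them; it then remains to solve the $2\times2$ system $\Tbas_1^{(3;2,2)}=q^{-1}\Wbas_0\tens\Wbas_1+\Wbas_1\tens\Wbas_0$ and $(q-q^{-1})\,\Sbas=\Wbas_1\tens\Wbas_0-q\,\Wbas_0\tens\Wbas_1$ for $\Wbas_0\tens\Wbas_1$ and $\Wbas_1\tens\Wbas_0$ in terms of $\Tbas_1^{(3;2,2)}$ and $\Sbas$. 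Reading off the coefficients of $\Sbas$ and simplifying with $q(1-q^{-2})=q-q^{-1}$ and $\qnum 2=q+q^{-1}$ yields the four stated values.

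For part~(b), I would substitute the expression \eqref{eq: singlet basis vector} for $\Sbas$ and expand in $\Wd_2^{\tens3}$, so that $(\id_{\Wd_2}\tens\pi)(\Sbas\tens\Wbas_l)=(q-q^{-1})^{-1}(\Wbas_1\tens\pi(\Wbas_0\tens\Wbas_l)-q\,\Wbas_0\tens\pi(\Wbas_1\tens\Wbas_l))$, and similarly for $(\pi\tens\id_{\Wd_2})(\Wbas_l\tens\Sbas)$. Feeding in the values of $\pi$ from part~(a) --- for each $l\in\set{0,1}$ exactly one of the two terms survives, since $\pi(\Wbas_0\tens\Wbas_0)=\pi(\Wbas_1\tens\Wbas_1)=0$ --- and using the same $q$-identities, every prefactor telescopes to $-\frac{1}{\qnum 2}$, giving $-\frac{1}{\qnum 2}\,\Wbas_l\tens\Sbas$ and $-\frac{1}{\qnum 2}\,\Sbas\tens\Wbas_l$ respectively. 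The two ``consequently'' formulas are then immediate: $\hat{\pi}$ is $\pi$ post-composed with the identification $\Sbas\mapsto1$, so applying this identification to the last (resp. first) tensor factor of the identities just proved replaces $\Wbas_l\tens\Sbas$ and $\Sbas\tens\Wbas_l$ by $\Wbas_l$.

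Since everything takes place in a four- (resp. eight-) dimensional vector space, there is no genuine obstacle. The only points requiring care are the bookkeeping conventions --- the order of tensor factors in $\Wd_{d_2}\tens\Wd_{d_1}$, and which pair of slots $\id_{\Wd_2}\tens\pi$ versus $\pi\tens\id_{\Wd_2}$ acts on --- and the repeated use of $q(1-q^{-2})=q-q^{-1}$ and $\qnum 2=q+q^{-1}$, which is precisely what makes the prefactors collapse to $\pm\frac{1}{\qnum 2}$. Alternatively, one can shortcut part~(b) by a weight argument: $\id_{\Wd_2}\tens\pi$ is a $\Uqsltwo$-module map with image $\Wd_2\tens\Wd_1\isom\Wd_2$, so $(\id_{\Wd_2}\tens\pi)(\Sbas\tens\Wbas_l)$ must be proportional to the unique (up to scalar) vector $\Wbas_l\tens\Sbas$ of the same $K$-weight $q^{1-2l}$, reducing the computation to a single scalar.
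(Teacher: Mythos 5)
Your proposal is correct and follows essentially the same route as the paper, whose proof is simply the remark that part~(a) is a direct calculation with the bases \eqref{eq: singlet basis vector}--\eqref{eq: triplet basis vectors} and that part~(b) follows by applying (a) and \eqref{eq: singlet basis vector}; your computations (the $2\times 2$ change of basis for (a), the expansion of $\Sbas$ for (b)) check out, and the prefactors do collapse to $-1/\qnum{2}$ as you claim.
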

\begin{proof}
Part (a) follows by straightforward calculations, using 
Equations~\eqref{eq: singlet basis vector} and \eqref{eq: triplet basis vectors}. For (b), one applies (a) and \eqref{eq: singlet basis vector}.
\end{proof}

The next lemma characterizes the highest dimensional subrepresentation
of a tensor power of $\Wd_2$.
\begin{lem}\label{lem: all projections vanish}
The following conditions are equivalent for any $v\in\Wd_{2}^{\tens n}$: 
\begin{description}
\item[(a)] $\hat{\pi}_j(v)=0\;$ for all $\;1\leq j<n.$
\item[(b)] $v\in\Wd_{n+1}\subset\Wd_{2}^{\tens n},\;$ where $\Wd_{n+1}$ is the irreducible subrepresentation generated by $\Wbas_{0}\tens\cdots\tens\Wbas_{0}.$
\end{description}
\end{lem}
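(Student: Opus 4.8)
The plan is to exploit that each contraction $\hat\pi_j \colon \Wd_2^{\tens n} \to \Wd_2^{\tens (n-2)}$ is a morphism of $\Uqsltwo$-modules: indeed $\hat\pi$ is the projection $\pi$ of $\Wd_2 \tens \Wd_2$ onto the submodule $\Wd_1$ followed by the module isomorphism $\Wd_1 \isom \bC$, and $\hat\pi_j$ is this map tensored with identities, so it intertwines the actions. Since $q$ is not a root of unity, the finite-dimensional $\Uqsltwo$-modules form a semisimple category whose simple objects are the pairwise non-isomorphic $\Wd_d$, $d \in \bZpos$, so I may freely use Schur's lemma and isotypic decompositions, together with the Clebsch--Gordan decomposition (Lemma~\ref{lem: tensor product representations of quantum sl2} and Lemma~\ref{lem: multiplicity of singlet}), which in particular shows that every irreducible constituent of $\Wd_2^{\tens m}$ has dimension at most $m+1$.

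The direction (b)$\,\Rightarrow\,$(a) is then immediate: the restriction of $\hat\pi_j$ to the irreducible submodule $\Wd_{n+1}$ has image isomorphic to $\Wd_{n+1}$ or to $0$, but $\Wd_2^{\tens(n-2)}$ has no constituent of dimension $n+1$ (its constituents have dimension $\leq n-1$), so that image must be $0$. For $n=1$ both (a) and (b) hold vacuously.

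For (b)$\,\Leftarrow\,$(a) I would argue by induction on $n$, with $n=1,2$ being clear --- for $n=2$, $\hat\pi_1(v)=0$ says precisely $v \in \Kern\hat\pi = \Wd_3$. For the inductive step, write $v = \Wbas_0 \tens w_0 + \Wbas_1 \tens w_1$ with $w_0, w_1 \in \Wd_2^{\tens(n-1)}$. The conditions $\hat\pi_j(v)=0$ for $1 \leq j \leq n-2$ act only on the last $n-1$ tensor legs, hence are equivalent to $w_0$ and $w_1$ both lying in the common kernel of the corresponding $n-2$ projections on $\Wd_2^{\tens(n-1)}$; by the induction hypothesis that kernel is $\Wd_n$, so $v \in \Wd_2 \tens \Wd_n \isom \Wd_{n+1} \oplus \Wd_{n-1}$. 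Now $\bigcap_{j=1}^{n-1}\Kern\hat\pi_j$ is a submodule of $\Wd_2 \tens \Wd_n$ containing $\Wd_{n+1}$ (by the already-established direction (b)$\Rightarrow$(a)), so by semisimplicity it equals either $\Wd_{n+1}$ or all of $\Wd_2 \tens \Wd_n$. To rule out the latter I would check that $\hat\pi_{n-1}$ does not vanish on $\Wd_2 \tens \Wd_n$: the vector $\Wbas_1 \tens \Wbas_0^{\tens(n-1)}$ lies in it (since $\Wbas_0^{\tens(n-1)}$ generates $\Wd_n$), and by Lemma~\ref{lem: projection formulas}(a), $\hat\pi_{n-1}(\Wbas_1 \tens \Wbas_0^{\tens(n-1)}) = \frac{1-q^{-2}}{\qnum 2}\,\Wbas_0^{\tens(n-2)}$, which is nonzero as $q^2 \neq 1$. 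Hence $\bigcap_{j=1}^{n-1}\Kern\hat\pi_j = \Wd_{n+1}$, i.e. $v \in \Wd_{n+1}$.

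The one genuinely delicate step is this last one --- telling apart the two summands $\Wd_{n+1}$ and $\Wd_{n-1}$ of $\Wd_2 \tens \Wd_n$ inside $\Wd_2^{\tens n}$ --- and the short explicit computation above resolves it. Everything else is bookkeeping: identifying which tensor legs a given $\hat\pi_j$ touches, and the standard semisimple representation theory of $\Uqsltwo$ at $q$ not a root of unity recalled above.
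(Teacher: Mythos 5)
Your proof is correct, but it follows a genuinely different route from the paper's. For (a)$\Rightarrow$(b), the paper fixes a $K$-eigenvalue $q^{n-2\ell}$, expands $v$ in the tensor-product basis, and shows via Lemma~\ref{lem: projection formulas}(a) that the conditions $\hat{\pi}_j(v)=0$ force the recursion $c_{\ldots,0,1,\ldots}=q\,c_{\ldots,1,0,\ldots}$ on the coefficients; hence each weight space of $\bigcap_j\Kern(\hat{\pi}_j)$ is at most one-dimensional, and comparing with the weight spaces of $\Wd_{n+1}$ (which are all nontrivial) gives equality. You instead induct on $n$: peeling off the leftmost tensor factor and applying the induction hypothesis to $w_0,w_1$ places $v$ in $\Wd_2\tens\Wd_n\isom\Wd_{n+1}\oplus\Wd_{n-1}$, and then a single evaluation $\hat{\pi}_{n-1}(\Wbas_1\tens\Wbas_0^{\tens(n-1)})=\frac{1-q^{-2}}{\qnum 2}\,\Wbas_0^{\tens(n-2)}\neq 0$ rules out the summand $\Wd_{n-1}$, using semisimplicity and the non-isomorphy of the two summands. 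Both arguments are complete; yours trades the paper's explicit coefficient bookkeeping for structural input (Clebsch--Gordan, Schur, semisimplicity of $\Wd_2^{\tens n}$ at generic $q$), which makes the key dichotomy --- distinguishing $\Wd_{n+1}$ from $\Wd_{n-1}$ inside $\Wd_2\tens\Wd_n$ --- more transparent, while the paper's computation has the side benefit of exhibiting concretely what the vectors of $\Wd_{n+1}$ look like in the tensor basis. Your argument for (b)$\Rightarrow$(a) (Schur plus the dimension bound on constituents of $\Wd_2^{\tens(n-2)}$ from Lemma~\ref{lem: multiplicity of singlet}) is also a mild variant of the paper's, which simply checks the highest weight vector and invokes the intertwining property; both are fine.
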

\begin{proof}
It is clear by Lemma \ref{lem: projection formulas}(a) that the highest 
weight vector $\Wbas_{0}\tens\cdots\tens\Wbas_{0}$ satisfies (a) above. On the 
other hand, since the projections $\pi_j$ commute with the action of 
$\Uqsltwo$, (a) also holds for any other vector in $\Wd_{n+1}.$
Hence, (b) implies (a).

Suppose then that (a) is true. We may assume that $K.v = q^{n-2\ell}v$
for some $\ell$ and write
\begin{align*}
v=\;\sum_{\substack{l_1,\ldots,l_n\in\{0,1\}\\l_1+\cdots+l_n=\,\ell}}c_{l_1,\ldots,l_n}\times(\Wbas_{l_n}\tens\cdots\tens\Wbas_{l_1}).
\end{align*}
Using Lemma \ref{lem: projection formulas}(a), we calculate
\begin{align*}
0=\hat{\pi}_j(v)=\;\sum_{\substack{l_1,\ldots,l_{j-1},l_{j+2},\ldots,l_n\in\{0,1\}\\l_1+\cdots+l_{j-1}+l_{j+2}+\cdots+l_n=\,\ell-1}}\frac{1-q^{-2}}{\qnum 2}&\left(-q\,c_{l_1,\ldots,l_{j+2},1,0,l_{j-1},\ldots,l_n}+c_{l_1,\ldots,l_{j+2},0,1,l_{j-1},\ldots,l_n}\right)\\
&\times(\Wbas_{l_n}\tens\cdots\tens\Wbas_{l_{j+2}}\tens\Wbas_{l_{j-1}}\tens\cdots\tens\Wbas_{l_1})
\end{align*}
which gives
\begin{align*}
c_{l_1,\ldots,l_{j+2},0,1,l_{j-1},\ldots,l_n}=q\,c_{l_1,\ldots,l_{j+2},1,0,l_{j-1},\ldots,l_n}
\end{align*}
so fixing the value of $c_{1,\ldots,1,0,\ldots,0}\in\bC$ determines the 
other coefficients recursively. Hence, the space 
$\Big( \bigcap_{j=1}^{n-1} \Kern (\hat{\pi}_j) \Big) \, \cap \, \Kern(K-q^{n-2\ell})$
is at most one dimensional. On the other hand, we already noticed
that $\Wd_{n+1} \subset \bigcap_{j=1}^{n-1} \Kern (\hat{\pi}_j)$,
and since the subrepresentation $\Wd_{n+1}$ intersects all nontrivial $K$-eigenspaces of $\Wd_2^{\tens n}$,
we get that $\Wd_{n+1} = \bigcap_{j=1}^{n-1} \Kern (\hat{\pi}_j)$.
Hence, (a) implies (b).
\end{proof}

The following consequence will be used in proving uniqueness results.
\begin{cor}\label{cor: all projections vanish gives zero}
If $n\in\bZpos$ and the vector $v\in\Wd_{2}^{\tens n}$ satisfies $E.v=0$, $K.v=v$ and $\hat{\pi}_j(v)=0$ for all $1\leq j<n$, then $v=0.$  
\end{cor}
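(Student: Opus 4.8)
The plan is to combine Lemma~\ref{lem: all projections vanish} with Lemma~\ref{lem: multiplicity of singlet}. Suppose $v \in \Wd_2^{\tens n}$ satisfies $E.v = 0$, $K.v = v$, and $\hat{\pi}_j(v) = 0$ for all $1 \leq j < n$. The vanishing of all the $\hat{\pi}_j$ places $v$, by Lemma~\ref{lem: all projections vanish}, in the irreducible subrepresentation $\Wd_{n+1} \subset \Wd_2^{\tens n}$ generated by the highest weight vector $\Wbas_0 \tens \cdots \tens \Wbas_0$. Meanwhile, the conditions $E.v = 0$ and $K.v = v$ say that $v$ lies in the subspace $\HWsp_1$ of highest weight vectors of weight $q^0 = 1$; equivalently, $v$ spans (a piece of) the trivial subrepresentations $\Wd_1$ in the decomposition of Lemma~\ref{lem: multiplicity of singlet}.

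The key point is then a weight (dimension of irreducible) mismatch: a nonzero vector in $\Wd_{n+1}$ that is also a highest weight vector must have $K$-eigenvalue $q^{n+1-1} = q^n$, since inside an irreducible $\Wd_d$ the unique highest weight vector has $K$-eigenvalue $q^{d-1}$. For this to equal $1 = q^0$ we would need $q^n = 1$, i.e. $q$ a root of unity, which is excluded by our standing assumption $\kappa \in (0,8) \setminus \bQ$ (so $q = e^{\ii 4\pi/\kappa}$ is not a root of unity), together with $n \in \bZpos$ so that $n \neq 0$. Hence the only vector in $\Wd_{n+1}$ killed by $E$ with $K$-eigenvalue $1$ is $v = 0$.

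Phrased slightly differently and perhaps more cleanly: $\Wd_{n+1}$ is irreducible, so its intersection with $\HWsp_1 = \{v : E.v = 0,\ K.v = v\}$ is either zero or the full space of highest weight vectors of $\Wd_{n+1}$; but the space of highest weight vectors of an irreducible is one-dimensional and sits in the $K$-eigenspace with eigenvalue $q^n \neq 1$, so the intersection with the $K$-eigenvalue-$1$ subspace is $\{0\}$. Since $v$ lies in this intersection by the two hypotheses, $v = 0$.

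I do not anticipate a real obstacle here: the corollary is essentially a bookkeeping consequence of the two preceding lemmas plus the non-root-of-unity hypothesis on $q$. The only point requiring a line of care is making explicit why $n \geq 1$ matters — for $n = 0$ one has $\Wd_1 = \Wd_{n+1}$ and the statement would fail — and this is exactly why the hypothesis $n \in \bZpos$ appears. The proof is therefore short: invoke Lemma~\ref{lem: all projections vanish} to land in $\Wd_{n+1}$, then observe that the highest weight of $\Wd_{n+1}$ is $q^n \neq q^0$ for $n \in \bZpos$ and $q$ not a root of unity, forcing $v = 0$.
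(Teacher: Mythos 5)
Your proof is correct and follows essentially the same route as the paper: invoke Lemma~\ref{lem: all projections vanish} to place $v$ in the irreducible $\Wd_{n+1}$, then use the hypotheses $E.v=0$, $K.v=v$ to force $v=0$. The paper phrases the final step as $\Wd_{1}\cap\Wd_{n+1}=\set{0}$ rather than via the $K$-eigenvalue mismatch $q^{n}\neq 1$, but these are the same observation.
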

\begin{proof}
The conditions $E.v=0$ and $K.v=v$ state that $v$ lies in a trivial
subrepresentation $\Wd_{1}\subset\Wd_{2}^{\tens n}$.
On the other hand, 
by the previous lemma,
$v \in \bigcap_{j=1}^{n-1} \Kern(\hat{\pi}_j)$ implies $v\in\Wd_{n+1}\subset\Wd_{2}^{\tens n}$.
Combining, we get
$v\in\Wd_{1}\cap\Wd_{n+1} = \set{0} $.
\end{proof}

\subsection{The spin chain - Coulomb gas correspondence}

Our solution to the system 
\eqref{eq: multiple SLE PDEs}~--~\eqref{eq: multiple SLE asymptotics}
is based on the following correspondence, which is a particular case
of the main results of \cite{KP-conformally_covariant_boundary_correlation_functions_with_a_quantum_group}.
It associates solutions of \eqref{eq: multiple SLE PDEs}~--~\eqref{eq: multiple SLE Mobius covariance}
to vectors in the trivial subrepresentation $\HWsp_1(\Wd_2^{\tens 2N})$, 
which was defined in \eqref{eq: highest vector space}.
We denote below by $\bH = \set{ z \in \bC \; \big| \; \im(z) > 0}$
the upper half-plane, and recall that its conformal self-maps are M\"obius transformations 
$\Mob(z) = \frac{a z + b}{c z + d}$, where $a,b,c,d \in \bR$ and $a d - b c > 0$.

\begin{thm}[{\cite{KP-conformally_covariant_boundary_correlation_functions_with_a_quantum_group}}]\label{thm: SCCG correspondence special case}
Let $\kappa \in (0,8) \setminus \bQ$ and $q=e^{\ii 4\pi/\kappa}$.
There exist linear maps $\sF \colon \HWsp_1(\Wd_2^{\tens 2N}) \to \sC^\infty(\chamber_{2N})$,
for all $N \in \bZnn$, such that the following hold for any $v\in \HWsp_1(\Wd_2^{\tens 2N})$.
\begin{description}
\item [{(PDE)}] The function $\PartF = \sF[v] \colon \chamber_{2N} \to \bC$ satisfies the partial differential equations \eqref{eq: multiple SLE PDEs}.
\item [{(COV)}] For any M\"obius transformation $\Mob\colon\bH\to\bH$ 
such that $\Mob(x_{1})<\Mob(x_{2})<\cdots<\Mob(x_{2N})$,
the covariance \eqref{eq: multiple SLE Mobius covariance} holds for the function $\PartF = \sF[v]$.
\item [{(ASY)}]
For each $j=1,\ldots,2N-1$, we have
$\hat{v} = \hat{\pi}_{j}(v) \in \HWsp_1(\Wd_2^{\tens 2(N-1)})$.
Denote $B = \frac{\Gamma(1-4/\kappa)^2}{\Gamma(2-8/\kappa)}$ and $h = \frac{6-\kappa}{2 \kappa}$, and let 
$\xi \in (x_{j-1},x_{j+2})$.
The function $\sF[v] \colon \chamber_{2N} \to \bC$ has the 
asymptotics 
\begin{align*}
\lim_{x_{j},x_{j+1}\to\xi}\frac{\sF[v](x_{1},\ldots,x_{2N})}{(x_{j+1}-x_{j})^{-2 h}}
=\; & B\times\sF[\hat{v}](x_{1},\ldots,x_{j-1},x_{j+2},\ldots,x_{2N}) .
\end{align*}
\end{description}
\end{thm}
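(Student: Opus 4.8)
The theorem is the specialization of the general spin chain--Coulomb gas correspondence of \cite{KP-conformally_covariant_boundary_correlation_functions_with_a_quantum_group} to the case where all $2N$ tensor factors equal the two-dimensional representation $\Wd_2$, so the plan is to recall that general construction, match the parameters, and verify that each of (PDE), (COV), (ASY) specializes to the asserted form. In the general correspondence, to a tensor product $\Wd_{d_1}\tens\cdots\tens\Wd_{d_n}$ and a vector $v$ in its space of invariants $\set{w \,:\, E.w=0,\ K.w=w}$ one associates a function of the ordered real variables $x_1<\cdots<x_n$, defined by a Dotsenko--Fateev type multiple contour integral; since its integrand is holomorphic away from the diagonals, the function is smooth on the chamber. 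I would take $n=2N$ and $d_1=\cdots=d_{2N}=2$: then the space of invariants is exactly $\HWsp_1(\Wd_2^{\tens 2N})$ of \eqref{eq: highest vector space}, the conformal weight attached to each $\Wd_2$ is $h_{1,2}=\frac{6-\kappa}{2\kappa}=h$, and one obtains the linear maps $\sF\colon\HWsp_1(\Wd_2^{\tens 2N})\to\sC^\infty(\chamber_{2N})$, $N\in\bZnn$.

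For (PDE): at each position occupied by the level-two degenerate representation $\Wd_2$ the general correspondence attaches the associated second-order null-vector (BPZ) differential operator annihilating the function, so since every factor is $\Wd_2$ and every marked point carries weight $h$, one obtains exactly the system \eqref{eq: multiple SLE PDEs}, whose coefficients one checks coincide once all $2N$ weights are set equal to $h$. For (COV): translation invariance of the multiple integral is automatic, while covariance under the remaining generators of $\mathrm{Aut}(\bH)$ holds because the conditions $K.v=v$ and $E.v=0$ are the quantum group counterparts of the charge-neutrality and screening conditions that render the Coulomb gas correlator covariant; together with the weight $h$ at every point this is \eqref{eq: multiple SLE Mobius covariance} for every $\Mob$ preserving the order of the points. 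Both are direct instances of the covariance and null-vector statements of \cite{KP-conformally_covariant_boundary_correlation_functions_with_a_quantum_group} with all representation labels equal to $2$.

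The substantive point is (ASY). That $\hat{\pi}_j(v)\in\HWsp_1(\Wd_2^{\tens 2(N-1)})$ is a formal consequence of $\hat{\pi}_j=\id^{\tens(2N-1-j)}\tens\hat{\pi}\tens\id^{\tens(j-1)}$ being a morphism of $\Uqsltwo$-modules --- it is the $\Uqsltwo$-linear projection onto the multiplicity-one summand $\Wd_1\subset\Wd_2\tens\Wd_2$ composed with the identification $\Wd_1\isom\bC$ --- hence it intertwines the $E$- and $K$-actions and maps invariants to invariants. For the asymptotics of $\sF[v]$ as $x_j,x_{j+1}\to\xi$, the behaviour is governed by the fusion $\Wd_2\tens\Wd_2\isom\Wd_3\oplus\Wd_1$: the $\Wd_3$ (``two-leg'') channel produces a term of order $(x_{j+1}-x_j)^{h_{1,3}-2h}$ with $h_{1,3}=\frac{8-\kappa}{\kappa}>0$ for $\kappa<8$, which tends to $0$ after division by $(x_{j+1}-x_j)^{-2h}$, while the $\Wd_1$ (``identity'') channel gives the surviving term of order $(x_{j+1}-x_j)^{h_{1,1}-2h}=(x_{j+1}-x_j)^{-2h}$; in this channel one of the $N$ screening contours collapses onto the segment $[x_j,x_{j+1}]$ and produces the beta integral $B=\int_0^1 t^{-4/\kappa}(1-t)^{-4/\kappa}\,\ud t=\frac{\Gamma(1-4/\kappa)^2}{\Gamma(2-8/\kappa)}$ (interpreted by analytic continuation for $\kappa\leq 4$), while the remaining $N-1$ contours with the $2N-2$ remaining insertions reproduce $\sF[\hat{\pi}_j(v)]$ on the reduced configuration, yielding the stated limit $B\cdot\sF[\hat{\pi}_j(v)]$.

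I expect (ASY) to be the main obstacle. One must invoke the precise asymptotics theorem of \cite{KP-conformally_covariant_boundary_correlation_functions_with_a_quantum_group}, including its explicit multiplicative constant, and verify that the genericity hypothesis $\kappa\notin\bQ$ excludes resonances among the Coulomb gas exponents, so that the identity channel contributes a single clean power $(x_{j+1}-x_j)^{-2h}$ with a finite, nonzero coefficient $B$, the two-leg channel is genuinely subleading, and the estimates are uniform for $\xi$ in compact subsets of $(x_{j-1},x_{j+2})$.
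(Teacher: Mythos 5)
Your proposal is correct and follows essentially the same route as the paper, whose entire proof is a one-line citation of Theorem~4.17 and Lemma~3.4 of \cite{KP-conformally_covariant_boundary_correlation_functions_with_a_quantum_group} specialized to all tensor factors equal to $\Wd_2$. Your added details --- the weight $h_{1,2}=h$, the fusion exponents $h_{1,3}-2h=\frac{2}{\kappa}$ versus $-2h$ (matching the paper's own discussion in Appendix~A), and the beta-integral identification of $B$ --- are consistent with the cited general results and merely make explicit what the paper leaves implicit.
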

\begin{proof}
This follows as a special case of Theorem~4.17 and Lemma~3.4
in \cite{KP-conformally_covariant_boundary_correlation_functions_with_a_quantum_group}.
\end{proof}


\bigskip{}

\section{\label{sec: Multiple SLE pure geometries}Quantum group solution of the pure partition functions}

In view of Theorem~\ref{thm: SCCG correspondence special case}, the problem 
of finding the multiple $\SLE$ pure partition functions for $N$ curves
is reduced to finding a certain basis of the trivial subrepresentation
$\HWsp_1(\Wd_{2}^{\tens 2N})$.
More precisely, 
Equations~\eqref{eq: multiple SLE PDEs}~--~\eqref{eq: multiple SLE asymptotics}
correspond to the following linear system of equations for
$v_\alpha\in\Wd_{2}^{\tens 2|\alpha|}$, $\alpha\in\LP$:
\begin{align}
&K.v_\alpha=v_\alpha\label{eq: multiple sle cartan eigenvalue}\\
&E.v_\alpha=0\label{eq: multiple sle highest weight vector}\\
&\hat{\pi}_j(v_\alpha)=\;\label{eq: multiple sle projection conditions}\begin{cases} 0 & \mbox{if } \link{j}{j+1}\notin\alpha\\ v_{\alpha\removeLink\link{j}{j+1}} & \mbox{if } \link{j}{j+1}\in\alpha \end{cases}
\qquad\text{for all }j=1,\ldots,2|\alpha|-1.
\end{align}
 
The first two equations
\eqref{eq: multiple sle cartan eigenvalue}~--~\eqref{eq: multiple sle highest weight vector} 
state that $v_\alpha$  belongs to 
the trivial subrepresentation $\HWsp_1(\Wd_{2}^{\tens 2|\alpha|})$.
By the (PDE) and (COV) parts of
Theorem~\ref{thm: SCCG correspondence special case},
they correspond to 
Equations~\eqref{eq: multiple SLE PDEs}~--~\eqref{eq: multiple SLE Mobius covariance}
for a partition function
\begin{align*}
\PartF_\alpha(x_{1},\ldots,x_{2|\alpha|})\propto\sF[v_\alpha](x_{1},\ldots,x_{2|\alpha|}).
\end{align*}
Equations~\eqref{eq: multiple sle projection conditions} 
correspond to the 
asymptotic conditions~\eqref{eq: multiple SLE asymptotics} 
specified by the link pattern $\alpha$,
by the (ASY) part of Theorem~\ref{thm: SCCG correspondence special case}.

The main result of this section is the construction of the solutions
$v_\alpha$. 
\begin{thm}\label{thm: existence of multiple SLE vectors}
There exists a unique collection $\left(v_{\alpha}\right)_{\alpha\in\LP}$ of vectors $v_{\alpha}\in\Wd_{2}^{\tens 2|\alpha|}$ such that the system of equations~\eqref{eq: multiple sle cartan eigenvalue}~--~\eqref{eq: multiple sle projection conditions} holds for all $\alpha\in\LP$, with the normalization $v_\emptyset=1$.
\end{thm}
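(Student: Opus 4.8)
The plan is to construct the vectors $v_\alpha$ by induction on $N = |\alpha|$, exploiting the recursive structure of the defining equations~\eqref{eq: multiple sle cartan eigenvalue}~--~\eqref{eq: multiple sle projection conditions}. For $N=0$ the only link pattern is $\emptyset$, and we set $v_\emptyset = 1 \in \Wd_2^{\tens 0} \isom \bC$, which trivially satisfies all conditions (there are no projection conditions). For the inductive step, assume the collection $(v_{\hat\alpha})_{\hat\alpha \in \LP_{N-1}}$ exists and is unique. Fix $\alpha \in \LP_N$. Every link pattern of $N \geq 1$ links contains at least one \emph{short link} $\link{j}{j+1}$ (a pair of adjacent indices); pick, say, the one with smallest such $j$. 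The strategy is to build $v_\alpha$ by inserting a singlet $\Sbas$ into tensor slots $j, j+1$ of the already-constructed lower vector $v_{\alpha \removeLink \link{j}{j+1}}$, suitably embedded, and then to verify that \emph{all} the projection conditions~\eqref{eq: multiple sle projection conditions} hold, not just the one at position $j$.

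The key steps, in order. First I would define a candidate: let $\hat\alpha = \alpha \removeLink \link{j}{j+1}$, and set $v_\alpha := \mathfrak{I}_{j}\big(\Sbas, v_{\hat\alpha}\big)$, where $\mathfrak{I}_j \colon \Wd_2 \tens \Wd_2 \tens \Wd_2^{\tens 2(N-1)} \to \Wd_2^{\tens 2N}$ is the obvious linear embedding placing the first two factors into slots $j,j+1$ and the remaining $2N-2$ factors into the complementary slots in order. Since $\Sbas$ spans a trivial subrepresentation and tensoring with the trivial representation is neutral (Lemma~\ref{lem: tensor product representations of quantum sl2}), and since $v_{\hat\alpha} \in \HWsp_1$, we get $E.v_\alpha = 0$ and $K.v_\alpha = v_\alpha$, so~\eqref{eq: multiple sle cartan eigenvalue}~--~\eqref{eq: multiple sle highest weight vector} hold. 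Second, $\hat\pi_j(v_\alpha) = v_{\hat\alpha}$ is immediate from $\hat\pi(\Sbas) = 1$, so the condition at position $j$ is satisfied. Third, and this is the crux, I would verify the projection conditions at \emph{all other} positions $j' \neq j$ by a case analysis on the relative position of $\{j',j'+1\}$ and $\{j,j+1\}$:
\begin{itemize}
\item If $\{j',j'+1\}$ is disjoint from and non-adjacent to $\{j,j+1\}$, then $\hat\pi_{j'}$ acts only on slots inherited from $v_{\hat\alpha}$, and one reads off $\hat\pi_{j'}(v_\alpha) = \mathfrak{I}(\Sbas, \hat\pi_{j''}(v_{\hat\alpha}))$ for the corresponding position $j''$ in $\hat\alpha$; the inductive hypothesis together with the combinatorial fact that $\link{j'}{j'+1} \in \alpha \iff \link{j''}{j''+1} \in \hat\alpha$ (and then $\alpha \removeLink \link{j'}{j'+1} = (\hat\alpha \removeLink \link{j''}{j''+1})$ with $\link{j}{j+1}$ reinserted) closes this case.
\item If $\{j',j'+1\}$ overlaps $\{j,j+1\}$ in exactly one index, i.e. $j' = j-1$ or $j' = j+1$, then $\link{j'}{j'+1} \notin \alpha$ necessarily (since $j+1$, resp. $j$, is already matched to its neighbor), so we must show $\hat\pi_{j'}(v_\alpha) = 0$. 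Here $\hat\pi_{j'}$ acts on one slot of the singlet $\Sbas$ and one neighboring slot carrying a basis vector $\Wbas_l$ from $v_{\hat\alpha}$; Lemma~\ref{lem: projection formulas}(b) gives $(\id \tens \hat\pi)(\Sbas \tens \Wbas_l) = -\tfrac{1}{\qnum 2}\Wbas_l$ and the mirror identity, so $\hat\pi_{j'}(v_\alpha) = -\tfrac{1}{\qnum 2}\, \mathfrak{I}'(v_{\hat\alpha}')$ where now the singlet has been ``slid through'' and what remains is $\hat\pi$ applied to \emph{adjacent} slots of $v_{\hat\alpha}$ — and those two slots of $\hat\alpha$ are not linked (this is exactly the combinatorial content of $\link{j'}{j'+1}\notin\alpha$), so the inner projection vanishes by the inductive hypothesis. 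This yields $\hat\pi_{j'}(v_\alpha)=0$ as required.
\end{itemize}

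For uniqueness: suppose $v_\alpha, v_\alpha'$ both solve~\eqref{eq: multiple sle cartan eigenvalue}~--~\eqref{eq: multiple sle projection conditions}. Their difference $w = v_\alpha - v_\alpha'$ satisfies $E.w = 0$, $K.w = w$, and $\hat\pi_{j'}(w) = 0$ for all $j'$, because the right-hand sides of~\eqref{eq: multiple sle projection conditions} depend only on the lower-order data $(v_{\hat\alpha})_{\hat\alpha\in\LP_{N-1}}$, which is unique by induction — so they agree for $v_\alpha$ and $v_\alpha'$ and cancel. By Corollary~\ref{cor: all projections vanish gives zero}, $w = 0$. This also settles the base case trivially. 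The main obstacle I anticipate is the bookkeeping in the third step — keeping track of index relabelings under link removal and matching them with the tensor-slot permutations induced by the embeddings $\mathfrak{I}$, so that the combinatorial identities ``$\link{j'}{j'+1}\in\alpha$ corresponds to the right linked pair in $\hat\alpha$'' are stated precisely. I would isolate these as a short combinatorial lemma about short links and link removal, proved by inspection of the planar picture (cf. Figure~\ref{fig: removing link}), so that the representation-theoretic computation stays clean and relies only on Lemma~\ref{lem: projection formulas} and Corollary~\ref{cor: all projections vanish gives zero}. One subtlety to flag: the recursion should be checked to be consistent, i.e. independent of which short link $\link{j}{j+1}$ one chooses to peel off — but this is automatic, since uniqueness via Corollary~\ref{cor: all projections vanish gives zero} shows the resulting $v_\alpha$ is the same regardless of the construction path.
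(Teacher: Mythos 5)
Your uniqueness argument is fine and coincides with the paper's (Proposition~\ref{prop: uniqueness}, via Corollary~\ref{cor: all projections vanish gives zero}). The existence construction, however, fails: the candidate $v_\alpha = \mathfrak{I}_j(\Sbas, v_{\hat\alpha})$, obtained by inserting a singlet into slots $j,j+1$ of $v_{\hat\alpha}$ with $\hat\alpha = \alpha\removeLink\link{j}{j+1}$, does not satisfy the projection conditions at the adjacent positions $j'=j\pm1$. Writing $v_{\hat\alpha}$ in the tensor basis and applying Lemma~\ref{lem: projection formulas}(b) slot by slot gives
\begin{align*}
\hat{\pi}_{j\pm1}\big(\mathfrak{I}_j(\Sbas, v_{\hat\alpha})\big) \;=\; -\tfrac{1}{\qnum{2}}\, v_{\hat\alpha} \;\neq\; 0 ,
\end{align*}
whereas \eqref{eq: multiple sle projection conditions} demands $0$ there (since $j$, resp.\ $j+1$, is already linked to its partner, neither $\link{j-1}{j}$ nor $\link{j+1}{j+2}$ can lie in $\alpha$). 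The error is in your reading of Lemma~\ref{lem: projection formulas}(b): the identity $(\id\tens\hat{\pi})(\Sbas\tens\Wbas_l)=-\frac{1}{\qnum{2}}\Wbas_l$ consumes the singlet and transports the content of the neighbouring slot with a factor $-1/\qnum{2}$; it does \emph{not} produce a further projection $\hat{\pi}$ acting on two adjacent slots of $v_{\hat\alpha}$, so there is nothing left to vanish. The smallest counterexample is $N=2$, $\alpha=\set{\link{1}{2},\link{3}{4}}$: your candidate is $\Sbas\tens\Sbas$, and a direct computation from \eqref{eq: singlet basis vector} and Lemma~\ref{lem: projection formulas} gives $\hat{\pi}_2(\Sbas\tens\Sbas)=-\frac{1}{\qnum{2}}\Sbas\neq 0$, violating the required $\hat{\pi}_2(v_\alpha)=0$. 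So the pure vectors are genuinely not built by stacking singlets; this obstruction is the actual content of the theorem.

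For comparison, the paper's existence proof inducts on $N$ only through the inhomogeneous terms and recurses \emph{downward} within each $\LP_N$ along the partial order induced by walks: the maximal (rainbow) patterns $\nested_N$ --- which for $N\geq2$ have a unique short link, so your ansatz would in particular have to work for them, and it does not --- are solved by the explicit closed formula of Proposition~\ref{prop: solution for nested}, and all other $v_\alpha$ are obtained from the tying-operation recursion $v_\alpha=\qnum{2}\,(\id-\pi_j)(v_{\alpha^{\tieOp_j}})-\sum_\beta v_\beta$ of Proposition~\ref{prop: construction of solutions}, whose verification requires the combinatorial bookkeeping of Lemma~\ref{lem: downwards lemma}. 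Repairing your approach would require adding to $\mathfrak{I}_j(\Sbas,v_{\hat\alpha})$ a correction term cancelling the adjacent projections without disturbing the others; finding that correction is essentially equivalent to the paper's recursion.
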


The proof is based on a number of steps achieved in
Sections~\ref{subsec: uniqueness}~--~\ref{subsec: general pattern}, which are combined
in Section~\ref{subsec: proof of existence}.
In Section \ref{sec: Multiple SLE partition functions}, the solutions $v_\alpha$ will be
converted to the multiple $\SLE$ pure partition functions $\PartF_\alpha$
with the help of Theorem~\ref{thm: SCCG correspondence special case}.

\subsection{\label{subsec: uniqueness}Uniqueness of solutions}

We will first show that solutions to the
system~\eqref{eq: multiple sle cartan eigenvalue}~--~\eqref{eq: multiple sle projection conditions}
are necessarily unique, up to normalization. 
The corresponding homogeneous system requires
$v \in \HWsp_1(\Wd_2^{\tens 2N})$ to have vanishing projections
$\hat{\pi}_j(v)$ for each $j$.
Corollary~\ref{cor: all projections vanish gives zero}
shows that the homogeneous problem only admits the trivial solution.

\begin{prop}\label{prop: uniqueness}
Let $\left(v_\alpha\right)_{\alpha\in\LP}$ and 
$\left(v'_\alpha\right)_{\alpha\in\LP}$ be two collections 
of solutions to~\eqref{eq: multiple sle cartan eigenvalue}~--~\eqref{eq: multiple sle projection conditions} 
such that $v_\emptyset,v'_\emptyset\neq0$. 
Then there is a constant $c\in\bC\setminus\{0\}$ so that
\begin{align*}
v'_\alpha=c\,v_\alpha\qquad\text{for all }\alpha\in\LP.
\end{align*}
\end{prop}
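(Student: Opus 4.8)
The plan is to argue by induction on the number of links $N=|\alpha|$, producing a single constant $c\in\bC\setminus\{0\}$ that works simultaneously for every link pattern. For the base case $N=0$, the only link pattern is $\emptyset$, and $v_\emptyset,v'_\emptyset$ both lie in $\Wd_2^{\tens 0}\isom\bC$; since they are nonzero, I would simply set $c:=v'_\emptyset/v_\emptyset\in\bC\setminus\{0\}$, so that $v'_\emptyset=c\,v_\emptyset$.

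For the inductive step, suppose $v'_{\hat\alpha}=c\,v_{\hat\alpha}$ has been established for all $\hat\alpha\in\LP_{N-1}$, fix $\alpha\in\LP_N$, and consider the difference $w:=v'_\alpha-c\,v_\alpha\in\Wd_2^{\tens 2N}$. Both $v_\alpha$ and $v'_\alpha$ satisfy~\eqref{eq: multiple sle cartan eigenvalue}~--~\eqref{eq: multiple sle highest weight vector}, hence so does $w$, i.e. $K.w=w$ and $E.w=0$. Next I would check the projection conditions for each $j=1,\ldots,2N-1$: if $\link{j}{j+1}\notin\alpha$, then $\hat\pi_j(v_\alpha)=0=\hat\pi_j(v'_\alpha)$ by~\eqref{eq: multiple sle projection conditions}, so $\hat\pi_j(w)=0$; and if $\link{j}{j+1}\in\alpha$, then $\hat\pi_j(v_\alpha)=v_{\alpha\removeLink\link{j}{j+1}}$ and $\hat\pi_j(v'_\alpha)=v'_{\alpha\removeLink\link{j}{j+1}}$, and since $\alpha\removeLink\link{j}{j+1}\in\LP_{N-1}$ the induction hypothesis yields $\hat\pi_j(w)=v'_{\alpha\removeLink\link{j}{j+1}}-c\,v_{\alpha\removeLink\link{j}{j+1}}=0$. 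Thus $w$ satisfies $E.w=0$, $K.w=w$, and $\hat\pi_j(w)=0$ for all $1\leq j<2N$, so Corollary~\ref{cor: all projections vanish gives zero} forces $w=0$, that is, $v'_\alpha=c\,v_\alpha$. This completes the induction.

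I do not expect a genuine obstacle here: the whole argument is a bookkeeping induction whose only real content is the vanishing statement of Corollary~\ref{cor: all projections vanish gives zero} (itself coming from $\Wd_1\cap\Wd_{n+1}=\{0\}$ in $\Wd_2^{\tens n}$). The one point to be careful about is that the constant $c$ is fixed once and for all in the base case and never readjusted, the inductive step using only that \emph{this same} $c$ already relates the two families on $\LP_{N-1}$. It is also worth recording — though not strictly needed for uniqueness, since the projection conditions only enter as constraints on $w$ — that every $\alpha\in\LP_N$ with $N\geq 1$ has at least one innermost link of the form $\link{j}{j+1}$, so the recursive data $v_{\alpha\removeLink\link{j}{j+1}}$ genuinely appears and the hypothesis of Proposition~\ref{prop: uniqueness} is not vacuous.
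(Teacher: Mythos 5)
Your proof is correct and is essentially the paper's own argument: fix $c$ from the base case $v'_\emptyset = c\,v_\emptyset$, then induct on $N$ and apply Corollary~\ref{cor: all projections vanish gives zero} to the difference $v'_\alpha - c\,v_\alpha$, which satisfies the homogeneous system by the induction hypothesis. You have merely spelled out the verification of the projection conditions that the paper leaves implicit.
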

\begin{proof}
Clearly $v'_\emptyset=c\,v_\emptyset$ for some
$c\in\bC\setminus\{0\}$. Let $N\geq1$ and suppose the condition
$v'_\beta=c\,v_\beta$ holds for all $\beta\in\LP_{N-1}$.
Then the equations~\eqref{eq: multiple sle cartan eigenvalue}~--~\eqref{eq: multiple sle projection conditions} 
for $v'_\alpha$ and $c\,v_\alpha$ coincide for each
$\alpha\in\LP_N,$ and it follows from 
Corollary~\ref{cor: all projections vanish gives zero} that
$v'_\alpha-c\,v_\alpha=0.$ The assertion follows by induction on $N$.
\end{proof}

We next proceed with the construction of the solutions to 
\eqref{eq: multiple sle cartan eigenvalue}~--~\eqref{eq: multiple sle projection conditions}. 
From now on, we shall fix the normalization by $v_\emptyset=1$.

\subsection{\label{subsec: construction}\label{subsub: rainbow pattern}Solution for rainbow patterns}

\begin{figure}
\includegraphics[scale=1.5]{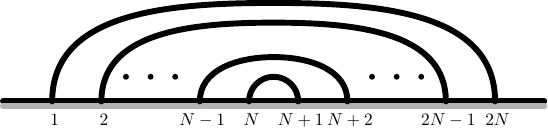}
\caption{\label{fig: rainbow pattern}
The rainbow pattern $\nested_N$.}
\end{figure}

We begin with the solution corresponding to a special case, the rainbow link pattern defined for $N\in\bN$ by
\begin{align*}
\nested_N=\left\lbrace\link{1}{2N},\link{2}{2N-1},\ldots,\link{N}{N+1}\right\rbrace\in\LP_N,
\end{align*}
illustrated in Figure~\ref{fig: rainbow pattern}. 
For the rainbow pattern $\alpha=\nested_N$,
Equations~\eqref{eq: multiple sle cartan eigenvalue}~--~\eqref{eq: multiple sle projection conditions} 
allow us to find $v_{\nested_N}$ recursively, as they involve only $\nested_{N-1}$ in their inhomogeneous terms:
\begin{align}
(K-1).v_{\nested_N} = \; & 0\label{eq: cartan eigenvalue for nested}\\
E.v_{\nested_N} = \; & 0\label{eq: highest weight vector for nested}\\
\hat{\pi}_N(v_{\nested_N}) = \; & v_{\nested_{N-1}}\qquad\text{and}\qquad
\hat{\pi}_j(v_{\nested_N}) =  0 \quad\text{for}\;j\neq N .
\label{eq: projections for nested}
\end{align}

To give an explicit expression for the solutions $v_{\nested_N}$
we introduce the following notation.
Recall from Section~\ref{subsec: quantum group} that 
the tensor product $\Wd_{2}^{\tens N}$ contains the 
irreducible subrepresentation $\Wd_{N+1}$ generated by 
$\Wbas_{0}\tens\cdots\tens\Wbas_{0}.$ 
Denote
\begin{align*}
\MTbas_{0}^{(N)}=\Wbas_{0}\tens\cdots\tens\Wbas_{0}\,\in\Wd_{N+1}\subset\Wd_{2}^{\tens N}\qquad\text{and }\qquad
\MTbas_{l}^{(N)}=F^l.\MTbas_{0}^{(N)}\quad\text{for }0\leq l\leq N.
\end{align*}
Then 
$(\MTbas_{l}^{(N)})_{0\leq l\leq N}$ is a basis of the 
subrepresentation $\Wd_{N+1}\subset\Wd_{2}^{\tens N}.$

We will prove that the solutions for the rainbow
patterns are given by the following formulas.
\begin{prop}\label{prop: solution for nested}
The vectors
\begin{align*}
v_{\nested_N}:=\;\frac{1}{(q^{-2}-1)^{N}}\frac{\qnum{2}^{N}}{\qfact{N+1}}\,\sum_{l=0}^N(-1)^lq^{l(N-l-1)}\times\left(\MTbas_l^{(N)}\tens\MTbas_{N-l}^{(N)}\right)\,\in\Wd_{2}^{\tens 2N}
\end{align*}
for $N\in\bZnn$ determine the unique solution to \eqref{eq: cartan eigenvalue for nested}~--~\eqref{eq: projections for nested} with $v_\emptyset=1$.
\end{prop}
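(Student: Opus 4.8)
The strategy is to prove uniqueness first, and then verify that the displayed vectors solve the system \eqref{eq: cartan eigenvalue for nested}--\eqref{eq: projections for nested} by induction on $N$. Uniqueness is immediate: since the inhomogeneous term of \eqref{eq: projections for nested} involves only $\nested_{N-1}$, any two solutions normalized by $v_\emptyset = 1$ differ, in degree $N$, by a vector $w \in \HWsp_1(\Wd_{2}^{\tens 2N})$ with $\hat{\pi}_j(w) = 0$ for all $j$, hence $w = 0$ by Corollary~\ref{cor: all projections vanish gives zero}; an induction on $N$ finishes. (This is also the content of Proposition~\ref{prop: uniqueness}.)

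For existence I would induct on $N$, the base case $N = 0$ being $v_\emptyset = 1$ and $N = 1$ giving $v_{\nested_1} = \Sbas$. The key point is that the displayed formula is, up to a nonzero scalar, the highest weight vector $\Tbas_{0}^{(1;N+1,N+1)}$ of Lemma~\ref{lem: tensor product representations of quantum sl2} with $d_1 = d_2 = N+1$ and $m = N$ (setting $l = N - l_1$ turns the exponent $q^{l_1(N+1-l_1)}$ appearing there into $q^{N}\,q^{l(N-l-1)}$, and $\MTbas_l^{(N)}$ is the basis vector $\Wbas_l$ of the subrepresentation $\Wd_{N+1}\subset\Wd_{2}^{\tens N}$). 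Consequently $v_{\nested_N}$ is a highest weight vector of $K$-eigenvalue $1$ lying in $\Wd_{N+1}\tens\Wd_{N+1}\subset\Wd_{2}^{\tens 2N}$, so \eqref{eq: cartan eigenvalue for nested}--\eqref{eq: highest weight vector for nested} hold automatically; and for $j \neq N$ the projection $\hat{\pi}_j$ acts within one of the two tensor blocks $\Wd_{2}^{\tens N}$, on which every element of the subrepresentation $\Wd_{N+1}$ (generated by $\Wbas_{0}\tens\cdots\tens\Wbas_{0}$) is annihilated by Lemma~\ref{lem: all projections vanish}, so $\hat{\pi}_j(v_{\nested_N}) = 0$. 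If one prefers, \eqref{eq: cartan eigenvalue for nested}--\eqref{eq: highest weight vector for nested} can instead be checked by hand: each summand $\MTbas_l^{(N)}\tens\MTbas_{N-l}^{(N)}$ has $K$-eigenvalue $q^{(N-2l)+(N-2(N-l))}=1$, and applying $\Hcp(E) = E\tens K + 1\tens E$ with respect to $\Wd_{2}^{\tens 2N} = \Wd_{2}^{\tens N}\tens\Wd_{2}^{\tens N}$, one finds that the two contributions to each basis tensor cancel after a shift of the summation index, exactly because of the weight $q^{l(N-l-1)}$.

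The remaining, and most delicate, identity is $\hat{\pi}_N(v_{\nested_N}) = v_{\nested_{N-1}}$. Here $\hat{\pi}_N$ contracts the innermost pair, that is, the last tensorand of $\MTbas_l^{(N)}$ with the first tensorand of $\MTbas_{N-l}^{(N)}$, so I would first expand these, via the $q$-binomial formula for $\Hcp(F^l)$, as
\begin{align*}
\MTbas_l^{(N)} &= \MTbas_l^{(N-1)}\tens\Wbas_{0} + q^{l-N}\,\qnum{l}\,\MTbas_{l-1}^{(N-1)}\tens\Wbas_{1} , \\
\MTbas_{N-l}^{(N)} &= q^{l-N}\,\Wbas_{0}\tens\MTbas_{N-l}^{(N-1)} + \qnum{N-l}\,\Wbas_{1}\tens\MTbas_{N-l-1}^{(N-1)} ,
\end{align*}
so that only the $\Wbas_{0}\tens\Wbas_{1}$ and $\Wbas_{1}\tens\Wbas_{0}$ cross-terms survive under $\hat{\pi}_N$, by Lemma~\ref{lem: projection formulas}(a). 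Collecting these, reindexing the two resulting sums (sending $l\mapsto m$ in one and $l\mapsto m+1$ in the other), and simplifying the $q$-integers via $\qnum{N-m}(q^{-1}-q) = q^{m-N}-q^{N-m}$ and $q^{-N}\qnum{m+1}(1-q^{-2}) = q^{m-N}-q^{-m-N-2}$, one obtains $\hat{\pi}_N(v_{\nested_N}) = \frac{c_N\,(q^{-N-2}-q^{N})}{c_{N-1}\,\qnum{2}}\,v_{\nested_{N-1}}$, where $c_N = \qnum{2}^{N}/((q^{-2}-1)^{N}\qfact{N+1})$ is the stated prefactor. Since $c_{N-1}/c_N = (q^{-2}-1)\qnum{N+1}/\qnum{2} = (q^{-N-2}-q^{N})/\qnum{2}$, this is exactly $v_{\nested_{N-1}}$, which closes the induction.

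I expect the main obstacle to be precisely this last step: running the peeling-and-contraction for $\hat{\pi}_N$ with correct bookkeeping of the powers of $q$ and of the $q$-integers, in particular getting the $q$-binomial coefficient in $\Hcp(F^l)$ right (it is a $q^{2}$-deformed, not a $q$-deformed, binomial), since a slip there destroys the collapse of the sum. Everything else is routine or follows at once from Lemmas~\ref{lem: tensor product representations of quantum sl2}, \ref{lem: all projections vanish} and \ref{lem: projection formulas} together with Corollary~\ref{cor: all projections vanish gives zero}.
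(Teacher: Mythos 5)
Your proposal is correct and follows essentially the same route as the paper's proof: uniqueness via Corollary~\ref{cor: all projections vanish gives zero}, vanishing of $\hat{\pi}_j$ for $j\neq N$ via Lemma~\ref{lem: all projections vanish}, and the computation of $\hat{\pi}_N(v_{\nested_N})$ by expanding with the recursions of Lemma~\ref{lem: recursion for tensor basis}, contracting the two middle tensorands with Lemma~\ref{lem: projection formulas}(a), and collapsing the resulting sum with $q$-integer identities — your final coefficient $c_N(q^{-N-2}-q^N)/(c_{N-1}\qnum{2})=1$ checks out. The one genuinely different (and slightly cleaner) ingredient is your identification of $v_{\nested_N}$, up to the explicit nonzero scalar, with the Clebsch--Gordan singlet $\Tbas_0^{(1;N+1,N+1)}\in\Wd_{N+1}\tens\Wd_{N+1}$ from Lemma~\ref{lem: tensor product representations of quantum sl2}, which yields \eqref{eq: cartan eigenvalue for nested}--\eqref{eq: highest weight vector for nested} for free, whereas the paper verifies $E.v_{\nested_N}=0$ by a direct coproduct computation using the recursion $c_{l+1}^{(N)}=-q^{N-2(l+1)}c_l^{(N)}$; both are short, and the substitution $l=N-l_1$ you indicate does match the exponents correctly.
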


Below, we record a recursion for the vectors $\MTbas_l^{(N)}$, needed in 
the proof of Proposition~\ref{prop: solution for nested}.
We use the convention 
$\MTbas_{l}^{(N)}=0$ for $l<0$ and $l>N$. 

\begin{lem}\label{lem: recursion for tensor basis}
The following formulas hold for $\MTbas_{l}^{(N)}\in\Wd_{N+1}\subset\Wd_{2}^{\tens N}$. 
\begin{description}
\item[(a)] $\MTbas_{l}^{(N)}=\,\MTbas_{l}^{(N-1)}\tens\Wbas_{0}+q^{l-N}\qnum l\,\MTbas_{l-1}^{(N-1)}\tens\Wbas_{1}$
\item[(b)] $\MTbas_{l}^{(N)}=\,q^{-l}\,\Wbas_{0}\tens\MTbas_{l}^{(N-1)}+\qnum l\,\Wbas_{1}\tens\MTbas_{l-1}^{(N-1)}$
\end{description}
\end{lem}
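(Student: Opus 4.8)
The plan is to verify both formulas by induction on $N$, using only the explicit action of $F$ on $\Wd_2$ and the coproduct formula $\Hcp(F) = F \tens 1 + K^{-1} \tens F$. Recall $\MTbas_l^{(N)} = F^l . \MTbas_0^{(N)}$ with $\MTbas_0^{(N)} = \Wbas_0 \tens \cdots \tens \Wbas_0$, and that on $\Wd_2$ we have $F.\Wbas_0 = \Wbas_1$, $F.\Wbas_1 = 0$, $K^{-1}.\Wbas_0 = q^{-1}\Wbas_0$, $K^{-1}.\Wbas_1 = q\,\Wbas_1$. For part (a), I would write $\Wd_2^{\tens N} = \Wd_2^{\tens (N-1)} \tens \Wd_2$ and compute the action of $F$ via $\Hcp^{(N)}(F) = \Hcp^{(N-1)}(F) \tens 1 + (K^{-1})^{\tens (N-1)} \tens F$, so that on a tensor $w \tens \Wbas_0$ with $w \in \Wd_2^{\tens(N-1)}$,
\begin{align*}
F.(w \tens \Wbas_0) = (F.w) \tens \Wbas_0 + (K^{-1}.w) \tens \Wbas_1 .
\end{align*}
Since $\MTbas_0^{(N)} = \MTbas_0^{(N-1)} \tens \Wbas_0$ and $K^{-1}$ acts on $\Wd_{N+1}$'s basis vectors by $(K^{-1}).\MTbas_l^{(N-1)} = q^{-(N-1-2l)}\MTbas_l^{(N-1)}$ (the $K$-eigenvalue of $\MTbas_l^{(N-1)}$ in $\Wd_N$ is $q^{N-1-2l}$), iterating the displayed identity $l$ times and collecting terms gives the claimed formula; the only real bookkeeping is tracking the $q$-power and the $\qnum{l}$ coefficient, which one checks obeys the same recursion as the right-hand side of (a). Part (b) is symmetric: split $\Wd_2^{\tens N} = \Wd_2 \tens \Wd_2^{\tens(N-1)}$ and use $\Hcp^{(N)}(F) = F \tens 1^{\tens(N-1)} + K^{-1} \tens \Hcp^{(N-1)}(F)$, giving
\begin{align*}
F.(\Wbas_0 \tens w) = \Wbas_1 \tens w + q^{-1}\Wbas_0 \tens (F.w) ,
\end{align*}
and the same induction yields (b).

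Alternatively — and perhaps cleaner — I would prove (a) directly from the $q$-binomial expansion of $F^l$ under the coproduct: the identity $\Hcp(F)^l = \sum_{i} q^{-i(l-i)}\qbin{l}{i} (F^{l-i} \tens 1)(K^{-i} \tens F^i)$ in $\Uqsltwo^{\tens 2}$ (applied to the split $\Wd_2^{\tens(N-1)} \tens \Wd_2$), together with $F^i.\Wbas_0 = 0$ for $i \geq 2$, collapses the sum to the two terms $i=0$ and $i=1$, which are exactly $\MTbas_l^{(N-1)} \tens \Wbas_0$ and $q^{-(l-1)}\qnum{l}(K^{-1}.\MTbas_{l-1}^{(N-1)}) \tens \Wbas_1$; evaluating the $K^{-1}$-eigenvalue $q^{-(N-1-2(l-1))} = q^{2l-N-1}$ and multiplying gives $q^{-(l-1)}\cdot q^{2l-N-1}\qnum{l} = q^{l-N}\qnum{l}$, matching (a). For (b) one uses the mirror identity $\Hcp(F)^l = \sum_i q^{-i(l-i)}\qbin{l}{i}(K^{-(l-i)} \tens F^{l-i})(F^i \tens 1)$ on the split $\Wd_2 \tens \Wd_2^{\tens(N-1)}$; here $F^i.\Wbas_0 = 0$ for $i \geq 2$ again leaves $i=0,1$, producing $q^{-l}\Wbas_0 \tens \MTbas_l^{(N-1)}$ and $\qnum{l}\,\Wbas_1 \tens \MTbas_{l-1}^{(N-1)}$ after simplifying the $q$-powers (note $K^{-1}.\Wbas_1 = q\,\Wbas_1$ contributes the factor that cancels one power of $q$). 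The main obstacle — really the only place to be careful — is getting the $q$-power exponents right: the competing contributions are the $q^{-i(l-i)}$ from the $q$-binomial, the $K$-eigenvalues of $\MTbas_{l-1}^{(N-1)}$ in $\Wd_N$, and the $K^{-1}$ acting on $\Wbas_1$, and one must confirm these combine to exactly $q^{l-N}$ in (a) and to the stated powers in (b). I would double-check these against the $N=1,2$ base cases (where $\MTbas_0^{(1)}=\Wbas_0$, $\MTbas_1^{(1)}=\Wbas_1$, and $\MTbas_1^{(2)} = \Wbas_0\tens\Wbas_1 + q^{-1}\Wbas_1\tens\Wbas_0$, $\MTbas_2^{(2)} = \qnum{2}\,\Wbas_1\tens\Wbas_1$) to pin down the conventions before writing the general induction.
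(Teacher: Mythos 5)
Your first argument is exactly the paper's proof, which is given there in one line: induction on $l$ (your iteration of $F.(w\tens\Wbas_0)=(F.w)\tens\Wbas_0+(K^{-1}.w)\tens\Wbas_1$ really is an induction on $l$, not on $N$), using the coproduct action of $F$ on a double tensor product together with the $q$-integer identity $q^{l+1-N}\qnum{l+1}-q^{l-N}\qnum{l}=q^{2l+1-N}$; the $q$-binomial alternative is a fine shortcut as well. Two small bookkeeping slips to fix before writing it up: with the factor ordering $(F^{l-i}\tens 1)(K^{-i}\tens F^{i})$ the prefactor in the expansion of $\Hcp(F)^l$ is $q^{+i(l-i)}$ rather than $q^{-i(l-i)}$ (your subsequent evaluation silently uses the opposite normal ordering $(K^{-i}\tens F^{i})(F^{l-i}\tens 1)$, for which $q^{-i(l-i)}$ is correct, so the final coefficient $q^{l-N}\qnum{l}$ comes out right either way), and with the paper's coproduct $\Hcp(F)=F\tens 1+K^{-1}\tens F$ the base case is $\MTbas_{1}^{(2)}=\Wbas_{1}\tens\Wbas_{0}+q^{-1}\,\Wbas_{0}\tens\Wbas_{1}$, i.e.\ the mirror image of what you wrote in your sanity check.
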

\begin{proof}
The asserted formulas clearly hold for $l=0$. The general case follows
by induction on $l$, using the action of $F$ on a double tensor
product from \eqref{eq: coproduct} and the 
definition \eqref{eq: q num basic def} of the $q$-integers $\qnum{l}$.
\end{proof}


\begin{proof}[Proof of Proposition~\ref{prop: solution for nested}]
The normalization $v_\emptyset=1$ is clear from the asserted formula.
Equation~\eqref{eq: cartan eigenvalue for nested} follows directly:
each term $\MTbas^{(N)}_l \tens \MTbas^{(N)}_{N-l}$ is a $K$-eigenvector
of eigenvalue $1$, by the action \eqref{eq: coproduct} of $K$ on
$\Wd_{N+1}\tens\Wd_{N+1}$. Similarly, by \eqref{eq: coproduct}, we have
\begin{align*}
E.v_{\nested_N}=&\sum_{l=0}^Nc_l^{(N)}\times\left(E.\MTbas_l^{(N)}\tens K.\MTbas_{N-l}^{(N)}+\MTbas_l^{(N)}\tens E.\MTbas_{N-l}^{(N)}\right)\\
= \; & \sum_{l=0}^{N-1}\qnum{l+1}\qnum{N-l}\left(c_{l+1}^{(N)}q^{-N+2l+2}+c_l^{(N)}\right)\times\left(\MTbas_l^{(N)}\tens\MTbas_{N-l-1}^{(N)}\right),\\
\text{where}\quad c_l^{(N)} = \; & \frac{1}{(q^{-2}-1)^{N}}\frac{\qnum{2}^{N}}{\qfact{N+1}}(-1)^lq^{l(N-l-1)},
\end{align*}
and Equation~\eqref{eq: highest weight vector for nested} 
follows from the recursion
\begin{align}\label{eq: recursion for coefficients}
c_{l+1}^{(N)}=-q^{N-2(l+1)}c_l^{(N)}.
\end{align}

When $j\neq N$, we have $\hat{\pi}_j(v_{\nested_N})=0$
because all the vectors
$\MTbas_{l}^{(N)}\in\Wd_{N+1}\subset\Wd_{2}^{\tens N}$ have that
property, by Lemma~\ref{lem: all projections vanish}. 
It remains to calculate the projection
$\hat{\pi}_N(v_{\nested_N})$. 
Using Lemma~\ref{lem: recursion for tensor basis}, we write
\begin{align*}
v_{\nested_N}=\;\sum_{l=0}^Nc_l^{(N)}\times\left(\MTbas_{l}^{(N-1)}\tens\Wbas_{0}+q^{l-N}\qnum{l}\,\MTbas_{l-1}^{(N-1)}\tens\Wbas_{1}\right)\tens\left(q^{l-N}\,\Wbas_{0}\tens\MTbas_{N-l}^{(N-1)}+\qnum{N-l}\,\Wbas_{1}\tens\MTbas_{N-l-1}^{(N-1)}\right).
\end{align*}
With the help of Lemma~\ref{lem: projection formulas}(a), we calculate
\begin{align}\label{eq: projection for rainbow}
\hat{\pi}_N(v_{\nested_N})=&\;\sum_{l=0}^Nc_l^{(N)}\times\Big(\frac{q^{-1}-q}{\qnum 2}\qnum{N-l}\,\MTbas_{l}^{(N-1)}\tens\MTbas_{N-l-1}^{(N-1)}+\frac{1-q^{-2}}{\qnum 2}q^{2l-2N}\qnum{l}\,\MTbas_{l-1}^{(N-1)}\tens\MTbas_{N-l}^{(N-1)}\Big)\nonumber\\
=&\;\frac{q^{-1}-q}{\qnum 2}\,\sum_{l=0}^{N-1}\left(c_l^{(N)}\qnum{N-l}-c_{l+1}^{(N)}q^{2l-2N+1}\qnum{l+1}\right)\times\left(\MTbas_{l}^{(N-1)}\tens\MTbas_{N-l-1}^{(N-1)}\right).
\end{align}
Using the recursion~\eqref{eq: recursion for coefficients} for $c_l^{(N)}$,
the relation
\begin{align*}
\qnum{N-l} + q^{-N-1}\qnum{l+1}=q^{-l-1}\qnum{N+1}
\end{align*}
for the $q$-integers, and the formula $\qnum 2=q+q^{-1}$,
we observe that
\begin{align*}
\frac{q^{-1}-q}{\qnum 2}\left(c_l^{(N)}\qnum{N-l}-c_{l+1}^{(N)}q^{2l-2N+1}\qnum{l+1}\right)=c_l^{(N-1)}.
\end{align*}
Substituting this to \eqref{eq: projection for rainbow}, it follows that
\begin{align*}
&\hat{\pi}_N(v_{\nested_N})=v_{\nested_{N-1}}.
\end{align*}
This concludes the proof.
\end{proof}
\begin{figure}
\begin{displaymath}
\xymatrixcolsep{4pc}
\xymatrixrowsep{4pc}
\xymatrix{
        & \begin{minipage}{4cm}\includegraphics[scale=.88]{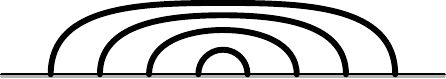}\end{minipage}\ar[d] \\
        & \hspace{-0mm}\begin{minipage}{4cm}\includegraphics[scale=.88]{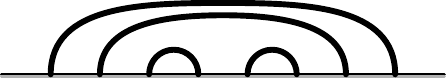}\end{minipage} \ar[dl]\ar[dr] & \\
        \hspace{-0mm}\begin{minipage}{4cm}\includegraphics[scale=.88]{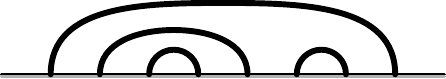}\end{minipage}\ar[d]\ar[dr] & & 
        \hspace{-0mm}\begin{minipage}{4cm}\includegraphics[scale=.88]{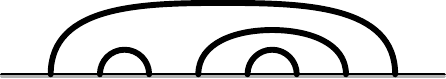}\end{minipage}\ar[dl]\ar[d]\\
        \hspace{-0mm}\begin{minipage}{4cm}\includegraphics[scale=.88]{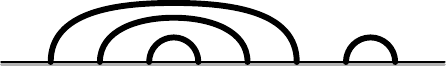}\end{minipage}\ar[d] & 
        \hspace{-0mm}\begin{minipage}{4cm}\includegraphics[scale=.88]{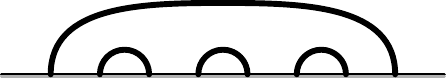}\end{minipage}\ar[dl]\ar[d]\ar[dr] & 
        \hspace{-0mm}\begin{minipage}{4cm}\includegraphics[scale=.88]{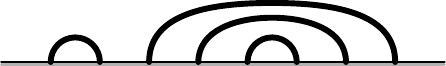}\end{minipage}\ar[d] \\
        \hspace{-0mm}\begin{minipage}{4cm}\includegraphics[scale=.88]{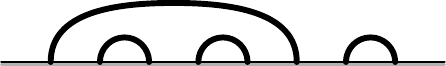}\end{minipage}\ar[d]\ar[dr] & 
        \hspace{-0mm}\begin{minipage}{4cm}\includegraphics[scale=.88]{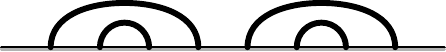}\end{minipage}\ar[dl]\ar[dr] & 
        \hspace{-0mm}\begin{minipage}{4cm}\includegraphics[scale=.88]{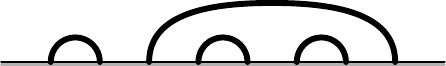}\end{minipage}\ar[dl]\ar[d] \\
        \hspace{-0mm}\begin{minipage}{4cm}\includegraphics[scale=.88]{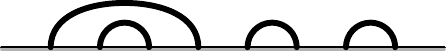}\end{minipage}\ar[dr] & 
        \hspace{-0mm}\begin{minipage}{4cm}\includegraphics[scale=.88]{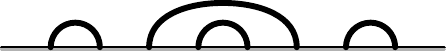}\end{minipage}\ar[d] & 
        \hspace{-0mm}\begin{minipage}{4cm}\includegraphics[scale=.88]{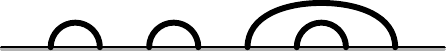}\end{minipage}\ar[dl] \\
        & \hspace{-0mm}\begin{minipage}{4cm}\includegraphics[scale=.88]{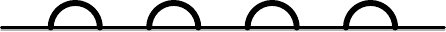}\end{minipage} &
    }
\end{displaymath}
\caption{\label{fig: partially ordered set}
The partially ordered set $\LP_4$.}
\end{figure}

\subsection{\label{subsec: general pattern}Solutions for general patterns}
Next we introduce a natural partial order and certain tying operations 
on the set $\LP$ of link patterns. 
We will prove auxiliary results of combinatorial nature, 
in order to obtain a recursive procedure for solving the 
system~\eqref{eq: multiple sle cartan eigenvalue}~--~\eqref{eq: multiple sle projection conditions}
with a general link pattern $\alpha\in\LP_N$.

The partial order on the set $\LP$ is inherited from the
set $\walks$ of walks which $\LP$ is in bijection with. 
More precisely, $\walks=\bigsqcup_{N\geq0}\walks_N$, where
\begin{align*}
\walks_N=\set{W\,\colon\,\{0,1,\ldots,2N\}\rightarrow\bZnn\;\Big|\;W_0=W_{2N}=0,\;|W_t-W_{t-1}|=1\;\text{ for all }\;t=1,\ldots,2N }
\end{align*}
is the set of non-negative walks 
of $2N$ steps 
starting and ending at zero. A walk $W^{(\alpha)}\in\walks_N$ associated to
a link pattern $\alpha\in\LP_N$ is defined recursively as
\begin{align*}
W^{(\alpha)}_0=\,0\qquad\text{and}\qquad W^{(\alpha)}_t=\;\begin{cases}
W^{(\alpha)}_{t-1}+1\quad & \text{if }\,t\,\text{ is a left endpoint of a link in}\;\alpha\\
W^{(\alpha)}_{t-1}-1\quad & \text{if }\,t\,\text{ is a right endpoint of a link in}\;\alpha,
\end{cases}
\end{align*}
and $\alpha \mapsto W^{(\alpha)}$ defines a bijection $\LP_N\to\walks_N$. On $\walks_N$,
there is a natural partial order defined by
\begin{align*}
W\preceq W'\qquad\iff\qquad W_t\leq W'_t\qquad\text{for all }t=1,\ldots,2N,
\end{align*}
which induces the partial order on $\LP_N$ by
$\;\alpha\preceq \alpha'  \iff  W^{(\alpha)}\preceq W^{(\alpha')}$.
We observe that the rainbow pattern
$\nested_N\in\LP_N$ is the unique maximal element in $\LP_N$
with respect to the above partial order. 
As an example, the partially ordered set $\LP_4$ is depicted in 
Figure~\ref{fig: partially ordered set}.

For $\alpha\in\LP_N$, we denote by
\begin{align*}
&\LP^{\succeq\alpha} = \left(\bigcup_{n<N}\LP_n\right)\cup\Big\{\beta\in\LP_N\;|\;\beta\succeq\alpha\Big\}
\qquad \text{and} \qquad \LP^{\succ\alpha} = \LP^{\succeq\alpha} \setminus \set{\alpha}.
\end{align*}

Let $j\in\{1,\ldots,2N-1\}$ be fixed. 
We define the tying operation
$\tieOp_j\colon\LP_N\to\LP_N$ by
\begin{align}\label{eq: tying operation}
\tieOp_j(\alpha)=\alpha^{\tieOp_j}:=
\begin{cases}
\alpha & \mbox{if }\,\link{j}{j+1}\in\alpha \\
\Big( \alpha \setminus \set{\link{j}{l_1}, \link{j+1}{l_2}} \Big)\cup \set{\link{j}{j+1},\link{l_1}{l_2}} \quad & \mbox{if }\,\link{j}{j+1}\notin\alpha,
\end{cases}
\end{align}
where $l_1$ and $l_2$ are the endpoints of the links in $\alpha$ where 
$j$ and $j+1$ are connected to in the latter case 
--- see also Figure~\ref{fig: tying operation}.
Observe that
$\link{j}{j+1}\in\alpha$ if and only if $\alpha=\alpha^{\tieOp_j}$.

\begin{figure}
\includegraphics[scale=1]{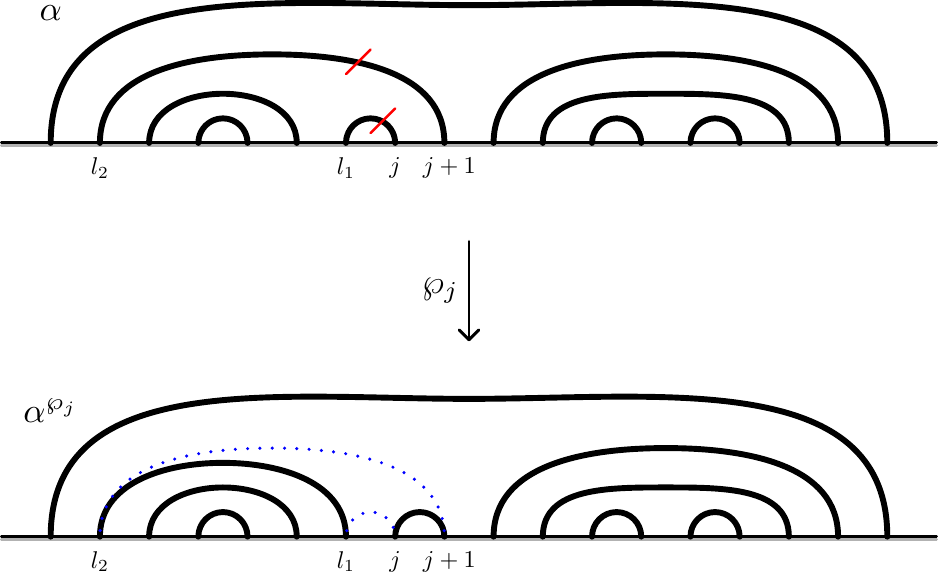}
\caption{\label{fig: tying operation}
In the tying operation $\tieOp_j$, 
the link pattern $\alpha$ is mapped to 
$\alpha^{\tieOp_j}$ by cutting 
the links $\link{l_1}{j}$ and $\link{l_2}{j+1}$ and 
connecting the endpoints so as to form the links
$\link{j}{j+1}$ and $\link{l_2}{l_1}$.}
\end{figure}

\begin{figure}
\includegraphics[width=1\textwidth]{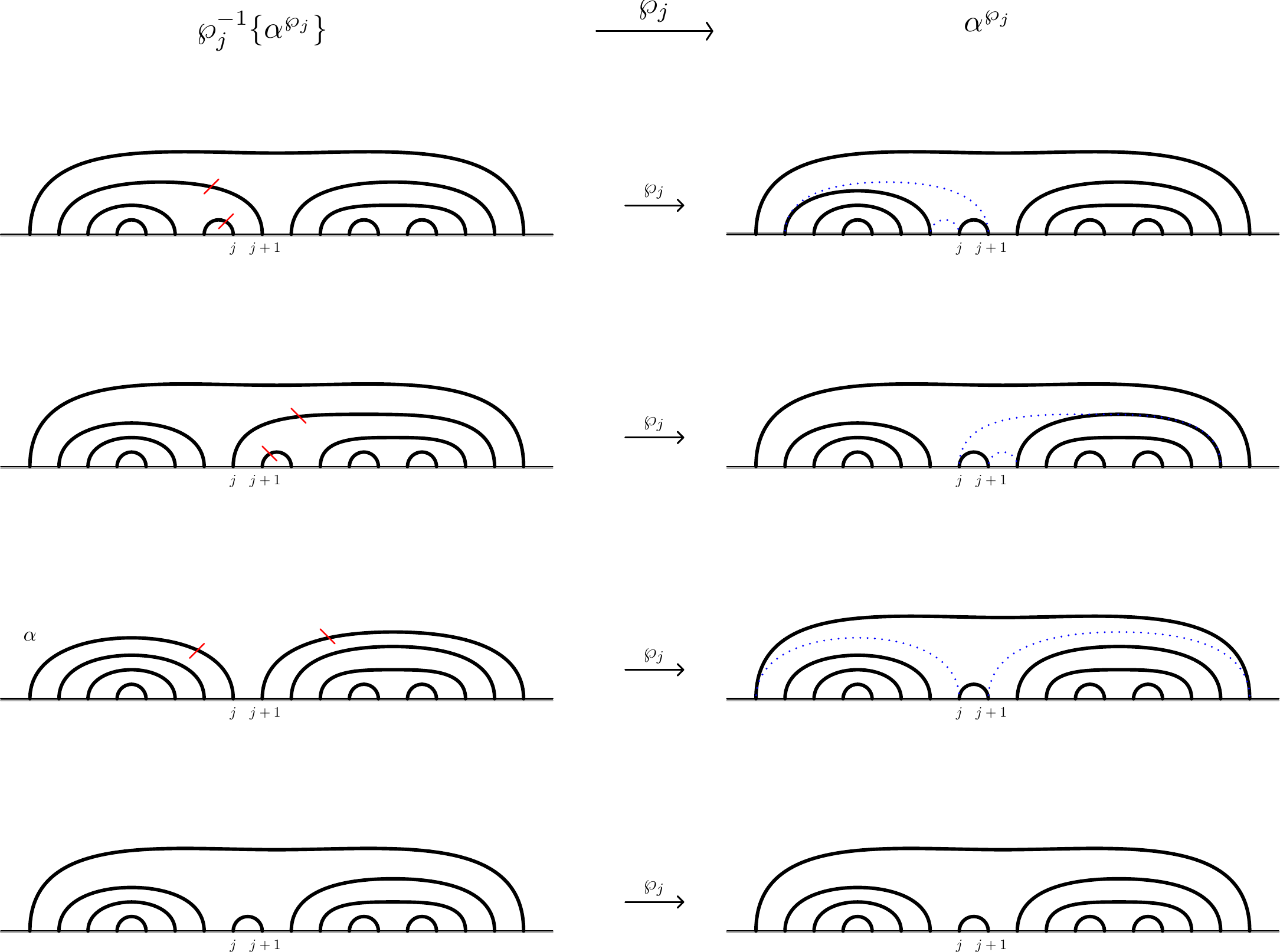}
\caption{\label{fig: inverse image of tying}
The inverse image of the tying operation $\tieOp_j$
for the link pattern $\alpha$ can be found by cutting 
the link $\link{j}{j+1}$ in $\alpha^{\tieOp_j}$ and connecting 
its endpoints in all possible ways to a link lying in 
the same connected component as $\link{j}{j+1}$.
In this example, the index $j$ is chosen as in Lemma~\ref{lem: downwards lemma}(i).}
\end{figure}

The solutions $(v_\alpha)_{\alpha \in \LP}$ will be shown to have a certain property, which is
also used for their construction recursively down the partially
ordered set of link patterns (see Figure~\ref{fig: partially ordered set}).
A closely related formula for connectivity weights was independently observed in
\cite[Equation~(69)]{FK-solution_space_for_a_system_of_null_state_PDEs_4}.
This key property is the equality
\begin{align}\label{eq: solution for general pattern}
v_\alpha = \qnum 2\,\Big(\id-\pi_j\Big)\,(v_{\alpha^{\tieOp_j}})\;-\;
    \sum_{\substack{\beta\in\LP_N\setminus\{\alpha,\alpha^{\tieOp_j}\}\\\beta^{\tieOp_j}\,=\,\alpha^{\tieOp_j}}}\;v_{\beta}
\end{align}
whenever $\alpha \in \LP_N$ and $j$ is such that
$\link{l_1}{j} , \link{j+1}{l_2} \in \alpha$ with $l_1<j<j+1<l_2$.
Before turning to the construction, we record some combinatorial observations about
this setup.
\begin{lem}\label{lem: downwards lemma}
Fix $\alpha\in\LP_N$. 
The following are equivalent for $j \in \set{1,\ldots,2N-1}$:
\begin{description}
\item[(i)] $\link{l_1}{j} , \link{j+1}{l_2} \in \alpha$ with $l_1<j<j+1<l_2$.
\item[(ii)] $\link{j}{j+1} \notin \alpha$, and 
$\alpha\preceq\beta$ for all 
$\beta \in \tieOp_j^{-1}\set{\alpha^{\tieOp_j}}$.
\end{description}
Moreover, an index $j$ satisfying (i) and (ii) 
exists if $\alpha \neq \nested_N$,
and the following properties then hold.
\begin{description}
\item[(a)] If $|k-j|=1$, 
then there exists a unique $\beta_0 \in \tieOp_j^{-1} \set{\alpha^{\tieOp_j}}$ such that $\link{k}{k+1} \in \beta_0$.
This $\beta_0$ satisfies $\beta_0 \removeLink \link{k}{k+1} = \alpha^{\tieOp_j} \removeLink \link{j}{j+1}$.
\item[(b)] If $|k-j| > 1$ and $\link{k}{k+1} \notin \alpha$, then $\link{k}{k+1} \notin \alpha^{\tieOp_j}$.
\item[(c)] If $|k-j| > 1$ and $\link{k}{k+1} \in \alpha$, then the mapping $\beta \mapsto \hat{\beta} = \beta \removeLink \link{k}{k+1}$
defines a bijection
\[ \Big\{ \beta \in \tieOp_j^{-1} \set{\alpha^{\tieOp_j}}  \; \Big| \; \link{k}{k+1} \in \beta \Big\}
  \; \longrightarrow \;  \tieOp_{j'}^{-1} \set{\hat{\alpha}^{\tieOp_{j'}}}   , \]
where $\hat{\alpha} = \alpha \removeLink \link{k}{k+1}$, 
and $j' = j$ if $k>j$ and $j' = j-2$ if $k<j$.
\end{description}
\end{lem}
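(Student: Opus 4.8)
The plan is to analyze the tying operation $\tieOp_j$ and its fibers combinatorially, working directly with the walk encoding $\alpha \mapsto W^{(\alpha)}$ of Section~\ref{subsec: general pattern}. First I would establish the equivalence of (i) and (ii). For the direction (i) $\Rightarrow$ (ii): if $\link{l_1}{j},\link{j+1}{l_2}\in\alpha$ with $l_1<j<j+1<l_2$, then $j$ is a right endpoint and $j+1$ a left endpoint, so the walk $W^{(\alpha)}$ has a strict local minimum at time $j$, say $W^{(\alpha)}_j = p-1$ with $W^{(\alpha)}_{j-1}=W^{(\alpha)}_{j+1}=p$. The fiber $\tieOp_j^{-1}\set{\alpha^{\tieOp_j}}$ is obtained by cutting the link $\link{j}{j+1}$ in $\alpha^{\tieOp_j}$ and reconnecting $j,j+1$ to the two endpoints of any link in the same connected component (see Figure~\ref{fig: inverse image of tying}); I would check that each such reconnection produces a walk that agrees with $W^{(\alpha)}$ outside a subinterval and is pointwise $\geq W^{(\alpha)}$ there — essentially because the ``valley'' at $j$ in $W^{(\alpha)}$ is the lowest possible configuration compatible with $\alpha^{\tieOp_j}$. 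Conversely, if $j$ does not satisfy (i), then either $\link{j}{j+1}\in\alpha$ (excluded in (ii)), or $j$ is a left endpoint or $j+1$ a right endpoint; in that case I would exhibit a $\beta\in\tieOp_j^{-1}\set{\alpha^{\tieOp_j}}$ with $\beta\not\succeq\alpha$ by reconnecting to a link on the ``other side'', lowering the walk somewhere.

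Next, existence of an index $j$ satisfying (i) when $\alpha\neq\nested_N$: the rainbow $\nested_N$ is the unique maximal element and the unique walk that rises to height $N$ then falls; any non-rainbow $\alpha$ has a strict interior local minimum in $W^{(\alpha)}$ at a positive height (i.e.\ a time $j$ with $W_{j-1}=W_{j+1}=W_j+1 \geq 1$), and such $j$ is exactly a right endpoint immediately followed by a left endpoint with both partners ``outside'', giving (i). Then for the properties (a)–(c), I would argue locally. For (a), with $|k-j|=1$, say $k=j+1$: in $\alpha^{\tieOp_j}$ we have $\link{j}{j+1}$, and $j+2$ is paired with some $l$; cutting $\link{j}{j+1}$ and reconnecting $j$ to $j+2$'s slot forces a unique $\beta_0$ with $\link{j+1}{j+2}\in\beta_0$ (the only reconnection that creates a short link at $k$), and by construction $\beta_0\removeLink\link{j+1}{j+2}$ equals $\alpha^{\tieOp_j}\removeLink\link{j}{j+1}$ since both are obtained by deleting two adjacent slots and relabeling. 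The case $k=j-1$ is symmetric. For (b), $|k-j|>1$ and $\link{k}{k+1}\notin\alpha$: since $\tieOp_j$ only alters the pairings of $j,j+1,l_1,l_2$, and $k,k+1$ are none of the endpoints involved unless $\set{k,k+1}\cap\set{l_1,l_2}\neq\emptyset$ — and even then one checks the pair $\set{k,k+1}$ does not become a link — so $\link{k}{k+1}\notin\alpha^{\tieOp_j}$.

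Property (c) is the one I expect to be the main obstacle, as it asserts a bijection between two fibers of tying operations after deleting a ``far'' short link. The key point is that deleting $\link{k}{k+1}$ (with $|k-j|>1$, $\link{k}{k+1}\in\alpha$) commutes with the tying operation at $j$ in the appropriate sense: since $k,k+1$ are disjoint from the four endpoints $j,j+1,l_1,l_2$ manipulated by $\tieOp_j$, the operations $\beta\mapsto\beta\removeLink\link{k}{k+1}$ and $\tieOp_j$ act on disjoint parts of the configuration, after the index shift $j\mapsto j'$ ($j'=j$ if $k>j$, $j'=j-2$ if $k<j$, accounting for relabeling of the two deleted slots). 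Concretely I would show: (1) for $\beta$ in the left-hand set, $\hat\beta = \beta\removeLink\link{k}{k+1}$ indeed lies in $\tieOp_{j'}^{-1}\set{\hat\alpha^{\tieOp_{j'}}}$ — this needs $\widehat{\beta^{\tieOp_j}} = \hat\beta^{\tieOp_{j'}}$, i.e.\ removing a far short link commutes with tying, together with $\beta^{\tieOp_j}=\alpha^{\tieOp_j}$ implying $\hat\beta^{\tieOp_{j'}} = \hat\alpha^{\tieOp_{j'}}$; (2) injectivity, since removing $\link{k}{k+1}$ from link patterns all containing it is injective; (3) surjectivity, by inserting back a short link $\link{k}{k+1}$ at the appropriate slot. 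The bookkeeping with index relabeling is routine but delicate; the conceptual content is just the commutation of the two local operations. Throughout, I would lean on the walk picture to make the ``local minimum / far short link'' statements transparent and to organize the case analysis cleanly.
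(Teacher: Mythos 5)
Your proposal is correct and takes essentially the same route as the paper, which disposes of the lemma in a few lines: the equivalence of (i) and (ii) and the existence of $j$ are declared straightforward from the definition of the partial order (your walk-based verification is exactly the intended argument, with the local minimum of $W^{(\alpha)}$ at $j$ characterizing condition (i)), property (a) is proved by exhibiting $\beta_0=(\alpha^{\tieOp_j})^{\tieOp_k}$ as you do, and properties (b) and (c) follow because $\set{k,k+1}$ is disjoint from $\set{j,j+1,l_1,l_2}$, so the short link plays no role in the tying operations. The only cosmetic difference is that for (ii) $\Rightarrow$ (i) the simplest witness is $\alpha^{\tieOp_j}$ itself (whose walk drops below $W^{(\alpha)}$ on a nonempty interval in the non-(i) cases), rather than a reconnection to the other side, but your version works equally well.
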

\begin{proof}
The equivalence of (i) and (ii) is straightforward from the definition of the partial order.
See also Figure~\ref{fig: inverse image of tying}.

For $\alpha \neq \nested_N$, the existence of $j$ satisfying condition (i) is easy to see.
For property (a), the desired link pattern is $\beta_0 = (\alpha^{\tieOp_j})^{\tieOp_k}$ ---
this is illustrated in Figure~\ref{fig: beta0}.
Property (b) follows from (i).
For property (c), the assumption of the presence of the link $\link{k}{k+1}$ implies that neither $j$ nor $j+1$
is connected to $k$ or $k+1$, so the link $\link{k}{k+1}$ plays no role in the tying operations: the map is well
defined to the asserted range, and the inverse map is obvious.
\end{proof}

\begin{figure}
\includegraphics[scale=1]{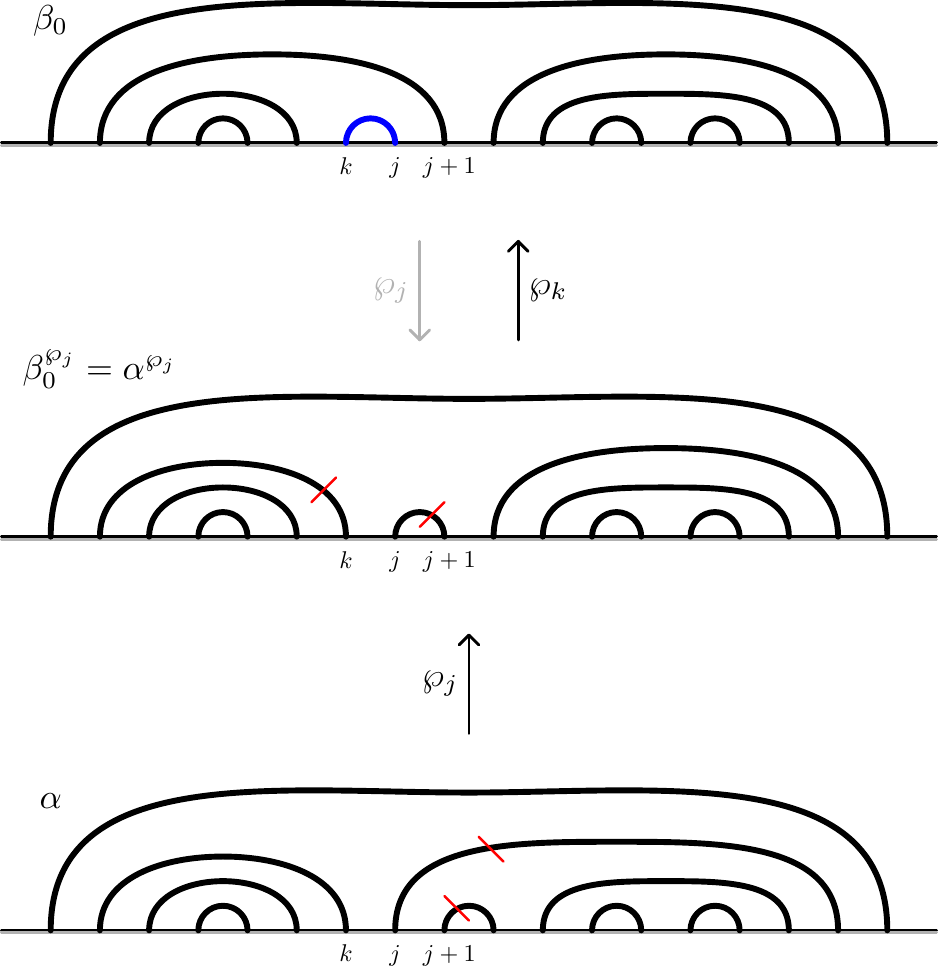}
\caption{\label{fig: beta0}
Suppose $k=j-1$ and $j$ is chosen as in 
Lemma~\ref{lem: downwards lemma}(i).
The unique $\beta_0 \in \tieOp_j^{-1} \set{\alpha^{\tieOp_j}}$
containing the link $\link{k}{k+1}=\link{k}{j}$
can be found by applying the map $\tieOp_k$ to 
the link pattern $\alpha^{\tieOp_j}$, 
as depicted in the figure.
Notice also that we clearly have
$\beta_0 \removeLink \link{k}{k+1} = \alpha^{\tieOp_j} \removeLink \link{j}{j+1}$.}
\end{figure}

We are now ready to construct the solutions to 
\eqref{eq: multiple sle cartan eigenvalue}~--~\eqref{eq: multiple sle projection conditions} 
for all $\alpha\in\LP$ from the 
solutions corresponding to the maximal patterns $\nested_N$,
given in Proposition~\ref{prop: solution for nested}.

\begin{prop}\label{prop: construction of solutions}
Let $\alpha\in\LP_N\setminus\{\nested_N\}$. 
Suppose that the collection 
$\left(v_{\beta}\right)_{\beta\in\LP^{\succ\alpha}}$ satisfies the 
system~\eqref{eq: multiple sle cartan eigenvalue}~--~\eqref{eq: multiple sle projection conditions} 
and the equations \eqref{eq: solution for general pattern} for all $\beta\in\LP^{\succ\alpha}$.
Then the vector $v_\alpha \in \Wd_{2}^{\tens 2N}$ can be defined in accordance with Equation~\eqref{eq: solution for general pattern} as
\begin{align*}
v_\alpha \; := \; \qnum 2\,\Big(\id-\pi_j\Big)\,(v_{\alpha^{\tieOp_j}})\;-\;
    \sum_{\substack{\beta\in\LP_N\setminus\{\alpha,\alpha^{\tieOp_j}\}\\\beta^{\tieOp_j}\,=\,\alpha^{\tieOp_j}}}\;v_{\beta} 
\end{align*}
for any $j$ as in Lemma~\ref{lem: downwards lemma}.
This vector is a solution to 
\eqref{eq: multiple sle cartan eigenvalue}~--~\eqref{eq: multiple sle projection conditions} 
for $\alpha$.
\end{prop}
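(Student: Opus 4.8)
The goal is to verify that the vector
\begin{align*}
v_\alpha \; = \; \qnum 2\,\Big(\id-\pi_j\Big)\,(v_{\alpha^{\tieOp_j}})\;-\;
    \sum_{\substack{\beta\in\LP_N\setminus\{\alpha,\alpha^{\tieOp_j}\}\\\beta^{\tieOp_j}\,=\,\alpha^{\tieOp_j}}}\;v_{\beta}
\end{align*}
satisfies \eqref{eq: multiple sle cartan eigenvalue}~--~\eqref{eq: multiple sle projection conditions}, using only the corresponding properties of the already-constructed vectors $v_\beta$ with $\beta\in\LP^{\succ\alpha}$ (note that every $\beta$ appearing on the right-hand side, including $\alpha^{\tieOp_j}$, lies in $\LP^{\succ\alpha}$ by Lemma~\ref{lem: downwards lemma}(ii), since those $\beta$ satisfy $\beta^{\tieOp_j}=\alpha^{\tieOp_j}$ and hence $\beta\succeq\alpha$, while $\beta\neq\alpha$). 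First I would dispose of \eqref{eq: multiple sle cartan eigenvalue} and \eqref{eq: multiple sle highest weight vector}: the map $\id-\pi_j$ is a morphism of $\Uqsltwo$-representations (it is the projection onto the triplet-type summand in slots $j,j+1$), so it preserves the trivial subrepresentation $\HWsp_1$; since each $v_\beta$ lies in $\HWsp_1(\Wd_2^{\tens 2N})$ by hypothesis, so does $v_\alpha$, giving $K.v_\alpha=v_\alpha$ and $E.v_\alpha=0$ at once.

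The substance is in checking the projection conditions \eqref{eq: multiple sle projection conditions}, i.e. computing $\hat\pi_k(v_\alpha)$ for every $k\in\{1,\ldots,2N-1\}$, and this splits into cases according to the distance $|k-j|$, matching the structure of Lemma~\ref{lem: downwards lemma}(a)--(c). For the case $k=j$: applying $\hat\pi_j$ to the formula, the sum term contributes nothing since $\hat\pi_j(v_\beta)=0$ for every $\beta$ with $\langle j,j+1\rangle\notin\beta$ (all the $\beta\neq\alpha^{\tieOp_j}$ in the sum have this property, while $\alpha^{\tieOp_j}$ is not in the sum); and $\hat\pi_j\circ(\id-\pi_j)$ applied to $\qnum2\,v_{\alpha^{\tieOp_j}}$ must be shown to equal $v_{\alpha^{\tieOp_j}\removeLink\langle j,j+1\rangle} = v_{\alpha\removeLink\langle j,j+1\rangle}$ when $\langle j,j+1\rangle\in\alpha$, or $0$ when $\langle j,j+1\rangle\notin\alpha$ — here one uses that $\hat\pi_j\circ\pi_j = \hat\pi_j$ (since $\pi_j$ fixes the singlet and $\hat\pi_j$ kills the triplet), so $\hat\pi_j(\id-\pi_j)=0$ on slots $j,j+1$ identically, which together with $\hat\pi_j(v_{\alpha^{\tieOp_j}})$ being $v_{\alpha^{\tieOp_j}\removeLink\langle j,j+1\rangle}$ (as $\langle j,j+1\rangle\in\alpha^{\tieOp_j}$ always) handles both sub-cases correctly. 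Wait — one must be careful: if $\langle j,j+1\rangle\notin\alpha$ then $\alpha\neq\alpha^{\tieOp_j}$, and the $j$ chosen via Lemma~\ref{lem: downwards lemma}(i) is precisely of this type, so $\hat\pi_j(v_\alpha)=0$ is what is wanted, consistent with $\langle j,j+1\rangle\notin\alpha$.

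For $|k-j|=1$: here I would apply $\hat\pi_k$ and use Lemma~\ref{lem: projection formulas}(b), which tells how $\hat\pi_k$ interacts with a neighbouring singlet, to see that $\hat\pi_k(\qnum2(\id-\pi_j)v_{\alpha^{\tieOp_j}})$ picks up the term from the unique $\beta_0\in\tieOp_j^{-1}\{\alpha^{\tieOp_j}\}$ with $\langle k,k+1\rangle\in\beta_0$ (Lemma~\ref{lem: downwards lemma}(a)), with the normalization worked out so that it cancels against $-\hat\pi_k(v_{\beta_0})= -v_{\beta_0\removeLink\langle k,k+1\rangle}$ from the sum, while all other terms either vanish or reassemble; the net result should be $v_{\alpha\removeLink\langle k,k+1\rangle}$ if $\langle k,k+1\rangle\in\alpha$ and $0$ otherwise. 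For $|k-j|>1$: if $\langle k,k+1\rangle\notin\alpha$, Lemma~\ref{lem: downwards lemma}(b) gives $\langle k,k+1\rangle\notin\alpha^{\tieOp_j}$, and since $\hat\pi_k$ commutes with $\id-\pi_j$ (disjoint slots), everything on the right has vanishing $\hat\pi_k$; if $\langle k,k+1\rangle\in\alpha$, one applies $\hat\pi_k$ termwise, commutes it past $\pi_j$, and uses the bijection of Lemma~\ref{lem: downwards lemma}(c) together with the induction hypothesis (that $v_{\hat\alpha}$ and friends already satisfy the defining relations at level $N-1$) to recognize the resulting expression as exactly the defining formula \eqref{eq: solution for general pattern} for $v_{\hat\alpha}=v_{\alpha\removeLink\langle k,k+1\rangle}$, hence equal to it.

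\textbf{Main obstacle.} The genuinely delicate step is the $|k-j|=1$ case: there the singlet in slots $j,j+1$ and the prospective singlet in slots $k,k+1$ overlap, so $\hat\pi_k$ does not simply commute with $\id-\pi_j$, and one must track the $\qnum2$ factors and signs through Lemma~\ref{lem: projection formulas}(b) carefully, making sure the contribution from $(\id-\pi_j)v_{\alpha^{\tieOp_j}}$ conspires precisely with the single term $-v_{\beta_0}$ from the combinatorial sum and that the identity $\beta_0\removeLink\langle k,k+1\rangle=\alpha^{\tieOp_j}\removeLink\langle j,j+1\rangle$ is used correctly; this is where the specific coefficient $\qnum2$ in \eqref{eq: solution for general pattern} is forced. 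The other cases are essentially bookkeeping once the right lemma (projection-formula identity, or Temperley--Lieb-type relation, or the combinatorial bijection) is invoked.
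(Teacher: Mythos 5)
Your proposal is correct and follows essentially the same route as the paper's proof: the trivial-subrepresentation properties come for free because $\id-\pi_j$ intertwines the $\Uqsltwo$-action, and the projection conditions are verified in the same three cases $k=j$, $|k-j|=1$, $|k-j|>1$ using Lemma~\ref{lem: projection formulas}(b), the link pattern $\beta_0=(\alpha^{\tieOp_j})^{\tieOp_k}$ with $\beta_0\removeLink\link{k}{k+1}=\alpha^{\tieOp_j}\removeLink\link{j}{j+1}$, and the bijection of Lemma~\ref{lem: downwards lemma}(c) combined with the assumed identity \eqref{eq: solution for general pattern} at level $N-1$. The only point you leave implicit is the closing observation that, once $v_\alpha$ is shown to solve \eqref{eq: multiple sle cartan eigenvalue}~--~\eqref{eq: multiple sle projection conditions}, the uniqueness of Proposition~\ref{prop: uniqueness} makes the construction independent of the chosen $j$ and hence shows that $v_\alpha$ itself satisfies \eqref{eq: solution for general pattern}, which is needed to keep the recursion going.
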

\begin{proof}
Choose a $j$ as in Lemma~\ref{lem: downwards lemma}.
By property (ii), 
the link patterns $\alpha^{\tieOp_j}$ and $\beta$ needed 
in the formula defining $v_\alpha$
satisfy $\alpha^{\tieOp_j} \succ \alpha$ and $\beta \succ \alpha$, and the corresponding vectors $v_{\alpha^{\tieOp_j}}$
and $v_\beta$ are thus assumed given. By assumption, each of these vectors 
satisfies \eqref{eq: multiple sle cartan eigenvalue}~--~\eqref{eq: multiple sle projection conditions}, 
so it readily follows that $v_\alpha$
also satisfies \eqref{eq: multiple sle cartan
eigenvalue} and \eqref{eq: multiple sle highest weight vector}. 
It remains to be shown that for any
$k\in\{1,\ldots,2N-1\}$, the projection $\hat{\pi}_k(v_\alpha)$
gives \eqref{eq: multiple sle projection conditions}.
Note that once we have shown that $v_\alpha$ satisfies
\eqref{eq: multiple sle cartan eigenvalue}~--~\eqref{eq: multiple sle projection conditions},
it follows from the uniqueness argument of Proposition~\ref{prop: uniqueness} that the constructed vector $v_\alpha$
is independent of the choice of $j$, and thus in particular
satisfies \eqref{eq: solution for general pattern}. 

We divide the calculations to three separate cases:
(i): $k=j$, (ii): $\,|k-j|=1\,$ and (iii): $\,|k-j|>1$.

Let us start from the easiest case (i): $k=j$.
To establish \eqref{eq: multiple sle projection conditions} in this case,
we need to show $\hat{\pi}_j(v_\alpha) = 0$.
Since $\pi_j$ is a projection, the first term $(\id-\pi_j)\,(v_{\alpha^{\tieOp_j}})$
in the formula defining $v_\alpha$ 
is annihilated by $\hat{\pi}_j$.
We have $\hat{\pi}_j(v_\beta) = 0$ also for all the other terms,
since $\link{j}{j+1} \notin \beta $ for
$\beta \in \big( \tieOp_j^{-1} \set{\alpha} \big) \setminus \{\alpha,\alpha^{\tieOp_j}\}$.
This concludes the case (i).


Consider then the case (ii): $\,|k-j|=1$.
We need to compute $\hat{\pi}_k(v_\alpha)$ and compare with \eqref{eq: multiple sle projection conditions}.
The application of $\pi_k$ on
the first term of the formula defining $v_\alpha$ 
gives
\begin{align*}
\qnum2\,\pi_k\Big(\id-\pi_j\Big)(v_{\alpha^{\tieOp_j}})\;=\;-\qnum2\,\Big(\pi_k\circ\pi_j\Big)(v_{\alpha^{\tieOp_j}})
\end{align*}
since $\link{k}{k+1}\notin\alpha^{\tieOp_j}$. Now,
\eqref{eq: multiple sle projection conditions} for $\alpha^{\tieOp_j}$
gives $\hat{\pi}_j(v_{\alpha^{\tieOp_j}})=v_{\alpha^{\tieOp_j}\removeLink\link{j}{j+1}}$
and we see that the vector $\pi_j(v_{\alpha^{\tieOp_j}})$ can be obtained
from $v_{\alpha^{\tieOp_j}\removeLink\link{j}{j+1}}\in\Wd_{2}^{\tens2(N-1)}$
by inserting the singlet vector $\Sbas\in\Wd_{2}\tens\Wd_{2}$ into the 
$j$:th and $j+1$:st tensor positions.
Therefore, by Lemma~\ref{lem: projection formulas}(b), we obtain
\begin{align*}
\qnum2\,\hat{\pi}_k\Big(\id-\pi_j\Big)(v_{\alpha^{\tieOp_j}})\;=\;-\qnum2\,\Big(\hat{\pi}_k\circ\pi_j\Big)(v_{\alpha^{\tieOp_j}})\;=\;v_{\alpha^{\tieOp_j}\removeLink\link{j}{j+1}}.
\end{align*}
In the sum in the formula defining $v_\alpha$, only the term corresponding to the link pattern $\beta_0$
of Lemma~\ref{lem: downwards lemma}(a) can survive the projection $\pi_k$, as the others do not contain the link $\link{k}{k+1}$. Therefore,
\begin{align*}
\hat{\pi}_k\Big(\sum_{\substack{\beta\in\LP_N\setminus\{\alpha,\alpha^{\tieOp_j}\}\\\beta^{\tieOp_j}\,=\,\alpha^{\tieOp_j}}}\;v_{\beta}\Big)
\;=\; & \begin{cases}
\hat{\pi}_k(v_{\beta_0})\;=\;v_{\beta_0\removeLink\link{k}{k+1}} & \text{if }\beta_0\neq\alpha\\
0 \qquad & \text{if }\beta_0=\alpha.
\end{cases}
\end{align*}
The patterns $\beta_0\removeLink\link{k}{k+1}$
and $\alpha^{\tieOp_j}\removeLink\link{j}{j+1}$ are the same by Lemma~\ref{lem: downwards lemma}(a), 
and $\link{k}{k+1} \in \alpha$ if and only if $\beta_0 = \alpha$.
We can therefore combine the above observations and conclude that
\begin{align*}
\hat{\pi}_k(v_\alpha) = \begin{cases}
v_{\alpha^{\tieOp_j}\removeLink\link{j}{j+1}}\,-\,v_{\beta_0\removeLink\link{k}{k+1}} \; = \; 0 & \text{if }\beta_0\neq\alpha\\
v_{\alpha^{\tieOp_j}\removeLink\link{j}{j+1}} \qquad\qquad\qquad \; = \; v_{\alpha\removeLink\link{k}{k+1}} \qquad & \text{if }\beta_0=\alpha.
\end{cases}
\end{align*}
This shows \eqref{eq: multiple sle projection conditions} for $\alpha$, and concludes the case (ii).

Finally, consider the case (iii): $\,|k-j|>1$.
We must again calculate $\hat{\pi}_k(v_\alpha)$.
The projections $\pi_j$ and $\pi_k$ now commute, and thus 
the projection $\pi_k$ of the first term in the defining equation of $v_\alpha$ reads
\begin{align}\label{eq: first term}
\qnum2\,\pi_k\Big(\id-\pi_j\Big)(v_{\alpha^{\tieOp_j}})\;=\;\qnum2\,\Big(\id-\pi_j\Big)\pi_k(v_{\alpha^{\tieOp_j}})
\end{align}
and the projection $\hat{\pi}_k$ of the second term reads
\begin{align}\label{eq: second term}
\hat{\pi}_k\Big(\sum_{\substack{\beta\in\LP_N\setminus\{\alpha,\alpha^{\tieOp_j}\}\\\beta^{\tieOp_j}\,=\,\alpha^{\tieOp_j}}}\;v_{\beta}\Big)\;=\;\sum_{\substack{\beta\in\LP_N\setminus\{\alpha,\alpha^{\tieOp_j}\}\\\beta^{\tieOp_j}\,=\,\alpha^{\tieOp_j},\;\link{k}{k+1}\in\beta}}\;v_{\beta\removeLink\link{k}{k+1}}.
\end{align}

Suppose first that $\link{k}{k+1}\notin\alpha$, in which case we need to show $\hat{\pi}_k(v_{\alpha})=0$.
By Lemma~\ref{lem: downwards lemma}(b), in this case also $\link{k}{k+1}\notin\alpha^{\tieOp_j}$,
so $\hat{\pi}_k(v_{\alpha^{\tieOp_j}})=0$ and \eqref{eq: first term} is zero.
Also \eqref{eq: second term}
is zero because the sum on the right hand side of
\eqref{eq: second term} is empty. Thus we indeed have $\hat{\pi}_k(v_{\alpha})=0$.

Suppose then that $\link{k}{k+1}\in\alpha$, in which case we need to
show $\hat{\pi}_k(v_{\alpha})=v_{\alpha\removeLink\link{k}{k+1}}$. 
Using the bijection $\beta\mapsto\hat{\beta}:=\beta\removeLink\link{k}{k+1}$
of Lemma~\ref{lem: downwards lemma}(c), we rewrite
the summation in \eqref{eq: second term} as
\begin{align*}
\hat{\pi}_k\Big(\sum_{\substack{\beta\in\LP_N\setminus\{\alpha,\alpha^{\tieOp_j}\}\\\beta^{\tieOp_j}\,=\,\alpha^{\tieOp_j}}}\;v_{\beta}\Big)\;=\;\sum_{\substack{\beta\in\LP_N\setminus\{\alpha,\alpha^{\tieOp_j}\}\\\beta^{\tieOp_j}\,=\,\alpha^{\tieOp_j},\;\link{k}{k+1}\in\beta}}\;v_{\beta\removeLink\link{k}{k+1}}\,=\,\sum_{\substack{\hat{\beta}\in\LP_{N-1}\setminus\{\hat{\alpha},\hat{\alpha}^{\tieOp_{j'}}\}\\\hat{\beta}^{\tieOp_{j'}}\,=\,\hat{\alpha}^{\tieOp_{j'}}}}\;v_{\hat{\beta}}.
\end{align*}
In \eqref{eq: first term}, we use
$\hat{\pi}_k(v_{\alpha^{\tieOp_j}})=v_{\hat{\alpha}^{\tieOp_{j'}}}$,
and combine to obtain
\begin{align*}
\hat{\pi}_k(v_\alpha)
=&\;\qnum 2\,\Big(\id-\pi_{j'}\Big)\,(v_{\hat{\alpha}^{\tieOp_{j'}}})\;-\;\sum_{\substack{\hat{\beta}\in\LP_{N-1}\setminus\{\hat{\alpha},\hat{\alpha}^{\tieOp_{j'}}\}\\\hat{\beta}^{\tieOp_{j'}}\,=\,\hat{\alpha}^{\tieOp_{j'}}}}\;v_{\hat{\beta}} . 
\end{align*}
By the assumed equality \eqref{eq: solution for general pattern},
this expression is $v_{\hat{\alpha}}$, 
which is what we wanted to show.
This concludes the case (iii) and completes the proof.
\end{proof}

\subsection{\label{subsec: proof of existence}Proof of Theorem~\ref{thm: existence of multiple SLE vectors}}

Uniqueness follows immediately from Proposition~\ref{prop: uniqueness}. Proposition~\ref{prop: solution for nested} gives the solution $v_{\nested_N}$
for each $N\in\bZnn$. From this solution, corresponding to the maximal
element $\nested_N$ of the partially ordered set $\LP_N$,
Proposition~\ref{prop: construction of solutions} allows us to construct
the solutions $v_\alpha$ for 
$\alpha\in\LP_N\setminus\{\nested_N\}$
recursively in any order that refines 
the partial order. $\hfill\qed$

\subsection{\label{subsec: basis of the trivial subrepresentation}Basis of the trivial subrepresentation} 

In this section, consider a fixed $N \in \bN$.
The trivial subrepresentation 
$\HWsp_1=\HWsp_1(\Wd_{2}^{\tens 2N})$,
introduced in \eqref{eq: highest vector space}, 
is exactly the solution space of 
\eqref{eq: multiple sle cartan eigenvalue}~--~\eqref{eq: multiple sle highest weight vector}.
The vectors $v_\alpha$ satisfying the projection 
conditions~\eqref{eq: multiple sle projection conditions} for 
$\alpha\in\LP_N$ in fact constitute a basis of this subspace,
as we will show below in Proposition~\ref{prop: quantum dual elements}(b).
For this purpose, we define certain elements 
$\Quantumdual_\alpha$ 
of the dual space 
$\HWsp_1^* = \set{\psi \colon \HWsp_1 \to \bC \; | \; \psi \text{ is a linear map} }$, 
which will be shown to form the dual basis of $(v_\alpha)_{\alpha\in\LP_N}$.
The definition and construction are analogous to those of a dual basis of the solution space of
\eqref{eq: multiple SLE PDEs}~--~\eqref{eq: multiple SLE Mobius covariance} in \cite{FK-solution_space_for_a_system_of_null_state_PDEs_1},
and we follow some similar terminology and notation.


Consider the link pattern 
\begin{align*}
\alpha=\set{\link{a_1}{b_1},\ldots,\link{a_N}{b_N}}\in\LP_N
\end{align*}
together with an ordering of the links:
$\link{a_1}{b_1},\ldots,\link{a_N}{b_N}$. We use the convention that $a_j < b_j$ for all $j=1,\ldots,N$, i.e.,
$a_j$ is the index of the left endpoint of the $j$:th link.
One possible choice of ordering is by the left endpoints of the links --- 
the ordering 
$\link{a^\circ_1}{b^\circ_1},\ldots,\link{a^\circ_N}{b^\circ_N}$ such that
$a^\circ_1 < a^\circ_2 < \cdots < a^\circ_N$
will be used as a reference to which other choices of orderings are compared:
for some permutation $\sigma \in \SymmGrp_N$ we have $a_j = a^\circ_{\sigma(j)}$, $b_j = b^\circ_{\sigma(j)}$
for all $j=1,\ldots,N$.

If $a_1,b_1$ are consecutive indices, $b_1=a_1+1$, then $\alpha \removeLink \link{a_1}{b_1}$
denotes the link pattern with the first link removed. The indices of the other $N-1$
links $\link{a_2}{b_2}$, \ldots, $\link{a_N}{b_N}$ are
relabeled, so that the links of $\alpha \removeLink \link{a_1}{b_1}$ are $\link{a_2(1)}{b_2(1)}$,
\ldots, $\link{a_N(1)}{b_N(1)}$, respectively.
Iteratively, if $a_{j}(j-1),b_{j}(j-1)$ are consecutive after removal of $j-1$ first links,
i.e., $b_{j}(j-1)=a_{j}(j-1)+1$, then
the $j$:th link can be removed to obtain a link pattern denoted by
$\alpha \removeLink \link{a_{1}}{b_{1}} \removeLink \cdots \removeLink \link{a_{j}}{b_{j}}$.
With relabeling the indices, as in Figure~\ref{fig: removing several links},
the remaining $N-j$ links are denoted by $\link{a_{j+1}(j)}{b_{j+1}(j)}$, \ldots, $\link{a_N(j)}{b_N(j)}$.
The ordering $\sigma$ is said to be allowable for $\alpha$ if all links of $\alpha$ can be removed in the order $\sigma$,
i.e., if we have $b_{j+1}(j)=a_{j+1}(j)+1$ for all $j < N$.


\begin{figure}
\includegraphics[scale=1.1]{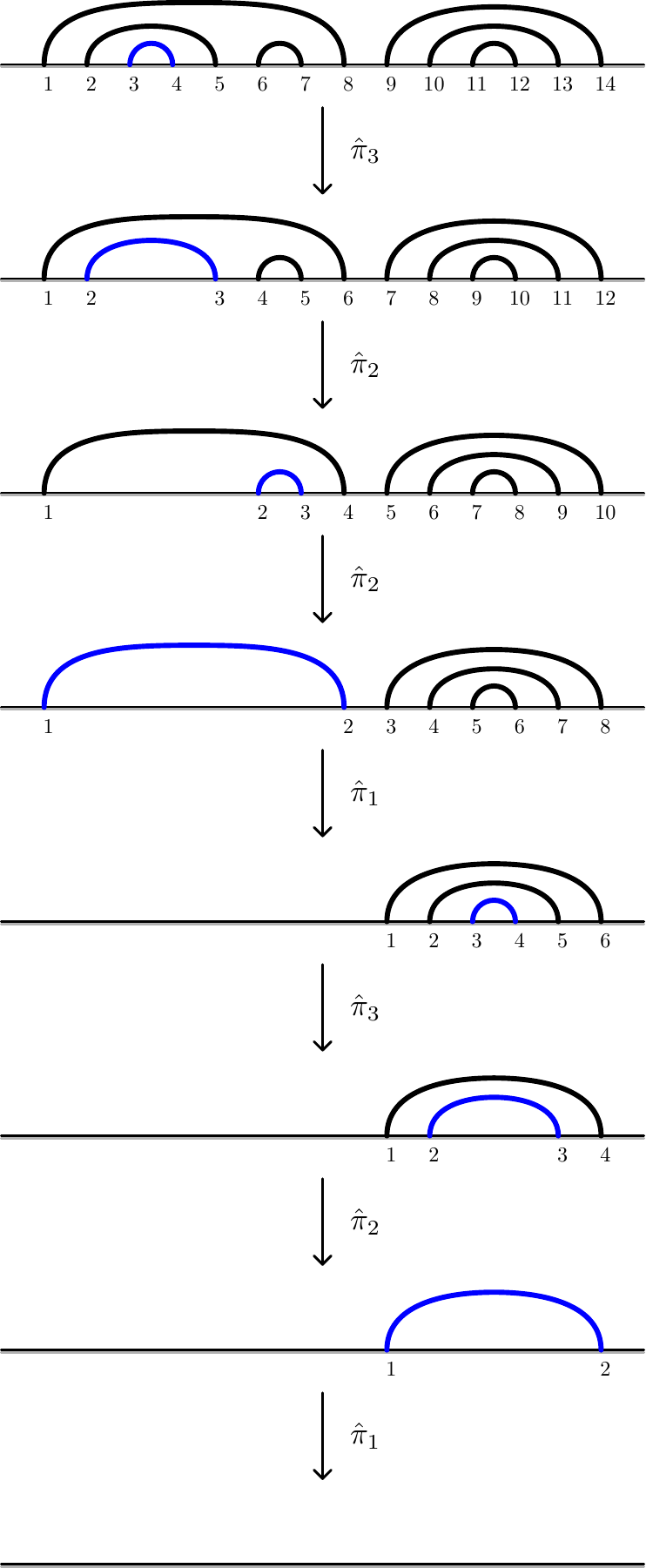}
\caption{\label{fig: removing several links}
The ordering $\set{\link{3}{4},\link{2}{5},\link{6}{7},\link{1}{8},
\link{11}{12},\link{10}{13},\link{9}{14}}$ 
is allowable for $\alpha$ as depicted in the figure. 
The figure also illustrates the iterated projections 
appearing in the definition of the map $\Quantumdual_\alpha$
in~\eqref{eq: quantum dual elements}, with relabeled indices.}
\end{figure}

Let $\alpha\in\LP_N$ and let $\sigma\in\SymmGrp_N$ be 
an allowable ordering for $\alpha$.
We define the linear map
\begin{align}
\Quantumdual_\alpha^{(\sigma)}\colon&\HWsp_1\longrightarrow\bC,\qquad
\Quantumdual_\alpha^{(\sigma)}:=\;\hat{\pi}_{a_N(N-1)} \circ \cdots \circ \hat{\pi}_{a_2(1)}
\circ\hat{\pi}_{a_{1}},
\label{eq: quantum dual elements}
\end{align}
where 
$\hat{\pi}_{a_{j}(j-1)} \colon \Wd_{2}^{\tens 2(N-j+1)}\to\Wd_{2}^{\tens 2(N-j)}$ are projections in the tensor components $a_j(j-1)$
and $a_j(j-1)+1 = b_j(j-1)$, reducing the number of tensorands by
two --- see Figure~\ref{fig: removing several links}.

We next show that $\Quantumdual_\alpha^{(\sigma)}$ 
is in fact independent of the choice of allowable ordering $\sigma$ for $\alpha$, and thus gives
rise to a well defined linear map
\begin{align}\label{eq: dual basis without ordering}
\Quantumdual_\alpha:=\Quantumdual_\alpha^{(\sigma)}
\;\colon\;\HWsp_1\longrightarrow\bC
\end{align}
for any choice of
allowable $\sigma=\sigma(\alpha)\in\SymmGrp_N$, and that
$\left(\Quantumdual_\alpha\right)_{\alpha\in\LP_N}$
is a basis of the dual space $\HWsp_1^*$.

\begin{prop}\label{prop: quantum dual elements}
\
\begin{description}
\item[(a)] Let $\alpha\in\LP_N$. For any two allowable orderings $\sigma,\sigma'\in\SymmGrp_N$ for $\alpha$, we have
\begin{align*}
\Quantumdual_\alpha^{(\sigma)}=\Quantumdual_\alpha^{(\sigma')}.
\end{align*}
Thus, the linear functional
$\Quantumdual_\alpha 
\in\HWsp_1^*$ in \eqref{eq: dual basis without ordering}
is well defined. 

\item[(b)]
For any $\alpha,\beta \in \LP_N$ we have
\begin{align}\label{eq: quantum dual basis}
\Quantumdual_\alpha(v_\beta)\;=\;
\delta_{\alpha,\beta}\;=\; & \begin{cases}
1\quad & \text{if }\beta=\alpha\\
0 & \text{if }\beta\neq\alpha
\end{cases}.
\end{align}
In particular,
$\left(v_{\alpha}\right)_{\alpha\in\LP_N}$
and $\left(\Quantumdual_{\alpha}\right)_{\alpha\in\LP_N}$ 
are bases of $\HWsp_1$ and $\HWsp_1^*$, respectively, and dual to each other.
\end{description}
\end{prop}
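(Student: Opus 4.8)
The plan is to prove (a) and (b) simultaneously by induction on $N$, with (a) and the diagonal/off-diagonal values of $\Quantumdual_\alpha^{(\sigma)}(v_\beta)$ reinforcing each other. The base case $N=0$ is trivial since $\HWsp_1(\Wd_2^{\tens 0}) \isom \bC$ with $v_\emptyset = 1$ and the empty composition of projections is the identity. For the inductive step, assume both claims for all smaller numbers of links.

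First I would establish the evaluation formula $\Quantumdual_\alpha^{(\sigma)}(v_\beta) = \delta_{\alpha,\beta}$ for a \emph{fixed} allowable ordering $\sigma$. Peel off the outermost (rightmost) projection $\hat{\pi}_{a_1}$ in \eqref{eq: quantum dual elements}, where $a_1$ is the left endpoint of the first link of $\alpha$ in the ordering $\sigma$, and $b_1 = a_1 + 1$. By the projection condition \eqref{eq: multiple sle projection conditions} for $v_\beta$, we get $\hat{\pi}_{a_1}(v_\beta) = v_{\beta \removeLink \link{a_1}{a_1+1}}$ if $\link{a_1}{a_1+1} \in \beta$, and $\hat{\pi}_{a_1}(v_\beta) = 0$ otherwise. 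In the first case, $\Quantumdual_\alpha^{(\sigma)}(v_\beta) = \Quantumdual_{\alpha \removeLink \link{a_1}{b_1}}^{(\sigma')}(v_{\beta \removeLink \link{a_1}{a_1+1}})$ where $\sigma'$ is the induced allowable ordering of the $N-1$ remaining links of $\alpha \removeLink \link{a_1}{b_1}$; by the inductive hypothesis this equals $\delta_{\alpha \removeLink \link{a_1}{b_1},\, \beta \removeLink \link{a_1}{a_1+1}}$. The combinatorial point to check is that for $\alpha,\beta \in \LP_N$ with $\link{a_1}{a_1+1}$ a link of both, $\alpha = \beta$ if and only if $\alpha \removeLink \link{a_1}{a_1+1} = \beta \removeLink \link{a_1}{a_1+1}$ --- this is immediate since removing a link from a fixed position and relabeling is reversible. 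Combined with the vanishing in the second case (when $\link{a_1}{a_1+1} \notin \beta$, in particular when $\beta = \alpha$ is impossible to reach), this yields $\Quantumdual_\alpha^{(\sigma)}(v_\beta) = \delta_{\alpha,\beta}$.

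Next, (a): independence of the allowable ordering. Since $(v_\beta)_{\beta \in \LP_N}$ spans $\HWsp_1$ --- which follows once we know $\Quantumdual_\alpha^{(\sigma)}(v_\beta) = \delta_{\alpha,\beta}$ for one ordering, because that forces linear independence of the $v_\beta$, and $\#\LP_N = \Catalan_N = \dmn(\HWsp_1)$ by Lemma~\ref{lem: multiplicity of singlet} --- any linear functional on $\HWsp_1$ is determined by its values on the $v_\beta$. But the evaluation computed above, $\Quantumdual_\alpha^{(\sigma)}(v_\beta) = \delta_{\alpha,\beta}$, did not depend on which allowable $\sigma$ was chosen: the recursion I ran works verbatim for any allowable ordering, always landing on the inductive hypothesis for the same removed link pattern (allowability of $\sigma$ for $\alpha$ guarantees each step is legitimate). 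Hence $\Quantumdual_\alpha^{(\sigma)}$ and $\Quantumdual_\alpha^{(\sigma')}$ agree on a spanning set, so they are equal, proving (a) and making $\Quantumdual_\alpha$ well defined. Then (b) follows: $\Quantumdual_\alpha(v_\beta) = \delta_{\alpha,\beta}$ exhibits $(v_\alpha)_{\alpha \in \LP_N}$ as linearly independent in $\HWsp_1$ and $(\Quantumdual_\alpha)_{\alpha \in \LP_N}$ as linearly independent in $\HWsp_1^*$; since both families have cardinality $\Catalan_N = \dmn \HWsp_1 = \dmn \HWsp_1^*$, they are dual bases.

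The main obstacle I anticipate is the bookkeeping of index relabeling: making sure that when the first link $\link{a_1}{b_1}$ (with $b_1 = a_1+1$) is removed, the induced ordering $\sigma'$ on $\alpha \removeLink \link{a_1}{b_1}$ is genuinely allowable, and that the positions $a_j(j-1)$ appearing in $\Quantumdual_{\alpha \removeLink \link{a_1}{b_1}}^{(\sigma')}$ are exactly those appearing in $\Quantumdual_\alpha^{(\sigma)}$ after the first projection. This is essentially unwinding the definition of "allowable" and is illustrated by Figure~\ref{fig: removing several links}, but it must be stated carefully. Everything else --- the vanishing when $\link{a_1}{a_1+1} \notin \beta$, the dimension count, the span argument --- is routine given Lemma~\ref{lem: multiplicity of singlet} and the already-established properties of the $v_\alpha$ from Theorem~\ref{thm: existence of multiple SLE vectors}.
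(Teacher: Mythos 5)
Your proposal is correct and follows essentially the same route as the paper: evaluate $\Quantumdual_\alpha^{(\sigma)}(v_\beta)$ by successively applying the projection conditions~\eqref{eq: multiple sle projection conditions} to get $\delta_{\alpha,\beta}$ independently of the allowable ordering, then use the dimension count $\dmn\HWsp_1=\Catalan_N=\#\LP_N$ from Lemma~\ref{lem: multiplicity of singlet} to conclude that the $v_\beta$ span $\HWsp_1$, which simultaneously yields (a) (agreement of the functionals on a spanning set) and (b). The only cosmetic difference is that you package the iterated projection argument as an induction on $N$, peeling one link at a time, where the paper iterates directly; the bookkeeping of relabeled indices you flag is exactly the content of the definition of an allowable ordering and causes no difficulty.
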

\begin{proof}
Let $\alpha,\beta \in \LP_N$, and let $\sigma\in\SymmGrp_N$ be 
any allowable ordering for $\alpha$.
Consider $\Quantumdual_\alpha^{(\sigma)}(v_\beta)$. If $\beta=\alpha$, then
by~\eqref{eq: multiple sle projection conditions} we have
$\hat{\pi}_{a_1}(v_\alpha) = v_{\alpha \removeLink \link{a_1}{b_1}}$,
and recursively,
\[ \Big( \hat{\pi}_{a_j(j-1)} \circ \cdots \circ \hat{\pi}_{a_1} \Big) (v_\alpha)
    = v_{\alpha \removeLink \link{a_1}{b_1} \removeLink \cdots \removeLink \link{a_j}{b_j}} . \]
For $j=N$ this gives $\Quantumdual_\alpha^{(\sigma)}(v_\alpha) = v_\emptyset = 1$.
On the other hand, if $\beta \neq \alpha$, then for some $j$ we have $\link{a_j}{b_j} \notin \beta$,
and by~\eqref{eq: multiple sle projection conditions} we then similarly get $\Quantumdual_\alpha^{(\sigma)}(v_\beta) = 0$.
Summarizing, Equation~\eqref{eq: quantum dual basis} holds: we have
$\Quantumdual_\alpha^{(\sigma)}(v_\beta) = \delta_{\alpha,\beta}$,
independently of the choice of allowable $\sigma$.
In particular, the value of the operator $\Quantumdual_\alpha^{(\sigma)}$ 
in the linear span of the vectors $v_\beta$
is independent of the choice of allowable $\sigma$.
Assertion (a) will follow by showing that this linear span is actually 
the whole space $\HWsp_1$.

In the linear span of the vectors $v_\beta$, we now set
$\Quantumdual_\alpha:=\Quantumdual_\alpha^{(\sigma)}$.
The collection
$\left(\Quantumdual_\alpha\right)_{\alpha\in\LP_N}$ 
of linear functionals on $\spn\{v_\beta\;|\;\beta\in\LP_N\}$
is linearly independent --- indeed,
any linear relation
\begin{align*}
\Quantumdual_\beta=\sum_{\alpha\in\LP_N\setminus\{\beta\}}c_\alpha\Quantumdual_\alpha
\end{align*}
would by~\eqref{eq: quantum dual basis} lead to a contradiction
\begin{align*}
1=\Quantumdual_\beta(v_\beta)=\sum_{\alpha\in\LP_N\setminus\{\beta\}}
c_\alpha \, \Quantumdual_\alpha(v_\beta)=\sum_{\alpha\in\LP_N\setminus\{\beta\}}
c_\alpha\delta_{\alpha,\beta}=0.
\end{align*}
Since we have
$\dmn(\HWsp_1)=\dmn(\HWsp_1^*)=\Catalan_N=\#\LP_N$
by Lemma~\ref{lem: multiplicity of singlet},
it follows that $\left(\Quantumdual_\alpha\right)_{\alpha\in\LP_N}$ 
is a basis of the whole dual space $\HWsp_1^*$, and 
$\left(v_{\alpha}\right)_{\alpha\in\LP_N}$ its dual basis in $\HWsp_1$.
This concludes the proof.
\end{proof}


\bigskip{}

\section{\label{sec: Multiple SLE partition functions}Multiple $\SLE$ partition functions}

In this section, we consider multiple $\SLE$ partition functions constructed
from the vectors $v_\alpha$ 
of Section~\ref{sec: Multiple SLE pure geometries},
using the correspondence of Theorem~\ref{thm: SCCG correspondence special case}.
We give in Section~\ref{sec: Multiple SLE pure partition functions} the
construction and properties of the pure
partition functions $\PartF_\alpha$, i.e., solutions to the system
\eqref{eq: multiple SLE PDEs}~--~\eqref{eq: multiple SLE asymptotics}.
Section~\ref{subsec: Dual elements} contains the only remaining part of the proof
of the properties, namely the injectivity of the correspondence of
Theorem~\ref{thm: SCCG correspondence special case}.
In Section~\ref{sec: Canonical partition functions and entire curve domain Markov property},
we consider partition functions $\PartF^{(N)}$ that are most commonly relevant for statistical
physics models --- these symmetric partition functions are combinations
of the pure partition functions.
Finally, Sections~\ref{subsub: Ising model}~--~\ref{subsub: Percolation} give examples of
symmetric partition functions applicable to a few important lattice models.
These are explicit solutions to the partial differential equations at particular values of $\kappa$,  
and as such, they are relevant to the construction of local multiple $\SLE$s by the results of
Appendix~\ref{app: Local multiple SLEs}. These values of $\kappa$ are rational,
and therefore the solutions are not strictly speaking special cases of our generic
solutions for $\kappa \in (0,8) \setminus \bQ$, but should be obtained via a suitable limiting procedure.

\subsection{\label{sec: Multiple SLE pure partition functions}Pure partition functions}

In this section, we will use the mappings 
\begin{align*}
\sF \colon \HWsp_1(\Wd_2^{\tens 2N}) \to \sC^\infty(\chamber_{2N})
\end{align*}
given by Theorem~\ref{thm: SCCG correspondence special case},
to construct the pure partition functions $(\PartF_\alpha)_{\alpha \in \LP}$
from the vectors $(v_\alpha)_{\alpha \in \LP}$ of Theorem~\ref{thm: existence of multiple SLE vectors}.
Recall from Theorem~\ref{thm: SCCG correspondence special case}
the following important properties 
for any vector $v$ in the 
trivial subrepresentation $\HWsp_1=\HWsp_1(\Wd_2^{\tens 2N})$:
\begin{itemize}
\item (PDE) ensures that the function
$\sF[v]$ is a solution to the PDEs~\eqref{eq: multiple SLE PDEs}.
\item (COV) gives the M\"obius covariance~\eqref{eq: multiple SLE Mobius covariance}.
\item (ASY) enables us to pick solutions of 
\eqref{eq: multiple SLE PDEs}~--~\eqref{eq: multiple SLE Mobius covariance}
with the desired asymptotic properties~\eqref{eq: multiple SLE asymptotics}.
\end{itemize}

We will show in Corollary~\ref{cor: injectivity} that, for all 
$N\in\bN$, the map $\sF$ is injective and thus the basis $\left(v_\alpha\right)_{\alpha\in\LP_N}$
of $\HWsp_1$ given by 
Proposition~\ref{prop: quantum dual elements}(b)
provides a basis for the solution space 
$\sF[\HWsp_1]=\sF[\HWsp_1(\Wd_2^{\tens 2N})]$ of 
the system~\eqref{eq: multiple SLE PDEs}~--~\eqref{eq: multiple SLE Mobius covariance}.
We normalize this basis by 
$B^{-|\alpha|}\,\sF[v_\alpha]$,
where $B=\frac{\Gamma(1-4/\kappa)^2}{\Gamma(2-8/\kappa)}$
--- this choice of normalization is convenient by the
(ASY) part of Theorem~\ref{thm: SCCG correspondence special case}.
Notice that with our assumption $\kappa\notin\bQ$, the normalizing 
constant $B^{-|\alpha|}$ is finite and non-zero.


\begin{thm}\label{thm: pure partition functions}
Let $\kappa \in (0,8)\setminus \bQ$.
The collection $\left(\PartF_{\alpha}\right)_{\alpha\in\LP}$ of functions 
\begin{align*}
\PartF_{\alpha}:=B^{-|\alpha|}\,\sF[v_\alpha]\;\colon\;\chamber_{2|\alpha|}\rightarrow\bC
\end{align*}
satisfies the system of equations~\eqref{eq: multiple SLE PDEs}~--~\eqref{eq: multiple SLE asymptotics} for all $\alpha\in\LP$.
%
For any $N\in\bZnn$, the collection 
$\left(\PartF_{\alpha}\right)_{\alpha\in\LP_N}$ is linearly independent 
and it spans the $\Catalan_N$-dimensional space $\sF[\HWsp_1(\Wd_2^{\tens 2N})]$
of solutions to the
system~\eqref{eq: multiple SLE PDEs}~--~\eqref{eq: multiple SLE Mobius covariance}.
\end{thm}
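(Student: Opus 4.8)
The plan is to deduce Theorem~\ref{thm: pure partition functions} from the already established results by a direct transport along the correspondence $\sF$. First I would fix $\kappa \in (0,8)\setminus\bQ$ and $N\in\bZnn$, and observe that by Theorem~\ref{thm: existence of multiple SLE vectors} the vectors $(v_\alpha)_{\alpha\in\LP}$ exist and satisfy the quantum group system~\eqref{eq: multiple sle cartan eigenvalue}~--~\eqref{eq: multiple sle projection conditions}. Applying the map $\sF$ of Theorem~\ref{thm: SCCG correspondence special case}, the (PDE) and (COV) parts immediately give that each $\PartF_\alpha = B^{-|\alpha|}\sF[v_\alpha]$ satisfies \eqref{eq: multiple SLE PDEs} and the M\"obius covariance \eqref{eq: multiple SLE Mobius covariance}. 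For the asymptotics \eqref{eq: multiple SLE asymptotics}, the (ASY) part tells us that as $x_j,x_{j+1}\to\xi$,
\begin{align*}
\lim \frac{\sF[v_\alpha]}{(x_{j+1}-x_j)^{-2h}} = B \times \sF[\hat\pi_j(v_\alpha)] ,
\end{align*}
and since $\hat\pi_j(v_\alpha)$ equals $0$ if $\link{j}{j+1}\notin\alpha$ and $v_{\alpha\removeLink\link{j}{j+1}}$ if $\link{j}{j+1}\in\alpha$ by \eqref{eq: multiple sle projection conditions}, multiplying by $B^{-|\alpha|}$ absorbs exactly one factor of $B$ and produces the required limit $\PartF_{\alpha\removeLink\link{j}{j+1}}$ (or $0$). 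The case $N=0$ is the normalization $\PartF_\emptyset = B^0 \sF[v_\emptyset] = 1$. This establishes the first assertion.

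For the second assertion I would argue as follows. By Proposition~\ref{prop: quantum dual elements}(b), $(v_\alpha)_{\alpha\in\LP_N}$ is a basis of $\HWsp_1(\Wd_2^{\tens 2N})$, which by Lemma~\ref{lem: multiplicity of singlet} has dimension $\Catalan_N$. Since $\sF$ is linear, $\sF[\HWsp_1(\Wd_2^{\tens 2N})]$ is spanned by $(\sF[v_\alpha])_{\alpha\in\LP_N}$, hence also by $(\PartF_\alpha)_{\alpha\in\LP_N}$ (the scalars $B^{-|\alpha|}$ being finite and nonzero for $\kappa\notin\bQ$). By (PDE) and (COV), this span lies inside the solution space of \eqref{eq: multiple SLE PDEs}~--~\eqref{eq: multiple SLE Mobius covariance}. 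To get that the $\PartF_\alpha$ are linearly independent and that the span is exactly $\Catalan_N$-dimensional, it suffices to know that $\sF$ is injective on $\HWsp_1(\Wd_2^{\tens 2N})$: then $(\PartF_\alpha)_{\alpha\in\LP_N}$ is the image of a basis under an injective linear map, hence a basis of its $\Catalan_N$-dimensional image.

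The main obstacle is precisely this injectivity of $\sF$, which is not part of Theorem~\ref{thm: SCCG correspondence special case} and is deferred in the paper to Corollary~\ref{cor: injectivity} (Section~\ref{subsec: Dual elements}). I would prove it using the dual functionals $\Quantumdual_\alpha$: each $\Quantumdual_\alpha$ is built from iterated projections $\hat\pi_{a_j(j-1)}$, and by (ASY) each such projection corresponds to extracting a leading coefficient in an iterated asymptotic expansion of $\sF[v]$ along a nested collision of consecutive points. Thus from the function $\sF[v]$ one can recover, by taking iterated limits of the form $\lim (x_{j+1}-x_j)^{2h}(\cdots)$ dictated by an allowable ordering for $\alpha$, the number $\Quantumdual_\alpha(v)$ up to the nonzero constant $B^{N}$. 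If $\sF[v] = 0$ then all these iterated limits vanish, so $\Quantumdual_\alpha(v)=0$ for every $\alpha\in\LP_N$; since $(\Quantumdual_\alpha)_{\alpha\in\LP_N}$ is a basis of $\HWsp_1^*$ by Proposition~\ref{prop: quantum dual elements}(b), this forces $v=0$. Hence $\sF$ is injective, and the theorem follows. (One should double-check that the iterated asymptotics in (ASY) can indeed be applied successively and that the relabeling of indices after each collision matches the relabeling in the definition of $\Quantumdual_\alpha$ — this bookkeeping, rather than any deep difficulty, is the delicate point.)
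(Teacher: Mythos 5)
Your proposal is correct and follows essentially the same route as the paper: the first assertion is immediate from (PDE), (COV), (ASY) combined with \eqref{eq: multiple sle cartan eigenvalue}--\eqref{eq: multiple sle projection conditions}, and linear independence is obtained by realizing the dual functionals $\Quantumdual_\alpha$ on the function side as iterated limits $\lim (x_{b}-x_{a})^{2h}(\cdots)$ along an allowable ordering (the paper's $\FKdual_\alpha$ of Proposition~\ref{prop: FK dual elements}, from which injectivity of $\sF$ is Corollary~\ref{cor: injectivity}). The bookkeeping point you flag is handled exactly as you suggest: (ASY) guarantees $\hat{\pi}_j(v)\in\HWsp_1(\Wd_2^{\tens 2(N-1)})$, so the limits can be iterated, and the relabeling matches that in the definition of $\Quantumdual_\alpha$.
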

\begin{proof}
By Theorem~\ref{thm: SCCG correspondence special case}, since the vectors
$v_{\alpha}$ satisfy 
\eqref{eq: multiple sle cartan eigenvalue}~--~\eqref{eq: multiple sle highest weight vector}, the functions
$\PartF_{\alpha}=B^{-|\alpha|}\sF[v_\alpha]$ satisfy 
\eqref{eq: multiple SLE PDEs}~--~\eqref{eq: multiple SLE Mobius covariance}.
The asymptotic conditions~\eqref{eq: multiple SLE asymptotics} follow from the
(ASY) part of Theorem~\ref{thm: SCCG correspondence special case},
by the projection conditions~\eqref{eq: multiple sle projection conditions} 
for the vectors $v_{\alpha}$.
The final assertion of linear independence will be established in 
Proposition~\ref{prop: FK dual elements} 
in the next section.
\end{proof}

\subsection{\label{subsec: Dual elements}Linear independence of the pure partition functions}

To show that $\left(\PartF_{\alpha}\right)_{\alpha\in\LP_N}$
is a basis of the $\Catalan_N$-dimensional solution space 
$\sF[\HWsp_1]$, 
we use 
linear mappings $\FKdual_\alpha$, 
closely related to the maps $\Quantumdual_\alpha$ 
from~\eqref{eq: dual basis without ordering}, defined by iterated limits.
We show in Proposition~\ref{prop: FK dual elements}
that $\left(\FKdual_\alpha\right)_{\alpha\in\LP_N}$  
is a basis of the dual space $\sF[\HWsp_1]^*$
and that the functions $\PartF_{\alpha}=B^{-|\alpha|}\,\sF[v_\alpha]$ 
constitute its dual basis.
These iterated limits were originally
introduced in \cite{FK-solution_space_for_a_system_of_null_state_PDEs_1}, where their
well-definedness was checked with analysis techniques. With the quantum group method
of \cite{KP-conformally_covariant_boundary_correlation_functions_with_a_quantum_group}
and the solutions $(v_\alpha)_{\alpha \in \LP_{N}}$,
the well-definedness on the $\Catalan_N$-dimensional solution space $\sF[\HWsp_1]$
becomes almost immediate.



Suppose the ordering $\sigma\in\SymmGrp_N$ is allowable for $\alpha\in\LP_N$,
and let $\link{a_1}{b_1}, \ldots, \link{a_N}{b_N}$ be the corresponding links
(recall Section~\ref{subsec: basis of the trivial subrepresentation} 
and Figure~\ref{fig: removing several links}). 
By the (ASY) part of
Theorem~\ref{thm: SCCG correspondence special case}, the following
sequence of limits exists for any $\PartF=\sF[v]\in\sF[\HWsp_1]$:
\begin{align}\label{eq: limit operation}
\FKdual_\alpha^{(\sigma)} (\PartF )
:=\;
\lim_{x_{a_N},x_{b_{N}}\to\xi_{N}}
\cdots
\lim_{x_{a_{1}},x_{b_{1}}\to\xi_{1}}
(x_{b_{N}}-x_{a_{N}})^{2h}
\cdots
(x_{b_{1}}-x_{a_{1}})^{2h}
\times\PartF(x_1,\ldots,x_{2N}).
\end{align}
Note that each of the limits in \eqref{eq: limit operation}
is independent of the limit point $\xi_j$,
and \eqref{eq: limit operation} in fact equals
$B^{|\alpha|}\Quantumdual_\alpha (v)$, where 
$\Quantumdual_\alpha\colon\HWsp_1\longrightarrow\bC$ 
is the linear map introduced in~\eqref{eq: dual basis without ordering}.
In particular, the linear map 
\begin{align*}
\FKdual_\alpha:=\FKdual_\alpha^{(\sigma)}\;\colon\;\sF[\HWsp_1]\rightarrow\bC
\end{align*}
is well defined via Equation~\eqref{eq: limit operation},
independently of the choice of allowable ordering $\sigma$ for $\alpha$.

\begin{prop}\label{prop: FK dual elements}
The collection
$\left(\FKdual_\alpha\right)_{\alpha\in\LP_N}$ is a
basis of the dual space $\sF[\HWsp_1]^*$, 
and we have
\begin{align*}
\FKdual_\alpha ( \PartF_\beta ) \;=\;\delta_{\alpha,\beta}\;=\; & \begin{cases}
1\quad & \text{if } \beta = \alpha \\
0 & \text{if }\beta\neq\alpha
\end{cases}.
\end{align*}
In particular, $\left(\PartF_{\alpha}\right)_{\alpha\in\LP_N}$ is a basis of
the $\Catalan_N$-dimensional solution space $\sF[\HWsp_1]$. 
\end{prop}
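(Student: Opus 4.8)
The plan is to reduce everything to Proposition~\ref{prop: quantum dual elements} by establishing the identity announced just before the statement, namely that for every $v\in\HWsp_1$ and every allowable ordering $\sigma$ for $\alpha\in\LP_N$ one has
$\FKdual_\alpha^{(\sigma)}\big(\sF[v]\big)=B^{|\alpha|}\,\Quantumdual_\alpha(v)$,
where $\Quantumdual_\alpha$ is the $\sigma$-independent linear functional of \eqref{eq: dual basis without ordering}. Granting this, the duality relation and both basis claims are short linear algebra, of exactly the kind already carried out for $\Quantumdual_\alpha$.

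First I would prove the identity by induction on $N=|\alpha|$, iterating the (ASY) part of Theorem~\ref{thm: SCCG correspondence special case}. The case $N=0$ holds by the normalization convention (both sides equal $v$). For $N\geq 1$, fix an allowable ordering $\sigma$ with links $\link{a_1}{b_1},\ldots,\link{a_N}{b_N}$; by definition of allowability, $b_1=a_1+1$, so the innermost limit in \eqref{eq: limit operation}, $\lim_{x_{a_1},x_{b_1}\to\xi_1}(x_{b_1}-x_{a_1})^{2h}\,\sF[v]$, is precisely the limit governed by (ASY) with index $j=a_1$: it exists, is independent of $\xi_1$, and equals $B\,\sF[\hat{\pi}_{a_1}(v)]$ as a function of the remaining $2N-2$ variables, with $\hat{\pi}_{a_1}(v)\in\HWsp_1(\Wd_2^{\tens 2(N-1)})$. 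After relabeling the remaining indices, the outer limits of \eqref{eq: limit operation} become exactly $\FKdual^{(\sigma')}_{\alpha\removeLink\link{a_1}{b_1}}$ for the induced ordering $\sigma'$, whence by the inductive hypothesis $\FKdual_\alpha^{(\sigma)}(\sF[v])=B\cdot B^{N-1}\,\Quantumdual_{\alpha\removeLink\link{a_1}{b_1}}(\hat{\pi}_{a_1}(v))$; since $\Quantumdual_\alpha=\Quantumdual_{\alpha\removeLink\link{a_1}{b_1}}\circ\hat{\pi}_{a_1}$ by \eqref{eq: quantum dual elements}, this equals $B^{N}\,\Quantumdual_\alpha(v)$. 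By Proposition~\ref{prop: quantum dual elements}(a) the right-hand side is independent of $\sigma$, which re-establishes that $\FKdual_\alpha$ is well defined on $\sF[\HWsp_1]$.

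From the identity I would then read off, for $\alpha,\beta\in\LP_N$ (so $|\alpha|=|\beta|=N$), that $\FKdual_\alpha(\PartF_\beta)=B^{-N}\FKdual_\alpha(\sF[v_\beta])=B^{-N}\cdot B^{N}\,\Quantumdual_\alpha(v_\beta)=\Quantumdual_\alpha(v_\beta)=\delta_{\alpha,\beta}$ by Proposition~\ref{prop: quantum dual elements}(b). For the basis statements: since $(v_\alpha)_{\alpha\in\LP_N}$ spans $\HWsp_1$, the functions $\sF[v_\alpha]=B^{N}\PartF_\alpha$ span $\sF[\HWsp_1]$, hence so do the $\PartF_\alpha$; and the relation $\FKdual_\alpha(\PartF_\beta)=\delta_{\alpha,\beta}$ forces the $\PartF_\alpha$ to be linearly independent (apply $\FKdual_\alpha$ to a vanishing combination). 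Thus $(\PartF_\alpha)_{\alpha\in\LP_N}$ is a basis of $\sF[\HWsp_1]$, so $\dmn\sF[\HWsp_1]=\#\LP_N=\Catalan_N$ — which incidentally shows $\sF$ is injective on $\HWsp_1(\Wd_2^{\tens 2N})$ (Corollary~\ref{cor: injectivity}) — and the $\Catalan_N$ functionals $\FKdual_\alpha$ are then linearly independent in the $\Catalan_N$-dimensional space $\sF[\HWsp_1]^*$, hence a basis, dual to $(\PartF_\alpha)_{\alpha\in\LP_N}$.

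I do not anticipate a genuine obstacle; the only point demanding care is the index bookkeeping in the inductive step — namely that "$\sigma$ allowable for $\alpha$" is exactly the hypothesis making (ASY) applicable at every stage of the nested limit, and that the innermost-limit-first convention of \eqref{eq: limit operation} matches the right-to-left composition convention of \eqref{eq: quantum dual elements}. Everything past the identity is the same dual-pairing argument already used in the proof of Proposition~\ref{prop: quantum dual elements}.
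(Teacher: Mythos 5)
Your proposal is correct and follows essentially the same route as the paper: the paper also asserts (just before the proposition) that $\FKdual_\alpha^{(\sigma)}(\sF[v])=B^{|\alpha|}\,\Quantumdual_\alpha(v)$ via iterated application of the (ASY) part of Theorem~\ref{thm: SCCG correspondence special case}, and then derives the duality relation and basis statements from Proposition~\ref{prop: quantum dual elements} exactly as you do. The only difference is that you spell out the induction on $N$ establishing that identity, which the paper leaves implicit.
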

In \cite{FK-solution_space_for_a_system_of_null_state_PDEs_4},
``connectivity weights'' are defined as the dual basis of the iterated limits
$\FKdual_\alpha$, so by this proposition they coincide with our pure partition functions.
Explicit expressions for the connectivity weights for $N=2,3,4$ were studied further in
\cite{FSK-multiple_SLE_connectivity_weights_for_4_6_8}, and a formula
\cite[Equation~(56)]{FSK-multiple_SLE_connectivity_weights_for_4_6_8} 
for the connectivity
weight of the rainbow pattern (see Proposition~\ref{prop: solution for nested}) was obtained for general $N$.
\begin{proof}[Proof of Proposition~\ref{prop: FK dual elements}]
The assertion follows directly from 
Proposition~\ref{prop: quantum dual elements},
because we have $\PartF_{\alpha}=B^{-|\alpha|}\,\sF[v_\alpha]$ by definition,
and hence, 
$\FKdual_\alpha (\PartF_\beta )
= B^{|\alpha|}\,\Quantumdual_\alpha ( B^{-|\beta|}v_\beta )
= \delta_{\alpha,\beta}$. 
\end{proof}


As a corollary, we get the injectivity of the 
``spin chain~--~Coulomb gas correspondence'' $\sF$.
\begin{cor}\label{cor: injectivity}
For any $N\in\bZpos$, the mapping 
$\sF\colon\HWsp_1(\Wd_2^{\tens 2N})\rightarrow\sC^{\infty}(\chamber_{2N})$ 
is injective.
\end{cor}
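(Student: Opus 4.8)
The plan is to deduce injectivity of $\sF$ as an immediate consequence of the dual-pairing results established just above. The key point is that $\HWsp_1 = \HWsp_1(\Wd_2^{\tens 2N})$ has dimension exactly $\Catalan_N = \#\LP_N$ by Lemma~\ref{lem: multiplicity of singlet}, and the vectors $(v_\alpha)_{\alpha \in \LP_N}$ form a basis of $\HWsp_1$ by Proposition~\ref{prop: quantum dual elements}(b). So to prove $\sF$ is injective, it suffices to show that the images $\sF[v_\alpha]$, $\alpha \in \LP_N$, are linearly independent in $\sC^\infty(\chamber_{2N})$: a linear map taking a basis to a linearly independent set has trivial kernel.

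First I would take an arbitrary linear combination $\PartF = \sum_{\alpha \in \LP_N} c_\alpha \, \sF[v_\alpha]$ and assume $\PartF \equiv 0$. Since $\PartF = \sF\big[\sum_\alpha c_\alpha v_\alpha\big] \in \sF[\HWsp_1]$, the iterated-limit functionals $\FKdual_\beta$ of Proposition~\ref{prop: FK dual elements} are defined on $\PartF$, and they vanish on the zero function. On the other hand, using $\PartF_\alpha = B^{-|\alpha|}\sF[v_\alpha]$ together with the duality $\FKdual_\beta(\PartF_\alpha) = \delta_{\alpha,\beta}$ from Proposition~\ref{prop: FK dual elements}, one computes
\begin{align*}
0 \;=\; \FKdual_\beta(\PartF) \;=\; \sum_{\alpha \in \LP_N} c_\alpha \, B^{|\alpha|}\,\FKdual_\beta(\PartF_\alpha) \;=\; B^{N} \, c_\beta
\end{align*}
for every $\beta \in \LP_N$. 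Since $B = \frac{\Gamma(1-4/\kappa)^2}{\Gamma(2-8/\kappa)}$ is finite and non-zero for $\kappa \notin \bQ$, this forces $c_\beta = 0$ for all $\beta$. Hence the $\sF[v_\alpha]$ are linearly independent, and $\sF$ is injective on the span of the $v_\alpha$, which is all of $\HWsp_1$.

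There is no real obstacle here: the corollary is essentially a bookkeeping consequence of Proposition~\ref{prop: quantum dual elements} and Proposition~\ref{prop: FK dual elements}. The only thing to be careful about is the logical dependency --- one should confirm that the proof of Proposition~\ref{prop: FK dual elements} (and the well-definedness of the $\FKdual_\alpha$ via the (ASY) part of Theorem~\ref{thm: SCCG correspondence special case}) does not itself invoke injectivity of $\sF$; inspecting the excerpt, it relies only on the explicit vectors $(v_\alpha)$ and the duality $\Quantumdual_\alpha(v_\beta) = \delta_{\alpha,\beta}$, so there is no circularity. One could alternatively phrase the argument purely on the quantum-group side: $\sF$ maps the basis $(v_\alpha)$ to functions whose iterated limits realize the dual basis $(\Quantumdual_\alpha)$ up to the nonzero factor $B^{|\alpha|}$, and a map whose image is separated by a family of functionals of the right cardinality must be injective on a space of that dimension.
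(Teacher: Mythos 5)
Your proof is correct and is essentially the paper's own argument: the paper likewise deduces injectivity from the linear independence of the images $\sF[v_\alpha]$ of the basis $(v_\alpha)_{\alpha\in\LP_N}$, which follows from the duality $\FKdual_\beta(\PartF_\alpha)=\delta_{\alpha,\beta}$ of Proposition~\ref{prop: FK dual elements}. Your explicit check that no circularity arises is a sensible precaution but adds nothing beyond what the paper's two-line proof already relies on.
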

\begin{proof}
By Proposition~\ref{prop: FK dual elements}, the images 
$\sF[v_\alpha]$ of the basis vectors 
$v_\alpha$ of $\HWsp_1(\Wd_2^{\tens 2N})$ are 
linearly independent, because $\PartF_\alpha=B^{-|\alpha|}\,\sF[v_\alpha]$ are.
Injectivity of $\sF$ follows by linearity.
\end{proof}

\subsection{\label{sec: Canonical partition functions and entire curve domain Markov property}Symmetric partition functions and entire curve domain Markov property}

In this section, we study partition functions relevant for
statistical mechanics models admitting a cyclic permutation symmetry
of the marked points on the boundary. Examples of such models are critical
percolation, the Ising model at criticality, and the discrete Gaussian
free field, with suitable boundary conditions.

The symmetric partition functions $\PartF^{(N)}$ are 
combinations of the extremal, pure partition
functions $\PartF_\alpha$, and this combination encodes information about the 
crossing probabilities of the model. 
Formulas for $\PartF^{(N)}$ are in fact often easier to find
than those for $\PartF_\alpha$.
Explicit formulas for 
crossing probabilities, however, require the knowledge of
the pure partition functions as well.

The functions $\PartF^{(N)}$ 
should satisfy the conditions of Theorem~\ref{thm: local multiple SLEs}(a),
in particular, Equations \eqref{eq: multiple SLE PDEs}~and
\eqref{eq: multiple SLE Mobius covariance}.
The asymptotics requirement~\eqref{eq: multiple SLE asymptotics}
is replaced by the cascade property
\begin{align}\label{eq: cascade property for canonical partf}
\lim_{x_j , x_{j+1} \to \xi} \frac{\PartF^{(N)}(x_1 , \ldots , x_{2N})}{(x_{j+1} - x_j)^{-2h}}
= \; & \PartF^{(N-1)}(x_1 , \ldots, x_{j-1} , x_{j+2} , \ldots , x_{2N}) 
\end{align}
for any $j = 1, \ldots, 2N-1$.
This expresses the fact that the partition function of the model with $2N$ boundary changes
reduces to that with $2N-2$ boundary changes as any two marked points are merged. In view of
Proposition~\ref{prop: cascade relation}, 
in this limit the other curves do not feel
the merged marked points, and they have the law of the symmetric $(N-1)$-$\SLE$.
Roughly, this should be interpreted as a domain Markov property with respect to
one entire curve of the multiple $\SLE$.

By our correspondence, Theorem~\ref{thm: SCCG correspondence special case}, such symmetric
partition functions $\PartF^{(N)}$ can be constructed as $\PartF^{(N)}\propto\sF[v^{(N)}]$ from vectors
$v^{(N)}\in\Wd_{2}^{\tens 2N}$ satisfying the following system of equations:
\begin{align}
&K.v^{(N)}=v^{(N)}\label{eq: canonical partf cartan eigenvalue}\\
&E.v^{(N)}=0\label{eq: canonical partf highest weight vector}\\
&\hat{\pi}_j(v^{(N)})=v^{(N-1)}\qquad\text{for all}\;j=1,\ldots,2N-1.\label{eq: canonical partf projection conditions}
\end{align}

In the quantum group setting, we have a unique solution for this system when 
the normalization is fixed.

\begin{thm}\label{thm: vectors for canonical partf}
There exists a unique collection $\left(v^{(N)}\right)_{N\in\bZnn}$ of 
vectors in $\Wd_{2}^{\tens 2N}$ 
such that $v^{(0)}=1$ and the system of 
equations~\eqref{eq: canonical partf cartan eigenvalue}~--~\eqref{eq: canonical partf projection conditions} 
hold for all $N\in\bZpos$.
The vectors are given by
\begin{align*}
v^{(N)}=\sum_{\alpha\in\LP_N}v_\alpha.
\end{align*}
\end{thm}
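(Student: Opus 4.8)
The plan is to prove uniqueness along the lines of Proposition~\ref{prop: uniqueness}, and to establish existence by verifying directly that the claimed formula $v^{(N)} = \sum_{\alpha\in\LP_N} v_\alpha$, built from the vectors provided by Theorem~\ref{thm: existence of multiple SLE vectors}, solves the system \eqref{eq: canonical partf cartan eigenvalue}~--~\eqref{eq: canonical partf projection conditions}. The only nonformal input is one standard combinatorial fact about inserting and removing links in planar link patterns.

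For uniqueness, I would induct on $N$. Suppose $\left(v^{(N)}\right)_{N}$ and $\left(\tilde v^{(N)}\right)_{N}$ both satisfy \eqref{eq: canonical partf cartan eigenvalue}~--~\eqref{eq: canonical partf projection conditions} with $v^{(0)} = \tilde v^{(0)} = 1$, and assume $v^{(N-1)} = \tilde v^{(N-1)}$. Then the difference $w = v^{(N)} - \tilde v^{(N)} \in \Wd_{2}^{\tens 2N}$ satisfies $K.w = w$, $E.w = 0$, and $\hat{\pi}_j(w) = 0$ for every $j = 1,\ldots,2N-1$, so Corollary~\ref{cor: all projections vanish gives zero} forces $w = 0$. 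The base case $N=0$ is just the normalization, so the two collections coincide for all $N$.

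For existence, set $v^{(N)} := \sum_{\alpha\in\LP_N} v_\alpha$. Since every $v_\alpha$ lies in $\HWsp_1(\Wd_{2}^{\tens 2N})$ by \eqref{eq: multiple sle cartan eigenvalue}~--~\eqref{eq: multiple sle highest weight vector}, and these conditions are linear, $v^{(N)}$ satisfies \eqref{eq: canonical partf cartan eigenvalue} and \eqref{eq: canonical partf highest weight vector}; the normalization $v^{(0)} = v_\emptyset = 1$ is immediate since $\LP_0 = \set{\emptyset}$. For the cascade condition \eqref{eq: canonical partf projection conditions}, fix $j$ and apply the projection rules \eqref{eq: multiple sle projection conditions} for the $v_\alpha$ to obtain
\[
\hat{\pi}_j\left(v^{(N)}\right) \;=\; \sum_{\alpha\in\LP_N} \hat{\pi}_j(v_\alpha) \;=\; \sum_{\substack{\alpha\in\LP_N\\ \link{j}{j+1}\in\alpha}} v_{\alpha\removeLink\link{j}{j+1}},
\]
the patterns with $\link{j}{j+1}\notin\alpha$ contributing zero. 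The key point is that the link-removal map $\alpha\mapsto \alpha\removeLink\link{j}{j+1}$ is a bijection from $\set{\alpha\in\LP_N \mid \link{j}{j+1}\in\alpha}$ onto $\LP_{N-1}$: its inverse takes a pattern in $\LP_{N-1}$, relabels the indices $\geq j$ by adding two, and inserts the new link $\link{j}{j+1}$, which manifestly preserves planarity. Hence the last sum runs bijectively over $\hat\alpha\in\LP_{N-1}$, and $\hat{\pi}_j\left(v^{(N)}\right) = \sum_{\hat\alpha\in\LP_{N-1}} v_{\hat\alpha} = v^{(N-1)}$, independently of $j$, as required. I expect no genuine obstacle; the single step deserving an explicit sentence is the bijectivity of link insertion and removal, and even that is entirely routine.
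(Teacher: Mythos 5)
Your proposal is correct and follows essentially the same route as the paper: uniqueness via Corollary~\ref{cor: all projections vanish gives zero} applied to the difference of two solutions, and existence by summing the projection conditions~\eqref{eq: multiple sle projection conditions} over $\LP_N$ and invoking the bijection $\alpha\mapsto\alpha\removeLink\link{j}{j+1}$ between $\set{\alpha\in\LP_N\;\big|\;\link{j}{j+1}\in\alpha}$ and $\LP_{N-1}$. Your explicit remark on the bijectivity of link insertion and removal is a small addition the paper leaves implicit, but the argument is otherwise identical.
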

\begin{proof}
Applying Corollary \ref{cor: all projections vanish gives zero} 
to the difference of two solutions gives uniqueness,
as in the proof of Proposition~\ref{prop: uniqueness}. 
It remains to check that the asserted formula satisfies 
\eqref{eq: canonical partf cartan eigenvalue}~--~\eqref{eq: canonical partf projection conditions}. 
Equations~\eqref{eq: canonical partf cartan eigenvalue}~--~\eqref{eq: canonical partf highest weight vector} are satisfied by
\eqref{eq: multiple sle cartan eigenvalue}~--~\eqref{eq: multiple sle highest weight vector}.
For \eqref{eq: canonical partf projection conditions}, 
we use the properties 
\eqref{eq: multiple sle projection conditions} 
of the vectors $v_\alpha$, and the bijection
$\set{\alpha\in\LP_N\;\big|\;\link{j}{j+1}\in\alpha}\rightarrow\LP_{N-1}$
defined by $\alpha\mapsto\hat{\alpha}=\alpha\removeLink\link{j}{j+1}$,
to obtain
\begin{align*}
\hat{\pi}_j(v^{(N)})\;=\;\sum_{\alpha\in\LP_N}\hat{\pi}_j(v_\alpha)\;=\;\sum_{\substack{\alpha\in\LP_N\\\link{j}{j+1}\in\alpha}}v_{\alpha\removeLink\link{j}{j+1}}\;=\;
\sum_{\hat{\alpha}\in\LP_{N-1}}v_{\hat{\alpha}}\;=\;v^{(N-1)}
\end{align*}
for any $j=1,\ldots,2N-1$. This concludes the proof.
\end{proof}

\begin{thm}\label{thm: canonical partf}
Let $\kappa \in (0,8)\setminus \bQ$.
The collection $\left(\PartF^{(N)}\right)_{N\in\bZnn}$
of functions
\begin{align*}
\PartF^{(N)}:=\sum_{\alpha\in\LP_N}\PartF_\alpha\;\colon\;\chamber_{2N}\rightarrow\bC
\end{align*}
satisfies the partial differential equations~\eqref{eq: multiple SLE PDEs},
covariance~\eqref{eq: multiple SLE Mobius covariance},
and cascade property~\eqref{eq: cascade property for canonical partf}.
Moreover, it is uniquely determined by these conditions,
the normalization $\PartF^{(0)} = 1$, 
and the property that for each $N\in\bZnn$, the function $\PartF^{(N)}$
lies in the $\Catalan_N$-dimensional solution space.
\end{thm}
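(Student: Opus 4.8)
\emph{Plan.} This statement is essentially the image under the correspondence $\sF$ of Theorem~\ref{thm: vectors for canonical partf}, so the idea is to transport both existence and uniqueness from the quantum-group level, using Theorem~\ref{thm: SCCG correspondence special case} and the injectivity of $\sF$ (Corollary~\ref{cor: injectivity}). For existence, since $|\alpha|=N$ for $\alpha\in\LP_N$ and $\sF$ is linear, I would first rewrite the candidate as
\[
\PartF^{(N)}=\sum_{\alpha\in\LP_N}B^{-|\alpha|}\,\sF[v_\alpha]=B^{-N}\,\sF\Big[\sum_{\alpha\in\LP_N}v_\alpha\Big]=B^{-N}\,\sF[v^{(N)}],
\]
where $v^{(N)}$ is the vector of Theorem~\ref{thm: vectors for canonical partf}. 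The PDEs~\eqref{eq: multiple SLE PDEs} and the covariance~\eqref{eq: multiple SLE Mobius covariance} then follow from the (PDE) and (COV) parts of Theorem~\ref{thm: SCCG correspondence special case}, because $v^{(N)}$ satisfies~\eqref{eq: canonical partf cartan eigenvalue}~--~\eqref{eq: canonical partf highest weight vector}. For the cascade property~\eqref{eq: cascade property for canonical partf}, I would insert the relation $\hat{\pi}_j(v^{(N)})=v^{(N-1)}$ from~\eqref{eq: canonical partf projection conditions} into the (ASY) part, which yields $\lim_{x_j,x_{j+1}\to\xi}(x_{j+1}-x_j)^{2h}\,\sF[v^{(N)}]=B\,\sF[v^{(N-1)}]$; multiplying by $B^{-N}$ gives precisely~\eqref{eq: cascade property for canonical partf}. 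The normalization $\PartF^{(0)}=\PartF_\emptyset=1$ holds by construction (recall $v_\emptyset=1$), and each $\PartF^{(N)}$ lies in the $\Catalan_N$-dimensional solution space $\sF[\HWsp_1(\Wd_2^{\tens 2N})]$, being a linear combination of the basis $(\PartF_\alpha)_{\alpha\in\LP_N}$ of Theorem~\ref{thm: pure partition functions}.

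\emph{Uniqueness.} Here I would induct on $N$, the base case $N=0$ being the normalization. For the step, assume $\tilde{\PartF}^{(N-1)}=\PartF^{(N-1)}$ and let $\tilde{\PartF}^{(N)}$ satisfy the three conditions and lie in the $\Catalan_N$-dimensional space $\sF[\HWsp_1(\Wd_2^{\tens 2N})]$. Since $\sF$ restricted to $\HWsp_1(\Wd_2^{\tens 2N})$ is a bijection onto this space (Corollary~\ref{cor: injectivity}), there is a unique $\tilde{v}^{(N)}\in\HWsp_1(\Wd_2^{\tens 2N})$ with $\tilde{\PartF}^{(N)}=B^{-N}\,\sF[\tilde{v}^{(N)}]$; in particular $K.\tilde{v}^{(N)}=\tilde{v}^{(N)}$ and $E.\tilde{v}^{(N)}=0$. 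Combining the cascade property of $\tilde{\PartF}^{(N)}$, the inductive hypothesis $\PartF^{(N-1)}=B^{-(N-1)}\sF[v^{(N-1)}]$, and the (ASY) part of Theorem~\ref{thm: SCCG correspondence special case} applied to $\tilde{v}^{(N)}$, I obtain $\sF[\hat{\pi}_j(\tilde{v}^{(N)})]=\sF[v^{(N-1)}]$ for every $j$, hence $\hat{\pi}_j(\tilde{v}^{(N)})=v^{(N-1)}$ by injectivity of $\sF$ on $\HWsp_1(\Wd_2^{\tens 2(N-1)})$. Thus $\tilde{v}^{(N)}$ solves the system~\eqref{eq: canonical partf cartan eigenvalue}~--~\eqref{eq: canonical partf projection conditions} with the same inhomogeneous data as $v^{(N)}$, so by the uniqueness in Theorem~\ref{thm: vectors for canonical partf} (which rests on Corollary~\ref{cor: all projections vanish gives zero}) we get $\tilde{v}^{(N)}=v^{(N)}$, and therefore $\tilde{\PartF}^{(N)}=\PartF^{(N)}$.

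\emph{Main difficulty.} There is no genuinely new obstacle: all the weight is carried by the already-established inputs --- the existence and uniqueness of the vectors $v^{(N)}$ (Theorem~\ref{thm: vectors for canonical partf}), the injectivity of $\sF$ (Corollary~\ref{cor: injectivity}), and the compatibility of the (ASY) asymptotics with the cascade limit. The one point to handle with care is the promotion of the function $\tilde{\PartF}^{(N)}$ to a quantum-group vector in the inductive step: this is exactly where the hypothesis that $\tilde{\PartF}^{(N)}$ lies in the $\Catalan_N$-dimensional solution space enters, since without it the cascade property alone would not determine $\tilde{\PartF}^{(N)}$ among all solutions of~\eqref{eq: multiple SLE PDEs}~--~\eqref{eq: multiple SLE Mobius covariance}.
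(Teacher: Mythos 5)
Your proposal is correct and follows essentially the same route as the paper: existence by rewriting $\PartF^{(N)}=B^{-N}\sF[v^{(N)}]$ and invoking the (PDE), (COV), (ASY) parts of Theorem~\ref{thm: SCCG correspondence special case} together with \eqref{eq: canonical partf cartan eigenvalue}~--~\eqref{eq: canonical partf projection conditions}, and uniqueness from the uniqueness in Theorem~\ref{thm: vectors for canonical partf} combined with the injectivity of $\sF$ from Corollary~\ref{cor: injectivity}. Your inductive elaboration of the uniqueness step (pulling the cascade property back to the projection conditions on $\tilde{v}^{(N)}$) is a faithful unpacking of what the paper's terser argument leaves implicit.
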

\begin{proof}
By definition, $\PartF^{(N)}=B^{-N}\,\sF[v^{(N)}]$, so 
\eqref{eq: multiple SLE PDEs}~--~\eqref{eq: multiple SLE Mobius covariance}
follow from the (PDE) and (COV) parts of 
Theorem~\ref{thm: SCCG correspondence special case} and the properties
\eqref{eq: canonical partf cartan eigenvalue}~--~\eqref{eq: canonical partf highest weight vector} 
of the vectors $v^{(N)}$. The cascade 
property~\eqref{eq: cascade property for canonical partf}
follows from the (ASY)
part of Theorem~\ref{thm: SCCG correspondence special case}
and the corresponding property
\eqref{eq: canonical partf projection conditions} for 
$\left(v^{(N)}\right)_{N\in\bZnn}$.
Uniqueness follows from the uniqueness in Theorem~\ref{thm: vectors for canonical partf}
and injectivity of $\sF$ given by Corollary~\ref{cor: injectivity}.
\end{proof}


\subsection{\label{subsub: Ising model}Example: Symmetric partition function for the Ising model}
The Ising model was initially introduced as a model of ferromagnetic material,
but its simple and generic interactions make it applicable to a variety of 
phenomena that have positive correlations. 
The two-dimensional Ising model has remarkably
subtle behavior at the critical point, where a transition from ferromagnetic
to paramagnetic phase takes place \cite{MW-2d_Ising_model}. 
Critical Ising model in particular displays
conformal invariance properties in the scaling limit
\cite{Smirnov-towards_conformal_invariance,HS-energy_density,
CHI-conformal_invariance_of_spin_correlations_in_the_planar_Ising_model}.
In recent research, the Ising
model has been studied in terms of interfaces, 
whose scaling limits at criticality
are $\SLE$ type curves with $\kappa=3$
\cite{HK-Ising_interfaces_and_free_boundary_conditions,
CDHKS-convergence_of_Ising_interfaces_to_SLE,
Izyurov-critical_Ising_interfaces_in_multiply_connected_domains}.

Alternating boundary conditions between $2N$ marked points on the
boundary give rise to $N$ random interfaces (see Figure~\ref{fig: Ising connectivities N equals 3}).
The partition function of the critical Ising model with such boundary conditions in the
upper half-plane $\bH = \set{ z \in \bC \; \big| \; \im(z) > 0}$ is the following Pfaffian expression
\begin{align}\label{eq: Ising model partf}
\PartF^{(N)}_{{\Ising}}(x_{1},\ldots,x_{2N}) = 
\sum_\sP 
\sgn(\sP) \Big(\prod_{\set{a,b} \in \sP} \frac{1}{x_{b}-x_{a}} \Big) ,
\end{align}
where the sum is over partitions $\sP = \set{\set{a_1,b_1} , \ldots, \set{a_N, b_N}}$
of the set $\set{1,\ldots,2N}$ into $N$ disjoint two-element subsets $\set{a_k,b_k} \subset \set{1,\ldots,2N}$,
and $\sgn(\sP)$ is the sign of the pair partition
$\sP$ defined as the sign of the product $\prod (a-c) (a-d) (b-c) (b-d)$ over pairs of distinct elements $\set{a,b},\set{c,d} \in \sP$,
and by convention we always use the choice $a<b$ in the products.

We now show that the above Pfaffians
are symmetric partition functions for multiple $\SLE$s
with $\kappa=3$.
\begin{prop}\label{prop: Ising symmetric partition functions}
The functions  $\left(\PartF^{(N)}_{{\Ising}}\right)_{N\in\bZnn}$ 
satisfy \eqref{eq: multiple SLE PDEs}, \eqref{eq: multiple SLE Mobius covariance}
and \eqref{eq: cascade property for canonical partf}, with $\kappa=3$
and $h = \frac{1}{2}$.
\end{prop}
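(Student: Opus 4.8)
The plan is to verify the three required properties of $\PartF^{(N)}_{\Ising}$ directly from the explicit Pfaffian formula \eqref{eq: Ising model partf}, which makes the covariance and cascade properties almost immediate, leaving the PDEs as the substantive point. First I would record that $\PartF^{(N)}_{\Ising}$ is a sum over pair partitions of a product of factors $(x_b - x_a)^{-1}$, i.e.\ a Pfaffian $\PartF^{(N)}_{\Ising} = \Pf\big( (x_j - x_i)^{-1} \big)_{1 \le i,j \le 2N}$ of the antisymmetric matrix with entries $(x_j-x_i)^{-1}$. For the M\"obius covariance \eqref{eq: multiple SLE Mobius covariance} with $h = \half$, I would use the identity $\Mob(x) - \Mob(y) = \frac{(ad-bc)(x-y)}{(cx+d)(cy+d)}$, so that each factor $\frac{1}{\Mob(x_b)-\Mob(x_a)}$ produces $(cx_a+d)(cx_b+d)/((ad-bc)(x_b-x_a))$; since every index $1,\ldots,2N$ appears in exactly one pair of each $\sP$, the product over $\set{a,b}\in\sP$ of $(cx_a+d)(cx_b+d)$ equals $\prod_{i=1}^{2N}(cx_i+d)$, independent of $\sP$, and this factors out of the sum. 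Comparing with $\Mob'(x_i) = (ad-bc)(cx_i+d)^{-2}$ yields exactly $\prod_i \Mob'(x_i)^{1/2}$ after collecting the $(ad-bc)$ powers, which gives \eqref{eq: multiple SLE Mobius covariance}.

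Next, for the cascade property \eqref{eq: cascade property for canonical partf}, I would split the sum over $\sP\in\LP\text{-pairings of }\set{1,\ldots,2N}$ according to whether $\set{j,j+1}\in\sP$ or not. If $\set{j,j+1}\in\sP$, the associated term contains the factor $(x_{j+1}-x_j)^{-1}$ and, after multiplying by $(x_{j+1}-x_j)^{2h} = (x_{j+1}-x_j)$ and sending $x_j, x_{j+1}\to\xi$, converges to $\sgn(\sP)$ times the product over the remaining $N-1$ pairs evaluated at $\xi$ for both former endpoints; summing these contributions reproduces $\PartF^{(N-1)}_{\Ising}$ on the remaining variables, provided the sign bookkeeping matches — I would check that $\sgn(\sP)$ restricted to pairings containing $\set{j,j+1}$ agrees with $\sgn$ of the induced pairing of $\set{1,\ldots,2N}\setminus\set{j,j+1}$, which follows because the factors $(j - c)(j-d)(j+1-c)(j+1-d)$ and $(j-j+1)$-type terms contribute a definite sign that is absorbed consistently. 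If $\set{j,j+1}\notin\sP$, then $x_j$ and $x_{j+1}$ each sit in a factor $(x_j - x_{a})^{-1}$, $(x_{j+1}-x_b)^{-1}$ with $a,b\neq j,j+1$, so the term stays bounded and, after multiplying by $(x_{j+1}-x_j) \to 0$, vanishes in the limit. Hence only the pairings through $\set{j,j+1}$ survive, giving \eqref{eq: cascade property for canonical partf}.

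The main obstacle is the system of PDEs \eqref{eq: multiple SLE PDEs} at $\kappa = 3$, $h = \half$, where no factorization trick is available. Here I would take the standard approach for Pfaffian/Ising correlation functions: the expression $\PartF^{(N)}_{\Ising}$ satisfies \eqref{eq: Ising model partf} the level-two null-state (BPZ) equations because the two-point function $(x_b-x_a)^{-1}$ is precisely the $\kappa=3$ boundary two-point function of a weight-$\half$ degenerate field, and Wick's theorem for the free-fermion Ising model reduces the action of the second-order differential operator on the Pfaffian to a sum of pairwise terms, each of which cancels by the elementary algebraic identity $\frac{\kappa}{2}\partial_i^2\big(\tfrac{1}{x_j-x_i}\big) + \big(\tfrac{2}{x_j-x_i}\partial_j - \tfrac{2h}{(x_j-x_i)^2}\big)\tfrac{1}{x_j-x_i} = 0$ for $\kappa=3$, $h=\half$, together with the three-point cancellations among cross terms $\tfrac{1}{(x_i - x_a)(x_i-x_b)}$. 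Concretely, I would differentiate the Pfaffian, use the cofactor expansion $\partial_{x_i}\Pf(M) = \sum_k \partial_{x_i}M_{ik}\,\Pf(M^{\widehat{i},\widehat{k}})\cdot(\pm 1)$, and collect terms; the computation is routine but lengthy, so I would either cite the known fact that Ising $2N$-point boundary correlations solve the BPZ equations (e.g.\ via \cite{HS-energy_density} or standard CFT) or present the short direct verification via the partial-fraction identities above. Alternatively — and this may be the cleanest route within the paper's framework — I would exhibit the Ising-specific highest-weight vector $v^{(N)}_{\Ising}\in\HWsp_1(\Wd_2^{\tens 2N})$ at the special value $q = e^{\ii 4\pi/3}$ and invoke Theorem~\ref{thm: SCCG correspondence special case}, but since $\kappa = 3 \in \bQ$ this requires a limiting argument that the paper explicitly flags as outside its generic setup; so for this proposition I expect the direct Pfaffian verification of \eqref{eq: multiple SLE PDEs} to be the right and the hard part.
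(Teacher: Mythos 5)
Your verification of the M\"obius covariance and of the cascade property coincides with the paper's: the paper also factors $\frac{\Mob(x_b)-\Mob(x_a)}{x_b-x_a}=\sqrt{\Mob'(x_a)}\sqrt{\Mob'(x_b)}$ (Lemma~\ref{lem: Mobius ratio lemma}) and splits the cascade limit according to whether $\set{j,j+1}\in\sP$, using exactly the observation you sketch that removing the pair $\set{j,j+1}$ of \emph{consecutive} indices does not change $\sgn(\sP)$. Those two parts are fine.

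The gap is in the PDE part. You correctly identify the first cancellation: with $\kappa=3$, $h=\half$, the term $\frac{3}{(x_1-x_{1'})^2}$ coming from $\frac{3}{2}\partial_1^2$ cancels against the $j=1'$ term of the sum, for each fixed pair partition $\sP$ separately. But your claim that the remaining cross terms cancel by ``three-point cancellations'' is not correct as a mechanism: after combining, for fixed $\sP$, the two terms $j=c$ and $j=d$ of each pair $\set{c,d}\in\sP\setminus\set{\set{1,1'}}$ into $\frac{(x_c-x_d)^2}{(x_c-x_1)^2(x_d-x_1)^2}$, what remains is a \emph{strictly positive} quantity for each individual $\sP$ (already visible at $N=2$, where each of the three pair partitions contributes a nonzero term). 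So there is no per-partition or local cancellation; the vanishing only happens in the signed sum over all pair partitions. The paper isolates this as the identity $Q(x_1,\ldots,x_{2N})\equiv 0$ for the rational function in~\eqref{eq: rational function}, and proves it by induction on $N$: one decomposes the sum according to the partner of $2N$ in $\sP$, kills most terms by the induction hypothesis, and cancels the rest against the $\set{1,2N}\in\sP$ contribution via an explicit sign-tracking bijection $\sP_\bullet\mapsto\sP_p$ and a partial-fraction identity. This induction is the substance of the proof and is absent from your proposal; your fallback of citing that Ising boundary correlations satisfy the level-two BPZ equations would outsource precisely this point (and the reference you suggest does not state it for these Pfaffian boundary-condition-changing correlations). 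Your alternative quantum-group route is, as you note yourself, blocked at $\kappa=3\in\bQ$ within this paper's framework.
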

The verification of 
the M\"obius covariance property~\eqref{eq: multiple SLE Mobius covariance} 
on the upper half-plane $\bH$
relies on the following lemma. It will also be used
later on for explicit partition functions for other models.

\begin{lem}\label{lem: Mobius ratio lemma}
Suppose that $\Mob \colon \bH \to \bH$ is conformal.
Then for any $z,w \in \cl{\bH}$ we have
\[ \frac{\Mob(z) - \Mob(w)}{z-w} = \sqrt{\mu'(z)} \sqrt{\mu'(w)} . \]
\end{lem}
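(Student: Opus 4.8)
The plan is to write the conformal self-map of $\bH$ explicitly as $\Mob(z) = \frac{az+b}{cz+d}$ with $a,b,c,d \in \bR$ and $ad-bc > 0$, and then simply compute both sides. First I would compute the numerator on the left:
\[
\Mob(z) - \Mob(w) = \frac{az+b}{cz+d} - \frac{aw+b}{cw+d}
 = \frac{(az+b)(cw+d) - (aw+b)(cz+d)}{(cz+d)(cw+d)} .
\]
Expanding the numerator, the terms $acwz$ and $bd$ cancel, leaving $(ad-bc)(z-w)$, so that
\[
\frac{\Mob(z) - \Mob(w)}{z-w} = \frac{ad-bc}{(cz+d)(cw+d)} .
\]
On the other hand, $\Mob'(z) = \frac{a(cz+d) - c(az+b)}{(cz+d)^2} = \frac{ad-bc}{(cz+d)^2}$, and similarly for $w$, so $\sqrt{\Mob'(z)}\sqrt{\Mob'(w)} = \frac{ad-bc}{(cz+d)(cw+d)}$, matching the previous display.

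The only genuine point to address is the choice of branch of the square root, i.e. why the product $\sqrt{\Mob'(z)}\sqrt{\Mob'(w)}$ really equals $\frac{ad-bc}{(cz+d)(cw+d)}$ rather than its negative. The natural convention, consistent with the covariance statement \eqref{eq: multiple SLE Mobius covariance} where $\Mob'(x_i)^h$ appears for real boundary points, is to take the branch of $z \mapsto \sqrt{\Mob'(z)}$ on $\cl{\bH}$ that is positive on the real line where $\Mob' > 0$; since $ad-bc>0$ and $cz+d$ has no zero in $\cl\bH$ (the pole $-d/c$ of $\Mob$ is real, and $\Mob$ maps $\bH$ to $\bH$), the function $z \mapsto (cz+d)^2$ is nonvanishing and continuous on the simply connected set $\cl\bH$, so $\Mob'$ admits a continuous square root there; with the sign fixed on $\bR$ it is $\frac{\sqrt{ad-bc}}{cz+d}$ when $cz+d$ is chosen with the appropriate sign convention, and then the product telescopes as claimed. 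I would phrase this as: both $\frac{\Mob(z)-\Mob(w)}{z-w}$ and $\sqrt{\Mob'(z)}\sqrt{\Mob'(w)}$ are continuous functions on $\cl\bH \times \cl\bH$ agreeing (by the computation above, up to sign) everywhere and agreeing in sign on the diagonal of $\bR \times \bR$, hence agreeing identically.

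I do not anticipate a real obstacle here; the computation is elementary and the branch issue is routine once one commits to the standard convention. If one wanted to be fully rigorous about the branch without invoking the covariance convention, one could alternatively factor $\Mob$ into a composition of translations, positive dilations, and the inversion $z \mapsto -1/z$, check the identity for each of these building blocks (where the square root is unambiguous), and note that the claimed identity is stable under composition — but the direct computation above is shorter and I would present that.
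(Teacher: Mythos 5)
Your proposal is correct and follows essentially the same route as the paper: write $\Mob(z)=\frac{az+b}{cz+d}$ explicitly and verify by direct computation that both sides equal $\frac{ad-bc}{(cz+d)(cw+d)}$. The paper simply normalizes $ad-bc=1$ and fixes the branch by $\sqrt{\Mob'(z)}=\frac{1}{cz+d}$, remarking that the identity is independent of the branch choice since both square roots change sign together --- a shortcut that disposes of the branch discussion you elaborate on.
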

\begin{proof}
The conformal self-map $\Mob$ of $\bH$ is a M\"obius transformation, 
$\Mob(z) = \frac{a z + b}{c z + d}$, with $a,b,c,d \in \bR$, and $ad-bc > 0$.
Without loss of generality we take $ad-bc=1$. Then a branch of the square root of the derivative
is defined by $\sqrt{\Mob'(z)} = \frac{1}{c z + d}$ (and the assertion does not depend on the choice of branch).
By a direct calculation, we get that both $\frac{\Mob(z) - \Mob(w)}{z-w}$ and $\sqrt{\mu'(z)} \sqrt{\mu'(w)}$
are equal to $\big((cz+d) (cw+d) \big)^{-1}$.
\end{proof}

\begin{proof}[Proof of Proposition~\ref{prop: Ising symmetric partition functions}]
For $\Mob\colon \bH \to \bH$, Lemma~\ref{lem: Mobius ratio lemma} gives for each term of \eqref{eq: Ising model partf} the equality
\begin{align*}
\prod_{\set{a,b} \in \sP} \frac{1}{x_b - x_a} \; = \; \prod_{i=1}^{2N} \Mob'(x_i)^{1/2} \times \prod_{\set{a,b} \in \sP} \frac{1}{\Mob(x_b) - \Mob(x_a)} ,
\end{align*}
which implies \eqref{eq: multiple SLE Mobius covariance}, i.e., 
$\PartF^{(N)}_{\Ising}(x_1, \ldots,x_{2N}) 
    = \prod_{i=1}^{2N} \Mob'(x_i)^{1/2} \times \PartF^{(N)}_{\Ising}\big(\Mob(x_1), \ldots,\Mob(x_{2N})\big) $.

For the cascade property \eqref{eq: cascade property for canonical partf},
consider the limit $x_j,x_{j+1} \to \xi$ of
$(x_{j+1}-x_j) \times \PartF^{(N)}_{\Ising}$. 
The prefactor $(x_{j+1}-x_j)$ ensures that the terms in 
\eqref{eq: Ising model partf} corresponding to pair partitions $\sP$
that do not contain the pair $\set{j,j+1}$ vanish in the limit.
For the terms for which the pair partition $\sP$ contains $\set{j,j+1}$,
we note that the prefactor $(x_{j+1}-x_j)$ cancels the factor
$\frac{1}{x_{j+1}-x_j}$, and that removing the pair $\set{j,j+1}$ from $\sP$
does not affect $\sgn(\sP)$. We get the desired property
\begin{align*}
\lim_{x_{j},x_{j+1}\to\xi}(x_{j+1}-x_{j})\times\PartF^{(N)}_{{\Ising}}(x_{1},\ldots,x_{2N})=\; & \PartF^{(N-1)}_{{\Ising}}(x_{1},\ldots,x_{j-1},x_{j+2},\ldots,x_{2N}).
\end{align*}

It remains to show that $\PartF^{(N)}_{\Ising}$ satisfies the
partial differential equations \eqref{eq: multiple SLE PDEs},
with $\kappa=3$ and $h=\frac{1}{2}$,
i.e., that
$\sD_2^{(i)}=\frac{3}{2}\pdder{x_{i}}+\sum_{j\neq i}\left(\frac{2}{x_{j}-x_{i}}\pder{x_{j}}-\frac{1}{(x_{j}-x_{i})^{2}}\right)$
annihilate $\PartF^{(N)}_{\Ising}$. 
By symmetry, it suffices to consider $i=1$.
The action of $\sD_2^{(1)}$ on $\PartF^{(N)}_{{\Ising}}$ gives
\begin{align*}
\sum_{\sP}\sgn(\sP)\left\lbrace\frac{3}{(x_1-x_{1'})^2}-
\sum_{j\neq1}\Bigg(\frac{2}{(x_j-x_1)(x_j-x_{j'})}+\frac{1}{(x_j-x_1)^2}
\Bigg)\right\rbrace
\Bigg(\prod_{\set{a,b}\in\sP}\frac{1}{x_{b}-x_{a}}\Bigg),
\end{align*} 
where we denote by $j'=j'(\sP)$ the pair of $j$ in the pair partition $\sP$.
For fixed $\sP$, the term $\frac{3}{(x_1-x_{1'})^2}$
cancels with the term $j=1'$ in the sum.
The other terms in the sum over $j \neq 1$ can be combined pairwise
according to the pairs $\set{c,d}\in\sP\setminus\set{\set{1,1'}}$ as 
\begin{align*}
\Bigg(\frac{2}{(x_c-x_1)(x_c-x_{d})}+\frac{1}{(x_c-x_1)^2}
\Bigg)+\Bigg(\frac{2}{(x_{d}-x_1)(x_{d}-x_{c})}+\frac{1}{(x_{d}-x_1)^2}\Bigg)
=\;\frac{(x_c-x_{d})^2}{(x_c-x_1)^2(x_{d}-x_1)^2}.
\end{align*}
Thus, the claim that $\sD_2^{(1)}\PartF^{(N)}_{{\Ising}}=0$ 
is reduced to the claim that the rational function
\begin{align}\label{eq: rational function}
Q(x_1,\ldots,x_{2N})\;=\;&\sum_{\sP}\sgn(\sP)\Bigg(\prod_{\set{a,b}\in\sP}\frac{1}{x_{b}-x_{a}}\Bigg)
\sum_{\set{c,d}\in\sP\setminus\{\set{1,1'}\}}\frac{(x_c-x_{d})^2}{(x_c-x_1)^2(x_{d}-x_1)^2}
\end{align} 
is identically zero. 
To show this, we proceed by induction on $N$.
For $N=1$, the sum over $\set{j,j'}$ is empty and therefore zero. 
Assume then that $Q(y_1,\ldots,y_{2N-2})\equiv 0$, 
and consider $Q(x_1,\ldots,x_{2N})$. 

We decompose the sum~\eqref{eq: rational function} 
into two sums according to whether 
$\sP$ contains the pair $\set{1,2N}$ or not.

If $\sP$ contains $\set{1,2N}$, we remove it 
and denote by $\sP_\bullet = \sP \setminus \set{\set{1,2N}}$ the resulting pair partition
of $\set{2,\ldots,2N-1}$. Note that $\sgn(\sP_\bullet) = \sgn(\sP)$.
We also extract the corresponding term $\frac{1}{x_{2N}-x_1}$ from 
the product in~\eqref{eq: rational function}.
The sum of the terms in~\eqref{eq: rational function} for which 
$\sP$ contains $\set{1,2N}$ can thus be written as
\begin{align}\label{eq: terms containing (i, 2N)}
\frac{1}{x_{2N}-x_1}\;\sum_{\sP_\bullet}\sgn(\sP_\bullet)\Bigg(\prod_{\set{a,b}\in\sP_\bullet}\frac{1}{x_{b}-x_{a}}\Bigg)
\sum_{\set{c,d}\in\sP_\bullet}\frac{(x_c-x_{d})^2}{(x_c-x_1)^2(x_{d}-x_1)^2} .
\end{align}

If $\sP$ does not contain $\set{1,2N}$, then we have $\set{p,2N}\in\sP$ for some
$p=2,\ldots,2N-1$. 
Denote by $\sP_p = \sP \setminus \set{\set{p,2N}}$ the pair partition of
$\set{1,\ldots,2N-1} \setminus \set{p}$ obtained by removing this pair.
Note that $\sgn(\sP_p) = (-1)^{p-1} \, \sgn(\sP)$.
For each $p$, we write the sum in~\eqref{eq: rational function} over terms $\set{c,d}\neq\set{p,2N}$ as
\begin{align}\label{eq: induction hypothesis}
\frac{(-1)^{p-1}}{x_{2N}-x_p} \; \sum_{\sP_p} \sgn(\sP_p)
\Bigg(\prod_{\set{a,b}\in\sP_p}\frac{1}{x_{b}-x_{a}}\Bigg)
\sum_{\set{c,d}\in\sP_p\setminus\{\set{1,1'}\}}\frac{(x_c-x_{d})^2}{(x_c-x_1)^2(x_{d}-x_1)^2}.
\end{align}
By the induction hypothesis, 
the expression~\eqref{eq: induction hypothesis} is zero.
The remaining terms in~\eqref{eq: rational function} 
have $\set{c,d}=\set{p,2N}$, and they add up to
\begin{align}\label{eq: terms not containing (i, 2N)}
\sum_{p=2}^{2N-1} \frac{(-1)^{p-1}(x_{2N}-x_p)}{(x_p-x_1)^2(x_{2N}-x_1)^2} \;
\sum_{\sP_p}\sgn(\sP_p)\Bigg(\prod_{\set{a,b}\in\sP_p}\frac{1}{x_{b}-x_{a}}\Bigg).
\end{align}
We will finish the proof by showing that~\eqref{eq: terms not containing (i, 2N)} 
cancels~\eqref{eq: terms containing (i, 2N)}. 

For each $p=2,\ldots,2N-1$ there is a bijection $\sP_\bullet \mapsto \sP_p$
from the set of pair partitions of $\set{2,\ldots,2N-1}$ to those of $\set{1,\ldots,2N-1}\setminus\set{p}$
obtained by replacing the pair $\set{p,p'}\in\sP_\bullet$ by 
the pair $\set{1,p'}$. 
Note that $\sgn(\sP_p) = (-1)^p \, \sgn(p'-p) \,\sgn(\sP_\bullet)$.

We can now write \eqref{eq: terms not containing (i, 2N)} as follows (recall that in the products we choose $a<b$):
\begin{align*}
&\sum_{p=2}^{2N-1}
\frac{(-1)^{p-1} \, (x_{2N}-x_p)}{(x_p-x_1)^2 \, (x_{2N}-x_1)^2}
\sum_{\sP_\bullet} (-1)^p \, \sgn(p'-p) \, \sgn(\sP_\bullet)
\Bigg(\prod_{\set{a,b}\in\sP_\bullet}\frac{1}{x_{b}-x_{a}}\Bigg)
\frac{(x_{p'}-x_p) \times \sgn(p'-p)}{x_{p'}-x_1} \\
=&\;\frac{-1}{(x_{2N}-x_1)^2} \; \sum_{\sP_\bullet} \sgn(\sP_\bullet)
\Bigg(\prod_{\set{a,b}\in\sP_\bullet}\frac{1}{x_{b}-x_{a}}\Bigg)
\sum_{p=2}^{2N-1} \frac{(x_{2N}-x_p) \, (x_{p'}-x_{p})}{(x_p-x_1)^2 \, (x_{p'}-x_{1})} .
\end{align*}
We combine the terms $p=c$ and $p=d$, for $\set{c,d}\in\sP_\bullet$,
to simplify the last sum over $p$ as
\begin{align*}
&\sum_{p=2}^{2N-1}
\frac{(x_{2N}-x_p) \, (x_{p'}-x_{p})}{(x_p-x_1)^2 \, (x_{p'}-x_{1})}
=(x_{2N}-x_1)\times\sum_{\set{c,d} \in \sP_\bullet}
\frac{(x_c-x_{d})^2}{(x_{c}-x_1)^2(x_{d}-x_1)^2} .
\end{align*}
Plugging this in the previous formula, we see that \eqref{eq: terms not containing (i, 2N)}
equals $-1$ times \eqref{eq: terms containing (i, 2N)}.
This shows that $\PartF^{(N)}_{{\Ising}}$ satisfies the
PDEs~\eqref{eq: multiple SLE PDEs}, 
and concludes the proof.
\end{proof}

\subsection{\label{subsub: Discrete Gaussian free field}Example: Symmetric partition function for the Gaussian free field}
The Gaussian free field (GFF) is the probabilistic equivalent of free massless
boson in quantum field theory. It is defined, roughly, as the Gaussian process in
the domain, whose mean is the harmonic interpolation of the boundary values of the field,
and whose covariance is 
Green's function for the Laplacian. For more details, see
\cite{Sheffield-GFF_for_mathematicians,Werner-topics_on_GFF}.
The level lines of the GFF, appropriately defined, are $\SLE$ type
curves with $\kappa=4$, see
\cite{Dubedat-SLE_and_free_field_partition_functions_and_couplings, MS-imaginary_geometry_1,
SS-contour_line_of_continuum_GFF, IK-Hadamards_formula_and_couplings_of_SLE_with_GFF}.
The corresponding lattice model is the discrete Gaussian free field, and
its level line converges to $\SLE_4$ in the scaling limit \cite{SS-contour_lines_of_discrete_GFF}.

The level lines of the discrete GFF 
in fact tend to discontinuity lines
of the continuum GFF, with a specific discontinuity $2\lambda$,
see \cite{SS-contour_lines_of_discrete_GFF} for details.
Very natural boundary conditions for the GFF, which give rise to $N$ such curves, are obtained by
alternating $+\lambda$ and $-\lambda$ on boundary segments between $2N$ marked points.

The (regularized) GFF partition function with these boundary 
conditions in the upper half-plane $\bH$ is
\begin{align*}
\PartF^{(N)}_{{\GFF}}(x_{1},\ldots,x_{2N})=
    \prod_{1\leq k<l\leq 2N}(x_l-x_k)^{\frac{1}{2} (-1)^{l-k}} .
\end{align*}
This formula also appears in~\cite{KW-boundary_partitions_in_trees_and_dimers}
as a scaling limit of double-dimer partition functions.
We show below that these are symmetric partition functions for multiple $\SLE$s
with $\kappa=4$.
\begin{prop}
The functions $\left(\PartF^{(N)}_{{\GFF}}\right)_{N\in\bZnn}$  
satisfy \eqref{eq: multiple SLE PDEs}, \eqref{eq: multiple SLE Mobius covariance}
and \eqref{eq: cascade property for canonical partf}, with $\kappa=4$
and $h = \frac{1}{4}$.
\end{prop}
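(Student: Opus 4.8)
The plan is to follow the template of the proof of Proposition~\ref{prop: Ising symmetric partition functions}, establishing the partial differential equations~\eqref{eq: multiple SLE PDEs}, the M\"obius covariance~\eqref{eq: multiple SLE Mobius covariance}, and the cascade property~\eqref{eq: cascade property for canonical partf} in turn. Throughout I would write $\PartF^{(N)}_{\GFF}(x_1,\ldots,x_{2N})=\prod_{1\leq k<l\leq 2N}(x_l-x_k)^{p_{kl}}$ with exponents $p_{kl}=\tfrac12(-1)^{l-k}$ (extended symmetrically, $p_{kl}=p_{lk}$), and record at the outset two elementary facts that carry the whole argument: the \emph{multiplicativity} $(-1)^{i-j}(-1)^{i-m}=(-1)^{j-m}$, equivalently $p_{ij}p_{im}=\tfrac12 p_{jm}$ together with $p_{ij}^2=\tfrac14$; and the \emph{three-term partial-fraction identity}, that $\tfrac{1}{(a-b)(a-c)}+\tfrac{1}{(b-a)(b-c)}+\tfrac{1}{(c-a)(c-b)}=0$ for distinct $a,b,c$.

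For the M\"obius covariance I would invoke Lemma~\ref{lem: Mobius ratio lemma} exactly as in the Ising case: a conformal $\Mob\colon\bH\to\bH$ satisfies $x_l-x_k=(\Mob(x_l)-\Mob(x_k))\,\Mob'(x_k)^{-1/2}\Mob'(x_l)^{-1/2}$, so raising to the power $p_{kl}$ and multiplying over $k<l$ produces the covariance prefactor $\prod_{i=1}^{2N}\Mob'(x_i)^{e_i}$ with $e_i=-\tfrac12\sum_{j\neq i}p_{ij}$. Since $\{1,\ldots,2N\}$ contains exactly $N$ indices of each parity, $\sum_{j\neq i}p_{ij}=\tfrac12\big((N-1)-N\big)=-\tfrac12$ for every $i$, hence $e_i=\tfrac14=h$, which is~\eqref{eq: multiple SLE Mobius covariance}.

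For the cascade property, note $2h=\tfrac12$, so I would examine $\lim_{x_j,x_{j+1}\to\xi}(x_{j+1}-x_j)^{1/2}\PartF^{(N)}_{\GFF}$. The factor carrying the merged pair is $(x_{j+1}-x_j)^{p_{j,j+1}}=(x_{j+1}-x_j)^{-1/2}$, which the prefactor exactly cancels. For each $k<j$ one has $p_{k,j+1}=-p_{kj}$, so $(x_j-x_k)^{p_{kj}}(x_{j+1}-x_k)^{p_{k,j+1}}\to1$ as $x_j,x_{j+1}\to\xi$, and symmetrically the $x_j$- and $x_{j+1}$-factors cancel in the limit for each $l>j+1$. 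What remains is $\prod_{k<l,\ k,l\notin\{j,j+1\}}(x_l-x_k)^{p_{kl}}$, and since deleting two consecutive indices preserves the parity of $l-k$, after relabeling the remaining $2N-2$ points this is exactly $\PartF^{(N-1)}_{\GFF}$, independently of $\xi$.

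The substantive step, and the one I expect to be the main obstacle, is the PDE system~\eqref{eq: multiple SLE PDEs}. Writing $\sD^{(i)}$ for the $i$-th operator there and $g_i=\partial_{x_i}\log\PartF^{(N)}_{\GFF}=\sum_{j\neq i}\tfrac{p_{ij}}{x_i-x_j}$, the logarithmic-derivative identity gives
\begin{align*}
\frac{\sD^{(i)}\PartF^{(N)}_{\GFF}}{\PartF^{(N)}_{\GFF}}=\;&\frac{\kappa}{2}g_i^2-\frac{\kappa}{2}\sum_{j\neq i}\frac{p_{ij}}{(x_i-x_j)^2}-2h\sum_{j\neq i}\frac{1}{(x_i-x_j)^2}\\
&+\sum_{j\neq i}\frac{2}{x_j-x_i}\sum_{m\neq j}\frac{p_{jm}}{x_j-x_m}.
\end{align*}
The double-pole part is collected as follows: $\tfrac{\kappa}{2}g_i^2$ contributes $\tfrac{\kappa}{2}p_{ij}^2=\tfrac{\kappa}{8}$ and the $m=i$ term of the inner sum contributes $2p_{ij}$, so the coefficient of $\tfrac{1}{(x_i-x_j)^2}$ equals $p_{ij}\big(2-\tfrac{\kappa}{2}\big)+\tfrac{\kappa}{8}-2h$, which vanishes for both signs $p_{ij}=\pm\tfrac12$ precisely when $\kappa=4$ and $h=\tfrac14$ --- this is exactly where the special values enter. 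What survives are the simple-pole cross terms $2\sum_{j,m\neq i,\,j\neq m}\tfrac{p_{ij}p_{im}}{(x_i-x_j)(x_i-x_m)}+2\sum_{j,m\neq i,\,j\neq m}\tfrac{p_{jm}}{(x_j-x_i)(x_j-x_m)}$; substituting $p_{ij}p_{im}=\tfrac12 p_{jm}$ in the first sum and then symmetrizing both sums over unordered pairs $\{j,m\}$ with $j,m\neq i$, each such pair contributes $2p_{jm}$ times $\big(\tfrac{1}{(x_i-x_j)(x_i-x_m)}+\tfrac{1}{(x_j-x_i)(x_j-x_m)}+\tfrac{1}{(x_m-x_i)(x_m-x_j)}\big)=0$ by the three-term identity. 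Hence $\sD^{(i)}\PartF^{(N)}_{\GFF}=0$ for every $i$. Unlike the Ising computation, no induction on $N$ is needed; the only genuine care is the double-pole bookkeeping that forces $\kappa=4$.
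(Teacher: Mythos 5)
Your proposal is correct and follows essentially the same route as the paper's proof: Lemma~\ref{lem: Mobius ratio lemma} plus an exponent count for covariance, direct cancellation of factors for the cascade property, and for the PDEs the separation into double-pole terms (which vanish exactly at $\kappa=4$, $h=\tfrac14$) and simple-pole cross terms that cancel in symmetrized triples. The only cosmetic difference is that you package the cross-term cancellation via the symmetric three-term partial-fraction identity, whereas the paper combines the first-derivative terms pairwise against the second-derivative cross terms --- these are the same computation.
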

\begin{proof}
The M\"obius covariance property \eqref{eq: multiple SLE Mobius covariance} of
$\PartF^{(N)}_{\GFF}$ is shown using Lemma~\ref{lem: Mobius ratio lemma} ---
we calculate
\[ \frac{\PartF^{(N)}_{\GFF}(\Mob(x_1), \ldots,\Mob(x_{2N}))}{\PartF_{\GFF}^{(N)}(x_1, \ldots,x_{2N})}
    = \prod_{1 \leq k < l \leq 2N} \Big(\frac{\Mob(x_l) - \Mob(x_k)}{x_l-x_k}\Big)^{\frac{1}{2}(-1)^{l-k}}
    = \prod_{1 \leq k < l \leq 2N} \big( \Mob'(x_l) \Mob'(x_k) \big)^{\frac{1}{4}(-1)^{l-k}} . \]
For each $j=1,\ldots,2N$, the product
has $2N-1$ factors which contain the variable $x_j$, of which $N$ are raised
to power $-\frac{1}{4}$ and $N-1$ to power $+\frac{1}{4}$, so the correct factor
$\Mob'(x_j)^{-1/4}$ 
remains after cancellations.

For the cascade property \eqref{eq: cascade property for canonical partf}, 
consider the limit $x_j,x_{j+1} \to \xi$ of
$(x_{j+1}-x_j)^{\frac{1}{2}} \times \PartF^{(N)}_{\GFF}$. 
The prefactor $(x_{j+1}-x_j)^{\frac{1}{2}}$ directly cancels one factor in the product, and 
the factors ${|x_{j}-x_i|}^{\frac{1}{2} (-1)^{j-i}}$ and ${|x_{j+1}-x_i|}^{\frac{1}{2} (-1)^{j+1-i}}$
cancel in the limit. We get the desired property
\begin{align*}
\lim_{x_{j},x_{j+1}\to\xi}(x_{j+1}-x_{j})^{\frac{1}{2}}\times\PartF^{(N)}_{{\GFF}}(x_{1},\ldots,x_{2N})=\; & \PartF^{(N-1)}_{{\GFF}}(x_{1},\ldots,x_{j-1},x_{j+2},\ldots,x_{2N}).
\end{align*}

It remains to show that $\PartF^{(N)}_{\GFF}$ satisfies the
partial differential equations \eqref{eq: multiple SLE PDEs},
with $\kappa=4$ and $h=\frac{1}{4}$,
i.e., that $2\pdder{x_{i}}+\sum_{j\neq i}\left(\frac{2}{x_{j}-x_{i}}\pder{x_{j}}-\frac{1}{2(x_{j}-x_{i})^{2}}\right)$
annihilate $\PartF^{(N)}_{\GFF}$. The terms with derivatives read
\begin{align}\label{eq: GFF second derivatives}
2 \frac{\pdder{x_{i}} \PartF^{(N)}_{\GFF}}{\PartF^{(N)}_{\GFF}}
    = \; & \sum_{j \neq i} \frac{\frac{1}{2}-(-1)^{j-i}}{(x_j-x_i)^2} + \sum_{\substack{j \neq i, \\ k \neq i,j}} \frac{\frac{1}{2} (-1)^{k-j}}{(x_j-x_i)(x_k-x_i)} ,\\
\label{eq: GFF first derivatives}
\sum_{j \neq i} \frac{2}{x_j - x_i} \frac{\pder{x_{j}} \PartF^{(N)}_{\GFF}}{\PartF^{(N)}_{\GFF}}
    = \; & \sum_{\substack{j \neq i, \\ k \neq j}} \frac{(-1)^{k-j}}{(x_j-x_i)(x_j-x_k)} .
\end{align}
The first term of \eqref{eq: GFF second derivatives} is canceled by the
case $k=i$ in \eqref{eq: GFF first derivatives} together with the
term $\sum_{j \neq i} \frac{-1}{2(x_{j}-x_{i})^{2}}$ without derivatives.
In the case $k \neq i$ in \eqref{eq: GFF first derivatives}, combine
the terms where $j$ and $k$ are interchanged, as
\begin{align*}
\frac{1}{(x_j-x_i)(x_j-x_k)} + \frac{1}{(x_k-x_i)(x_k-x_j)} = \frac{-1}{(x_j - x_i) (x_k - x_i)} .
\end{align*}
These exactly cancel the second term of \eqref{eq: GFF second derivatives}. This concludes the proof.
%
\end{proof}

\subsection{\label{subsub: Percolation}Example: Symmetric partition function for percolation}
Percolation is a simple model of statistical mechanics, where
different spacial locations are declared open or closed, independently, and one studies
connectivity along open locations, see e.g. \cite{Grimmett-percolation}.
There is a phase transition in the parameter $p$ which determines the probability
for locations to be declared open. At the critical
point $p=p_c$ where the phase transition takes place, a conformally invariant scaling limit
is expected. Conformal invariance of crossing probabilities, 
originally predicted by the
celebrated Cardy's formula \cite{Cardy-conformal_invariance_and_statistical_mechanics},
was established in \cite{Smirnov-critical_percolation}
for critical site percolation on the triangular lattice.

Connectivities can be formulated in terms of an exploration process, which is a
curve bounding a connected component of open locations \cite{Schramm-LERW_and_UST}.
At criticality, this curve should tend to $\SLEk$ with $\kappa=6$. This was also
proven for the triangular lattice site percolation in
\cite{Smirnov-critical_percolation, CN-critical_percolation_exploration_path}.

The partition function for percolation, 
even with boundary conditions, is trivial,
\begin{align*}
\PartF^{(N)}_{{\perco}}(x_{1},\ldots,x_{2N}) = 1 .
\end{align*}
These constant functions are also symmetric partition
functions for multiple $\SLE$s with $\kappa=6$.
\begin{prop}
The functions $\left(\PartF^{(N)}_{{\perco}}\right)_{N\in\bZnn}$ 
satisfy \eqref{eq: multiple SLE PDEs}, \eqref{eq: multiple SLE Mobius covariance}
and \eqref{eq: cascade property for canonical partf}, with $\kappa=6$
and  $h = 0$.
\end{prop}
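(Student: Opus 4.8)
The plan is to reduce all three claims to the single observation that at $\kappa = 6$ the conformal weight degenerates to $h = \frac{6-\kappa}{2\kappa} = 0$; once this is noted, each of \eqref{eq: multiple SLE PDEs}, \eqref{eq: multiple SLE Mobius covariance}, and \eqref{eq: cascade property for canonical partf} becomes a one-line verification for the constant function $\PartF^{(N)}_{\perco} \equiv 1$.

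First I would treat the partial differential equations \eqref{eq: multiple SLE PDEs}. Setting $h = 0$, the second-order operator reduces to $\frac{\kappa}{2}\pdder{x_i} + \sum_{j\neq i}\frac{2}{x_j-x_i}\pder{x_j}$, which consists only of differential operators of positive order; since $\PartF^{(N)}_{\perco}$ is constant, every term annihilates it. Next, for the M\"obius covariance \eqref{eq: multiple SLE Mobius covariance}, the transformation prefactor $\prod_{i=1}^{2N}\Mob'(x_i)^h$ equals $1$ when $h = 0$, so for any conformal $\Mob \colon \bH \to \bH$ the asserted identity reads $1 = 1 \times 1$. Finally, for the cascade property \eqref{eq: cascade property for canonical partf}, the normalizing exponent is $(x_{j+1}-x_j)^{-2h} = 1$, so the limit is simply $\lim_{x_j,x_{j+1}\to\xi} 1 = 1 = \PartF^{(N-1)}_{\perco}(x_1,\ldots,x_{j-1},x_{j+2},\ldots,x_{2N})$.

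There is no genuine obstacle in this argument; the only point worth emphasizing is that $h = 0$ at $\kappa = 6$, which is precisely what makes the $h$-dependent terms in the PDE, the conformal weights in the covariance, and the asymptotic powers in the cascade all disappear at once. This is consistent with the general picture, as the percolation boundary-condition-changing operator sits at $h = h_{1,2} = 0$ in the Kac table.
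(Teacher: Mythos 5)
Your proof is correct and takes essentially the same route as the paper, which simply remarks that all the asserted properties are very easy to check for the constant function; your explicit verification that $h=0$ at $\kappa=6$ kills the potential term in \eqref{eq: multiple SLE PDEs}, trivializes the covariance prefactor in \eqref{eq: multiple SLE Mobius covariance}, and makes the cascade normalization $(x_{j+1}-x_j)^{-2h}$ equal to $1$ is exactly the intended argument.
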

\begin{proof}
All of the asserted properties are very easy to check.
\end{proof}
Despite the fact that the symmetric partition functions are trivial (constant functions),
there are interesting and difficult questions about the partition functions of multiple $\SLE$s
at $\kappa=6$ which are relevant for percolation. For the case $N=2$, the pure partition
functions are given by Cardy's formula, and for higher $N$ they encode more general
and complicated crossing probabilities
\cite{Dubedat-Euler_integrals, FZS-percolation_crossing_probabilities_in_hexagons}.

\bigskip{}

\section{\label{sec: SLE boundary visits}$\SLE$ boundary visits}

In this section, we show how the results of the previous sections can be used to
construct solutions to another problem, related to chordal $\SLE$ boundary visit amplitudes,
considered in \cite{JJK-SLE_boundary_visits}. The boundary visit amplitudes
are functions $\Ampl_\omega$, indexed by the order $\omega$ of visits,
and these functions are constructed by the
``spin chain~--~Coulomb gas correspondence'' from vectors $\Bdryvec_\omega$,
in a manner similar to how the pure partition functions $\PartF_\alpha$ are
constructed from the vectors $v_\alpha$ in Section~\ref{sec: Multiple SLE partition functions}.
The desired properties of the functions $\Ampl_\omega$
(given in Figures~\ref{fig: boundary visits}~--~\ref{fig: first visited point})
follow by requiring certain properties of the
vectors $\Bdryvec_\omega$ --- see
\cite{KP-conformally_covariant_boundary_correlation_functions_with_a_quantum_group,
JJK-SLE_boundary_visits}
for details.
The properties required of the vectors $\Bdryvec_\omega$ are given below
in \eqref{eq: bdry visit cartan eigenvalue}~--~\eqref{eq: bdry visit doublet projection conditions}.
They are known to uniquely specify 
$\Bdryvec_\omega$. The solution
to these properties, however, has not previously been shown to exist in general.
The main result of this section is a constructive proof of existence,
starting from the solution to the multiple $\SLE$ pure partition function problem.

\subsection{\label{subsec: Quantum group solution}Quantum group solution for the boundary visit amplitudes}

For the total number $N'~\in~\bN$ 
of points to be visited by the chordal $\SLE$,
an order of visits is a sequence
\[ \omega = (\omega_1,\ldots,\omega_{N'}) \in \set{-,+}^{N'} \]
of $N'$ $\pm$-symbols, where the 
symbol $\omega_j = -$ or $\omega_j = +$ indicates that
the $j$:th point to be visited is on 
the left or right of the starting point, respectively.
Denote by $L = L(\omega) = \#\set{j \; \big| \; \omega_j = -}$
and $R = R(\omega) = \#\set{j \; \big| \; \omega_j = +}$
the total numbers of visits on the left and right. 
The set of all orders of visits to a fixed number $N'$ of points is denoted by
$\Orders_{N'} = \set{-,+}^{N'}$, and the set of all visit orders with any number
of points by $\Orders = \bigsqcup_{N'\in\bN} \Orders_{N'}$.

\begin{figure}
\includegraphics[scale=.7]{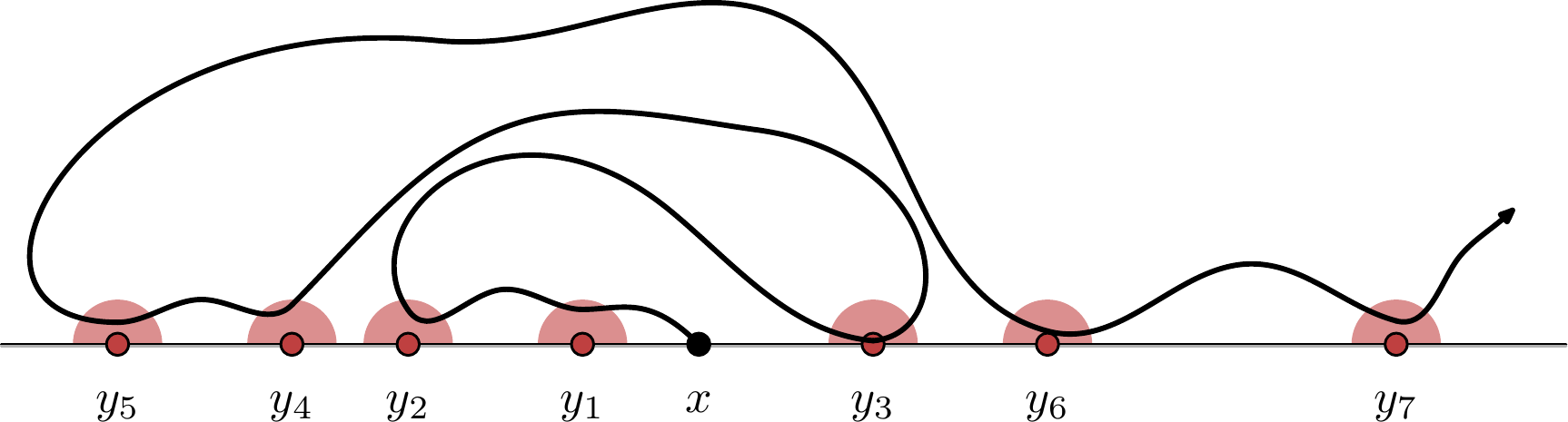}
\caption{\label{fig: boundary visits}
A schematic illustration of the chordal $\SLE$ curve in the upper half-plane,
visiting neighborhoods of given points on the boundary.
The curve starts at $x$, and
the visited points $y_1, \ldots,y_{N'}$ are numbered in the order of the
visits, as depicted in the figure. 
The boundary visit amplitude 
$\Ampl_\omega=\Ampl_\omega(x;y_1,y_2,\ldots,y_{N'})$
is a function of these $N' + 1$ variables.
It satisfies the covariance
\[ \Ampl_\omega(x; y_1 , \ldots , y_{N'})
= \lambda^{N \frac{8-\kappa}{\kappa}} \; \Ampl_\omega(\lambda x + \sigma; \lambda y_1 + \sigma,\ldots,\lambda y_{N'} + \sigma) \]
for all $\lambda>0$, $\sigma \in \bR$, and the
following partial differential equations:
\[ \left[\pdder{x}-\frac{4}{\kappa}\sL_{-2}^{(0)}\right]\Ampl_\omega(x;y_1,\ldots,y_{N'})=0\label{eq: bdry visit PDEs} \]
\[ \left[\pddder{y_j}-\frac{16}{\kappa}\sL_{-2}^{(j)}\pder{y_j}+\frac{8(8-\kappa)}{\kappa^2}\sL_{-3}^{(j)}\right]\Ampl_\omega(x;y_1,\ldots,y_{N'})=0 \] 
for $j=1,\ldots,N'$, 
where 
\[ \sL_{-2}^{(0)}=\sum_{k=1}^{N'}\left(\frac{-1}{y_k-x}\pder{y_k}+\frac{8/\kappa-1}{(y_k-x)^2}\right) \]
\[ \sL_{-2}^{(j)}=\frac{-1}{x-y_j}\pder{x}+\frac{3/\kappa-1/2}{(x-y_j)^2}+\sum_{k\neq j}\left(\frac{-1}{y_k-y_j}\pder{y_k}+\frac{8/\kappa-1}{(y_k-y_j)^2}\right) \]
\[ \sL_{-3}^{(j)}=\frac{-1}{(x-y_j)^2}\pder{x}+\frac{6/\kappa-1}{(x-y_j)^3}+\sum_{k\neq j}\left(\frac{-1}{(y_k-y_j)^2}\pder{y_k}+\frac{16/\kappa-2}{(y_k-y_j)^3}\right) . \]
}
\end{figure}

We seek vectors $\Bdryvec_\omega$ in the tensor product representation of $\Uqsltwo$,
\begin{align}\label{eq: bdry visit tensor product}
\Wd_{3}^{\tens R(\omega)}\tens\Wd_2\tens\Wd_{3}^{\tens L(\omega)},
\end{align}
where $\Wd_2$ and $\Wd_3$ are the two and three dimensional
irreducible representations defined in Section~\ref{subsec: quantum group}.

The requirements for $\Bdryvec_\omega$ are expressed in terms of
projections to subrepresentations.
Recall from Lemma~\ref{lem: tensor product representations of quantum sl2}
that $\Wd_3\tens\Wd_2 \isom \Wd_2 \oplus \Wd_4$ and
$\Wd_2\tens\Wd_3 \isom \Wd_2 \oplus \Wd_4$.
We consider the projections to the two dimensional subrepresentations.
To identify the images with $\Wd_2$, we use
$\Tbas_{0}^{(2;2,3)}$ and 
$\Tbas_{0}^{(2;3,2)}$~from~\eqref{eq: tensor product hwv}
as the highest weight vectors,
and thus define projections composed with this identification as
\begin{align*}
\hat{\pi}^{(2;2,3)} \colon \; & \Wd_3\tens\Wd_2 \rightarrow \Wd_2,
    \qquad \qquad \hat{\pi}^{(2;2,3)}(\Tbas_{l}^{(2;2,3)}) = \Wbas_l^{(2)} ,\\
\hat{\pi}^{(2;3,2)} \colon \; & \Wd_2\tens\Wd_3 \rightarrow \Wd_2,
    \qquad \qquad \hat{\pi}^{(2;3,2)}(\Tbas_{l}^{(2;3,2)}) = \Wbas_l^{(2)} \qquad \text{for } l=0,1 .
\end{align*}
We let these projections act at the natural positions in the tensor
product \eqref{eq: bdry visit tensor product}, and define
\begin{align*}
& \hat{\pi}^{(2)}_+\colon\Wd_{3}^{\tens R}\tens\Wd_2\tens\Wd_{3}^{\tens L}\rightarrow\Wd_{3}^{\tens R-1}\tens\Wd_2\tens\Wd_{3}^{\tens L} ,
    & & \hat{\pi}^{(2)}_+ = \id^{\tens R-1} \tens \hat{\pi}^{(2;2,3)} \tens \id^{\tens L}, \\
& \hat{\pi}^{(2)}_-\colon\Wd_{3}^{\tens R}\tens\Wd_2\tens\Wd_{3}^{\tens L}\rightarrow\Wd_{3}^{\tens R}\tens\Wd_2\tens\Wd_{3}^{\tens L-1} ,
    & & \hat{\pi}^{(2)}_- = \id^{\tens R} \tens \hat{\pi}^{(2;3,2)} \tens \id^{\tens L-1} .
\end{align*}
Also, by Lemma~\ref{lem: tensor product representations of quantum sl2}, we have
$\Wd_3\tens\Wd_3 \isom \Wd_1 \oplus \Wd_3 \oplus \Wd_5$, 
and we consider the projections
to one and three dimensional subrepresentations.
To identify the images with $\Wd_1 \isom \bC$ and $\Wd_3$, we use
$\Tbas_{0}^{(1;3,3)}$ and 
$\Tbas_{0}^{(3;3,3)}$~from~\eqref{eq: tensor product hwv}
as the 
highest weight vectors, and define
\begin{align*}
\hat{\pi}^{(1)} \colon \; & \Wd_3\tens\Wd_3 \rightarrow \bC,
    & & \hat{\pi}^{(1)}(\Tbas_{0}^{(1;3,3)}) = 1 , \\
\hat{\pi}^{(3)} \colon \; & \Wd_3\tens\Wd_3 \rightarrow \Wd_3 ,
    & & \hat{\pi}^{(3)}(\Tbas_{l}^{(3;3,3)}) = \Wbas_l^{(3)} \qquad \text{for } l=0,1,2 .
\end{align*}
We let these projections act at various positions in the tensor
product \eqref{eq: bdry visit tensor product}, and define
\begin{align*}
&\hat{\pi}^{(3)}_{+;m}\colon\Wd_{3}^{\tens R}\tens\Wd_2\tens\Wd_{3}^{\tens L}\rightarrow\Wd_{3}^{\tens R-1}\tens\Wd_2\tens\Wd_{3}^{\tens L} ,
    & & \hat{\pi}^{(3)}_{+;m} = \id^{\tens R-m-1} \tens \hat{\pi}^{(3)} \tens \id^{\tens m-1} \tens \id \tens \id^{\tens L} , \\
&\hat{\pi}^{(3)}_{-;m}\colon\Wd_{3}^{\tens R}\tens\Wd_2\tens\Wd_{3}^{\tens L}\rightarrow\Wd_{3}^{\tens R}\tens\Wd_2\tens\Wd_{3}^{\tens L-1} ,
    & & \hat{\pi}^{(3)}_{-;m} = \id^{\tens R} \tens \id \tens \id^{\tens m-1} \tens \hat{\pi}^{(3)} \tens \id^{\tens L-m-1} , \\
&\hat{\pi}^{(1)}_{+;m}\colon\Wd_{3}^{\tens R}\tens\Wd_2\tens\Wd_{3}^{\tens L}\rightarrow\Wd_{3}^{\tens R-2}\tens\Wd_2\tens\Wd_{3}^{\tens L} ,
    & & \hat{\pi}^{(1)}_{+;m} = \id^{\tens R-m-1} \tens \hat{\pi}^{(3)} \tens \id^{\tens m-1} \tens \id \tens \id^{\tens L} , \\
&\hat{\pi}^{(1)}_{-;m}\colon\Wd_{3}^{\tens R}\tens\Wd_2\tens\Wd_{3}^{\tens L}\rightarrow\Wd_{3}^{\tens R}\tens\Wd_2\tens\Wd_{3}^{\tens L-2} ,
    & & \hat{\pi}^{(1)}_{-;m} = \id^{\tens R} \tens \id \tens \id^{\tens m-1} \tens \hat{\pi}^{(1)} \tens \id^{\tens L-m-1} .
\end{align*}

For any visiting order $\omega \in \Orders_{N'}$ with given $L$ and $R$, the vector
$\Bdryvec_\omega$ is required to be a highest weight vector of
a two dimensional subrepresentation of the tensor product \eqref{eq: bdry visit tensor product},
i.e., to lie in the subspace
\begin{align*}
\HWsp_2\left(\Wd_3^{\tens R} \tens \Wd_2 \tens \Wd_3^{\tens L}\right)
\; = \; \set{v\in\Wd_3^{\tens R} \tens \Wd_2 \tens \Wd_3^{\tens L} \; \Big| \;
    E.v=0 , \; K.v = q\,v } .
\end{align*}
The other conditions depend on the order $\omega$.
We say that the $m$:th and $m+1$:st points on the right 
are successively visited if the $m$:th and $m+1$:st $+\,$-symbols 
in the sequence $\omega = (\omega_1 , \ldots, \omega_{N'})$
are not separated by any $-\,$-symbols. 
More formally, this means that there exists an index $j$ such that 
$\omega_j = \omega_{j+1} = +$ 
and $\#\set{i \in \set{1,\ldots,j} \, \Big| \, \omega_i = +} = m$. 
The visiting order obtained from $\omega$ by collapsing these successive
visits is denoted below by $\hat{\omega} = (\omega_1 , \ldots, \omega_{j-1},\omega_{j+1}, \ldots, \omega_{N'})$.
We define 
successive visits on the left similarly.
The requirements for $\Bdryvec_\omega$ are the following:
\begin{itemize}
\item The vector $\Bdryvec_\omega$ is a highest weight vector of a doublet subrepresentation, 
\begin{align}
&K.\Bdryvec_\omega = q \,\Bdryvec_\omega\label{eq: bdry visit cartan eigenvalue} , \qquad
E.\Bdryvec_\omega = 0 . 
\end{align}
\item
Depending on whether the $m$:th and $m+1$:st points on the right 
are successively visited or not,
we have, for $\epsilon = +$,
\begin{align}
&\hat{\pi}^{(1)}_{\epsilon;m}(\Bdryvec_\omega) = \; 0\label{eq: bdry visit singlet projection conditions}\\
&\hat{\pi}^{(3)}_{\epsilon;m}(\Bdryvec_\omega) = \; \label{eq: bdry visit triplet projection conditions}
\begin{cases} 0 & \mbox{in the case of non-successive visits}\\
    C_3 \times \Bdryvec_{\hat{\omega}} & \mbox{in the case of successive visits,} \end{cases}
\end{align}
where $\hat{\omega}$ 
is the order obtained from $\omega$ by collapsing these successive visits,
and $C_3=\frac{\qnum{2}^2}{q^2+q^{-2}}$ is a non-zero constant. 
For the $m$:th and $m+1$:st points on the left, we 
require~\eqref{eq: bdry visit singlet projection conditions}~--~\eqref{eq: bdry visit triplet projection conditions} 
for $\epsilon = -$. See also Figure~\ref{fig: two consecutive points}.
\item Let $\omega_1 = \pm$ denote the side of the first visit, and
$\mp = -\omega_1$ the opposite side. Then we have
\begin{align}
&\hat{\pi}^{(2)}_{\pm}(\Bdryvec_\omega) = C_2 \times \Bdryvec_{\hat{\omega}}\label{eq: bdry visit doublet projection conditions}\\
&\hat{\pi}^{(2)}_{\mp}(\Bdryvec_\omega) = 0,\nonumber
\end{align}
where $\hat{\omega}=(\omega_2,\omega_3,\ldots,\omega_{N'})$ is the order obtained from $\omega$ by collapsing the first visit,
and $C_2 = \frac{\qnum{2}^2}{\qnum{3}}$ is a non-zero constant. See also Figure~\ref{fig: first visited point}.
\end{itemize}
\begin{figure}
\includegraphics[scale=.7]{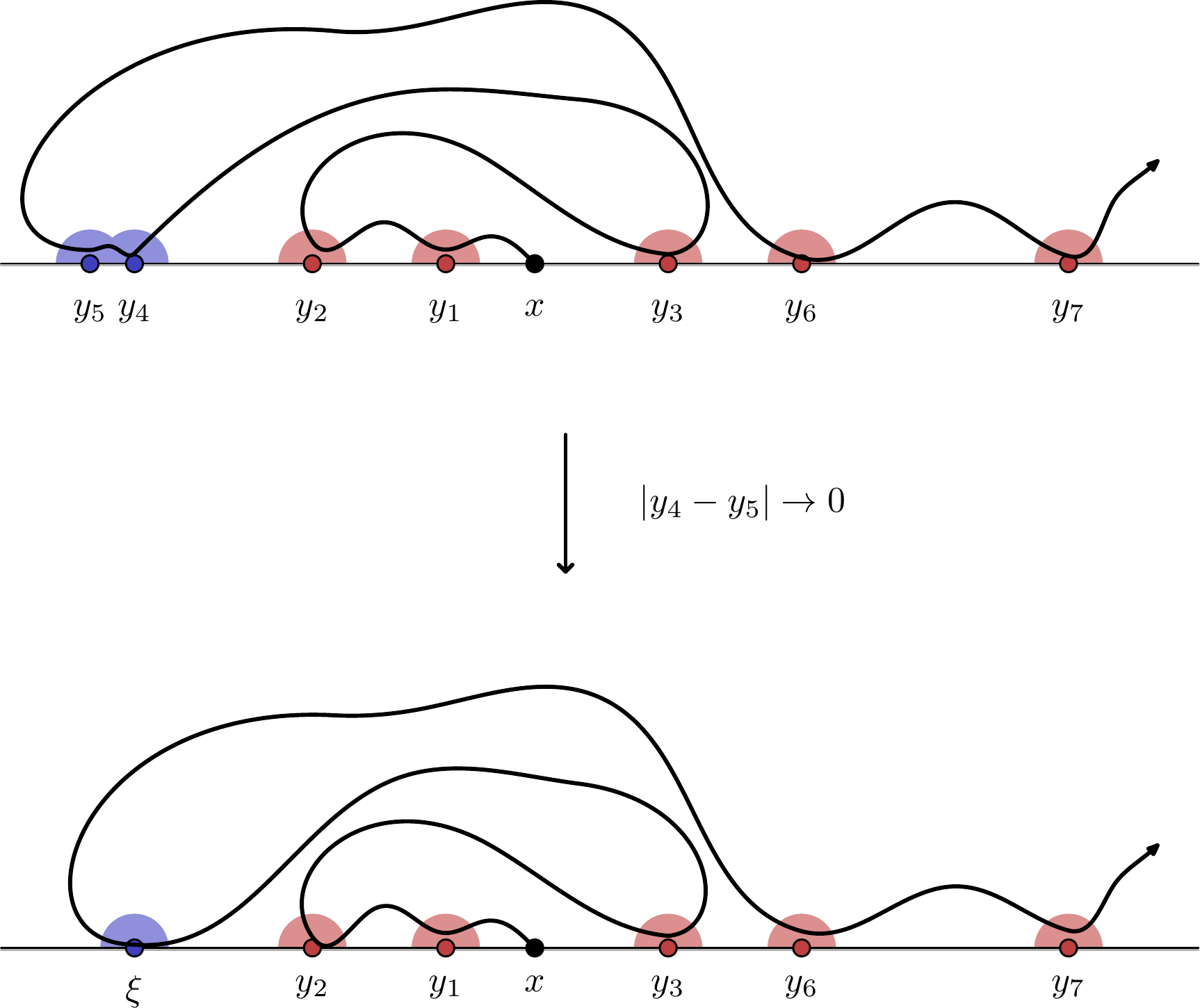}
\caption{\label{fig: two consecutive points}
Collapsing one of two close-by successively visited points
$y_j,y_{j+1}$ on the same side.
This figure illustrates the 
conditions~\eqref{eq: bdry visit singlet projection conditions}~--~\eqref{eq: bdry visit triplet projection conditions} 
for the vector $\Bdryvec_\omega$
in the case of successive visits,
which guarantee the following asymptotics of the 
boundary visit amplitude $\Ampl_\omega$:
$$\hspace{-2cm} \lim_{y_j,y_{j+1}\to\xi}|y_{j+1}-y_j|^{\frac{8-\kappa}{\kappa}}\Ampl_\omega(x;y_1,\ldots,y_{N'})
=C_3'\times\Ampl_{\hat{\omega}}(x,y_1,\ldots,y_{j-1},\xi,y_{j+2},\ldots,y_{N'})$$
for a non-zero constant $C_3'$.
On the other hand, for non-successively visited consecutive points on the same side, 
the corresponding limit is zero,
as follows from~\eqref{eq: bdry visit singlet projection conditions}~--~\eqref{eq: bdry visit triplet projection conditions}.
In~the example depicted in this figure, 
the collapsed visits $y_4,y_5$ are 
the third and fourth visits on the left, i.e., in the notation of 
\eqref{eq: bdry visit singlet projection conditions}~--~\eqref{eq: bdry visit triplet projection conditions}
we have $j=4$ and $m=3$, $\epsilon=-1$.
}
\end{figure}

\begin{figure}
\includegraphics[scale=.7]{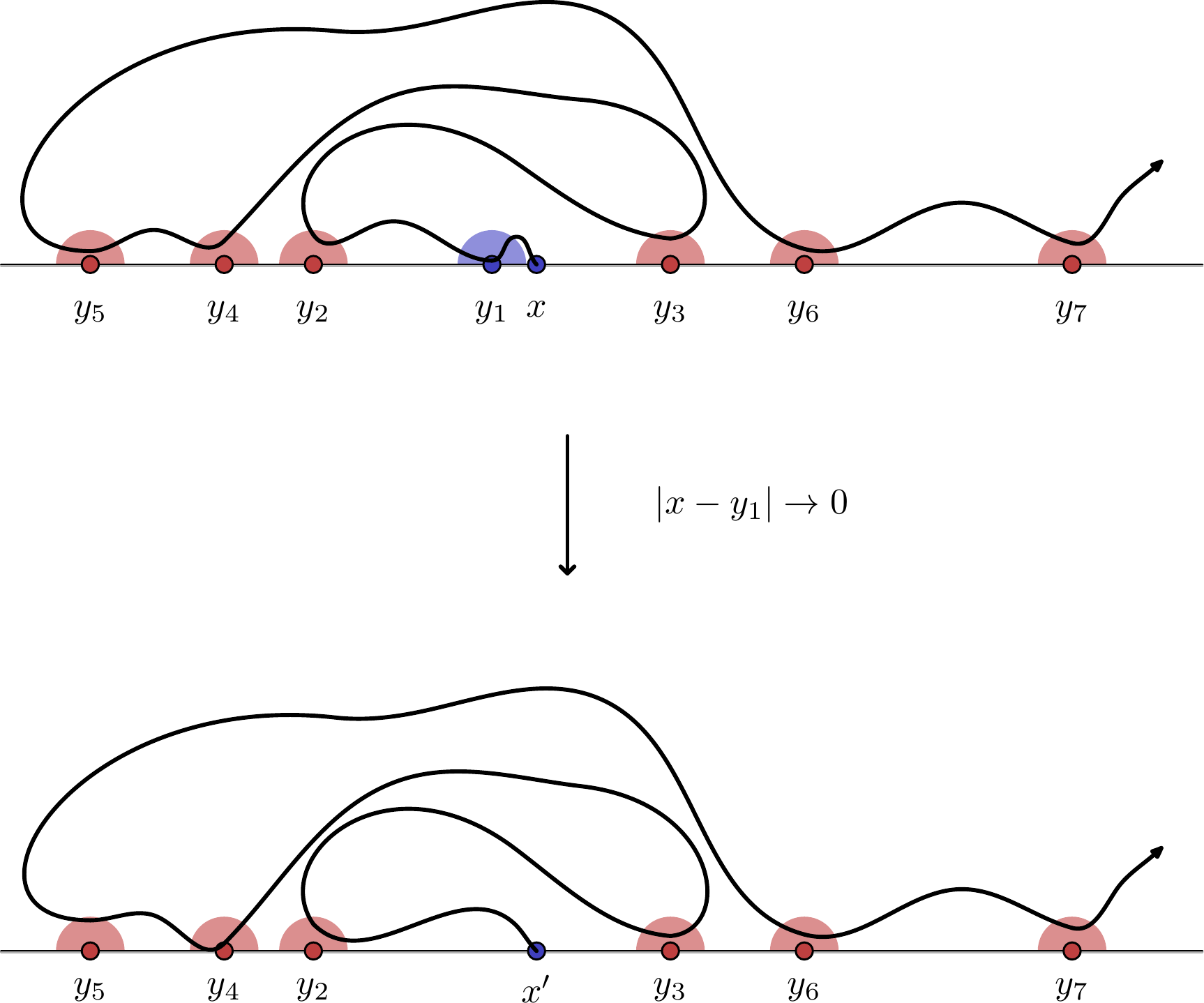}
\caption{\label{fig: first visited point}
Collapsing the first visited point $y_1$.
This figure illustrates the 
conditions~\eqref{eq: bdry visit doublet projection conditions}
for the vector $\Bdryvec_\omega$,
which guarantee the following asymptotics of the 
boundary visit amplitude $\Ampl_\omega$:
for the first visited point $y_1$, we have
$$
\lim_{y_1,x\to x'}|y_1-x|^{\frac{8-\kappa}{\kappa}}
        \Ampl_\omega(x;y_1,\ldots,y_{N'})
    = C_2' \times \Ampl_{\hat{\omega}}(x',y_2,\ldots,y_{N'})
    $$
for a non-zero constant $C_2'$, and
for the first point $y_k$ on the opposite side,
$$\hspace{-2cm}
\lim_{y_k,x\to x'}|y_k-x|^{\frac{8-\kappa}{\kappa}}\Ampl_\omega(x;y_1,\ldots,y_{N'})=0.$$
In the example depicted in this figure, the first visit takes place on the left side, i.e., in the notation
of~\eqref{eq: bdry visit doublet projection conditions}, we have $\omega_1=-$.
Also, the first point on the right is the third one visited, i.e.,
in the limit equation above one should take $k=3$.}
\end{figure}

Our main result in the quantum group setup 
of the boundary visit problem
is the existence of solutions to the 
system~\eqref{eq: bdry visit cartan eigenvalue}~--~\eqref{eq: bdry visit doublet projection conditions} 
and their uniqueness up to normalization.
The proof is based on a number of observations made in Section~\ref{subsub: bdry visit construction},
which are combined in Section~\ref{sub: bdry visit existence}.
\begin{thm}\label{thm: bdry visit vectors}
There exists a unique collection 
$\left(\Bdryvec_{\omega}\right)_{\omega\in\Orders}$ of vectors 
$\Bdryvec_{\omega}\in\Wd_{3}^{\tens R(\omega)}\tens\Wd_2\tens\Wd_{3}^{\tens L(\omega)}$ 
such that the system of 
equations~\eqref{eq: bdry visit cartan eigenvalue}~--~\eqref{eq: bdry visit doublet projection conditions} 
holds for all $\omega\in\Orders$, with the normalization 
$\Bdryvec_{|N'=0}=\Wbas_0\in\Wd_2$.
\end{thm}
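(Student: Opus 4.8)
The plan is to reduce the boundary-visit system \eqref{eq: bdry visit cartan eigenvalue}--\eqref{eq: bdry visit doublet projection conditions} to the already-solved multiple-$\SLE$ system of Theorem~\ref{thm: existence of multiple SLE vectors}, by resolving each $\Wd_3$ into a pair of $\Wd_2$'s. Concretely, I would first set up the embedding $\Embedding$ of $\Wd_3^{\tens R}\tens\Wd_2\tens\Wd_3^{\tens L}$ into $\Wd_2^{\tens(2R+1+2L)}$ which, in each $\Wd_3$-slot, identifies $\Wbas_l^{(3)}$ with the triplet basis vector $\Tbas_l^{(3;2,2)}\in\Wd_2\tens\Wd_2$ from \eqref{eq: triplet basis vectors} and fixes the central $\Wd_2$; this comes with a complementary projection $\Projection$ onto its image. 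Since $2R+1+2L=2N'+1$ is odd, one further copy of $\Wd_2$ --- the far endpoint of the chordal $\SLE$ curve at $\infty$ --- brings us into $\Wd_2^{\tens 2(N'+1)}$, and there is a natural identification of $\HWsp_2(\Wd_3^{\tens R}\tens\Wd_2\tens\Wd_3^{\tens L})$ with a subspace of $\HWsp_1(\Wd_2^{\tens 2(N'+1)})$ obtained by tensoring with this $\Wd_2$ and projecting onto the singlet channel of that pair; its dimension equals $\Catalan_{N'+1}$, matching the span of the $v_\alpha$, $\alpha\in\LP_{N'+1}$.

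The heart of the argument is a dictionary translating the $\Wd_3$-language projections into the $\Wd_2$-language ones. Using Lemma~\ref{lem: projection formulas} together with the Temperley--Lieb relations among the singlet projections $\pi_j$ of Section~\ref{subsec: quantum group}, I would show that, under $\Embedding$, each of $\hat{\pi}^{(1)}_{\pm;m}$, $\hat{\pi}^{(3)}_{\pm;m}$, $\hat{\pi}^{(2)}_{\pm}$ equals --- up to the explicit nonzero constants $C_2=\qnum{2}^2/\qnum{3}$ and $C_3=\qnum{2}^2/(q^2+q^{-2})$ --- a specific composition or combination of the $\pi_j$'s acting in the $2N'+1$ slots. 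In this dictionary the requirement that $\hat{\pi}^{(1)}_{\pm;m}$ and the ``wrong-side'' $\hat{\pi}^{(2)}$ vanish corresponds to selection of the triplet (rather than singlet) fusion channel at each visited point and at the central $\Wd_2$, i.e. to the absence of the corresponding links; so the vector representing $\Bdryvec_\omega$ in $\HWsp_1(\Wd_2^{\tens 2(N'+1)})$ is a linear combination of those $v_\alpha$, $\alpha\in\LP_{N'+1}$, in which no link joins the pair belonging to any visited point, with the remaining freedom fixed by the recursion (for all-same-side orders this is the symmetric combination $\sum_\alpha v_\alpha$ of Theorem~\ref{thm: vectors for canonical partf}, and in general a refinement of it).

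For uniqueness I would prove the analogue of Corollary~\ref{cor: all projections vanish gives zero}: if $v\in\HWsp_2(\Wd_3^{\tens R}\tens\Wd_2\tens\Wd_3^{\tens L})$ has $\hat{\pi}^{(1)}_{\pm;m}(v)=\hat{\pi}^{(3)}_{\pm;m}(v)=\hat{\pi}^{(2)}_{\pm}(v)=0$ for all $m$, then, since these projections jointly detect every non-top fusion channel at every adjacent pair (via the dictionary, or directly by an induction as in Lemma~\ref{lem: all projections vanish}), $v$ must lie in the top irreducible summand $\Wd_{2N'+2}$; but its highest weight vector has $K$-eigenvalue $q^{2N'+1}\neq q$ for $N'\geq1$, forcing $v=0$. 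Then, exactly as in Proposition~\ref{prop: uniqueness}, an induction on $N'$ --- base case $N'=0$, where $\HWsp_2(\Wd_2)=\bC\Wbas_0$ and the normalization $\Bdryvec_{|N'=0}=\Wbas_0$ pins down the vector --- gives uniqueness of the whole family.

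For existence I would construct $\Bdryvec_\omega$ recursively in $N'$, in the spirit of Propositions~\ref{prop: solution for nested} and \ref{prop: construction of solutions}: starting from $\Bdryvec_{|N'=0}=\Wbas_0$, and assuming $\Bdryvec_{\hat\omega}$ given for all shorter orders, define $\Bdryvec_\omega$ by inverting whichever reduction applies (a leading successive pair on one side, or the first visit), i.e. as a constant times $(\id-\text{the relevant excluding projector})$ applied to a lift of $\Bdryvec_{\hat\omega}$, minus correction terms indexed by the other orders mapping to $\hat\omega$ under that collapse. Via the dictionary, each of the defining relations \eqref{eq: bdry visit cartan eigenvalue}--\eqref{eq: bdry visit doublet projection conditions} for $\Bdryvec_\omega$ becomes a statement about $\Embedding[\Bdryvec_{\hat\omega}]$ and the $\Wd_2$-picture projections $\pi_j$, which is discharged using the established identities \eqref{eq: multiple sle projection conditions} for the $v_\alpha$ and the properties of the symmetric vectors of Theorem~\ref{thm: vectors for canonical partf}. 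The main obstacle I anticipate is precisely the dictionary: getting the constants $C_2$ and $C_3$ to come out exactly right when an $\hat{\pi}^{(3)}$ or $\hat{\pi}^{(2)}$ is rewritten through the $\pi_j$, and organizing the bookkeeping so that the ``non-successive visit'' conditions --- which have no local counterpart in the $v_\alpha$ system --- are correctly produced; this is a finite but delicate representation-theoretic computation, after which uniqueness and existence follow from the inductive schemes above and Theorem~\ref{thm: existence of multiple SLE vectors}.
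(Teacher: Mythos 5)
Your framework for this theorem --- resolving each $\Wd_3$ into the triplet channel of $\Wd_2\tens\Wd_2$, appending one more $\Wd_2$ for the far endpoint of the chordal curve so as to land in $\Wd_2^{\tens 2(N'+1)}$, translating the projections $\hat{\pi}^{(2)}_{\pm}$, $\hat{\pi}^{(3)}_{\pm;m}$, $\hat{\pi}^{(1)}_{\pm;m}$ into singlet projections of the $\Wd_2$-picture up to the constants $C_2,C_3$, and deducing uniqueness from the fact that a vector annihilated by all of these must lie in the top summand $\Wd_{2N'+2}$, whose highest weight vector has $K$-eigenvalue $q^{2N'+1}\neq q$ --- is essentially the paper's (the dictionary is Lemmas~\ref{lem: doublet diagram}--\ref{lem: singlet diagram}, where the constants come for free from Schur's lemma because the relevant subrepresentations occur with multiplicity one; the uniqueness lemma is Lemma~\ref{lem: all projections vanish gives zero for bdry visits}). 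The uniqueness half of your argument is sound.

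The gap is in existence. The paper runs no recursion in $N'$ at this stage: it defines an explicit combinatorial map $\omega\mapsto\alpha(\omega)$, \eqref{eq: visiting order to link pattern}, assigning to each visiting order a \emph{single} link pattern in $\LP_{N'+1}$ (open up each visited point into an entry and an exit index, add an auxiliary endpoint $y_\infty$), and sets $\Bdryvec_\omega=R_+\big(\Projection(v_{\alpha(\omega)})\big)$. All of \eqref{eq: bdry visit singlet projection conditions}--\eqref{eq: bdry visit doublet projection conditions} --- including the non-successive-visit vanishing you describe as having ``no local counterpart'' --- then follow from \eqref{eq: multiple sle projection conditions}, because collapsing a visit in $\omega$ is exactly removing a link from $\alpha(\omega)$ and a non-successive pair is exactly an absent link (Remark~\ref{rem: collapsing visits corresponds to removing links}). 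Your substitute, a recursion in $N'$ that ``inverts the reduction'' by lifting $\Bdryvec_{\hat\omega}$, is not analogous to Proposition~\ref{prop: construction of solutions}: that recursion stays inside a fixed $N$, descends a partial order from the explicitly solved rainbow vector of Proposition~\ref{prop: solution for nested}, and never lifts anything to a larger tensor product. Here the lift is genuinely underdetermined (the collapsing maps have large kernels), you provide no starting formula at fixed $N'$, and you do not address why a lifted vector would satisfy the projection conditions at the positions \emph{not} being inverted. Your guess that for all-same-side orders the answer is the symmetric combination of Theorem~\ref{thm: vectors for canonical partf} is also false: already for $\omega=(+)$ the unique solution corresponds to the single pattern $\set{\link{1}{2},\link{3}{4}}$, while $\sum_{\alpha\in\LP_2}v_\alpha$ contains $v_{\set{\link{1}{4},\link{2}{3}}}$ and violates the required vanishing of the singlet projection at the visited-point pair $(2,3)$. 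The missing idea is precisely the map $\omega\mapsto\alpha(\omega)$; with it, existence is a one-step consequence of Theorem~\ref{thm: existence of multiple SLE vectors} and your dictionary, with no new recursion needed.
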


As a corollary, we obtain the existence of solutions to $\SLE$ boundary
visit amplitudes for any number of visited points.
For this one applies a slightly different ``spin chain~--~Coulomb gas correspondence''
$\sF$ from $H_2(\Wd_3^{\tens R} \tens \Wd_2 \tens \Wd_3^{\tens L})$
to functions of $L+R+1$ variables 
--- see \cite{KP-conformally_covariant_boundary_correlation_functions_with_a_quantum_group,
JJK-SLE_boundary_visits} for details.
\begin{thm}\label{thm: bdry visit amplitudes}
The collection $\left(\Ampl_{\omega}\right)_{\omega\in\Orders}$ of functions 
$\Ampl_{\omega}=\sF[\Bdryvec_{\omega}]$ satisfies 
the system of 
partial differential equations, covariance, 
and boundary conditions required in \cite{JJK-SLE_boundary_visits},
that is, the equations given in Figures~\ref{fig: boundary visits},
\ref{fig: two consecutive points}, and \ref{fig: first visited point}. 
\end{thm}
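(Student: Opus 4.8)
The plan is to deduce Theorem~\ref{thm: bdry visit amplitudes} from Theorem~\ref{thm: bdry visit vectors} in exactly the way Theorem~\ref{thm: pure partition functions} was deduced from Theorem~\ref{thm: existence of multiple SLE vectors}: Theorem~\ref{thm: bdry visit vectors} supplies the collection of vectors $\left(\Bdryvec_\omega\right)_{\omega\in\Orders}$, and one feeds these through the appropriate instance of the ``spin chain~--~Coulomb gas correspondence'' of \cite{KP-conformally_covariant_boundary_correlation_functions_with_a_quantum_group}. The relevant instance is the linear map $\sF$ attached to the mixed tensor product $\Wd_3^{\tens R}\tens\Wd_2\tens\Wd_3^{\tens L}$, in which the factor $\Wd_2$ carries the weight $h=h_{1,2}=\frac{6-\kappa}{2\kappa}$ of the $\SLE$ growth point $x$ and each factor $\Wd_3$ carries the weight $h_{1,3}=\frac{8-\kappa}{\kappa}$ of a visited point $y_k$. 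Since Theorem~\ref{thm: bdry visit vectors} is itself proved in Sections~\ref{subsub: bdry visit construction}~--~\ref{sub: bdry visit existence} by reducing the system \eqref{eq: bdry visit cartan eigenvalue}~--~\eqref{eq: bdry visit doublet projection conditions} to the already-solved multiple $\SLE$ problem, the proof of Theorem~\ref{thm: bdry visit amplitudes} is then a pure translation step, parallel to the way Theorem~\ref{thm: SCCG correspondence special case} above is a special case of the general correspondence.

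Concretely, I would first record that $\Ampl_\omega:=\sF[\Bdryvec_\omega]$ is a smooth function of the $L+R+1$ variables $x;y_1,\ldots,y_{N'}$. The (PDE) part of the general correspondence turns the degenerate-at-level-two factor $\Wd_2$ at $x$ into the second order equation $\bigl[\pdder{x}-\frac{4}{\kappa}\sL_{-2}^{(0)}\bigr]\Ampl_\omega=0$ of Figure~\ref{fig: boundary visits}, and the degenerate-at-level-three factor $\Wd_3$ at each $y_j$ into the third order equation $\bigl[\pddder{y_j}-\frac{16}{\kappa}\sL_{-2}^{(j)}\pder{y_j}+\frac{8(8-\kappa)}{\kappa^2}\sL_{-3}^{(j)}\bigr]\Ampl_\omega=0$ of that figure; these are the standard null-state equations for boundary fields of weights $h_{1,2}$ and $h_{1,3}$. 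The (COV) part turns the highest weight conditions $K.\Bdryvec_\omega=q\,\Bdryvec_\omega$ and $E.\Bdryvec_\omega=0$ from \eqref{eq: bdry visit cartan eigenvalue} into covariance under the affine maps $z\mapsto\lambda z+\sigma$ (the relevant subgroup of $\bH$-automorphisms once one marked point is sent to $\infty$), with homogeneity exponent the total weight of the $\Wd_3$ factors, i.e. $N'\frac{8-\kappa}{\kappa}$.

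Next I would match the asymptotic conditions of Figures~\ref{fig: two consecutive points} and \ref{fig: first visited point} with the projection conditions \eqref{eq: bdry visit singlet projection conditions}~--~\eqref{eq: bdry visit doublet projection conditions} via the (ASY) part of the correspondence. The decomposition $\Wd_3\tens\Wd_3\isom\Wd_1\oplus\Wd_3\oplus\Wd_5$ of Lemma~\ref{lem: tensor product representations of quantum sl2} corresponds to the three fusion channels of two weight-$h_{1,3}$ boundary fields, with relative scaling exponents $-2h_{1,3}$, $-h_{1,3}$ and $h_{1,5}-2h_{1,3}$; thus, after multiplying $\Ampl_\omega$ by $|y_{j+1}-y_j|^{\frac{8-\kappa}{\kappa}}$ and letting $y_j,y_{j+1}\to\xi$, the vanishing of $\hat{\pi}^{(1)}_{\epsilon;m}(\Bdryvec_\omega)$ prevents a divergence, the value of $\hat{\pi}^{(3)}_{\epsilon;m}(\Bdryvec_\omega)$ governs the finite limit (nonzero and proportional to $\Ampl_{\hat{\omega}}$ in the successive case, zero otherwise), and the residual $\Wd_5$ channel contributes nothing. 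Likewise $\Wd_3\tens\Wd_2\isom\Wd_2\oplus\Wd_4$ and $\Wd_2\tens\Wd_3\isom\Wd_2\oplus\Wd_4$ encode the collapse of the first visited point into $x$: condition \eqref{eq: bdry visit doublet projection conditions} says the $\Wd_2$ channel survives on the side $\omega_1$ of the first visit (giving a limit proportional to $\Ampl_{\hat{\omega}}$) and vanishes on the opposite side. The constants $C_2',C_3'$ appearing in the figures differ from $C_2,C_3$ only by the Beta-function factors produced by the Coulomb gas integral in the (ASY) formula, which are finite and nonzero for $\kappa\in(0,8)\setminus\bQ$; since the statement only demands amplitudes with the prescribed \emph{nonzero} constants, this is harmless. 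Finally, $\Bdryvec_{|N'=0}=\Wbas_0\in\Wd_2$ maps under $\sF$ to the ground amplitude with no visits.

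The genuine obstacle is not in this translation, which is essentially bookkeeping of conventions and exponents, but in Theorem~\ref{thm: bdry visit vectors}: its proof must exhibit the vectors $\Bdryvec_\omega$ and verify \eqref{eq: bdry visit cartan eigenvalue}~--~\eqref{eq: bdry visit doublet projection conditions}, which I expect is carried out in Section~\ref{subsub: bdry visit construction} by constructing natural $\Uqsltwo$-morphisms $\Wd_2^{\tens 2N}\to\Wd_3^{\tens R}\tens\Wd_2\tens\Wd_3^{\tens L}$ (assembled from Clebsch--Gordan embeddings) that carry the multiple $\SLE$ vectors $v_\alpha$ to the $\Bdryvec_\omega$, and reading off the required projection identities from the already-established relations \eqref{eq: multiple sle projection conditions}. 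Within the present proof the only point requiring genuine care is checking that the general correspondence of \cite{KP-conformally_covariant_boundary_correlation_functions_with_a_quantum_group} is formulated with enough generality to handle simultaneously the second order equation at $x$, the third order equations at the $y_j$, and all three families of collapse asymptotics (including the doubly contracted singlet channel $\Wd_3\tens\Wd_3\to\Wd_1$ sitting next to the $\Wd_2$ factor); granting that, the theorem follows.
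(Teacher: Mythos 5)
Your proposal is correct and follows essentially the same route as the paper: the paper's proof is precisely the invocation of the (PDE), (COV), and (ASY) parts of the general correspondence of \cite[Theorem~4.17]{KP-conformally_covariant_boundary_correlation_functions_with_a_quantum_group} applied to the vectors $\Bdryvec_\omega$ of Theorem~\ref{thm: bdry visit vectors}, with all the substantive work deferred to the construction of those vectors in Sections~\ref{subsub: bdry visit construction}~--~\ref{sub: bdry visit existence}. Your additional commentary on the fusion channels and weight assignments is accurate bookkeeping but not needed beyond what the cited correspondence already packages.
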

\begin{proof}
Since the vectors $\Bdryvec_{\omega}$ satisfy
\eqref{eq: bdry visit cartan eigenvalue},
the functions
$\Ampl_{\omega}=\sF[\Bdryvec_{\omega}]$ satisfy the PDEs and covariance
given in Figure~\ref{fig: boundary visits},
by the (PDE) and (COV) parts of 
\cite[Theorem~4.17]{KP-conformally_covariant_boundary_correlation_functions_with_a_quantum_group}.
The asymptotic conditions of Figures~\ref{fig: two consecutive points} and \ref{fig: first visited point}
follow from the projection conditions~\eqref{eq: bdry visit singlet projection conditions}~--~\eqref{eq: bdry visit doublet projection conditions}
for the vectors $\Bdryvec_{\omega}$, by the (ASY) part of
\cite[Theorem~4.17]{KP-conformally_covariant_boundary_correlation_functions_with_a_quantum_group}.
\end{proof}


\subsection{\label{subsub: Associating visiting orders to link patterns}Link patterns associated to visiting orders}

In Section~\ref{subsub: bdry visit construction}, we construct solutions
$\Bdryvec_{\omega}$ to the
system~\eqref{eq: bdry visit cartan eigenvalue}~--~\eqref{eq: bdry visit doublet projection conditions}.
For a given visiting order $\omega$, the vector $\Bdryvec_{\omega}$ 
will be built from
a vector $v_{\alpha}$ of Theorem~\ref{thm: existence of multiple SLE vectors}, 
with an appropriately chosen link pattern $\alpha = \alpha(\omega)$.
The mapping 
\begin{align}\label{eq: visiting order to link pattern}
\qquad \omega\;\mapsto\;\alpha(\omega),
\qquad \Orders_{N'}\;\rightarrow\;\LP_N,
\end{align}
where $N=N'+1=R(\omega)+L(\omega)+1$,
associates to each visiting order 
$\omega=(\omega_1,\ldots,\omega_{N'})\in\Orders_{N'}$ 
a link pattern $\alpha(\omega)\in\LP_N$ as illustrated and explained
in Figure~\ref{fig: boundary visits to link patterns},
and defined in detail below.

Let $\omega \in \Orders_{N'}$
and let $N = N'+1$ and $L=L(\omega)$. 
Then $\alpha = \alpha(\omega)$ contains the $N$ links
$\link{a_1}{b_1}, \ldots, \link{a_N}{b_N}$, whose indices are defined recursively as follows.
The index $a_1 = 2 L + 1$ corresponds to the starting point $x$, 
and $b_1 = a_1 + \omega_1$ corresponds to entering the first visited point $y_1$.
For $j=2,\ldots,N'+1$, the index $a_j = b_{j-1} + \omega_{j-1}$
corresponds to exiting the point $y_{j-1}$. 
Also, for $j=2,\ldots,N'$, the index $b_j$ corresponds to entering
the point $y_j$: 
if $\omega_j = +$ then $b_j = \max \set{a_1, a_2, \ldots, a_{j-1}} + 1$,
and if $\omega_j = -$ then $b_j = \min \set{a_1, a_2, \ldots, a_{j-1}} - 1$.
Finally, we also set $b_{N}=2N$, which corresponds to entering 
an auxiliary point $y_\infty$ --- see 
Figure~\ref{fig: boundary visits to link patterns}.
It is straightforward to check that this defines a link pattern $\alpha=\alpha(\omega)$.

\begin{rem}\label{rem: collapsing visits corresponds to removing links}
\emph{Recall that the projection 
conditions~\eqref{eq: bdry visit singlet projection conditions}~--~\eqref{eq: bdry visit doublet projection conditions}
are written in terms of a visiting 
order $\hat{\omega}$ obtained from $\omega$ by
collapsing two successive visits into one, or collapsing the first visit with the starting point.
From the definition of the map $\alpha \mapsto \alpha(\omega)$ of \eqref{eq: visiting order to link pattern},
it is easy to see that $\hat{\alpha} = \alpha(\hat{\omega})$ is obtained
from $\alpha=\alpha(\omega)$ by removing one link. More precisely,
in the notation used in the above definition, the two cases are the following.
For the case \eqref{eq: bdry visit doublet projection conditions} of
collapsing the first visit, we have
\begin{align*}
\hat{\omega} & \; = (\omega_2,\omega_3,\ldots,\omega_{N'}) \qquad \qquad \qquad \text{ and} & 
\alpha(\hat{\omega}) & \; = \alpha(\omega) \removeLink \link{a_1}{b_1} ,
\end{align*}
see also Figure~\ref{fig: removing first visited point}.
For the case \eqref{eq: bdry visit singlet projection conditions} of
collapsing the $m$:th and $m+1$:st visit on the right, 
we have
\begin{align*}
\hat{\omega} & \; = (\omega_1,\ldots,\omega_{j-1},\omega_{j+1},\ldots,\omega_{N'}) \qquad \qquad \text{ and} & 
\alpha(\hat{\omega}) & \; = \alpha(\omega) \removeLink \link{a_{j+1}}{b_{j+1}} ,
\end{align*}
where the index $j$ is such that 
$\omega_j = \omega_{j+1} = +$ 
and $\#\set{i \in \set{1,\ldots,j} \, \Big| \, \omega_i = +} = m$.
For the case of $m$:th and $m+1$:st visit on the left instead, the choice of $j$
is modified accordingly.
See also Figure~\ref{fig: removing one of consecutively visited points}.}
\end{rem}

\begin{figure}
\includegraphics[width=.8\textwidth]{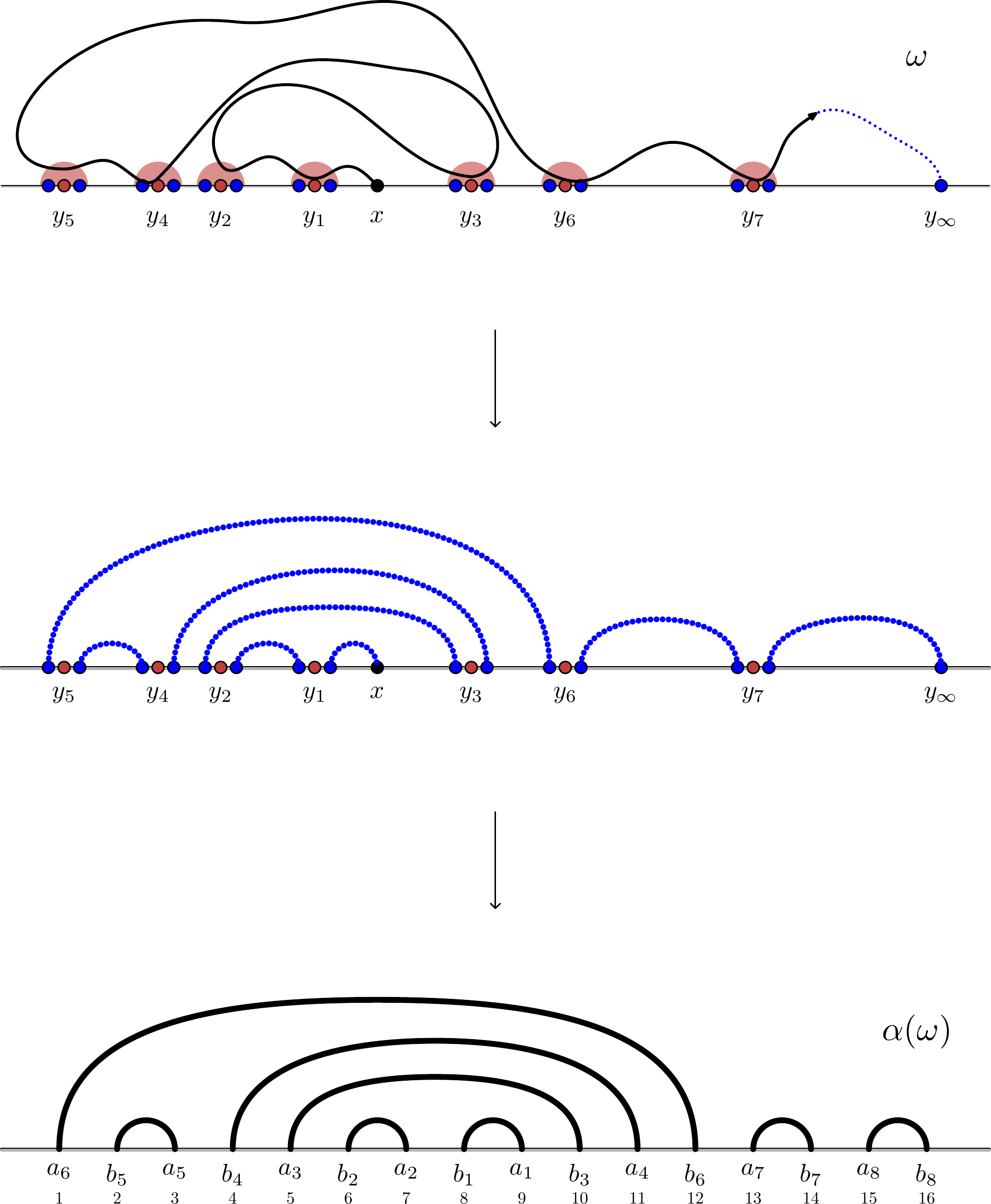}
\caption{\label{fig: boundary visits to link patterns}
The mapping $\omega \mapsto \alpha(\omega)$
of \eqref{eq: visiting order to link pattern} can be described as follows.
Think of the visiting orders as planar connectivities of
$N'+1$ points, where small neighborhoods of the visited points $y_j$ have two lines 
attached to them, the starting point $x$ has one line attached, 
and every line is connected to another line so that
no lines of the same point are connected (no loops).
%
We open up the connectivity corresponding to the
visiting order $\omega$: replace the points $y_j$
having two lines attached by two points, each having just 
one line attached. After this procedure, there will be one leftover line,
attached to the last visited point (i.e. one of the two points by 
which we replaced the last visited point). 
We add a point $y_\infty$ to the right side of all other points, and
connect the leftover line to this point.
Finally, we label the endpoints of the links from left to right
appropriately, to correspond with the labels of the
endpoints of a link pattern $\alpha(\omega)$ in $\LP_N$. 
}
\end{figure}

\begin{figure}
\includegraphics[width=.8\textwidth]{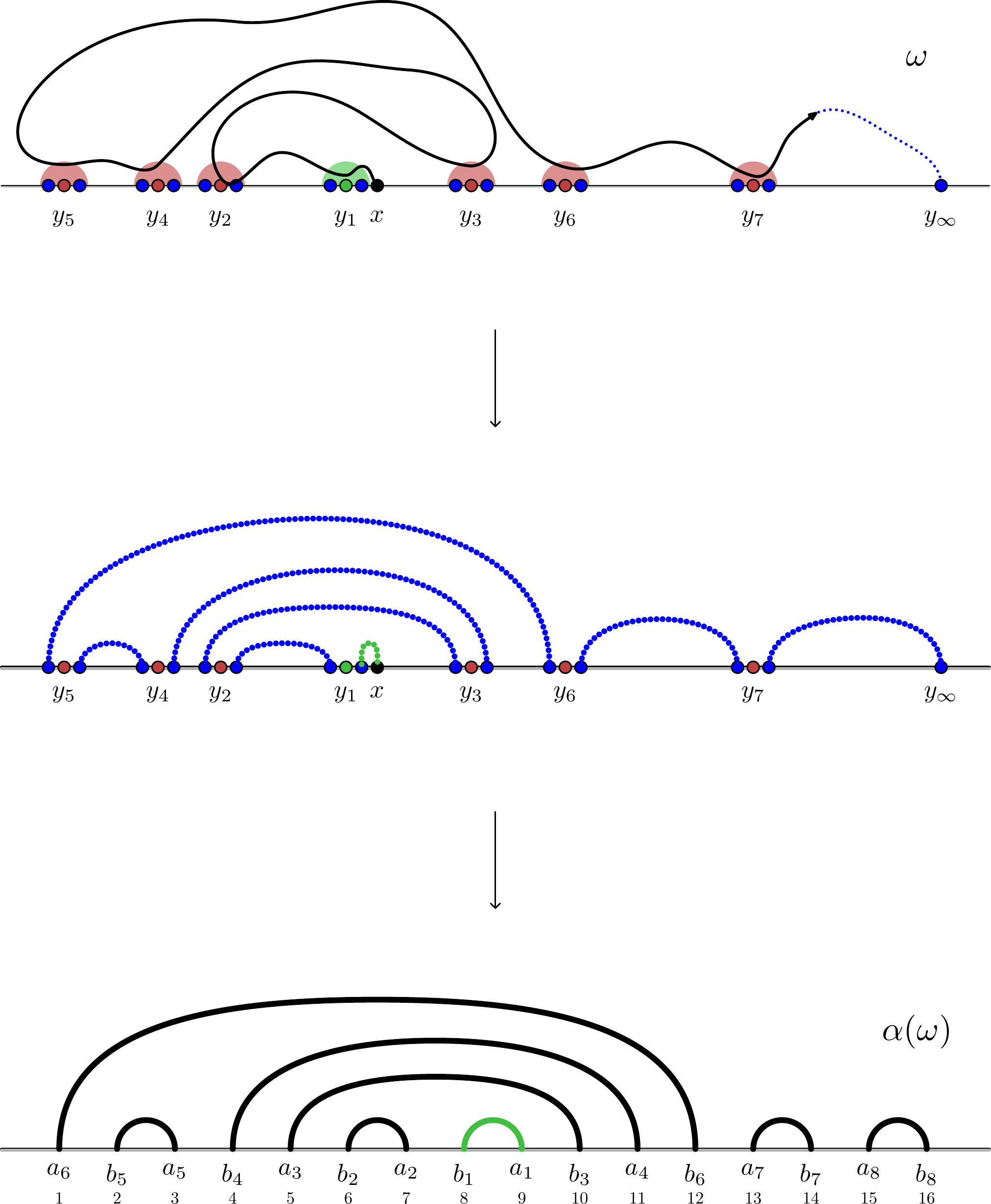}
\caption{\label{fig: removing first visited point} 
Collapsing the first visited point:
the effect on the link pattern $\alpha(\omega)$. 
In the example depicted in this figure, we have 
$\omega_1=-$, and the green link 
$\link{b_1}{a_1}=\link{8}{9}$ is to be removed.
}
\end{figure}

\begin{figure}
\includegraphics[width=.8\textwidth]{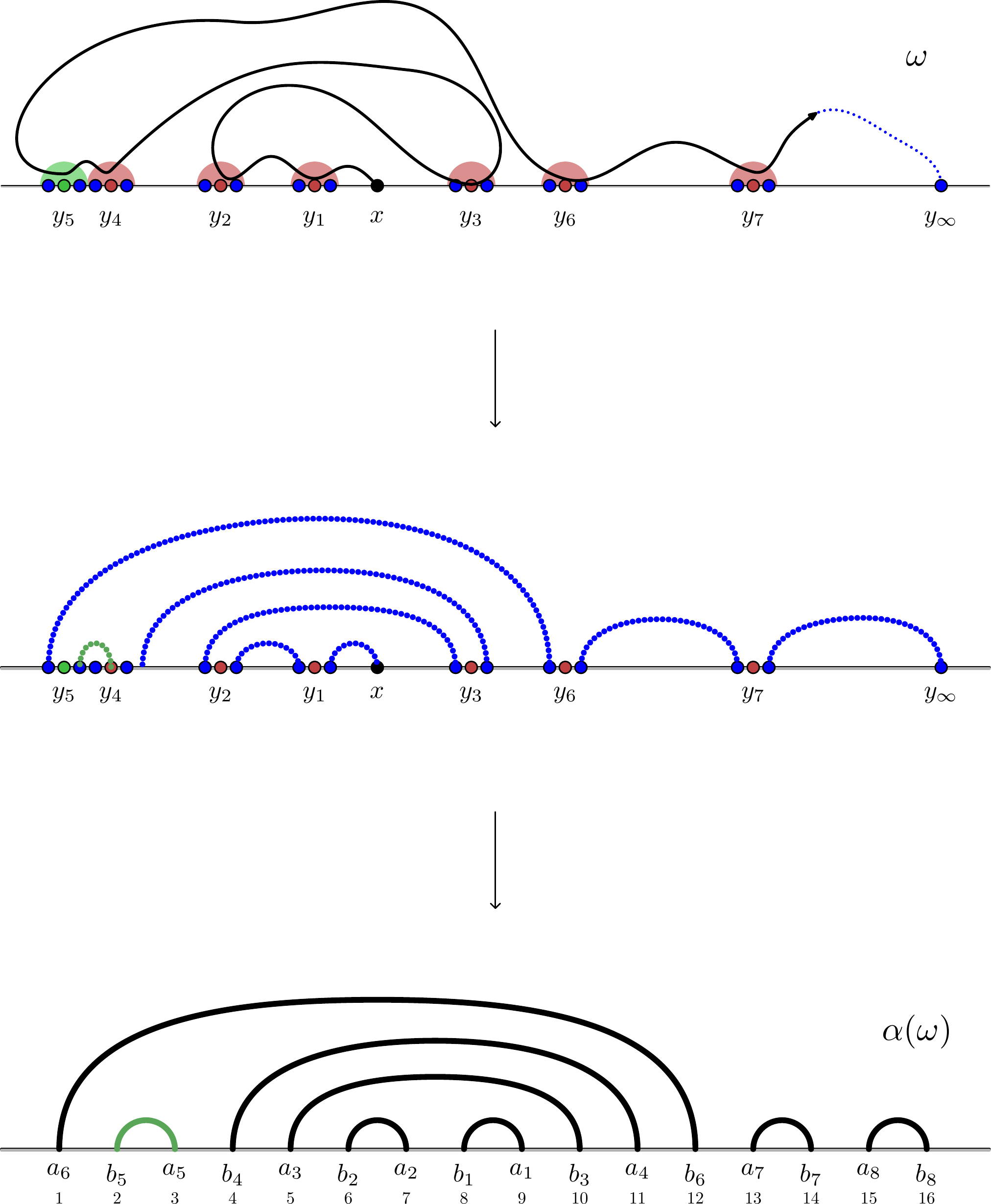}
\caption{\label{fig: removing one of consecutively visited points}
Collapsing two successive visits on the same side:
the effect on the link pattern $\alpha(\omega)$. 
In the example depicted in this figure, 
the collapsed visits $y_4,y_5$ are 
the third and fourth visits on the left,
and the green link 
$\link{b_5}{a_5}=\link{2}{3}$ is to be removed.
}
\end{figure}

\subsection{\label{subsub: bdry visit construction}Construction of solutions}


Recall from Lemma~\ref{lem: tensor product representations of quantum sl2}
that $\Wd_2\tens\Wd_2 \isom \Wd_1 \oplus \Wd_3$. In this section, we use
projections to these two irreducible subrepresentations. We identify the subrepresentations
with $\Wd_3$ and $\Wd_1 \isom \bC$ by using the highest weight vectors
$\Tbas_{0}^{(3;2,2)}$ and 
$\Tbas_{0}^{(1;2,2)}$~from~\eqref{eq: tensor product hwv}.
We thus define projections composed with the identifications as
\begin{align*}
\hat{\pi}^{(3)} \colon \; & \Wd_2\tens\Wd_2 \rightarrow \Wd_3,
    & \hat{\pi}^{(3)}(\Tbas_{l}^{(3;2,2)}) & \; = \Wbas_l^{(3)} \qquad \text{for } l=0,1,2 , 
\qquad \\
\hat{\pi}^{(1)} \colon \; & \Wd_2\tens\Wd_2 \rightarrow \bC,
    & \hat{\pi}^{(1)}(\Tbas_{0}^{(1;2,2)}) & \; = 1 .
\end{align*}
We trust that no confusion arises, although the notation $\hat{\pi}^{(3)}$
coincides with that introduced in Section~\ref{subsec: Quantum group solution},
since the two projections are defined on different spaces. We also denote by
\begin{align*}
\iota^{(3)}\colon\Wd_3\hookrightarrow\Wd_2\tens\Wd_2
\end{align*}
the embedding of the three dimensional subrepresentation into the tensor
product $\Wd_2\tens\Wd_2$ such that 
$\iota^{(3)}(\Wbas_l^{(3)})=\Tbas_{l}^{(3;2,2)}$, and thus,
$\hat{\pi}^{(3)}\circ\iota^{(3)}=\id$.

Given a visiting order $\omega \in \Orders_{N'}$ with $L(\omega)=L$ and $R(\omega)=R$,
the vector $\Bdryvec_{\omega}$
satisfying~\eqref{eq: bdry visit cartan eigenvalue}~--~\eqref{eq: bdry visit doublet projection conditions}
will be constructed from the vector 
$v_{\alpha(\omega)}\in\HWsp_1(\Wd_2^{\tens 2N})$. 
The construction, $\Bdryvec_{\omega}=R_+ \big( \Projection(v_{\alpha(\omega)}) \big)$,
is summarized in the following diagram:
\begin{align*}
\xymatrix{
  \HWsp_1\left(\Wd_2^{\tens 2N}\right) \; \; \ar@<1ex>[r]^{\hspace{-1.2cm}\Projection}
  \ar@{<-^{)}}@<-2ex>[r]_{\hspace{-1.2cm}\Embedding} & \; \;  \HWsp_1\left(\Wd_2 \tens \Wd_3^{\tens R} \tens \Wd_2 \tens \Wd_3^{\tens L}\right) \ar[r]^{\hspace{.5cm}R_+} & \; \;  \HWsp_2\left(\Wd_3^{\tens R} \tens \Wd_2 \tens \Wd_3^{\tens L}\right) \\
  v_{\alpha(\omega)}  \; \; \; \; \; \; 
  \ar@{|->}@<1ex>[r]^{\hspace{-1.2cm}\Projection} \ar@{<-^{)}}@<-2ex>[r]_{\hspace{-1.2cm}\Embedding} & \; \; \; \; \; \; \Bdryvec_{\omega}^{\infty} \; \; \; \; \; \;\ar@{|->}[r]^{\hspace{.5cm}R_+} & \; \;  \; \; \; \;\Bdryvec_{\omega},
}
\end{align*}
where the notations 
are defined below.
The projection 
$\Projection$ and embedding $\Embedding$ are 
defined by
\begin{align*}
\Projection = & \; \id\tens(\hat{\pi}^{(3)})^{\tens R}\tens\id\tens(\hat{\pi}^{(3)})^{\tens L} , &
\Embedding = & \;\id\tens(\iota^{(3)})^{\tens R}\tens\id\tens(\iota^{(3)})^{\tens L} , \\
\Projection \colon & \; \Wd_2^{\tens 2N} \longrightarrow \Wd_2 \tens \Wd_3^{\tens R} \tens \Wd_2 \tens \Wd_3^{\tens L} , &
\Embedding \colon & \; \Wd_2 \tens \Wd_3^{\tens R} \tens \Wd_2 \tens \Wd_3^{\tens L} \longrightarrow \Wd_2^{\tens 2N} ,
\end{align*}
and we denote $\Projection(v_{\alpha(\omega)})=\Bdryvec_{\omega}^{\infty}$.
The notation
\begin{align*}
\HWsp_1\left(\Wd_2 \tens \Wd_3^{\tens R} \tens \Wd_2 \tens \Wd_3^{\tens L}\right) \; = \;
\set{ v \in \Wd_2 \tens \Wd_3^{\tens R} \tens \Wd_2 \tens \Wd_3^{\tens L} \; \big| \; E.v = 0, \; K.v = v}
\end{align*}
is used for the trivial subrepresentation.
Finally, it can be shown, see \cite[Lemma~5.3]{KP-conformally_covariant_boundary_correlation_functions_with_a_quantum_group},
that for any vector $v \in \HWsp_1\left(\Wd_2 \tens \Wd_3^{\tens R} \tens \Wd_2 \tens \Wd_3^{\tens L}\right)$
there exists a unique vector $\tau_0^+ \in \HWsp_2(\Wd_3^{\tens R} \tens \Wd_2 \tens \Wd_3^{\tens L})$ such that
\begin{align*}
v= -q\;\Wbas_{0}^{(2)}\tens F.\tau_0^+ +\Wbas_{1}^{(2)}\tens\tau_0^+ .
\end{align*}
This defines a linear isomorphism $R_+ \colon v \mapsto \tau_0^+$
by \cite[Lemma~5.3]{KP-conformally_covariant_boundary_correlation_functions_with_a_quantum_group}.

\begin{rem}\label{rem: doublet conditions}
\emph{By construction, the vector $\Bdryvec_{\omega} = R_+ \big( \Projection(v_{\alpha(\omega)}) \big)$ 
satisfies the  conditions~\eqref{eq: bdry visit cartan eigenvalue}.
It remains to check that  also the
conditions~\eqref{eq: bdry visit singlet projection conditions}~--~\eqref{eq: bdry visit doublet projection conditions} are 
satisfied.}
\end{rem}

\begin{lem}\label{lem: pure geometries lie in projected space}
The image of $\HWsp_1\left(\Wd_2 \tens \Wd_3^{\tens R} \tens \Wd_2 \tens \Wd_3^{\tens L}\right)$
under the embedding $\Embedding$ is the space
\begin{align}\label{eq: reasonable subspace}
\set{ v \in \HWsp_1(\Wd_2^{\tens 2N}) \; \Big| \; 
\hat{\pi}^{(1)}_{2L+1-2m}(v) = 0 \text{ for } m = 1,\ldots,L
\; \text{ and } \; \hat{\pi}^{(1)}_{2L+2m'}(v) = 0 \text{ for } m' = 1,\ldots,R } .
\end{align}
The projection $\Projection$ restricted to this space is a bijection onto
$\HWsp_1\left(\Wd_2 \tens \Wd_3^{\tens R} \tens \Wd_2 \tens \Wd_3^{\tens L}\right)$,
and its inverse is $\Embedding$.
The vector $v_{\alpha(\omega)}$ lies in~
\eqref{eq: reasonable subspace}, and
in particular, 
$\Embedding\big(\Projection(v_{\alpha(\omega)})\big)=
\Embedding(\Bdryvec_{\omega}^{\infty})=v_{\alpha(\omega)}$.
\end{lem}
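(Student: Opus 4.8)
The plan is to establish the three assertions of the lemma in order, using that $\Embedding$ and $\Projection$ are morphisms of $\Uqsltwo$-modules --- they are tensor products of identities with the module inclusion $\iota^{(3)}$ (for $\Embedding$) and with the module projection $\hat\pi^{(3)}$ (for $\Projection$) --- and that $\hat\pi^{(3)}\circ\iota^{(3)}=\id$ gives $\Projection\circ\Embedding=\id$ on $\Wd_2\tens\Wd_3^{\tens R}\tens\Wd_2\tens\Wd_3^{\tens L}$.

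First I would compute the image of $\Embedding$. Inside $\Wd_2\tens\Wd_2=\bC\Sbas\oplus\iota^{(3)}(\Wd_3)$ the map $\hat\pi^{(1)}$ is nonzero on $\Sbas$ and vanishes on $\iota^{(3)}(\Wd_3)$, so $\Kern(\hat\pi^{(1)})=\iota^{(3)}(\Wd_3)$. Grouping $\Wd_2^{\tens 2N}=\Wd_2\tens(\Wd_2\tens\Wd_2)^{\tens R}\tens\Wd_2\tens(\Wd_2\tens\Wd_2)^{\tens L}$, the $R+L$ tensorands $\Wd_2\tens\Wd_2$ occupy the pairwise disjoint slot pairs $\set{2L+2,2L+3},\ldots,\set{2L+2R,2L+2R+1}$ and $\set{1,2},\ldots,\set{2L-1,2L}$ (counted from the right), and the associated singlet projections are exactly the $\hat\pi^{(1)}_j$ with $j\in\setcond{2L+2m'}{m'=1,\ldots,R}\cup\setcond{2L+1-2m}{m=1,\ldots,L}$ figuring in \eqref{eq: reasonable subspace}. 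Because these projections act on disjoint slots they commute, and a vector is killed by all of them exactly when each of the $R+L$ tensorands lies in $\iota^{(3)}(\Wd_3)$, that is, exactly when it lies in $\Embedding\big(\Wd_2\tens\Wd_3^{\tens R}\tens\Wd_2\tens\Wd_3^{\tens L}\big)$. Since $\Embedding$ is an injective module map, it carries $\HWsp_1(\Wd_2\tens\Wd_3^{\tens R}\tens\Wd_2\tens\Wd_3^{\tens L})$ bijectively onto the intersection of its image with $\HWsp_1(\Wd_2^{\tens 2N})$, and this intersection is precisely \eqref{eq: reasonable subspace}; this proves the first assertion. The second assertion follows since $\Projection$ is a module map, hence maps $\HWsp_1(\Wd_2^{\tens 2N})$ into $\HWsp_1(\Wd_2\tens\Wd_3^{\tens R}\tens\Wd_2\tens\Wd_3^{\tens L})$, and $\Projection\circ\Embedding=\id$ forces its restriction to \eqref{eq: reasonable subspace} to be the inverse of $\Embedding$.

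It remains to check $v_{\alpha(\omega)}\in$~\eqref{eq: reasonable subspace}. Since $\Tbas^{(1;2,2)}_0=\Sbas$, the projection $\hat\pi^{(1)}$ used here agrees with the singlet projection $\hat\pi$ of Section~\ref{subsec: basis of the trivial subrepresentation}, so $\hat\pi^{(1)}_j=\hat\pi_j$, and the conditions~\eqref{eq: multiple sle projection conditions} give $\hat\pi^{(1)}_j(v_{\alpha(\omega)})=0$ whenever $\link{j}{j+1}\notin\alpha(\omega)$. Hence it suffices to show, from the recursive definition of $\alpha(\omega)$ in Section~\ref{subsub: Associating visiting orders to link patterns}, that $\link{j}{j+1}\notin\alpha(\omega)$ for every $j$ occurring in \eqref{eq: reasonable subspace}. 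This is a bookkeeping step: right visits only produce indices exceeding $a_1=2L+1$ and left visits only produce indices below it, so the $2L$ left-visit indices fill $\set{1,\ldots,2L}$ and the $2R$ right-visit indices fill $\set{2L+2,\ldots,2L+2R+1}$; moreover each visit $y_k$ contributes the two consecutive indices $\set{b_k,a_{k+1}}$, so these $L$, respectively $R$, disjoint consecutive pairs must be $\set{1,2},\ldots,\set{2L-1,2L}$ and $\set{2L+2,2L+3},\ldots,\set{2L+2R,2L+2R+1}$, whose smaller endpoints are exactly the indices $j$ of \eqref{eq: reasonable subspace}. For such a pair $\set{j,j+1}=\set{b_k,a_{k+1}}$, the index $b_k$ is an endpoint of the link $k$ of $\alpha(\omega)$ and $a_{k+1}$ an endpoint of the link $k+1$, and these are distinct links; hence $\link{j}{j+1}\notin\alpha(\omega)$. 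Therefore $v_{\alpha(\omega)}\in$~\eqref{eq: reasonable subspace}, and by the second assertion $\Embedding\big(\Projection(v_{\alpha(\omega)})\big)=\Embedding(\Bdryvec_\omega^\infty)=v_{\alpha(\omega)}$.

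I expect the main obstacle to be precisely the combinatorial bookkeeping above: matching, through the recursion defining $\alpha(\omega)$, the slot pairs that become $\Wd_3$-tensorands under $\Embedding$ with the pairs of entering and exiting indices of the visited points, and confirming that such a pair is never itself a link of $\alpha(\omega)$. Everything else is a formal consequence of $\Embedding$ and $\Projection$ being mutually inverse module morphisms between the relevant highest-weight spaces.
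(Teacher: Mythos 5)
Your proof is correct and follows essentially the same route as the paper's: the first two assertions via the decomposition $\Wd_2\tens\Wd_2\isom\Wd_1\oplus\Wd_3$ (identifying the image of $\Embedding$ as the common kernel of the singlet projections on the relevant slot pairs), and the last assertion by observing that the pairs $\set{j,j+1}$ with $j=2L+1-2m$ or $j=2L+2m'$ are exactly the entering/exiting index pairs of the visited points, hence never links of $\alpha(\omega)$, so \eqref{eq: multiple sle projection conditions} gives the vanishing. The paper states these steps more tersely; your write-up merely supplies the bookkeeping it leaves implicit.
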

\begin{proof}
\label{rem: pure geometries lie in projected space}
The 
statements about the image of $\Embedding$ and restriction of $\Projection$
are clear from the decomposition $\Wd_2 \tens \Wd_2 \isom \Wd_1 \oplus \Wd_3$.
By definition of the mapping $\omega\mapsto\alpha(\omega)$, 
the link pattern $\alpha(\omega)$ does not contain links 
of type $\link{j}{j+1}$ where $j$ and $j+1$ correspond to the 
same visited point, i.e., $j=2L+1-2m$ or $j=2L+2m'$.
By the projection 
conditions~\eqref{eq: multiple sle projection conditions} for 
$v_{\alpha(\omega)}$, we have $\hat{\pi}_j(v_{\alpha(\omega)})=0$
for all such $j$.
\end{proof}

We next show how the projections appearing in 
the conditions~\eqref{eq: bdry visit singlet projection conditions}~--~\eqref{eq: bdry visit doublet projection conditions}, 
acting on the vector $\Bdryvec_\omega$, can be expressed in terms of 
singlet projections $\hat{\pi}^{(1)}_j$ acting on the vector 
$v_{\alpha(\omega)}$. This will be done in three separate cases, 
corresponding to the three projection conditions 
\eqref{eq: bdry visit singlet projection conditions}~--~\eqref{eq: bdry visit doublet projection conditions}, 
in the form of commutative diagrams in 
Lemmas~\ref{lem: doublet diagram}, \ref{lem: triplet diagram} and 
\ref{lem: singlet diagram}.
Using these commutative diagrams, we then deduce the desired properties 
of the vector $\Bdryvec_\omega$ from the projection properties
\eqref{eq: multiple sle projection conditions} of $v_{\alpha(\omega)}$,
in Corollaries~\ref{cor: doublet diagram}, \ref{cor: triplet diagram}
and \ref{cor: singlet diagram}.

Let $C_2 = \frac{\qnum{2}^2}{\qnum{3}}$, 
and note that $C_2 \neq 0$, since $q$ is not a root of unity.
\begin{lem}\label{lem: doublet diagram}
For any $v \in \Wd_3 \tens \Wd_2$, we have $\hat{\pi}^{(2)}(v) = C_2 \times \big( (\id \tens \hat{\pi}^{(1)}) \circ (\iota^{(3)}\tens\id) \big) (v)$,
and for any $v \in \Wd_2 \tens \Wd_3$, we have $\hat{\pi}^{(2)}(v) = C_2 \times \big( (\hat{\pi}^{(1)} \tens \id) \circ (\id\tens\iota^{(3)} \big) (v)$.
In other words, the following diagrams commute, up to 
the non-zero multiplicative constant $C_2$:
\begin{align*}\xymatrix{
    \Wd_{2} \tens \Wd_{2} \tens \Wd_{2} \; \; \ar@{<-^{)}}[rr]^{\hspace{.9cm}\iota^{(3)} \, \tens\, \id}\ar[d]_{\id\,\tens \,\hat{\pi}^{(1)} }
  & & \; \; \Wd_{3} \tens \Wd_{2} \ar[d]^{\hat{\pi}^{(2)}}
  & & \; \; \\
    \Wd_{2} \; \; \ar@{<->}[rr]_{\isom}
  & & \; \;  \Wd_{2}
}
\xymatrix{
    \Wd_{2} \tens \Wd_{2} \tens \Wd_{2} \; \; \ar@{<-^{)}}[rr]^{\hspace{.9cm}  \id\, \tens\,\iota^{(3)}}\ar[d]_{\hat{\pi}^{(1)}\,\tens \, \id}
  & & \; \; \Wd_{2} \tens \Wd_{3} \ar[d]^{\hat{\pi}^{(2)}}
  & & \; \; \\
    \Wd_{2} \; \; \ar@{<->}[rr]_{\isom}
  & & \; \;  \Wd_{2}  \; \; .
} \end{align*}
\end{lem}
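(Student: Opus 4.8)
The plan is to invoke Schur's lemma. Both $\hat{\pi}^{(2)}\colon\Wd_3\tens\Wd_2\to\Wd_2$ and the composite $(\id\tens\hat{\pi}^{(1)})\circ(\iota^{(3)}\tens\id)\colon\Wd_3\tens\Wd_2\to\Wd_2$ are morphisms of $\Uqsltwo$-modules: $\hat{\pi}^{(2)}$ is one by construction; $\iota^{(3)}$ is the inclusion of a subrepresentation, $\hat{\pi}^{(1)}\colon\Wd_2\tens\Wd_2\to\bC$ is the projection onto the trivial summand $\Wd_1$ followed by the identification $\Wd_1\isom\bC$ (both module maps), and tensoring module maps with identity morphisms again yields module maps. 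By the Clebsch--Gordan decomposition $\Wd_3\tens\Wd_2\isom\Wd_2\oplus\Wd_4$ of Lemma~\ref{lem: tensor product representations of quantum sl2}, the representation $\Wd_2$ occurs in $\Wd_3\tens\Wd_2$ with multiplicity one, so the space $\mathrm{Hom}_{\Uqsltwo}(\Wd_3\tens\Wd_2,\Wd_2)$ of $\Uqsltwo$-module maps is one-dimensional. Hence $(\id\tens\hat{\pi}^{(1)})\circ(\iota^{(3)}\tens\id)=c\,\hat{\pi}^{(2)}$ for some scalar $c\in\bC$, and the only remaining task is to compute $c$. The same reasoning applies verbatim to the second diagram, using $\Wd_2\tens\Wd_3\isom\Wd_2\oplus\Wd_4$.

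To find $c$ it suffices to evaluate both sides on a single vector on which $\hat{\pi}^{(2)}$ is nonzero, and the natural choice is the highest weight vector $\Tbas_0^{(2;2,3)}$ of the $\Wd_2$-summand, for which $\hat{\pi}^{(2)}(\Tbas_0^{(2;2,3)})=\Wbas_0^{(2)}$ by definition. Thus $c$ equals the coefficient of $\Wbas_0^{(2)}$ in $(\id\tens\hat{\pi}^{(1)})\big((\iota^{(3)}\tens\id)(\Tbas_0^{(2;2,3)})\big)$. I would carry this out by first expanding $\Tbas_0^{(2;2,3)}$ using the explicit Clebsch--Gordan coefficients in~\eqref{eq: tensor product hwv}, then applying $\iota^{(3)}$ through $\iota^{(3)}(\Wbas_l^{(3)})=\Tbas_l^{(3;2,2)}$ with the $\Tbas_l^{(3;2,2)}$ from~\eqref{eq: triplet basis vectors}, and finally using the tabulated values of $\hat{\pi}^{(1)}$ on the tensor basis of $\Wd_2\tens\Wd_2$ from Lemma~\ref{lem: projection formulas}(a). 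A short manipulation of $q$-integers then collapses the result to $c=\qnum{3}/\qnum{2}^2=C_2^{-1}$, giving $\hat{\pi}^{(2)}=C_2\,(\id\tens\hat{\pi}^{(1)})\circ(\iota^{(3)}\tens\id)$; note $C_2\neq0$ since $q$ is not a root of unity, hence $\qnum{3}\neq0$.

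For the second diagram I would proceed identically, evaluating on the highest weight vector $\Tbas_0^{(2;3,2)}$, now with $\iota^{(3)}$ acting in the rightmost two tensor slots and $\hat{\pi}^{(1)}$ in the leftmost two; alternatively one can deduce it from the first diagram by the evident left--right symmetry of the construction. The computation produces the same constant $C_2$.

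The only genuinely delicate point is the bookkeeping in the explicit evaluation of $c$: one must keep careful track of the $q$-powers entering the coproduct-dependent Clebsch--Gordan coefficients~\eqref{eq: tensor product hwv} and those appearing in the values of $\hat{\pi}^{(1)}$, and one should make sure the chosen test vector is the highest weight vector of the $\Wd_2$-summand (rather than a vector lying in $\Wd_4$), so that $\hat{\pi}^{(2)}$ is manifestly nonzero on it and $c$ is genuinely determined.
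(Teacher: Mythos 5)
Your proposal is correct and follows essentially the same route as the paper's proof: Schur's lemma applied to the multiplicity-one occurrence of $\Wd_2$ in $\Wd_3\tens\Wd_2$, followed by evaluation on the highest weight vector $\Tbas_0^{(2;2,3)}$ using the explicit coefficients from \eqref{eq: tensor product hwv}, the triplet basis \eqref{eq: triplet basis vectors}, and Lemma~\ref{lem: projection formulas}(a), yielding the constant $\qnum{3}/\qnum{2}^2 = C_2^{-1}$. The paper likewise treats the second diagram by symmetry, so there is nothing to add.
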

\begin{proof}
We show the commutativity of the left diagram --- the right is similar.
Since the multiplicity of the subrepresentation $\Wd_2$ in $\Wd_3\tens\Wd_2$ is one, 
by Schur's lemma it suffices to show that the map
$(\id\,\tens \,\hat{\pi}^{(1)})\circ(\iota^{(3)}\,\tens\,\id)$
is non-zero. The vector
\begin{align*}
\Tbas_{0}^{(2;2,3)}=\frac{q^2}{1-q^2}\,\Wbas_0^{(3)}\tens\Wbas_1^{(2)}\,+\,\frac{q^2}{q^4-1}\,\Wbas_1^{(3)}\tens\Wbas_0^{(2)}
\end{align*}
satisfies $\hat{\pi}^{(2)}(\Tbas_{0}^{(2;2,3)})=\Wbas_0^{(2)}\in\Wd_2$,
by definition.
Using Lemma~\ref{lem: projection formulas}(a) 
and the basis \eqref{eq: triplet basis vectors}, we calculate
\begin{align*}
(\id\,\tens \,\hat{\pi}^{(1)})\circ(\iota^{(3)}\,\tens\,\id)
(\Tbas_{0}^{(2;2,3)})=\,&(\id\,\tens \,\hat{\pi}^{(1)})\left(\frac{q^2}{1-q^2}\,
\Tbas^{(3;2,2)}_0\tens\Wbas_1^{(2)}\,+\,\frac{q^2}{q^4-1}\,
\Tbas^{(3;2,2)}_1\tens\Wbas_0^{(2)}\right)\\
=\,&\frac{\qnum{3}}{\qnum{2}^2}\times\Wbas_0^{(2)}
= \frac{1}{C_2}\times\Wbas_0^{(2)} \neq 0 .
\end{align*}
\end{proof}

\begin{cor}\label{cor: doublet diagram}
The vector $\Bdryvec_{\omega} = R_+ \big( \Projection(v_{\alpha(\omega)}) \big)$ 
satisfies the conditions~\eqref{eq: bdry visit doublet projection conditions}.
\end{cor}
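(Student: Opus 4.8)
The plan is to pull the projections $\hat{\pi}^{(2)}_\pm$ back, through the three maps $\Projection$, $\Embedding$ and $R_+$ used to build $\Bdryvec_\omega$ from $v_{\alpha(\omega)}$, to an honest singlet projection $\hat{\pi}_j$ on the pure--geometry vector $v_{\alpha(\omega)}$, and then to invoke its known projection behaviour~\eqref{eq: multiple sle projection conditions} together with Remark~\ref{rem: collapsing visits corresponds to removing links}. Write $\omega_1=\pm$ for the side of the first visit; by the obvious left--right symmetry it suffices to treat $\omega_1=+$, and we must show $\hat{\pi}^{(2)}_+(\Bdryvec_\omega)=C_2\,\Bdryvec_{\hat{\omega}}$ and $\hat{\pi}^{(2)}_-(\Bdryvec_\omega)=0$, where $\hat{\omega}=(\omega_2,\ldots,\omega_{N'})$.

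First I would record that $R_+$ intertwines $\Uqsltwo$--module endomorphisms acting on the tail $\Wd_3^{\tens R}\tens\Wd_2\tens\Wd_3^{\tens L}$: if $T$ is such a map (possibly lowering $R$ or $L$), then applying $\id\tens T$ to the defining identity $\Bdryvec_\omega^\infty=-q\,\Wbas_{0}^{(2)}\tens F.\Bdryvec_\omega+\Wbas_{1}^{(2)}\tens\Bdryvec_\omega$ and using that $T$ commutes with $E,F,K$ yields, by the uniqueness in \cite[Lemma~5.3]{KP-conformally_covariant_boundary_correlation_functions_with_a_quantum_group}, the relation $R_+\big((\id\tens T)(\Bdryvec_\omega^\infty)\big)=T(\Bdryvec_\omega)$, where $R_+$ on the left denotes the corresponding isomorphism for the smaller tensor product. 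Taking $T=\hat{\pi}^{(2)}_\pm$ reduces the claim to computing $(\id\tens\hat{\pi}^{(2)}_\pm)$ on $\Bdryvec_\omega^\infty=\Projection(v_{\alpha(\omega)})$.

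Next I would apply the commutative diagram of Lemma~\ref{lem: doublet diagram}, tensored with identities, at the one tensor position where $\hat{\pi}^{(2;2,3)}$ (resp.\ $\hat{\pi}^{(2;3,2)}$) acts --- the first right--visit $\Wd_3$ (resp.\ first left--visit $\Wd_3$) together with the central $\Wd_2$. This expresses $\hat{\pi}^{(2)}_\pm$, up to the nonzero factor $C_2$, as $\iota^{(3)}$ reinstating that $\Wd_3$ as a pair of $\Wd_2$'s, followed by $\hat{\pi}^{(1)}$ on the pair straddling the central $\Wd_2$ and the adjacent reinstated $\Wd_2$. Since $v_{\alpha(\omega)}$ lies in the subspace~\eqref{eq: reasonable subspace} by Lemma~\ref{lem: pure geometries lie in projected space}, the composite $\iota^{(3)}\circ\hat{\pi}^{(3)}$ acts as the identity on the reinstated pair, and the surviving $\hat{\pi}^{(1)}$ commutes with all remaining $\hat{\pi}^{(3)}$'s inside $\Projection$; collecting these observations gives $(\id\tens\hat{\pi}^{(2)}_\pm)(\Bdryvec_\omega^\infty)=C_2\,\Projection\big(\hat{\pi}_{j_\pm}(v_{\alpha(\omega)})\big)$, where $\Projection$ on the right is the projection for $\hat{\omega}$ and $j_\pm$ is the index at which $\hat{\pi}^{(1)}$ landed.

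Finally I would identify the indices: for the side $\epsilon=\omega_1$, the pair $\set{j_\epsilon,j_\epsilon+1}$ is exactly the link $\link{a_1}{b_1}$ of $\alpha(\omega)$ collapsed in Remark~\ref{rem: collapsing visits corresponds to removing links}, so that $\alpha(\omega)\removeLink\link{a_1}{b_1}=\alpha(\hat{\omega})$; for the opposite side $-\omega_1$, the pair $\set{j_{-\omega_1},j_{-\omega_1}+1}$ is not a link of $\alpha(\omega)$, since the index at the starting point $x$ is paired with $b_1$ on the $\omega_1$--side. Hence~\eqref{eq: multiple sle projection conditions} gives $\hat{\pi}_{j_{\omega_1}}(v_{\alpha(\omega)})=v_{\alpha(\hat{\omega})}$ and $\hat{\pi}_{j_{-\omega_1}}(v_{\alpha(\omega)})=0$, whence $\hat{\pi}^{(2)}_{\omega_1}(\Bdryvec_\omega)=C_2\,R_+\big(\Projection(v_{\alpha(\hat{\omega})})\big)=C_2\,R_+(\Bdryvec_{\hat{\omega}}^\infty)=C_2\,\Bdryvec_{\hat{\omega}}$ and $\hat{\pi}^{(2)}_{-\omega_1}(\Bdryvec_\omega)=0$, which is~\eqref{eq: bdry visit doublet projection conditions}. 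I expect the main obstacle to be the bookkeeping in the middle step: matching the tensor positions touched by $\hat{\pi}^{(2;2,3)}$, $\iota^{(3)}$ and the resulting $\hat{\pi}^{(1)}$ with the link-pattern index $j_\pm$, i.e.\ keeping track of the relabelling built into the definition of $\omega\mapsto\alpha(\omega)$. Everything else is either a module-map manipulation or a direct appeal to Lemmas~\ref{lem: doublet diagram} and~\ref{lem: pure geometries lie in projected space} and the projection identities~\eqref{eq: multiple sle projection conditions}.
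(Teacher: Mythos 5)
Your argument is correct and follows essentially the same route as the paper's proof: commute $R_+$ past the projections (the paper cites \cite[Eq.~(5.2)]{KP-conformally_covariant_boundary_correlation_functions_with_a_quantum_group} where you rederive the intertwining from the uniqueness in Lemma~5.3), reduce to $\Bdryvec_\omega^{\infty}$, convert $\hat{\pi}^{(2)}_\pm$ into the singlet projections $\hat{\pi}^{(1)}_{2L}$, $\hat{\pi}^{(1)}_{2L+1}$ on $v_{\alpha(\omega)}$ via Lemma~\ref{lem: doublet diagram} and Lemma~\ref{lem: pure geometries lie in projected space}, and conclude from~\eqref{eq: multiple sle projection conditions} together with Remark~\ref{rem: collapsing visits corresponds to removing links}. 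Your index identification matches the paper's, so no gap.
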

\begin{proof}
The conditions~\eqref{eq: bdry visit doublet projection conditions}
for the vector $\Bdryvec_{\omega}$
concern projections $\hat{\pi}^{(2)}_\pm$ acting on $\Wd_3\tens\Wd_2$ and 
$\Wd_2\tens\Wd_3$ in the middle of the tensor 
product~\eqref{eq: bdry visit tensor product}. The linear isomorphism $R_+$
commutes with these projections, by
\cite[Eq.~(5.2)]{KP-conformally_covariant_boundary_correlation_functions_with_a_quantum_group}.
Therefore, it suffices to consider the corresponding 
projection conditions for $\Bdryvec_{\omega}^{\infty}$.
By Lemma~\ref{lem: pure geometries lie in projected space}, we have
$v_{\alpha(\omega)}=\Embedding(\Bdryvec_{\omega}^{\infty})$.
The projections $\hat{\pi}^{(2)}_\pm$ acting on $\Bdryvec_{\omega}^{\infty}$
can be calculated using the right columns of the commutative diagrams in 
Lemma~\ref{lem: doublet diagram}, 
and the projections $\hat{\pi}^{(1)}_{2L}$ and $\hat{\pi}^{(1)}_{2L+1}$,
acting on 
$v_{\alpha(\omega)}$, by the left columns. 
The assertion follows by observing that, by
Remark~\ref{rem: collapsing visits corresponds to removing links}
and \eqref{eq: multiple sle projection conditions}, we have
\begin{align*}
\hat{\pi}^{(1)}_{2L}(v_{\alpha(\omega)})=\; &\begin{cases}
0\quad & \text{if }\omega_1=+\\
v_{\alpha(\hat{\omega})} & \text{if }\omega_1=-
\end{cases}\qquad\qquad\Rightarrow\qquad\qquad\hat{\pi}^{(2)}_-(\Bdryvec_{\omega})=\; &\begin{cases}
0\quad & \text{if }\omega_1=+\\
C_2\times\Bdryvec_{\hat{\omega}} & \text{if }\omega_1=-
\end{cases}\\
\hat{\pi}^{(1)}_{2L+1}(v_{\alpha(\omega)})=\; &\begin{cases}
v_{\alpha(\hat{\omega})} & \text{if }\omega_1=+\\
0\quad & \text{if }\omega_1=-
\end{cases}\qquad\qquad\Rightarrow\qquad\qquad\hat{\pi}^{(2)}_+(\Bdryvec_{\omega})=\; &\begin{cases}
C_2\times\Bdryvec_{\hat{\omega}} & \text{if }\omega_1=+\\
0\quad & \text{if }\omega_1=-,
\end{cases}
\end{align*}
where $\hat{\omega}=(\omega_2,\omega_3,\ldots\omega_{N'})$ is
the order obtained from $\omega$ by collapsing the first visit.
See also Figure~\ref{fig: removing first visited point}.
\end{proof}


Let $C_3 = \frac{\qnum{2}^2}{q^2 + q^{-2}}$, 
and note that $C_3 \neq 0$, since $q$ is not a root of unity.
\begin{lem}\label{lem: triplet diagram}
For any $v \in \Wd_3 \tens \Wd_3$, we have
$\hat{\pi}^{(3)}(v) = C_3 \times \big( \hat{\pi}^{(3)} \circ (\id \tens \hat{\pi}^{(1)} \tens \id) \circ (\iota^{(3)}\tens\iota^{(3)}) \big) (v)$.
In other words, the following diagram commutes, up to 
the non-zero multiplicative constant $C_3$:
\begin{align*}\xymatrix{
   \Wd_{2} \tens \Wd_{2} \tens \Wd_{2} \tens \Wd_{2} \; \; \ar@{<-^{)}}[rr]^{\hspace{.9cm}\iota^{(3)} \, \tens\, \iota^{(3)}}\ar[d]_{\id\,\tens \,\hat{\pi}^{(1)}  \tens\, \id}
  & & \; \; \Wd_{3} \tens \Wd_{3} \ar[dd]^{\hat{\pi}^{(3)}} \\
    \Wd_{2} \tens \Wd_{2} \; \; \ar[d]_{\hat{\pi}^{(3)}}
  & & \; \; \\
    \Wd_{3} \; \; \ar@{<->}[rr]_{\isom}
  & & \; \;  \Wd_{3}  \; \; .
} \end{align*}
\end{lem}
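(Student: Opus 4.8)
The plan is to prove the commutativity (up to the constant $C_3$) by the same strategy used in Lemma~\ref{lem: doublet diagram}: invoke Schur's lemma to reduce the statement to a single nonzero computation. First I would observe that both composite maps $\hat{\pi}^{(3)}$ and $\hat{\pi}^{(3)} \circ (\id \tens \hat{\pi}^{(1)} \tens \id) \circ (\iota^{(3)}\tens\iota^{(3)})$ are $\Uqsltwo$-module homomorphisms from $\Wd_3\tens\Wd_3$ to $\Wd_3$: each is a composition of intertwiners (the embeddings $\iota^{(3)}$, the projections $\hat{\pi}^{(1)}$, $\hat{\pi}^{(3)}$, and identity maps all commute with the quantum group action). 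By the Clebsch--Gordan decomposition $\Wd_3\tens\Wd_3\isom\Wd_1\oplus\Wd_3\oplus\Wd_5$ from Lemma~\ref{lem: tensor product representations of quantum sl2}, the multiplicity of $\Wd_3$ in the source is exactly one, so $\Hom_{\Uqsltwo}(\Wd_3\tens\Wd_3,\Wd_3)$ is one-dimensional. Hence the two homomorphisms must be proportional, and it remains only to determine the proportionality constant.

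To pin down the constant, I would evaluate both maps on a single convenient vector, namely the highest weight vector $\Tbas_0^{(3;3,3)}$ of the $\Wd_3$-summand inside $\Wd_3\tens\Wd_3$, whose explicit coefficients are given by~\eqref{eq: tensor product hwv}. On the one hand, by the definition of $\hat{\pi}^{(3)} \colon \Wd_3\tens\Wd_3 \to \Wd_3$ (identification via the highest weight vector), we have $\hat{\pi}^{(3)}(\Tbas_0^{(3;3,3)}) = \Wbas_0^{(3)}$. On the other hand, I would push $\Tbas_0^{(3;3,3)}$ through $\iota^{(3)}\tens\iota^{(3)}$ to land in $\Wd_2^{\tens 4}$, expand in the tensor-product basis, apply $\id\tens\hat{\pi}^{(1)}\tens\id$ using the values of $\hat{\pi}^{(1)}$ on the basis of $\Wd_2\tens\Wd_2$ (equivalently, Lemma~\ref{lem: projection formulas}(a) with the triplet basis~\eqref{eq: triplet basis vectors}), and then apply $\hat{\pi}^{(3)}\colon\Wd_2\tens\Wd_2\to\Wd_3$. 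Since both sides are in the one-dimensional span of $\Wbas_0^{(3)}$ after this computation (the highest weight of $\Wd_3$ matches), the result will be a scalar multiple $\frac{1}{C_3}\,\Wbas_0^{(3)}$, and comparing gives $\hat{\pi}^{(3)}(\Tbas_0^{(3;3,3)}) = C_3 \times \frac{1}{C_3}\Wbas_0^{(3)}$, establishing the claimed identity with $C_3 = \frac{\qnum{2}^2}{q^2+q^{-2}}$.

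The main obstacle is the bookkeeping in the second computation: the highest weight vector $\Tbas_0^{(3;3,3)}$ is a nontrivial linear combination over pairs $(l_1,l_2)$ with $l_1+l_2=0$ (actually a single term at the top weight, which simplifies things), but one must be careful because $\iota^{(3)}$ sends $\Wbas_l^{(3)}$ to $\Tbas_l^{(3;2,2)}$, which are themselves nontrivial combinations of $\Wbas_{l'}\tens\Wbas_{l''}$ in $\Wd_2\tens\Wd_2$ as recorded in~\eqref{eq: triplet basis vectors}. So the image in $\Wd_2^{\tens 4}$ is a sum of several tensor monomials, and one must track which of the middle two factors get projected by $\hat{\pi}^{(1)}$ with the nonzero values $\hat{\pi}^{(1)}(\Wbas_0\tens\Wbas_1) = \frac{q^{-1}-q}{\qnum 2}$ and $\hat{\pi}^{(1)}(\Wbas_1\tens\Wbas_0) = \frac{1-q^{-2}}{\qnum 2}$ (note $\hat\pi^{(1)}$ here is the identification $\Tbas_0^{(1;2,2)}\mapsto 1$, differing by the identification from the $\hat\pi$ of Lemma~\ref{lem: projection formulas}). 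A clean shortcut is to instead evaluate on the highest weight vector and use only the top-weight component, where the $q$-arithmetic collapses to a short identity; alternatively one verifies the constant on any single weight vector since proportionality is already known. Either way the computation is elementary $q$-algebra, and the conceptual content — multiplicity-one plus a nonvanishing check — is exactly parallel to Lemma~\ref{lem: doublet diagram}.
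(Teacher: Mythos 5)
Your proposal takes essentially the same route as the paper: the paper's proof likewise invokes Schur's lemma via the multiplicity-one occurrence of $\Wd_3$ in $\Wd_3\tens\Wd_3\isom\Wd_1\oplus\Wd_3\oplus\Wd_5$ and then fixes the constant by computing that $\Tbas_0^{(3;3,3)}$ maps to a non-zero multiple of $\Wbas_0^{(3)}$ under the composite map. One small correction to your bookkeeping aside: $\Tbas_0^{(3;3,3)}$ corresponds to $m=1$ in \eqref{eq: tensor product hwv}, so it is supported on pairs with $l_1+l_2=1$ (two terms), not $l_1+l_2=0$; this does not affect the validity of the argument.
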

\begin{proof}
The proof is similar to the proof of 
Lemma~\ref{lem: doublet diagram}. One uses Schur's lemma and concludes by
calculating that the vector $\Tbas_{0}^{(3;3,3)}$ maps to a non-zero multiple
of $\Wbas_0^{(3)}\in\Wd_3$ in the various mappings.
\end{proof}

\begin{cor}\label{cor: triplet diagram}
The vector $\Bdryvec_{\omega} = R_+ \big( \Projection(v_{\alpha(\omega)}) \big)$ 
satisfies the conditions~\eqref{eq: bdry visit triplet projection conditions}.
\end{cor}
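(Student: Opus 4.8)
The plan is to mimic exactly the structure of the proof of Corollary~\ref{cor: doublet diagram}, replacing the doublet commutative diagram by the triplet one of Lemma~\ref{lem: triplet diagram}. First I would observe that the conditions~\eqref{eq: bdry visit singlet projection conditions}~--~\eqref{eq: bdry visit triplet projection conditions} for $\Bdryvec_\omega$ involve the projections $\hat{\pi}^{(1)}_{\epsilon;m}$ and $\hat{\pi}^{(3)}_{\epsilon;m}$ acting on two adjacent $\Wd_3$-tensorands, which are located away from the central $\Wd_2$-slot in the tensor product~\eqref{eq: bdry visit tensor product}. Since $R_+$ is a linear isomorphism commuting with all projections acting on tensorands other than the central $\Wd_2$ (this is the content of \cite[Eq.~(5.2)]{KP-conformally_covariant_boundary_correlation_functions_with_a_quantum_group}, already used in the previous corollary), it suffices to verify the analogous conditions for $\Bdryvec_\omega^\infty = \Projection(v_{\alpha(\omega)})$.

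Next I would translate: under the embedding $\Embedding$, each $\Wd_3$-tensorand of $\Bdryvec_\omega^\infty$ becomes a pair of $\Wd_2$-tensorands via $\iota^{(3)}$, and Lemma~\ref{lem: pure geometries lie in projected space} gives $\Embedding(\Bdryvec_\omega^\infty) = v_{\alpha(\omega)}$. A pair of adjacent $\Wd_3$-slots, say the $m$:th and $m+1$:st on the right, corresponds to four consecutive $\Wd_2$-slots of $v_{\alpha(\omega)}$, and the middle two of these are precisely the ones on which the triplet diagram of Lemma~\ref{lem: triplet diagram} acts. Concretely, if $j$ is the index such that the tensor positions $j,j+1$ of $v_{\alpha(\omega)}$ sit between the two $\Wd_3$-blocks, then Lemma~\ref{lem: triplet diagram} says that computing $\hat{\pi}^{(3)}_{\epsilon;m}$ of $\Bdryvec_\omega^\infty$ amounts, up to the nonzero constant $C_3$, to applying $\hat{\pi}^{(1)}_j$ to $v_{\alpha(\omega)}$ and then reassembling via $\Projection$; likewise $\hat{\pi}^{(1)}_{\epsilon;m}$ of $\Bdryvec_\omega^\infty$ corresponds to a singlet projection $\hat{\pi}^{(1)}$ applied on a wider group of positions of $v_{\alpha(\omega)}$, which vanishes because $v_{\alpha(\omega)}$ already lies in the subspace~\eqref{eq: reasonable subspace} and, by the description of $\alpha(\omega)$, does not contain the relevant consecutive link. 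I would then invoke Remark~\ref{rem: collapsing visits corresponds to removing links}, which says that collapsing the $m$:th and $m+1$:st visit on a given side corresponds to removing exactly the link $\link{a_{j+1}}{b_{j+1}}$ from $\alpha(\omega)$, so that the projection condition~\eqref{eq: multiple sle projection conditions} for $v_{\alpha(\omega)}$ gives $\hat{\pi}_j(v_{\alpha(\omega)}) = v_{\alpha(\hat\omega)}$ in the successive-visit case and $0$ otherwise. Feeding this through the commutative diagram yields $\hat{\pi}^{(3)}_{\epsilon;m}(\Bdryvec_\omega) = C_3 \times \Bdryvec_{\hat\omega}$ or $0$ as required, and $\hat{\pi}^{(1)}_{\epsilon;m}(\Bdryvec_\omega) = 0$ in all cases.

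The main obstacle — really the only nontrivial bookkeeping — is getting the index correspondence right: identifying, for each $m$ and each side $\epsilon=\pm$, precisely which tensor position $j$ of $v_{\alpha(\omega)}$ the triplet projection $\hat{\pi}^{(3)}_{\epsilon;m}$ feeds into under $\Embedding$, and likewise which group of positions the singlet projection $\hat{\pi}^{(1)}_{\epsilon;m}$ feeds into, and then checking that $\alpha(\omega)$ does or does not contain the link $\link{j}{j+1}$ according to whether the $m$:th and $m+1$:st visits on that side are successive. This is exactly parallel to the $2L$ versus $2L+1$ case dispatched in Corollary~\ref{cor: doublet diagram}, and follows from the explicit recursive definition of $\alpha(\omega)$ in Section~\ref{subsub: Associating visiting orders to link patterns} together with Remark~\ref{rem: collapsing visits corresponds to removing links} and Figure~\ref{fig: removing one of consecutively visited points}; the representation-theoretic content is entirely absorbed into Lemma~\ref{lem: triplet diagram} and Lemma~\ref{lem: pure geometries lie in projected space}, so once the indices are matched the conclusion is immediate.
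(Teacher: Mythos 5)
Your plan follows the paper's own proof essentially verbatim: reduce to $\Bdryvec_\omega^{\infty}$ using the commutation of $R_+$ with projections away from the central $\Wd_2$-slot, translate $\hat{\pi}^{(3)}_{\epsilon;m}$ into a singlet projection $\hat{\pi}^{(1)}_{k(\epsilon;m)}$ on $v_{\alpha(\omega)}$ via the commutative diagram of Lemma~\ref{lem: triplet diagram}, and conclude from Remark~\ref{rem: collapsing visits corresponds to removing links} and the projection conditions~\eqref{eq: multiple sle projection conditions}. The only difference is that you also fold in the singlet conditions~\eqref{eq: bdry visit singlet projection conditions}, which the paper treats separately in Corollary~\ref{cor: singlet diagram} using the distinct diagram of Lemma~\ref{lem: singlet diagram} (two nested singlet projections, vanishing because $\alpha(\omega)$ cannot contain two nested consecutive links); this does not affect the correctness of your argument for the triplet conditions asserted here.
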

\begin{proof}
The conditions~\eqref{eq: bdry visit triplet projection conditions}
for the vector $\Bdryvec_{\omega}$
concern projections $\hat{\pi}^{(3)}_{\epsilon;m}$ for the $m$:th and
$m+1$:st points on the right ($\epsilon=+$) or left ($\epsilon=-$),
acting on an appropriate pair $\Wd_3\tens\Wd_3$ of consecutive 
tensor components of \eqref{eq: bdry visit tensor product}.
Again, it suffices to prove the corresponding conditions for $\Bdryvec_{\omega}^{\infty}$.
The projections $\hat{\pi}^{(3)}_{\epsilon;m}$ acting on $\Bdryvec_\omega^{\infty}$
can be calculated using the right column of the commutative diagram in 
Lemma~\ref{lem: triplet diagram},
and the corresponding projections $\hat{\pi}^{(1)}_{k(\epsilon;m)}$,
acting on $v_{\alpha(\omega)}$,
using the left column
(the index $k(\epsilon;m)$ is determined by 
$\epsilon$ and $m$).
The assertion follows by observing that, by
Remark~\ref{rem: collapsing visits corresponds to removing links}
and \eqref{eq: multiple sle projection conditions}, we have
\begin{align*}
\hat{\pi}^{(1)}_{k(\epsilon;m)}(v_{\alpha(\omega)})=\; &\begin{cases}
0 & \mbox{in the case of non-successive visits }\\ v_{\alpha(\hat{\omega})} & \mbox{in the case of successive visits }\end{cases}\\
\qquad\qquad\Rightarrow\qquad
\hat{\pi}^{(3)}_{\epsilon;m}(\Bdryvec_\omega)=\; &\begin{cases}
0 & \mbox{in the case of non-successive visits}\\ C_3\times\Bdryvec_{\hat{\omega}} & \mbox{in the case of successive visits}\end{cases}
\end{align*}
where $\hat{\omega}$ is the order obtained from $\omega$ by collapsing 
these successive visits.
See also Figure~\ref{fig: removing one of consecutively visited points}.
\end{proof}


\begin{lem}\label{lem: singlet diagram}
For any $v \in \Wd_3 \tens \Wd_3$, we have
$\hat{\pi}^{(1)}(v) = C \times \big( \hat{\pi}^{(1)} \circ (\id \tens \hat{\pi}^{(1)} \tens \id) \circ (\iota^{(3)}\tens\iota^{(3)}) \big) (v)$.
In other words, the following diagram commutes, up to 
the non-zero multiplicative constant $C = \frac{\qnum{2}^3}{\qnum{3}}$:
\begin{align*}\xymatrix{
   \Wd_{2} \tens \Wd_{2} \tens \Wd_{2} \tens \Wd_{2} \; \; \ar@{<-^{)}}[rr]^{\hspace{.9cm}\iota^{(3)} \, \tens\, \iota^{(3)}}\ar[d]_{\id\,\tens \,\hat{\pi}^{(1)}  \tens\, \id}
  & & \; \; \Wd_{3} \tens \Wd_{3} \ar[dd]^{\hat{\pi}^{(1)}} \\
    \Wd_{2} \tens \Wd_{2} \; \; \ar[d]_{\hat{\pi}^{(1)}}
  & & \; \; \\
    \bC \; \; \ar@{<->}[rr]_{\isom}
  & & \; \;  \bC \; \; .
} \end{align*}
\end{lem}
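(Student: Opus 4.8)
The plan is to follow the strategy used for Lemmas~\ref{lem: doublet diagram} and \ref{lem: triplet diagram}. Both sides of the asserted identity are $\Uqsltwo$-equivariant linear functionals on $\Wd_3\tens\Wd_3$: the operator $\hat{\pi}^{(1)}\circ(\id\tens\hat{\pi}^{(1)}\tens\id)\circ(\iota^{(3)}\tens\iota^{(3)})$ is a composition of equivariant maps, and $\hat{\pi}^{(1)}\colon\Wd_3\tens\Wd_3\to\bC$ is equivariant by construction. Since $\Wd_3\tens\Wd_3\isom\Wd_1\oplus\Wd_3\oplus\Wd_5$ by Lemma~\ref{lem: tensor product representations of quantum sl2}, with the trivial summand $\Wd_1 = \bC\,\Tbas_0^{(1;3,3)}$ appearing with multiplicity one, Schur's lemma forces each of these functionals to annihilate $\Wd_3\oplus\Wd_5$ and to be a scalar multiple of the canonical projection onto $\Wd_1$. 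In particular, the $\Wd_3$- and $\Wd_5$-isotypic parts require no separate check, and it suffices to evaluate both sides on the single vector $\Tbas_0^{(1;3,3)}$ and compare.

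On the right-hand side one has $\hat{\pi}^{(1)}(\Tbas_0^{(1;3,3)}) = 1$ by the definition of $\hat{\pi}^{(1)}$ on $\Wd_3\tens\Wd_3$. For the left-hand side, I would expand $\Tbas_0^{(1;3,3)}$ in the tensor-product basis using \eqref{eq: tensor product hwv} with $d_1 = d_2 = 3$ and $m = 2$, push it into $\Wd_2^{\tens 4}$ via $\iota^{(3)}\tens\iota^{(3)}$ using the triplet basis vectors \eqref{eq: triplet basis vectors}, apply the inner projection $\id\tens\hat{\pi}^{(1)}\tens\id$ on the two middle tensorands, and finally apply the outer $\hat{\pi}^{(1)}$ on the two remaining tensorands; both projection steps are evaluated with the values of $\hat{\pi}$ on the tensor-product basis from Lemma~\ref{lem: projection formulas}(a). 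Since $\hat{\pi}$ vanishes on $\Wbas_0\tens\Wbas_0$ and on $\Wbas_1\tens\Wbas_1$, only a handful of terms survive; collecting them and simplifying with $\qnum 2 = q + q^{-1}$ and $q^2 + 1 + q^{-2} = \qnum 3$ gives the value $\frac{\qnum 3}{\qnum 2^3}$. This equals $C^{-1}$, so the diagram commutes up to the constant $C = \frac{\qnum 2^3}{\qnum 3}$, which is finite and nonzero because $q$ is not a root of unity.

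The step requiring the most care is the final evaluation: one must track which of the four $\Wd_2$-slots of $\iota^{(3)}(v)\tens\iota^{(3)}(w)$ gets contracted against which by the two successive singlet projections. I do not expect any conceptual obstacle here --- this is a ``two-bubble'' Temperley--Lieb computation, exactly the sort of thing the projection formulas of Lemma~\ref{lem: projection formulas} are designed to handle --- and once Schur's lemma has reduced the claim to a single highest weight vector the remaining work is purely bookkeeping.
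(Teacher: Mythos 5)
Your proposal is correct and follows the same route as the paper: both reduce the claim via Schur's lemma to the multiplicity-one trivial summand of $\Wd_3\tens\Wd_3$ and then evaluate the composite map on the single vector $\Tbas_0^{(1;3,3)}$ using Lemma~\ref{lem: projection formulas}(a). Your claimed value $\qnum{3}/\qnum{2}^{3}$ for the image of $\Tbas_0^{(1;3,3)}$ under the left-hand composite is what the explicit computation gives, so the constant $C=\qnum{2}^{3}/\qnum{3}$ is confirmed.
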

\begin{proof}
The proof is similar to the proof of 
Lemma~\ref{lem: doublet diagram}. One uses Schur's lemma and concludes by
calculating that the vector $\Tbas_{0}^{(1;3,3)}$ maps to a non-zero multiple
of $1\in\bC\isom\Wd_1$ in the various mappings.
\end{proof}

\begin{cor}\label{cor: singlet diagram}
The vector $\Bdryvec_{\omega} = R_+ \big( \Projection(v_{\alpha(\omega)}) \big)$ 
satisfies the conditions~\eqref{eq: bdry visit singlet projection conditions}.
\end{cor}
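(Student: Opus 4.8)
The plan is to argue exactly as in the proofs of Corollaries~\ref{cor: doublet diagram} and~\ref{cor: triplet diagram}, using the singlet diagram of Lemma~\ref{lem: singlet diagram} in place of Lemma~\ref{lem: triplet diagram}, together with one extra observation. First I would recall that the conditions~\eqref{eq: bdry visit singlet projection conditions} for $\Bdryvec_\omega$ concern the projections $\hat{\pi}^{(1)}_{\epsilon;m}$ onto a trivial subrepresentation of the pair $\Wd_3 \tens \Wd_3$ formed by the $m$:th and $m+1$:st tensor components on the right ($\epsilon = +$) or left ($\epsilon = -$) of~\eqref{eq: bdry visit tensor product}. Since the linear isomorphism $R_+$ commutes with these projections by \cite[Eq.~(5.2)]{KP-conformally_covariant_boundary_correlation_functions_with_a_quantum_group}, it suffices to verify the corresponding conditions for the intermediate vector $\Bdryvec_\omega^\infty = \Projection(v_{\alpha(\omega)})$, which by Lemma~\ref{lem: pure geometries lie in projected space} satisfies $\Embedding(\Bdryvec_\omega^\infty) = v_{\alpha(\omega)}$.

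Next I would transfer the computation from the spin chain $\Wd_3^{\tens R}\tens\Wd_2\tens\Wd_3^{\tens L}$ to $\Wd_2^{\tens 2N}$ via the commutative diagram of Lemma~\ref{lem: singlet diagram}. Applying the embedding $\Embedding$ replaces each of the two relevant $\Wd_3$ components by $\iota^{(3)}(\Wd_3) \subset \Wd_2 \tens \Wd_2$, so that, up to the non-zero constant $C = \frac{\qnum{2}^3}{\qnum{3}}$, the projection $\hat{\pi}^{(1)}_{\epsilon;m}$ acting on $\Bdryvec_\omega^\infty$ becomes a composition $\hat{\pi}^{(1)}_{\mathrm{out}} \circ \hat{\pi}^{(1)}_{k(\epsilon;m)}$ of two singlet projections acting on $v_{\alpha(\omega)}$. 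Here the inner index $k(\epsilon;m)$ is exactly the one used in the proof of Corollary~\ref{cor: triplet diagram} (determined by $\epsilon$ and $m$), corresponding to the middle two of the four $\Wd_2$ tensorands, and $\hat{\pi}^{(1)}_{\mathrm{out}}$ acts on the two flanking tensorands, which become consecutive once the inner projection has removed two tensorands.

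Finally I would invoke the projection conditions~\eqref{eq: multiple sle projection conditions} for $v_{\alpha(\omega)}$ together with Remark~\ref{rem: collapsing visits corresponds to removing links}. If the $m$:th and $m+1$:st visits on side $\epsilon$ are non-successive, then $\link{k(\epsilon;m)}{k(\epsilon;m)+1} \notin \alpha(\omega)$, so $\hat{\pi}^{(1)}_{k(\epsilon;m)}(v_{\alpha(\omega)}) = 0$ and the whole composition vanishes. If these visits are successive, then by Remark~\ref{rem: collapsing visits corresponds to removing links} the pair $\link{k(\epsilon;m)}{k(\epsilon;m)+1}$ is a link of $\alpha(\omega)$ and~\eqref{eq: multiple sle projection conditions} gives $\hat{\pi}^{(1)}_{k(\epsilon;m)}(v_{\alpha(\omega)}) = v_{\alpha(\hat{\omega})}$, where $\hat{\omega}$ collapses the two visits; but after the relabeling the two positions on which $\hat{\pi}^{(1)}_{\mathrm{out}}$ acts are precisely the entry and exit of the merged visited point in $\alpha(\hat{\omega})$, hence not a link, so by Lemma~\ref{lem: pure geometries lie in projected space} applied to $\hat{\omega}$ (equivalently, by~\eqref{eq: multiple sle projection conditions}) we get $\hat{\pi}^{(1)}_{\mathrm{out}}(v_{\alpha(\hat{\omega})}) = 0$. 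In either case $\hat{\pi}^{(1)}_{\epsilon;m}(\Bdryvec_\omega) = 0$, which is~\eqref{eq: bdry visit singlet projection conditions}. The only point requiring care — and the "main obstacle", though a mild one — is the index bookkeeping identifying the flanking tensorands with the entry/exit positions of the merged visited point; this follows directly from the explicit construction of $\alpha(\omega)$ in Section~\ref{subsub: Associating visiting orders to link patterns} and the relabeling rule for link removal, and everything else is a transcription of the proof of Corollary~\ref{cor: triplet diagram}. Combined with Corollaries~\ref{cor: doublet diagram} and~\ref{cor: triplet diagram}, this completes the verification that $\Bdryvec_\omega$ solves the system~\eqref{eq: bdry visit cartan eigenvalue}~--~\eqref{eq: bdry visit doublet projection conditions}.
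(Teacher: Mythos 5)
Your proposal is correct and follows essentially the same route as the paper's proof: reduce via $R_+$ and the embedding $\Embedding$ to a composition of two singlet projections $\hat{\pi}^{(1)}_{k'}\circ\hat{\pi}^{(1)}_{k}$ acting on $v_{\alpha(\omega)}$ through the commutative diagram of Lemma~\ref{lem: singlet diagram}, and then kill it using~\eqref{eq: multiple sle projection conditions}. The only cosmetic difference is that the paper dispatches both of your cases at once by observing that $\alpha(\omega)$ can never contain the two nested links $\link{k}{k+1}$ and $\link{k-1}{k+2}$, which is exactly equivalent to your case split on successive versus non-successive visits followed by the remark that the two flanking indices become the entry and exit of the merged point in $\alpha(\hat{\omega})$.
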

\begin{proof}
The conditions~\eqref{eq: bdry visit singlet projection conditions}
for the vector $\Bdryvec_{\omega}$
concern projections $\hat{\pi}^{(1)}_{\epsilon;m}$ for the $m$:th and
$m+1$:st points on the right ($\epsilon=+$) or left ($\epsilon=-$),
acting on an appropriate pair $\Wd_3\tens\Wd_3$ of consecutive 
tensor components of \eqref{eq: bdry visit tensor product}. 
Again, it suffices to prove the corresponding conditions for $\Bdryvec_{\omega}^{\infty}$.
The projections $\hat{\pi}^{(1)}_{\epsilon;m}$ acting on $\Bdryvec_\omega^{\infty}$
can be calculated using the right column of the commutative diagram in 
Lemma~\ref{lem: singlet diagram},
and the corresponding projections $\hat{\pi}^{(1)}_{k'(\epsilon;m)}\circ\hat{\pi}^{(1)}_{k(\epsilon;m)}$,
acting on $v_{\alpha(\omega)}$,
using the left column 
(the indices $k(\epsilon;m)$ and $k'(\epsilon;m)$ are determined 
by the indices $\epsilon$ and $m$).
The link pattern $\alpha(\omega)$
cannot contain the two nested links $\link{k}{k+1}$
and $\link{k-1}{k+2}$ with $k=k(\epsilon;m)$.
Hence, by \eqref{eq: multiple sle projection conditions}, we have
\begin{align*}
\hat{\pi}^{(1)}_{k'(\epsilon;m)}
(\hat{\pi}^{(1)}_{k(\epsilon;m)}(v_{\alpha(\omega)}))=0.
\end{align*}
\end{proof}

\subsection{\label{subsub: bdry visit uniqueness}Uniqueness of solutions}

The proof of the uniqueness of solutions of the 
system~\eqref{eq: bdry visit cartan eigenvalue}~--~\eqref{eq: bdry visit doublet projection conditions}
is similar to the multiple $\SLE$ case
(Proposition~\ref{prop: uniqueness}) --- 
the homogeneous system 
only admits the trivial solution.
We make use of the following lemma, which is 
a generalization of Corollary~\ref{cor: all projections vanish gives zero}.
\begin{lem}\label{lem: all projections vanish gives zero for bdry visits}
Let $L,R\in\bZnn$, $L+R\geq1$, and assume that the vector 
$v\in\Wd_{3}^{\tens R}\tens\Wd_2\tens\Wd_{3}^{\tens L}$ satisfies 
$\,E.v=0\,$, $\,K.v=q\,v\,$, and 
$\,\hat{\pi}^{(2)}_\pm(v)=0\,$, and
$\,\hat{\pi}^{(3)}_{\epsilon;m}(v)=0\,$,
$\,\hat{\pi}^{(1)}_{\epsilon;m}(v)=0\,$
for all indices $m$ and $\epsilon=\pm$. Then we have $v=0$.  
\end{lem}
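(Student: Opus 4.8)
The plan is to deduce this exactly as the multiple $\SLE$ uniqueness (Proposition~\ref{prop: uniqueness}) was deduced from Corollary~\ref{cor: all projections vanish gives zero}: transport $v$ into a tensor power of the two-dimensional representation $\Wd_2$ along the equivariant embeddings $\iota^{(3)}\colon\Wd_3\hookrightarrow\Wd_2\tens\Wd_2$, show that all singlet projections of the resulting vector vanish, and then invoke the characterization of the top-dimensional subrepresentation. Concretely, set $N=L+R+1$ and
\[ w \; := \; \big((\iota^{(3)})^{\tens R}\tens\id_{\Wd_2}\tens(\iota^{(3)})^{\tens L}\big)(v) \; \in \; \Wd_2^{\tens(2N-1)} . \]
Since $\iota^{(3)}$ is injective and $\Uqsltwo$-equivariant (with left inverse $\hat{\pi}^{(3)}$, cf.\ Section~\ref{subsub: bdry visit construction}), the lift is injective and intertwines the $\Uqsltwo$-action, so $E.w=0$ and $K.w=q\,w$; hence it suffices to show $w=0$.

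I would then check that $\hat{\pi}_j(w)=0$ for every $j=1,\ldots,2N-2$, treating three kinds of indices according to how the tensor slots $j,j+1$ sit relative to the blocks of $w$ --- the $R$ left copies of $\iota^{(3)}(\Wd_3)$, the central $\Wd_2$, and the $L$ right copies of $\iota^{(3)}(\Wd_3)$. If $j,j+1$ lie inside a single copy of $\iota^{(3)}(\Wd_3)$, then $\hat{\pi}_j(w)=0$ since $\iota^{(3)}(\Wd_3)=\Kern(\hat{\pi}^{(1)})\subset\Wd_2\tens\Wd_2$ (the analogue, for the present embedding, of Lemma~\ref{lem: pure geometries lie in projected space}). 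If $j,j+1$ straddle two neighbouring copies of $\iota^{(3)}(\Wd_3)$, then the commutative diagrams of Lemmas~\ref{lem: triplet diagram} and \ref{lem: singlet diagram} express $\hat{\pi}_j(w)$ --- up to the nonzero constants $C_3$ and $C$ --- through the $\Wd_3$-component and the $\Wd_1$-component of $v$ at the relevant pair of $\Wd_3$-factors; in the present homogeneous situation both components vanish by the hypotheses \eqref{eq: bdry visit singlet projection conditions}~--~\eqref{eq: bdry visit triplet projection conditions}, and since $\Wd_2\tens\Wd_2\isom\Wd_1\oplus\Wd_3$ this forces $\hat{\pi}_j(w)=0$. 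Finally, if $j,j+1$ straddle the central $\Wd_2$ and an adjacent copy of $\iota^{(3)}(\Wd_3)$, then the diagram of Lemma~\ref{lem: doublet diagram} expresses $\hat{\pi}_j(w)$ --- up to the nonzero constant $C_2$ --- through $\hat{\pi}^{(2)}_{-}(v)$ or $\hat{\pi}^{(2)}_{+}(v)$, which vanish by \eqref{eq: bdry visit doublet projection conditions}. A short bookkeeping check shows that these three cases exhaust all $j\in\{1,\ldots,2N-2\}$.

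With all singlet projections of $w$ vanishing, Lemma~\ref{lem: all projections vanish} gives $w\in\Wd_{2N}\subset\Wd_2^{\tens(2N-1)}$; but the highest weight vector of the irreducible $\Wd_{2N}$ has $K$-eigenvalue $q^{2N-1}$, whereas $E.w=0$ and $K.w=q\,w$. Since $L+R\ge1$ we have $2N-1\ge3$, so $q^{2N-1}\ne q$ because $q$ is not a root of unity, and therefore $w=0$; by injectivity of the lift, $v=0$. (This concluding step is exactly the argument of Corollary~\ref{cor: all projections vanish gives zero}, transposed from the $K$-eigenvalue-$1$, even-length situation to the $K$-eigenvalue-$q$, odd-length one.) The part requiring care is the bookkeeping in the middle paragraph --- matching the projections $\hat{\pi}^{(2)}_{\pm}$, $\hat{\pi}^{(3)}_{\epsilon;m}$, $\hat{\pi}^{(1)}_{\epsilon;m}$ acting on $v$ with the single-slot projections $\hat{\pi}_j$ acting on $w$ through the diagrams of Lemmas~\ref{lem: doublet diagram}, \ref{lem: triplet diagram} and \ref{lem: singlet diagram}. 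The one genuinely important design choice is to lift into the \emph{odd} power $\Wd_2^{\tens(2N-1)}$ rather than $\Wd_2^{\tens 2N}$: this ensures that every slot of $w$ is internal to an $\iota^{(3)}(\Wd_3)$-block, adjacent to the central $\Wd_2$, or at a block--block seam, so that each $\hat{\pi}_j(w)$ is controlled by a hypothesis and no spurious extra $\Wd_2$-slot (such as the one introduced by the isomorphism $R_{+}$) remains to be handled.
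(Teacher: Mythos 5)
Your argument is correct and is essentially the paper's own proof: the paper likewise embeds $v$ into $\Wd_2^{\tens(2N-1)}$ via $(\iota^{(3)})^{\tens R}\tens\id\tens(\iota^{(3)})^{\tens L}$, shows that all singlet projections $\hat{\pi}^{(1)}_j$ of the image vanish, applies Lemma~\ref{lem: all projections vanish} to place the image in $\Wd_{2N}$, and concludes from $\Wd_{2N}\cap\Wd_2=\set{0}$ (your $K$-eigenvalue comparison is the same observation). The only cosmetic difference is in the middle step, where the paper deduces the vanishing of the seam projections by applying Lemma~\ref{lem: all projections vanish} locally to each pair of consecutive tensorands rather than invoking the commutative diagrams of Lemmas~\ref{lem: doublet diagram}--\ref{lem: singlet diagram}.
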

\begin{proof}
The conditions $E.v=0$, $K.v=q\,v$ show that 
$v\in\HWsp_2(\Wd_{3}^{\tens R}\tens\Wd_2\tens\Wd_{3}^{\tens L})$.
Denote by 
\begin{align*}
\Projection'=\;&(\hat{\pi}^{(3)})^{\tens R}\tens\id\tens(\hat{\pi}^{(3)})^{\tens L}\;\colon\;\Wd_2^{\tens (2N-1)}\longrightarrow\Wd_3^{\tens R} \tens \Wd_2 \tens \Wd_3^{\tens L},\\
\Embedding'=\;&(\iota^{(3)})^{\tens R}\tens\id\tens(\iota^{(3)})^{\tens L}\;\colon\;\Wd_3^{\tens R} \tens \Wd_2 \tens \Wd_3^{\tens L}\hookrightarrow\Wd_2^{\tens (2N-1)},\\
v'=\;&\Embedding'(v) \; \in \; \HWsp_2(\Wd_{2}^{\tens (2N-1)})
    := \set{v \in \Wd_{2}^{\tens (2N-1)} \; \big| \; E.v =0 , \; K.v = q v } .
\end{align*}
The assumptions $\,\hat{\pi}^{(2)}_\pm(v)=0\,$,
$\,\hat{\pi}^{(3)}_{\epsilon;m}(v)=0\,$,
$\,\hat{\pi}^{(1)}_{\epsilon;m}(v)=0\,$
for all indices $m$ and $\epsilon$ imply that
in the direct sum decomposition of 
any two consecutive tensorands
($\Wd_3\tens\Wd_3$, $\Wd_2\tens\Wd_3$ or $\Wd_3\tens\Wd_2$) of
the tensor product~\eqref{eq: bdry visit tensor product}
into irreducibles, the vector $v$ lies in the
highest dimensional subrepresentation.
An application of Lemma~\ref{lem: all projections vanish}
to each pair of two consecutive tensorands embedded in a tensor product 
of two dimensional representations
($\Wd_3\tens\Wd_3$ via $\iota^{(3)}\tens\iota^{(3)}$, $\Wd_2\tens\Wd_3$ 
via $\id\tens\iota^{(3)}$, and $\Wd_3\tens\Wd_2$ via $\iota^{(3)}\tens\id$) 
shows that $\hat{\pi}^{(1)}_j(v')=0$ 
for all $1\leq j\leq 2N-2$.
Therefore, by Lemma~\ref{lem: all projections vanish}
applied to $\Wd_{2}^{\tens (2N-1)}$, 
the vector $v'$ belongs to the highest dimensional subrepresentation
$\Wd_{2N}\subset\Wd_2^{\tens(2N-1)}$.
Hence, we have $v'\in\Wd_{2N}\cap\Wd_2=\{0\}$, as $N=R+L+1\geq 2$.
We conclude by $v=\Projection'(v')=0$.
\end{proof}


\subsection{\label{sub: bdry visit existence}Proof of Theorem~\ref{thm: bdry visit vectors}}

Solutions to~\eqref{eq: bdry visit cartan eigenvalue}~--~\eqref{eq: bdry visit doublet projection conditions} were constructed in 
Section~\ref{subsub: bdry visit construction} --- see 
Remark~\ref{rem: doublet conditions} and 
Corollaries~\ref{cor: doublet diagram}, \ref{cor: triplet diagram} and
\ref{cor: singlet diagram}.
Uniqueness of normalized solutions follows from 
Lemma~\ref{lem: all projections vanish gives zero for bdry visits}
similarly as in the proof of Proposition~\ref{prop: uniqueness}:
the difference of any two solutions of 
\eqref{eq: bdry visit cartan eigenvalue}~--~\eqref{eq: bdry visit doublet projection conditions}
vanishes as a solution to the homogeneous system,
which appears in the statement of the lemma.
$\hfill\qed$

\bigskip{}


\appendix

\section{\label{app: Local multiple SLEs}Local multiple $\SLE$s}

The key technique to construct $\SLE$ type curves is their description as growth processes
\cite{Schramm-LERW_and_UST}. Growth processes for multiple curves, however,
straightforwardly only allow to construct initial segments of the curves.
In this article we restrict our attention to such local multiple $\SLE$s.
In extending the definition to a probability measure on $N$ globally
defined random curves connecting $2N$ boundary points, one encounters
technical difficulties similar to the challenges in proving the reversibility property of
a single chordal $\SLE$ curve \cite{Zhan-reversibility,MS-imaginary_geometry_3}.

\subsection{\label{subsec: Schramm-Loewner evolutions}Chordal Schramm-Loewner evolution}

The simplest $\SLE$ variant is the chordal $\SLEk$.
Let $\domain \subset \bC$ be an open simply connected domain,
with two distinct boundary points $\bdrypt,\bdryptb \in \bdry \domain$ (prime ends).
The chordal $\SLEk$ in $\domain$ from $\bdrypt$ to $\bdryptb$
is a random curve --- more precisely, a probability measure $\SLEmeasure^{(\domain;\bdrypt, \bdryptb)}_\chordal$ on
oriented but unparametrized non-self-crossing curves in $\cl{\domain}$ from $\bdrypt$ to $\bdryptb$
(the space of such curves is equipped with a natural metric inherited
from a uniform norm on parametrized curves).
We often choose some parametrized curve $\gamma \colon [0,1] \to \cl{\domain}$,
to be interpreted as its equivalence class under increasing
reparametrizations.
The chordal $\SLEk$ itself is the family $\big(\SLEmeasure^{(\domain;\bdrypt, \bdryptb)}_\chordal \big)_{\domain,\bdrypt, \bdryptb}$
of these probability measures, indexed by the domain and marked boundary points.
Schramm's observation was that, up to the value of the parameter $\kappa>0$, the family is
characterized by the following two assumptions.
\begin{itemize}
\item {\em Conformal invariance}: If $\confmap \colon \domain \to \domain'$ is
a conformal map, 
and the curve $\gamma$ in $\domain$ has the law $\SLEmeasure^{(\domain;\bdrypt, \bdryptb)}_\chordal$,
then the image $\confmap \circ \gamma$ has the law $\SLEmeasure_\chordal^{(\confmap(\domain);\confmap(\bdrypt),\confmap(\bdryptb))}$.
A concise way to state this is that the measures are related by pushforwards,
$\confmap_* \SLEmeasure_\chordal^{(\domain;\bdrypt, \bdryptb)} =
\SLEmeasure_\chordal^{(\confmap(\domain);\confmap(\bdrypt),\confmap(\bdryptb))}$.
\item {\em Domain Markov property}: Conditionally, given an
initial segment $\gamma |_{[0,\tau]}$ of the curve $\gamma$ with law $\SLEmeasure_\chordal^{(\domain;\bdrypt, \bdryptb)}$,
the remaining part $\gamma |_{[\tau,1]}$ 
has the law $\SLEmeasure_\chordal^{(\domain';\gamma(\tau),\bdryptb)}$, where
$\domain'$ is the component of $\domain \setminus \gamma[0,\tau]$
containing $\bdryptb$ on its boundary.
\end{itemize}

By Riemann mapping theorem, between any triples
$(\domain;\bdrypt, \bdryptb)$ and $(\domain';\bdrypt',\bdryptb')$
there exists a conformal map 
$\confmap \colon \domain \to \domain'$ such that $\confmap(\bdrypt)=\bdrypt'$, $\confmap(\bdryptb) = \bdryptb'$.
By conformal invariance, it therefore suffices to construct the chordal $\SLEk$ in
one reference domain, e.g., the upper half-plane
\[ \bH = \set{ z \in \bC \; \big| \; \im(z) > 0} . \]

The chordal $\SLEk$ in $(\domain;\bdrypt, \bdryptb)=(\bH;0,\infty)$ is constructed by a
growth process encoded in a Loewner chain $(g_t)_{t \in [0,\infty)}$ as follows
--- see \cite{RS-basic_properties_of_SLE} for details.
Let $(B_t)_{t \in[0,\infty)}$
be a standard Brownian motion on the real line, and for $z \in \bH$,
consider the solution to the Loewner differential equation
\begin{align}\label{eq: Loewner equation}
\frac{\ud}{\ud t} g_t(z) = \frac{2}{g_t(z) - \driving_t},\qquad g_0(z)=z,
\end{align}
with the driving function $\driving_t=\sqrt{\kappa}B_t$.
The solution is defined up to a (possibly infinite) explosion time $T_z$. 
Let $K_t$ be the closure of the set $\set{z\in\bH\;|\;T_z<t}$.
The growth process $(K_t)_{t\in[0,\infty)}$ is 
generated by a random curve $\gamma \colon [0,\infty) \to \overline{\bH}$ 
in the sense that $\bH\setminus K_t$ is the 
unbounded component of $\bH\setminus\gamma[0,t]$.
This curve is (a parametrization of) the chordal $\SLEk$. 
For fixed $t \in [0,\infty)$, the solution to \eqref{eq: Loewner equation}
viewed as a function of the initial condition $z$ 
gives the unique conformal map
$g_t\colon\bH\setminus K_t\rightarrow\bH$
normalized so that $g_t(z)=z+\oo(1)$ as $z\to\infty$.

\subsection{\label{subsec: Definition of local multiple SLEs}Local multiple $\SLE$s}


Let $\domain\subsetneq\bC$ be a simply connected domain, and let
$\bdrypt_1, \ldots, \bdrypt_{2N} \in \bdry \domain$ be $2N$ distinct boundary points
appearing in counterclockwise order along $\bdry \domain$.
Morally, we would like to associate to the domain 
and boundary points 
a probability measure $\SLEmeasure^{(\domain;\bdrypt_1,\ldots,\bdrypt_{2N})}$ on a collection of $N$ curves,
connecting the $2N$ marked boundary points. Instead, a local multiple $\SLE$ will describe $2N$
initial segments $\gamma^{(j)}$ of curves starting from the points $\bdrypt_j$,
up to exiting some neighborhoods $U_j \ni \bdrypt_j$. 
The localization neighborhoods $U_1,\ldots,U_{2N}$ are assumed to be
closed subsets of $\cl{\domain}$ such that 
$\domain \setminus U_j$ are simply connected and $U_j \cap U_k = \emptyset$ for $j \neq k$. See Figure~\ref{fig: local multiple SLE} for an illustration of 
the localization.

The local $N$-$\SLEk$ in $\domain$, started from $(\bdrypt_1, \ldots, \bdrypt_{2N})$ and localized in $(U_1, \ldots, U_{2N})$,
is a probability measure on $2N$-tuples of oriented unparametrized 
curves with parametrized representatives 
$(\gamma^{(1)},\ldots,\gamma^{(2N)})$,
such that for each $j$, the curve $\gamma^{(j)} \colon [0,1] \to U_j$ starts
at $\gamma^{(j)}(0)=\bdrypt_j$ and ends at
$\gamma^{(j)}(1) \in \bdry (\domain \setminus U_j)$ on the boundary of the localization neighborhood. The local $N$-$\SLEk$ itself
is the indexed collection
\begin{align*}
\SLEmeasure 
= \left(\SLEmeasure^{(\domain;\bdrypt_1,\ldots,\bdrypt_{2N})}_{(U_1,\ldots,U_{2N})}\right)_{\domain;\bdrypt_1,\ldots,\bdrypt_{2N};U_1,\ldots,U_{2N}} .
\end{align*}
This collection
of probability measures is required to satisfy 
conformal invariance, domain Markov property, and absolute continuity
of marginals with respect to the chordal $\SLEk$:
\begin{description}
\medskip
\item [(CI)] 
If $\confmap:\domain\rightarrow \domain'$ is a conformal map, then the measures are related by pushforward,
\begin{align*}
\confmap_*\SLEmeasure^{(\domain;\bdrypt_1,\ldots,\bdrypt_{2N})}_{(U_1,\ldots,U_{2N})} 
= \SLEmeasure^{(\confmap(\domain);\confmap(\bdrypt_1),\ldots,\confmap(\bdrypt_{2N}))}_{(\confmap(U_1),\ldots,\confmap(U_{2N}))} .
\end{align*}
\item [(DMP)]
Conditionally, given 
initial segments $\gamma^{(j)} |_{[0,\tau_j]}$ of the curves $(\gamma^{(1)},\ldots,\gamma^{(2N)})$
with law $\SLEmeasure^{(\domain;\bdrypt_1,\ldots,\bdrypt_{2N})}_{(U_1,\ldots,U_{2N})}$,
the remaining parts $(\gamma^{(1)}|_{[\tau_1,1]},\ldots,\gamma^{(2N)} |_{[\tau_{2N},1]})$ 
have the law $\SLEmeasure^{(\domain';\bdrypt_1',\ldots,\bdrypt_{2N}')}_{(U'_1,\ldots,U'_{2N})}$, where 
$\domain'$ is the component of $\domain \setminus \bigcup_{j=1}^{2N} \gamma^{(j)}[0,\tau_j]$
containing all tips $\bdrypt_j'=\gamma^{(j)}(\tau_j)$ on its boundary, and $U'_j = U_j \cap \domain'$.
\medskip
\item [(MARG)]
There exist smooth functions $b^{(j)}:\chamber_{2N}\to\bR$, for $j=1,\ldots,2N$, such that
for the domain $\domain = \bH$, boundary points $x_1 < \ldots < x_{2N}$, and their localization neighborhoods $U_1, \ldots,U_{2N}$,
the marginal on the $j$:th curve $\gamma^{(j)}$ under 
$\SLEmeasure^{(\bH;x_1,\ldots,x_{2N})}_{(U_1,\ldots,U_{2N})}$
is the following. 
Consider the Loewner equation~\eqref{eq: Loewner equation}
with driving process $(X_t)_{t \in [0,\sigma)}$ that solves the system of It\^o SDEs 
\begin{align}
\ud X_t=&\;\sqrt{\kappa}\,\ud B_t\,+\,b^{(j)}(X^{(1)}_t,\ldots,X^{(j-1)}_t,X_t,X^{(j+1)}_t\ldots,X^{(2N)}_t)\,\ud t\label{eq: SDE}\\
\ud X^{(i)}_t=&\;\frac{2\,\ud t}{X^{(i)}_t-X_t}\qquad\text{for}\;i\neq j\nonumber ,
\end{align}
where 
$X_0 = x_j$ and $X^{(i)}_0=x_i$ for $i \neq j$. 
As in the case of the chordal $\SLE{}_\kappa$, the solution ${g_t \colon \bH \setminus K_t \to \bH}$ to \eqref{eq: Loewner equation}
is a conformal map, and the growth process $(K_t)_{t\in[0,\sigma)}$ is generated by a curve
$\gamma \colon [0,\sigma) \to \cl{\bH}$. The processes \eqref{eq: SDE} are defined at least up to the
stopping time $\sigma_j = \inf\set{t \geq 0 \; \big| \; \gamma \in \bdry (\bH \setminus U_j)}$.
The marginal law of $\gamma^{(j)}$ is that of the random curve $\gamma |_{[0,\sigma_j]}$.
\end{description}
\begin{figure}
\includegraphics[scale=.7]{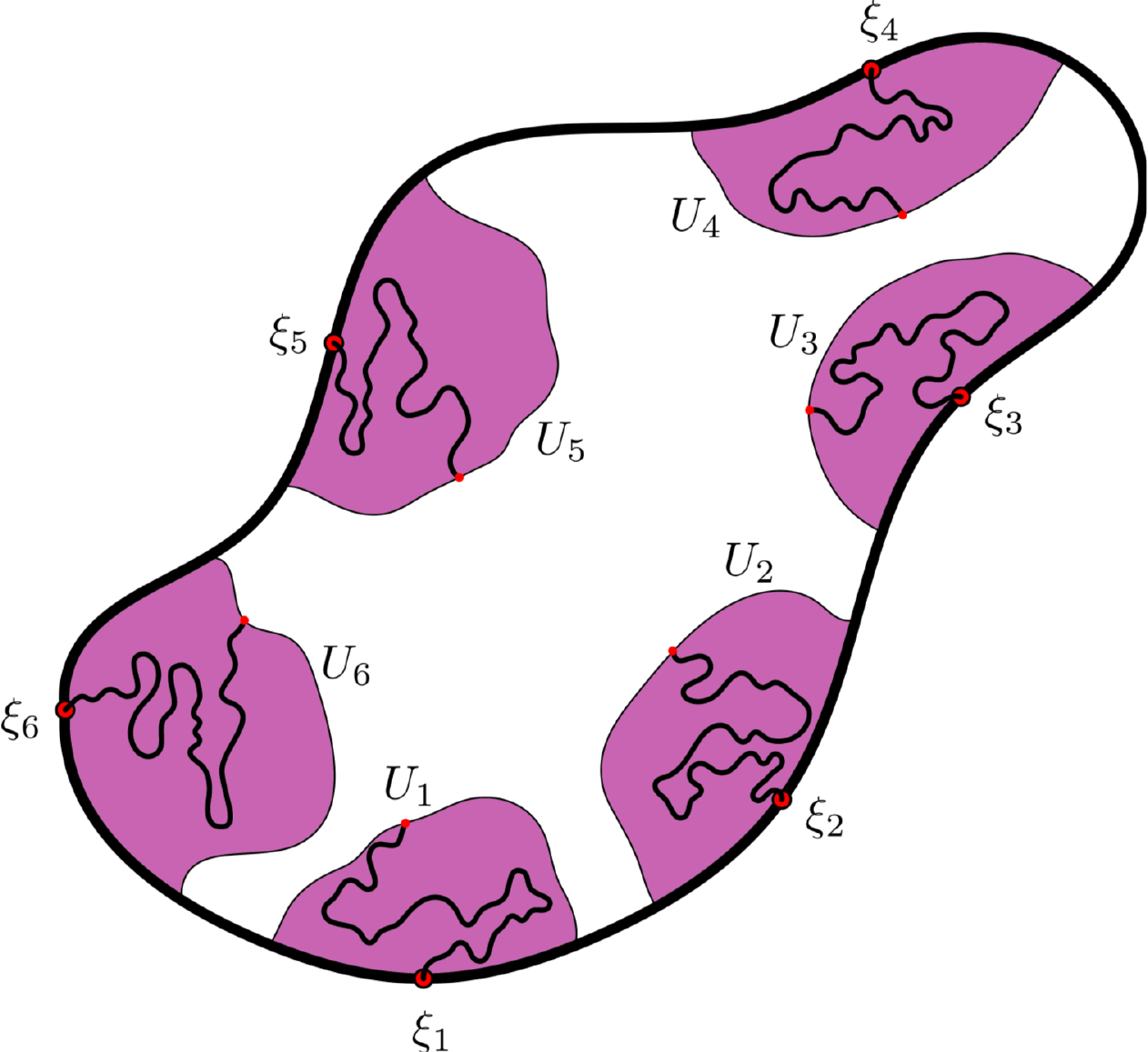}
\caption{\label{fig: local multiple SLE}
Schematic illustration of a local multiple $\SLE$.
}
\end{figure}

We will use the following result of Dub\'edat.
\begin{prop}[{\cite{Dubedat-commutation}}]\label{prop: Dubedats partition function}
If $\SLEmeasure = \big( \SLEmeasure^{(\domain;\bdrypt_1,\ldots,\bdrypt_{2N})}_{(U_1,\ldots,U_{2N})} \big)$ 
satisfies the conditions (CI), (DMP), and (MARG), then there exists a function $\PartF \colon \chamber_{2N} \to \bRpos$
satisfying the partial differential equations~\eqref{eq: multiple SLE PDEs}, such that the drift functions 
in (MARG) take the form 
\[ b^{(j)} = \kappa\,\frac{\partial_j\PartF}{\PartF} \qquad\text{for}\;j=1,\ldots,2N. \]
\end{prop}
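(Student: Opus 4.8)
The plan is to follow Dub\'edat's analysis of commuting $\SLE$s \cite{Dubedat-commutation}. First I would pass to the reference domain $\bH$ and work solely with the drift functions $b^{(1)},\ldots,b^{(2N)}$ provided by (MARG). Running the Loewner flow \eqref{eq: Loewner equation}--\eqref{eq: SDE} that erases the $j$:th curve, the images $(X_t^{(1)},\ldots,X_t^{(2N)})$ of the $2N$ marked points form a diffusion on $\chamber_{2N}$ with generator
\begin{align*}
\sL^{(j)} \;=\; \frac{\kappa}{2}\,\partial_j^2 \;+\; b^{(j)}(x)\,\partial_j \;+\; \sum_{i\neq j}\frac{2}{x_i-x_j}\,\partial_i .
\end{align*}
The target then factors into two claims: (a) the real $1$-form $\omega = \tfrac{1}{\kappa}\sum_{j=1}^{2N} b^{(j)}(x)\,\ud x_j$ is closed on $\chamber_{2N}$; and (b) the resulting potential solves \eqref{eq: multiple SLE PDEs}. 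Granting (a), since $\chamber_{2N}$ is a convex --- hence simply connected --- open subset of $\bR^{2N}$, $\omega$ is exact, $\omega = \ud \log \PartF$ for a smooth function $\PartF\colon\chamber_{2N}\to\bRpos$ (positive because it is the exponential of a real potential), and then $b^{(j)} = \kappa\,\partial_j\PartF / \PartF$ for every $j$ by construction.

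Both (a) and (b) I would extract from the commutation of growth processes that is forced by (DMP) and (CI). Fixing two indices $i\neq j$: growing curve $i$ to a stopping time, mapping the slit domain back to $\bH$ by (CI), and then continuing curve $j$ produces --- by (DMP) --- the same law on the union of the two localized initial segments as performing the steps in the opposite order. Turning this pathwise commutation into a statement about the diffusions, by a martingale-problem argument applied to the joint configuration process and letting the stopping times shrink to zero, yields that $[\sL^{(i)},\sL^{(j)}]$ is at most a first-order operator lying in $\mathrm{span}(\sL^{(i)},\sL^{(j)})$. Expanding $[\sL^{(i)},\sL^{(j)}]$ as a differential operator, the third-order part cancels identically; the coefficient of $\partial_i\partial_j$ equals $\kappa\,(\partial_i b^{(j)} - \partial_j b^{(i)})$, whose vanishing is precisely claim (a); and, after substituting $b^{(k)} = \kappa\,\partial_k\log\PartF$ into the remaining lower-order coefficients, the commutation condition becomes exactly the system \eqref{eq: multiple SLE PDEs} for $\PartF$, with the weight forced to the degenerate value $h = \tfrac{6-\kappa}{2\kappa}$ characteristic of $\SLEk$. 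Applying (CI) once more with M\"obius self-maps of $\bH$ in addition gives the M\"obius covariance \eqref{eq: multiple SLE Mobius covariance} of $\PartF$ (up to an irrelevant multiplicative constant), confirming the value of $h$ and of the vanishing scalar term.

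The main obstacle is the passage from the probabilistic commutation in (DMP) to the clean algebraic identity $[\sL^{(i)},\sL^{(j)}]\in\mathrm{span}(\sL^{(i)},\sL^{(j)})$: one must control the mutual interaction of the two growing curves, localize so the relevant stopping times stay strictly positive, and invoke uniqueness for the martingale problem of the joint configuration diffusion. The bookkeeping that fixes the scalar term in the resulting equations to zero --- equivalently, pins $h = \tfrac{6-\kappa}{2\kappa}$ --- is the other delicate point. All of this is carried out in \cite{Dubedat-commutation}.
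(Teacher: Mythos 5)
Your proposal is correct and follows essentially the same route as the paper: both rest entirely on Dub\'edat's commutation theorem \cite{Dubedat-commutation}, whose internal mechanism (closedness of the drift $1$-form forced by the vanishing of the $\partial_i\partial_j$-coefficient in $[\sL^{(i)},\sL^{(j)}]$, followed by the null-vector PDEs with $h=\tfrac{6-\kappa}{2\kappa}$ from the remaining coefficients) you sketch accurately, while deferring the hard probabilistic step to that reference just as the paper does. The only divergence is in the local-to-global step: where the paper extends the a priori locally defined $\PartF$ to all of $\chamber_{2N}$ by varying the starting configuration, you obtain the global positive potential at once from exactness of the closed $1$-form on the convex chamber --- an equivalent and arguably tidier way to handle that point.
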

\begin{proof}
Given the local multiple $\SLEk$ $\SLEmeasure$, we may take $\domain=\bH$ and
any $(x_1,\ldots,x_{2N})\in\chamber_{2N}$.  It follows from \cite[Theorem~7 and the remark after it]{Dubedat-commutation}
that there exists a function $\PartF$ satisfying~\eqref{eq: multiple SLE PDEs}
such that the drifts in~\eqref{eq: SDE} are
$b^{(j)}=\kappa\,\frac{\partial_j\PartF}{\PartF}$. 
The solution $\PartF$ is a priori defined in a neighborhood of the 
starting point $(x_1,\ldots,x_{2N})$, and determined only up to a 
multiplicative constant. By considering different starting points,
we see that such a function  $\PartF\colon\chamber_{2N}\rightarrow\bRpos$ exists on the entire chamber.
\end{proof}

We note that the localizations satisfy the following
consistency property under restriction to smaller localization neighborhoods.
This consistency allows always continuing the curves by a little,
but it is not sufficient for the definition of a global multiple $\SLE$.
\begin{prop}\label{prop: restriction consistency}
Suppose that both $(U_1,\ldots,U_{2N})$ and $(V_1,\ldots,V_{2N})$ are localization neighborhoods for $(\domain;\bdrypt_1,\ldots,\bdrypt_{2N})$,
and that $V_j \subset U_j$ for each $j$. Let $(\gamma^{(1)},\ldots,\gamma^{(2N)})$ be representatives of curves
with law $\SLEmeasure^{(\domain;\bdrypt_1,\ldots,\bdrypt_{2N})}_{(U_1,\ldots,U_{2N})}$, and
let $\sigma_j = \inf\set{t \geq 0 \; \big| \; \gamma^{(j)}(t) \in \bdry (\domain \setminus V_j)}$ be their exit times from the
smaller neighborhoods. Then $(\gamma^{(1)}|_{[0,\sigma_1]},\ldots,\gamma^{(2N)}|_{[0,\sigma_{2N}]})$ are representatives of curves
with law $\SLEmeasure^{(\domain;\bdrypt_1,\ldots,\bdrypt_{2N})}_{(V_1,\ldots,V_{2N})}$.
\end{prop}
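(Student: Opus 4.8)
The plan is to derive this restriction-consistency statement as a direct consequence of the domain Markov property (DMP) and the absolute-continuity description of the marginals (MARG), together with the Dub\'edat partition-function representation of the drifts (Proposition~\ref{prop: Dubedats partition function}). By conformal invariance (CI) it suffices to treat the reference domain $\domain = \bH$ with marked points $x_1 < \cdots < x_{2N}$, since a conformal map $\confmap$ carries the localization neighborhoods $(U_j)$ and $(V_j)$ with $V_j \subset U_j$ to neighborhoods with $\confmap(V_j) \subset \confmap(U_j)$, carries exit times to exit times, and pushes the law $\SLEmeasure^{(\bH;x_1,\ldots,x_{2N})}_{(U_1,\ldots,U_{2N})}$ forward to $\SLEmeasure^{(\domain;\bdrypt_1,\ldots,\bdrypt_{2N})}_{(\confmap(U_1),\ldots,\confmap(U_{2N}))}$.

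On $\bH$, I would argue as follows. First, the law of the restricted curves $(\gamma^{(1)}|_{[0,\sigma_1]},\ldots,\gamma^{(2N)}|_{[0,\sigma_{2N}]})$ under $\SLEmeasure^{(\bH;x_1,\ldots,x_{2N})}_{(U_1,\ldots,U_{2N})}$ is a candidate measure; I want to show it equals $\SLEmeasure^{(\bH;x_1,\ldots,x_{2N})}_{(V_1,\ldots,V_{2N})}$, which by the characterization results of Appendix~\ref{app: Local multiple SLEs} (Theorem~\ref{thm: local multiple SLEs} and Proposition~\ref{prop: Dubedats partition function}) is pinned down by the same partition function $\PartF$. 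The key point is that the marginal on each $\gamma^{(j)}$ is described in (MARG) by the SDE system~\eqref{eq: SDE} with drift $b^{(j)} = \kappa\,\partial_j\PartF/\PartF$, run until the exit time $\sigma_j$ from the localization neighborhood. Because $V_j \subset U_j$, the exit time $\sigma_j^V$ from $V_j$ satisfies $\sigma_j^V \leq \sigma_j^U$ almost surely, and the Loewner chain driven by the solution of~\eqref{eq: SDE} stopped at $\sigma_j^V$ is exactly the restriction of the chain stopped at $\sigma_j^U$. Hence the marginal of the restricted law on $\gamma^{(j)}$ coincides with the marginal prescribed by (MARG) for the localization $(V_1,\ldots,V_{2N})$; this verifies condition (MARG) for the restricted law with the \emph{same} drift functions, hence the same $\PartF$. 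Conditions (CI) and (DMP) for the restricted collection follow from the corresponding properties of $\SLEmeasure$ restricted to smaller stopping times --- (DMP) in particular is stable under further stopping, since stopping at $\sigma_j^V \wedge \tau_j$ and then continuing is the same as stopping at $\sigma_j^V$ directly, by the strong Markov structure built into (DMP). Therefore the restricted law satisfies (CI), (DMP), and (MARG) with partition function $\PartF$, and by the uniqueness part of the classification it equals $\SLEmeasure^{(\bH;x_1,\ldots,x_{2N})}_{(V_1,\ldots,V_{2N})}$.

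An alternative, more self-contained route avoids invoking the full classification: one shows directly that, conditionally on the restricted curves $(\gamma^{(j)}|_{[0,\sigma_j^V]})$, the remaining segments $(\gamma^{(j)}|_{[\sigma_j^V,1]})$ have the law dictated by (DMP), i.e. $\SLEmeasure^{(\domain'';\bdrypt_1'',\ldots,\bdrypt_{2N}'')}_{(U_1'',\ldots,U_{2N}'')}$ with $U_j'' = U_j \cap \domain''$; combining this with the observation that the marginal of each restricted curve is the (MARG)-prescribed process stopped at the exit time from $V_j$ shows the restricted law is precisely $\SLEmeasure^{(\domain;\bdrypt_1,\ldots,\bdrypt_{2N})}_{(V_1,\ldots,V_{2N})}$, since that measure is by definition the one with these marginals and this Markov structure.

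The main obstacle is a careful bookkeeping of stopping times and the measurability/coupling argument showing that ``run the (MARG) SDE until exit from $U_j$, then restrict to before exit from $V_j$'' produces the same process as ``run the SDE until exit from $V_j$''. This is essentially automatic for a single Loewner chain driven by a diffusion --- stopping a process at a smaller stopping time commutes with everything --- but in the multiple-$\SLE$ setting one must check that the auxiliary coordinates $X^{(i)}_t$ in~\eqref{eq: SDE} and the drift $b^{(j)}$, which depend on all marked points, are evaluated consistently; this holds because the drift is the deterministic function $\kappa\,\partial_j\PartF/\PartF$ of the current configuration, independent of which localization one uses. The remaining ingredient, that (DMP) survives further stopping, is a routine application of the tower property together with the strong Markov property implicit in the statement of (DMP), and I would state it as such without belaboring the filtration details.
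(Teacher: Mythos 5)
Your proposal is correct and follows essentially the same route as the paper: the paper's (two-line) proof likewise observes that the (MARG) marginal of each curve is the same SDE-driven Loewner chain merely stopped at the earlier exit time, and then obtains the joint law inductively from (DMP). Your detour through the uniqueness part of the classification is harmless but not needed, since the identification of the restricted marginals plus (DMP) already pins down the law directly.
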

\begin{proof}
The marginal on the $j$:th curve described in (MARG) clearly satisfies such a restriction consistency.
The consistency for all curves follows inductively by the domain Markov property (DMP).
\end{proof}

\subsection{\label{subsec: Sampling one by one from marginals}Sampling one by one from marginals}
Property (MARG) describes the marginal law of one of the $2N$ curves, and property (DMP)
gives the conditional law of the others, given one. One thus obtains a procedure for sampling a local
multiple $\SLEk$. In this section, we first describe the marginal law of $\gamma^{(j)}$ by 
its Radon-Nikodym derivative with respect to the chordal $\SLEk$, explicitly expressed in terms of $\PartF$.
We then formalize the sampling procedure, which is now meaningful with just the knowledge of $\PartF$.

Recall that by Proposition~\ref{prop: Dubedats partition function}, the drift functions
$b^{(j)}$ are necessarily of the form $b^{(j)} = \kappa\,\frac{\partial_j\PartF}{\PartF}$.
Let $(X_t)_{t \in [0,\tau]}$ and $\big(X^{(i)}_t\big)_{t\in[0,\tau]}$ for $i \neq j$ 
solve the SDEs~\eqref{eq: SDE} with these drifts, and let $\gamma\colon[0,\tau]\rightarrow\cl{\bH}$ be 
the random curve with the Loewner driving function $X_t$ as in (MARG),
and $\tau$ any stopping time such that, for some $\varepsilon>0$ and all $0\leq t\leq\tau$, we have
\begin{align}\label{eq: condition for stopping time}
|X^{(i)}_t-X^{(k)}_t| \geq \varepsilon\qquad\text{for all}\quad i\neq k\qquad \text{ and } \qquad
|X^{(i)}_t-X_t| \geq \varepsilon\qquad\text{for all}\quad i .
\end{align}
Then, the law $\SLEmeasure_\gamma$ of the curve $\gamma$ is absolutely continuous with respect to
an initial segment of the chordal $\SLEk$, with Radon-Nikodym derivative
\begin{align}\label{eq: marginals}
\frac{\ud\SLEmeasure_\gamma}{\ud\SLEmeasure_{\chordal}^{(\bH;X_0,\infty)}}
=\,\prod_{i\neq j}(g'(X^{(i)}_0))^h\;\times\;
    \frac{\PartF \big( g(X^{(1)}_0),\ldots,g(X^{(j-1)}_0),g(\gamma(\tau)),g(X^{(j+1)}_0),\ldots,g(X^{(2N)}_0) \big)}{\PartF\big( X^{(1)}_0,\ldots,X^{(j-1)}_0,X_0,X^{(j+1)}_0,\ldots,X^{(2N)}_0 \big)},
\end{align}
where $g:\bH\setminus K\to\bH$ is the 
conformal map such that $g(z) = z + \oo(1)$ as $z\to\infty$, $K$ is the hull of $\gamma$
and $h=\frac{6-\kappa}{2\kappa}$. 
In fact, $\gamma$ is obtained by Girsanov re-weighting of the chordal $\SLEk$ by the martingale
\begin{align}
t\;\mapsto&\;M_t\label{eq: martingale}\\
=&\prod_{i\neq j}(g_t'(X^{(i)}_0))^h\;\times\;
    \PartF \Big(g_t(X^{(1)}_0),\ldots,g_t(X^{(j-1)}_0),\sqrt{\kappa}B_t+X_0,g_t(X^{(j+1)}_0),\ldots,g_t(X^{(2N)}_0) \Big)\nonumber
\end{align}
where $g_t$ is the solution to the Loewner equation~\eqref{eq: Loewner equation}
with the driving function $\driving_t = \sqrt{\kappa}B_t+X_0$.

In view of the above, given 
points $X^{(1)}_0 < \cdots < X^{(j-1)}_0 < X_0 < X^{(j+1)}_0 < \cdots < X^{(2N)}_0$,
a neighborhood $U$ of $X_0$ not containing the points $X^{(i)}_0$,
and a positive function $\sZ$, we construct a random curve $\gamma$ by
the weighting \eqref{eq: marginals} of an initial segment of the chordal $\SLEk$ up
to the stopping time
\begin{align*}
\tau=\sigma^{(U)}=\inf\set{t>0\;\big|\;\gamma(t)\in\bdry (\bH \setminus U) }.
\end{align*}
Note that this stopping time 
satisfies the condition~\eqref{eq: condition for stopping time}, by standard harmonic measure estimates.

\begin{proc}\label{proc: one by one}
Given a positive function $\PartF\colon\chamber_{2N}\rightarrow\bRpos$, 
a random $2N$-tuple $(\gamma^{(1)},\ldots,\gamma^{(2N)})$ of curves 
in $\bH$ starting from $(x_1,\ldots,x_{2N})$ with localization neighborhoods 
$(U_1,\ldots,U_{2N})$ 
can be sampled according to the following procedure.
\begin{itemize}
\item Select an order in which the curves will be sampled, encoded in a permutation $p \in \SymmGrp_{2N}$.
\item Sample the first curve in the selected order, $\gamma^{(j)}$ for $j=p(1)$,
from the measure~\eqref{eq: marginals} with $X_0=x_j$ and $X_0^{(i)} = x_i$ for $i \neq j$, and with
the stopping time $\tau = \sigma^{(U_j)}$.
\item Suppose the first $k-1$ curves 
$\gamma^{(p(1))}, \ldots, \gamma^{(p(k-1))}$
in the selected order have been sampled, and let $j=p(k)$.
Let $K$ be the hull of $\gamma^{(p(1))} \cup \cdots \cup \gamma^{(p(k-1))}$, 
and $G \colon \bH \setminus K \to \bH$
the conformal map such that $G(z) = z + \oo(1)$ as $z \to \infty$. 
Let $X_0 = G(x_j)$, $X^{(p(i))}_0 = G(x_i)$ for $i = k+1, \ldots, 2N$,
and let $X^{(p(l))}_0$ be the image of the tip of $\gamma^{(p(l))}$ 
for $l = 1, \ldots, k-1$.
Construct the $k$:th curve as $\gamma^{(j)} = G^{-1} \circ \gamma$ using
the curve $\gamma$ sampled from the measure~\eqref{eq: marginals}
 with the stopping time $\tau = \sigma^{(G(U_j))}$.
\end{itemize}
\end{proc}

By the local commutation of Dub\'edat, \cite{Dubedat-commutation},
this procedure results in the same law of $(\gamma^{(1)}, \ldots, \gamma^{(2N)})$ independently of the
sampling order $p$, provided that $\PartF$ is a solution to the system~\eqref{eq: multiple SLE PDEs}.

\subsection{\label{subsec: Partition function and PDEs}Partition function and PDEs}

We will next state the main theorem towards a construction of multiple $\SLE$
processes. The most profound parts of the proof rely on Dub\'edat's commutation of $\SLE$s \cite{Dubedat-commutation}.
In summary, the theorem states that 
multiple $\SLEk$ partition functions $\PartF$ 
correspond to local multiple $\SLEk$ processes, via the sampling procedure~\ref{proc: one by one},
and two partition functions correspond to the same local multiple $\SLEk$ if and only if they are proportional to each other.
More precisely, the set
\begin{align*}
\set{\PartF(x_1,\ldots,x_{2N})\;|\;\PartF\colon\chamber_{2N}\rightarrow\bRpos\;\text{is a positive solution to \eqref{eq: multiple SLE PDEs}~--~\eqref{eq: multiple SLE Mobius covariance}}\;}\;\big/\;\bRpos
\end{align*}
corresponds one-to-one with the set of local multiple $\SLE$s $\PR$.
Moreover, convex combinations of partition functions correspond to convex combinations of
localizations of multiple $\SLE$s as probability measures.

\begin{thm}\label{thm: local multiple SLEs}
\
\begin{description}
\item[(a)] Suppose $\PartF\colon\chamber_{2N}\rightarrow\bRpos$ is 
a positive solution to the system~\eqref{eq: multiple SLE PDEs}~--~\eqref{eq: multiple SLE Mobius covariance}. 
Then the random collection of curves obtained by 
Procedure~\ref{proc: one by one} is a local multiple $\SLEk$. 
Two functions $\PartF,\tilde{\PartF}$ give rise to the same local
multiple $\SLEk$ if and only if $\PartF=\const\times\tilde{\PartF}$.

\medskip

\item[(b)] Suppose $\SLEmeasure$ is a local multiple $\SLEk$. Then 
there exists a positive solution 
$\PartF\colon\chamber_{2N}\rightarrow\bRpos$ to the 
system~\eqref{eq: multiple SLE PDEs}~--~\eqref{eq: multiple SLE Mobius covariance}, 
such that for any $j=1,\ldots,2N$,
the drift in~\eqref{eq: SDE} is given by 
$b^{(j)}=\kappa\,\frac{\partial_j\PartF}{\PartF}$. 
Such a function $\PartF$ is determined up to a multiplicative constant.

\medskip

\item[(c)] Suppose that $\PartF_1$ and $\PartF_2$ are positive solutions to the 
system~\eqref{eq: multiple SLE PDEs}~--~\eqref{eq: multiple SLE Mobius covariance} and
\begin{align*}
\PartF=r\,\PartF_1+(1-r)\,\PartF_2,
\end{align*}
with $0\leq r\leq1$. 
Denote by $\;\SLEmeasure=\left(\SLEmeasure^{(\domain;\bdrypt_1,\ldots,\bdrypt_{2N})}_{(U_1,\ldots,U_{2N})}\right)$, 
$\SLEmeasure_1=\left((\SLEmeasure_1)^{(\domain;\bdrypt_1,\ldots,\bdrypt_{2N})}_{(U_1,\ldots,U_{2N})}\right)$ and 
$\SLEmeasure_2=\left((\SLEmeasure_2)^{(\domain;\bdrypt_1,\ldots,\bdrypt_{2N})}_{(U_1,\ldots,U_{2N})}\right)$ the 
local multiple $\SLE$s associated to $\PartF$, $\PartF_1$ and $\PartF_2$, 
respectively. Fix the domain $\domain$ and the marked points 
$(\bdrypt_1,\ldots,\bdrypt_{2N})$, and let 
$(U_1,\ldots,U_{2N})$ 
be any localization neighborhoods. Then the probability measure 
associated to $\PartF$ is obtained as the following convex combination
\begin{align}
\SLEmeasure^{(\domain;\bdrypt_1,\ldots,\bdrypt_{2N})}_{(U_1,\ldots,U_{2N})}\;=&\;r\,\frac{\PartF_1(\confmap(\bdrypt_1),\ldots,\confmap(\bdrypt_{2N}))}{\PartF(\confmap(\bdrypt_1),\ldots,\confmap(\bdrypt_{2N}))}\;(\SLEmeasure_1)^{(\domain;\bdrypt_1,\ldots,\bdrypt_{2N})}_{(U_1,\ldots,U_{2N})}\,\label{eq: convex combination}\\
&+\,(1-r)\,\frac{\PartF_2(\confmap(\bdrypt_1),\ldots,\confmap(\bdrypt_{2N}))}{\PartF(\confmap(\bdrypt_1),\ldots,\confmap(\bdrypt_{2N}))}\;(\SLEmeasure_2)^{(\domain;\bdrypt_1,\ldots,\bdrypt_{2N})}_{(U_1,\ldots,U_{2N})},\nonumber
\end{align}
where $\confmap\colon \domain\rightarrow\bH$ is any conformal map such that 
$\phi^{-1}(\infty)$ belongs to the positively oriented boundary segment 
between $\bdrypt_{2N}$ and $\bdrypt_1$.
\end{description}
\end{thm}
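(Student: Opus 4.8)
The plan is to take Dub\'edat's local commutation theorem \cite{Dubedat-commutation} --- which guarantees that Procedure~\ref{proc: one by one} produces an order-independent law of $(\gamma^{(1)},\ldots,\gamma^{(2N)})$ whenever $\PartF$ solves the PDEs~\eqref{eq: multiple SLE PDEs} --- together with Proposition~\ref{prop: Dubedats partition function} as the hard analytic inputs, and to derive (a), (b), (c) from them by soft arguments: Girsanov's theorem, the conformal cocycle behaviour of the weights~\eqref{eq: marginals}, and the linearity of these weights in $\PartF$. For (a): given a positive M\"obius covariant solution $\PartF$, run Procedure~\ref{proc: one by one} in $\bH$; by commutation the resulting law, which I denote $\SLEmeasure[\PartF]$, does not depend on the sampling order. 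Transport it to an arbitrary $(\domain;\bdrypt_1,\ldots,\bdrypt_{2N})$ by pushforward under $\confmap^{-1}$ for a conformal $\confmap\colon\domain\to\bH$; two such maps differ by an element of $\mathrm{Aut}(\bH)$, and by~\eqref{eq: multiple SLE Mobius covariance} the weight~\eqref{eq: marginals} is invariant under the simultaneous M\"obius action on all arguments, so the pushforward is well defined and (CI) holds by construction. For (MARG), order-independence lets us sample the $j$:th curve first, so its marginal is the weighted law~\eqref{eq: marginals}; Girsanov re-weighting of the chordal $\SLEk$ by the local martingale~\eqref{eq: martingale} turns it into the Loewner curve driven by~\eqref{eq: SDE} with $b^{(j)}=\kappa\,\partial_j\PartF/\PartF$, smooth since $\PartF>0$. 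For (DMP), combine the domain Markov property of chordal $\SLEk$ with the multiplicative cocycle identity for~\eqref{eq: marginals} under composition of uniformizing maps, using~\eqref{eq: multiple SLE Mobius covariance} once more to recognize the conditional law as $\SLEmeasure[\PartF]$ in the slit domain; Proposition~\ref{prop: restriction consistency} makes the statement independent of the stopping times. Finally, by (MARG) the drift functions determine all marginals and hence, through Procedure~\ref{proc: one by one}, the whole local multiple $\SLEk$, and conversely the law determines the drifts; since $b^{(j)}=\kappa\,\partial_j\log\PartF$, equality of all drifts of $\PartF$ and $\tilde{\PartF}$ means $\nabla\log(\PartF/\tilde{\PartF})\equiv0$ on the connected chamber $\chamber_{2N}$, i.e. $\PartF=\const\times\tilde{\PartF}$.

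For (b), Proposition~\ref{prop: Dubedats partition function} provides a positive $\PartF\colon\chamber_{2N}\to\bRpos$ solving~\eqref{eq: multiple SLE PDEs} with $b^{(j)}=\kappa\,\partial_j\PartF/\PartF$, unique up to a multiplicative constant; what remains is to see that $\PartF$ may be taken to satisfy~\eqref{eq: multiple SLE Mobius covariance}. For $\Mob\in\mathrm{Aut}(\bH)$, conformal invariance (CI) together with the standard transformation rule of a Loewner chain under a M\"obius map shows that $\PartF_\Mob(x):=\PartF(\Mob(x))\prod_i\Mob'(x_i)^h$ has the same drift functions as $\PartF$, so $\PartF_\Mob=c_\Mob\,\PartF$ for some $c_\Mob>0$. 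By the chain rule for $\Mob'$, $\Mob\mapsto c_\Mob$ is a continuous homomorphism $\mathrm{Aut}(\bH)\cong\mathrm{PSL}_2(\bR)\to\bRpos$; as $\mathrm{PSL}_2(\bR)$ is a perfect group it admits no nontrivial such homomorphism, whence $c_\Mob\equiv1$ and~\eqref{eq: multiple SLE Mobius covariance} holds for $\PartF$.

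For (c), first note that $\PartF=r\,\PartF_1+(1-r)\,\PartF_2$ is again a positive M\"obius covariant solution of~\eqref{eq: multiple SLE PDEs}, so by (a) it has an associated local multiple $\SLEk$. By (CI) it suffices to prove~\eqref{eq: convex combination} for $\domain=\bH$ with $x_i=\confmap(\bdrypt_i)$ and then transport by $\confmap^{-1}$; the ratios $\PartF_\ell(\confmap(\bdrypt_1),\ldots,\confmap(\bdrypt_{2N}))/\PartF(\confmap(\bdrypt_1),\ldots,\confmap(\bdrypt_{2N}))$ are unchanged under $\confmap\mapsto\Mob\circ\confmap$ because the factors $\prod_i\Mob'$ from~\eqref{eq: multiple SLE Mobius covariance} cancel, which also explains why the coefficients depend only on the conformal moduli. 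Fix a sampling order and run Procedure~\ref{proc: one by one}; relative to the fixed reference measure $\mathrm{Ref}$ in which each curve is grown in turn from an ordinary chordal $\SLEk$ in the current slit domain up to its localization boundary (a measure not involving $\PartF$), the law $\SLEmeasure[\PartF]$ has density $\prod_k W_k$, where $W_k$ is the weight~\eqref{eq: marginals} of the $k$:th curve, a $\PartF$-independent conformal factor times $\PartF(\text{configuration after curve }k)/\PartF(\text{configuration before curve }k)$. Since the configuration after curve $k$ is the starting configuration of curve $k+1$, the product telescopes to
\begin{align*}
\frac{\ud\SLEmeasure[\PartF]}{\ud\mathrm{Ref}}(\gamma^{(1)},\ldots,\gamma^{(2N)})\;=\;\Phi(\gamma^{(1)},\ldots,\gamma^{(2N)})\,\frac{\PartF\big(\psi(\gamma^{(1)},\ldots,\gamma^{(2N)})\big)}{\PartF(x_1,\ldots,x_{2N})},
\end{align*}
with $\Phi$ a $\PartF$-independent functional and $\psi(\gamma^{(1)},\ldots,\gamma^{(2N)})\in\chamber_{2N}$ the deterministic final configuration, whose coordinates stay distinct and ordered because the $U_j$ are disjoint. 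Substituting $\PartF=r\,\PartF_1+(1-r)\,\PartF_2$ in the numerator and splitting by linearity yields
\begin{align*}
\frac{\ud\SLEmeasure[\PartF]}{\ud\mathrm{Ref}}\;=\;r\,\frac{\PartF_1(x_1,\ldots,x_{2N})}{\PartF(x_1,\ldots,x_{2N})}\,\frac{\ud\SLEmeasure[\PartF_1]}{\ud\mathrm{Ref}}\;+\;(1-r)\,\frac{\PartF_2(x_1,\ldots,x_{2N})}{\PartF(x_1,\ldots,x_{2N})}\,\frac{\ud\SLEmeasure[\PartF_2]}{\ud\mathrm{Ref}},
\end{align*}
a convex combination of probability measures with coefficients summing to $1$. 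Since $\SLEmeasure[\PartF_\ell]=\SLEmeasure_\ell$ in $\bH$, transporting by $\confmap^{-1}$ gives~\eqref{eq: convex combination}.

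The main obstacle is the verification of (DMP) in part (a): one has to merge the domain Markov property of the chordal $\SLEk$ with the conformal cocycle behaviour of the weights~\eqref{eq: marginals} in a tower-of-conditioning argument. Underlying all three parts, the order-independence of Procedure~\ref{proc: one by one} --- Dub\'edat's commutation theorem --- is used as a black box.
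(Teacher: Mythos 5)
Your proposal is correct and follows essentially the same route as the paper: Dub\'edat's commutation \cite{Dubedat-commutation} is the black box behind order-independence of Procedure~\ref{proc: one by one} and (DMP), the marginal (MARG) comes from Girsanov re-weighting by the martingale~\eqref{eq: martingale}, part (b) rests on Proposition~\ref{prop: Dubedats partition function}, and part (c) on linearity of the weights in $\PartF$. Two sub-steps are handled differently. For the M\"obius covariance in (b), the paper invokes the standard coordinate-change computation of \cite[Proposition~6.1]{Kytola-local_mgales} and \cite{Graham-multiple_SLEs}, whereas you conclude $c_\Mob\equiv 1$ from the absence of nontrivial continuous homomorphisms $\mathrm{PSL}_2(\bR)\to\bRpos$; this works (equivalently, the three first-order $\sl_2$ operators can only annihilate $\PartF$ rather than act by a nonzero scalar, by the commutation relations), but you should note that for a fixed configuration the admissible $\Mob$ --- those preserving the order of the $x_i$ --- do not form a group, so $\Mob\mapsto c_\Mob$ is only a local homomorphism and one must pass through the Lie algebra or the universal cover to conclude. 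For (c), the paper reduces to the marginal of a single curve, writes $M_\tau/M_0$ as a convex combination of $M^{(1)}_\tau/M^{(1)}_0$ and $M^{(2)}_\tau/M^{(2)}_0$, and appeals to (DMP) and order-independence to propagate this to the joint law; your telescoping of the step-wise Radon--Nikodym derivatives into a $\PartF$-independent functional times $\PartF(\text{final})/\PartF(\text{initial})$ makes that propagation explicit, and is if anything a cleaner way to see that the decomposition holds for the full $2N$-tuple rather than only for one marginal.
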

\begin{proof}
\
\begin{description}
\item[(a)] In $\bH$, for any marked points $(x_1,\ldots,x_{2N})$ and localization neighborhoods $(U_1,\ldots,U_{2N})$, 
Procedure~\ref{proc: one by one} defines a random $2N$-tuple 
$(\gamma^{(1)},\ldots,\gamma^{(2N)})$ of curves. From the assumption~\eqref{eq: multiple SLE PDEs}
and the local commutation  of~\cite[Lemma~6 \& Theorem~7]{Dubedat-commutation} it follows that
the law of $(\gamma^{(1)},\ldots,\gamma^{(2N)})$ is independent of the chosen sampling order $p$. 
From the assumption~\eqref{eq: multiple SLE Mobius covariance} and a standard coordinate change~\cite[Proposition~6.1]{Kytola-local_mgales}
it follows that the result of the sampling is M\"obius invariant, that is, (CI) holds for any $\confmap\colon\bH\rightarrow\bH$ which preserves the order
of the marked points.
Thus, we may use (CI) to define the laws of the random $2N$-tuples of curves in any other simply connected domain.
%
%
The condition (MARG) follows directly from Procedure~\ref{proc: one by one}, by choosing the $j$:th curve to be sampled first
to obtain~\eqref{eq: SDE}. Finally, the condition (DMP) follows from the local commutation 
of~\cite[Proposition~5 \& Lemma~6]{Dubedat-commutation}.

\medskip

\item[(b)]
Given the local multiple $\SLEk$ $\SLEmeasure$, the infinitesimal commutation
of \cite{Dubedat-commutation} summarized in Proposition~\ref{prop: Dubedats partition function}
implies the existence of a positive solution $\sZ$ 
to~\eqref{eq: multiple SLE PDEs} which determines
the drifts $b^{(j)}=\kappa\,\frac{\partial_j\PartF}{\PartF}$ in~\eqref{eq: SDE}.
The function $\sZ$ is determined up to a multiplicative constant.
By (CI) for M\"obius transformations $\confmap\colon\bH\rightarrow\bH$ and a standard $\SLE$ calculation
--- see \cite[Proposition~6.1]{Kytola-local_mgales} and \cite{Graham-multiple_SLEs} ---
the function $\PartF$ also satisfies
M\"obius covariance~\eqref{eq: multiple SLE Mobius covariance}.

\medskip

\item[(c)] It suffices to show 
the convex combination property~\eqref{eq: convex combination} for 
the marginal of a single curve $\gamma^{(j)}$, by (DMP), (CI), and the 
independence of Procedure~\ref{proc: one by one} of the sampling order.
To describe that marginal, we 
compare the conformal image $\gamma=\confmap\circ\gamma^{(j)}$ with the
chordal $\SLEk$ in $\bH$. The law of $\gamma$ under the localization 
$\SLEmeasure^{(\domain;\bdrypt_1,\ldots,\bdrypt_{2N})}_{(U_1,\ldots,U_{2N})}$ of $\SLEmeasure$
has Radon-Nikodym derivative of type~\eqref{eq: marginals} with 
respect to an initial segment of the chordal $\SLEk$. 

Denote by $t\mapsto M_t$, $M_t^{(1)}$ and $M_t^{(2)}$ the chordal $\SLEk$
martingales~\eqref{eq: martingale} associated to the partition functions
$\PartF$, $\PartF_1$ and $\PartF_2$, respectively, where
the starting points are $X_0 = \confmap(\bdrypt_j)$ and $X^{(i)}_0=\confmap(\bdrypt_i)$, for $i\neq j$.
The Radon-Nikodym derivative~\eqref{eq: marginals}
of the law of $\gamma$ can be written in terms of 
the martingales at a certain stopping time $\tau$ as
\begin{align*}
\frac{M_\tau}{M_0}=r\,\frac{\PartF_1(X^{(1)}_0,\ldots,X^{(2N)}_0)}{\PartF(X^{(1)}_0,\ldots,X^{(2N)}_0)}\times\frac{M^{(1)}_\tau}{M^{(1)}_0}+(1-r)\,\frac{\PartF_2(X^{(1)}_0,\ldots,X^{(2N)}_0)}{\PartF(X^{(1)}_0,\ldots,X^{(2N)}_0)}\times\frac{M^{(2)}_\tau}{M^{(2)}_0}.
\end{align*}
The ratios $M^{(1)}_\tau / M^{(1)}_0$ and $M^{(2)}_\tau / M^{(2)}_0$ are the corresponding Radon-Nikodym derivatives of $\gamma$ under the
localizations of $\SLEmeasure_1$ and $\SLEmeasure_2$, respectively.
The convex combination property~\eqref{eq: convex combination} follows.
\end{description}
\end{proof}

\subsection{\label{subsec: asymptotics of the partition functions}Asymptotics of the partition functions}
Theorem~\ref{thm: local multiple SLEs} explains
the requirements \eqref{eq: multiple SLE PDEs} and \eqref{eq: multiple SLE Mobius covariance}
for the multiple $\SLE$ partition functions $\sZ$.
Our objective is to construct the partition functions corresponding to
the extremal multiple $\SLE$s with deterministic connectivities described by link
patterns $\alpha$. All $\sZ_\alpha$, for $\alpha \in \LP_N$, are required to satisfy
the same partial differential equations \eqref{eq: multiple SLE PDEs} and covariance
\eqref{eq: multiple SLE Mobius covariance}, but the boundary conditions depend on $\alpha$,
as formulated in the asymptotics requirements~\eqref{eq: multiple SLE asymptotics}.
This section pertains to the probabilistic justification of these asymptotics requirements.

For the pure geometries of multiple $\SLE$s, we want the curves to meet pairwise according to a given connectivity $\alpha$.
For a local multiple $\SLE$, meeting of curves is not meaningful, but it suggests clear
requirements for the processes. Indeed, in terms of the
processes $X^{(1)}_t, \ldots, X^{(j-1)}_t,X_t,X^{(j+1)}_t, \ldots, X^{(2N)}_t$,
the differences $|X_t - X^{(i)}_t|$ quantify suitable conformal distances of the tip $\gamma^{(j)}(t)$ of the $j$:th curve to the 
marked points $X^{(i)}_0$ --- the difference is a renormalized limit of
the harmonic measure of the boundary segment between the tip and the marked point, seen from infinity.
Therefore, the connectivity $\alpha$ should determine whether or not it is possible for the difference
process $|X_t - X^{(i)}_t|$ to hit zero. As the drift $b^{(j)}$ of the process $X_t$ depends on $\sZ$
(Proposition~\ref{prop: Dubedats partition function}),
the possibility of hitting zero is in fact encoded in the asymptotics of $\sZ$.

Concerning the possible asymptotics, recall from Theorem~\ref{thm: SCCG correspondence special case} that
for the solutions $\sZ = \sF[v]$, the limit
\begin{align}\label{eq: generic asymptotics limit}
\lim_{x_j,x_{j+1} \to \xi} \frac{\sZ(x_1, \ldots, x_{2N})}{(x_{j+1} - x_j)^{\frac{\kappa-6}{\kappa}}}
\end{align}
always exists. If $\hat{\pi}_j(v) \neq 0$, then this limit is non-zero.
If $\hat{\pi}_j(v) = 0$, then the above limit vanishes, but a slightly more precise formulation of the
``spin chain~--~Coulomb gas correspondence'' of \cite{KP-conformally_covariant_boundary_correlation_functions_with_a_quantum_group}
shows that in that case the limit
\begin{align}\label{eq: subleading asymptotics limit}
\lim_{x_j,x_{j+1} \to \xi} \frac{\sZ(x_1, \ldots, x_{2N})}{(x_{j+1} - x_j)^{\frac{2}{\kappa}}}
\end{align}
exists. In fact, if $v \neq 0$, then exactly one of the limits above exists and is non-zero.
This property can also be derived directly from the partial differential equations \eqref{eq: multiple SLE PDEs},
see \cite[Theorem~2, part~1]{FK-solution_space_for_a_system_of_null_state_PDEs_4}.

Morally, in the two possible cases, the difference $|X_t - X^{(j+1)}_t|$
is locally absolutely continuous with respect to a Bessel process.
The dimension of the Bessel process is different depending on which of the limits
\eqref{eq: generic asymptotics limit} or \eqref{eq: subleading asymptotics limit} is non-zero,
so that hitting zero is possible in the former case and impossible in the latter --- see
Proposition~\ref{prop: local abs continuity wrt Bessel processes} below.
Conditions~\eqref{eq: multiple SLE asymptotics} express that the former case should happen
if $\link{j}{j+1} \in \alpha$, and the latter if $\link{j}{j+1} \notin \alpha$.
Moreover, in the former case the conditions~\eqref{eq: multiple SLE asymptotics}
specify the form of the limit~\eqref{eq: generic asymptotics limit}.
To motivate this requirement, in Proposition~\ref{prop: cascade relation} we show that it implies a cascade property for the
behavior of the other curves in the limit of collapsing the marked points $x_j$ and $x_{j+1}$.


Consider now the SDE defining the $j$:th curve,
and the process of the distance of its tip to the next marked points $X^{(j-1)}_0$
and $X^{(j+1)}_0$ on the left and right, respectively.
\begin{prop}\label{prop: local abs continuity wrt Bessel processes}
Fix $j, j \pm 1 \in \set{1,\ldots,2N}$, and
suppose that $\PartF$ is a positive solution 
to~\eqref{eq: multiple SLE PDEs} 
for which the limit of $|x_{j}-x_{j \pm 1}|^{-\Delta} \times \PartF(x_1,\ldots,x_{2N})$ as $x_j , x_{j\pm1} \to \xi$
exists and is non-zero, for all choices of 
$( x_k )_{k \neq j , j \pm 1}$
and 
$\xi \in (x_{j - \frac{3}{2} \pm \half} , x_{j + \frac{3}{2} \pm \half})$.
Let $\left( X^{(1)}_t, \ldots, X^{(j-1)}_t,X_t,X^{(j+1)}_t, \ldots, X^{(2N)}_t \right)_{t\in[0,\tau]}$
be a solution to the SDE~\eqref{eq: SDE}, where 
$b^{(j)}=\kappa\,\frac{\partial_j\PartF}{\PartF}$,
and $\tau$ is any stopping time such that for some $\varepsilon>0, R>0$ and
for all $t\leq\tau$, we have
$|X^{(k)}_t - X^{(l)}_t| \geq \varepsilon$ for all $k\neq l$, and
$|X^{(k)}_t - X_t| \geq \varepsilon$ for all $k\neq j \pm 1$,
and $|X^{(k)}_t| \leq R$ for all $k$, and $|X_t| \leq R$.
Then the law of the difference process $Y_t=|X_t-X^{(j\pm1)}_t|$, $t \in [0,\tau]$, 
is absolutely continuous with respect to the law of a linear time change of a Bessel process 
of dimension $\delta=1+2\Delta+\frac{4}{\kappa}$. According to the two possible cases, we have the following:
\begin{itemize}
\item If the limit \eqref{eq: generic asymptotics limit} 
is non-zero, then $\Delta=\frac{\kappa-6}{\kappa}$ and $\delta = 3 - \frac{8}{\kappa} < 2$,
and solutions to the SDE~\eqref{eq: SDE} exist
up to stopping times $\tau$ at which we have $Y_\tau = 0$ with positive probability.
\item  If the limit \eqref{eq: generic asymptotics limit} vanishes, then the limit
\eqref{eq: subleading asymptotics limit} exists and is non-zero, and we have
$\Delta=\frac{2}{\kappa}$ and $\delta = 1 + \frac{8}{\kappa} > 2$, so the distance $Y_t$ remains positive for all $t \in [0,\tau]$, almost surely.
\end{itemize}
\end{prop}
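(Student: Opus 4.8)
The plan is to perform the standard $\SLE$-martingale computation that identifies the stochastic differential equation satisfied by the difference process $Y_t = |X_t - X^{(j\pm1)}_t|$, and then to compare it with the SDE of a Bessel process via Girsanov's theorem. First I would set up the SDE system \eqref{eq: SDE} with the drift $b^{(j)} = \kappa\,\partial_j \PartF / \PartF$ and apply It\^o's formula to $Y_t = X^{(j+1)}_t - X_t$ (for the right neighbor; the left case is symmetric). Since $X^{(j+1)}_t$ has no Brownian part, $\ud Y_t = \sqrt{\kappa}\,\ud\tilde B_t + \big(\tfrac{2}{Y_t} - b^{(j)}\big)\,\ud t$ up to a sign, where the drift of $b^{(j)}$ is evaluated along the flow. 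After a linear time change $t \mapsto \kappa t$ (so that the Brownian coefficient becomes $1$), the law of $Y$ becomes comparable to a process $\ud Y_t = \ud B_t + \tfrac{\delta-1}{2 Y_t}\,\ud t$ provided we can control the singular part of the drift near $Y_t = 0$. The key step is therefore to extract the leading singularity of $b^{(j)} = \kappa\,\partial_j\PartF/\PartF$ as $x_j, x_{j+1} \to \xi$.

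The extraction of the singularity is where the hypothesis on the limit of $|x_j - x_{j\pm1}|^{-\Delta}\PartF$ enters. Writing $\PartF(x_1,\ldots,x_{2N}) = (x_{j+1}-x_j)^{\Delta}\,\big(G(\xi; (x_k)_{k\neq j,j\pm1}) + o(1)\big)$ with $G$ smooth, positive, and nonvanishing by assumption, one computes that $\partial_j \log\PartF = -\Delta/(x_{j+1}-x_j) + O(1)$ and similarly for $\partial_{j+1}$. Substituting into the drift of $\ud Y_t$, the singular term is $\tfrac{2}{Y_t} - (-\kappa)\cdot\tfrac{-\Delta}{Y_t}\cdot\tfrac{1}{\kappa}$ arranged so that after the time change the effective drift coefficient is $\tfrac{\delta-1}{2 Y_t}$ with $\delta - 1 = 2\Delta + \tfrac{4}{\kappa}$, i.e.\ $\delta = 1 + 2\Delta + \tfrac{4}{\kappa}$ as asserted. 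The remaining, non-singular part of the drift (the $O(1)$ terms together with the contributions of the other marked points) is bounded on $\{|X^{(k)}_t|\le R,\ |X_t|\le R,\ |X^{(k)}_t - X^{(l)}_t|\ge\varepsilon,\ |X^{(k)}_t - X_t|\ge\varepsilon\ \text{for}\ k\neq j\pm1\}$, so by Girsanov's theorem (using a Novikov-type bound valid because the extra drift is bounded on the relevant event up to $\tau$) the law of $Y$ on $[0,\tau]$ is absolutely continuous with respect to that of a (time-changed) Bessel process of dimension $\delta$.

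To finish, I would plug in the two concrete cases. If the limit \eqref{eq: generic asymptotics limit} is nonzero, then by definition $\Delta = \tfrac{\kappa-6}{\kappa}$, giving $\delta = 1 + 2\cdot\tfrac{\kappa-6}{\kappa} + \tfrac{4}{\kappa} = 3 - \tfrac{8}{\kappa}$, which is $<2$ for $\kappa < 8$; a Bessel process of dimension $<2$ hits $0$ with positive probability, and by the absolute continuity this transfers to $Y$, so one may choose a stopping time $\tau$ with $\mathbb P(Y_\tau = 0) > 0$, and the SDE \eqref{eq: SDE} is well posed up to such $\tau$. If instead \eqref{eq: generic asymptotics limit} vanishes, then the cited refinement of the correspondence (or \cite[Theorem~2, part~1]{FK-solution_space_for_a_system_of_null_state_PDEs_4}) guarantees that \eqref{eq: subleading asymptotics limit} is nonzero, so $\Delta = \tfrac{2}{\kappa}$ and $\delta = 1 + \tfrac{4}{\kappa} + \tfrac{4}{\kappa} = 1 + \tfrac{8}{\kappa} > 2$; a Bessel process of dimension $>2$ never hits $0$, hence $Y_t > 0$ for all $t\in[0,\tau]$ almost surely, again by absolute continuity. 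The main obstacle I anticipate is making the singular-drift expansion of $\partial_j\log\PartF$ rigorous \emph{uniformly} as $x_j, x_{j+1}\to\xi$ while the remaining variables stay in a compact region bounded away from the diagonal — i.e.\ upgrading the pointwise asymptotics of $\PartF$ to an asymptotic expansion that can be differentiated in $x_j$; this is exactly the kind of estimate the ``spin chain~--~Coulomb gas correspondence'' of \cite{KP-conformally_covariant_boundary_correlation_functions_with_a_quantum_group} is designed to provide, and I would invoke it (or the PDE-based argument of Flores--Kleban) rather than re-prove it.
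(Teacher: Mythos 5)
Your overall architecture (Girsanov comparison with a Bessel process, and the arithmetic $\delta = 1+2\Delta+\tfrac{4}{\kappa}$, giving $3-\tfrac{8}{\kappa}$ resp.\ $1+\tfrac{8}{\kappa}$ in the two cases) matches the paper, but your central step has a genuine gap that you yourself flag and then defer. You propose to extract the singular part of the drift by writing $\partial_j \log \PartF = -\Delta/(x_{j+1}-x_j) + O(1)$, i.e.\ by \emph{differentiating} the asymptotic expansion $\PartF \sim (x_{j+1}-x_j)^{\Delta} G$. The hypothesis of the proposition only gives the existence and non-vanishing of the limit of $|x_j - x_{j\pm 1}|^{-\Delta}\PartF$; it says nothing about $\partial_j \PartF$, and a pointwise asymptotic of a function cannot in general be differentiated. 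Since the proposition is stated for an arbitrary positive solution of \eqref{eq: multiple SLE PDEs} with this limit property (not only for functions produced by the spin chain--Coulomb gas correspondence), invoking that correspondence to supply the differentiated expansion changes the statement being proved. Moreover your claim that the remaining $O(1)$ part of the drift is bounded on the stopped event also rests on this unproven expansion, so the Novikov/Girsanov step is not yet justified.

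The paper's proof is organized precisely to avoid ever touching $\partial_j\PartF$. It introduces three measures on the driving process: the multiple-$\SLE$ law $\SLEmeasure$, the chordal law $\SLEmeasure_\chordal$, and an auxiliary law $\tilde{\SLEmeasure}$ defined by reweighting $\SLEmeasure_\chordal$ with the \emph{explicit} martingale $\tilde M_t = (g_t'(X^{(j\mp1)}_0))^{h}\,(X_t - X^{(j\mp1)}_t)^{\Delta}$, while $\SLEmeasure$ is the reweighting by $M_t = \prod_{i\neq j}(g_t'(X^{(i)}_0))^{h}\,\PartF(X^{(1)}_t,\ldots,X^{(2N)}_t)$. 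The stated limit hypothesis is used only to conclude that the ratio $M_t/\tilde M_t$ is bounded away from $0$ and $\infty$ up to $\tau$ (no derivatives needed), whence $\SLEmeasure$ and $\tilde{\SLEmeasure}$ are mutually absolutely continuous; the Girsanov computation producing the drift $\kappa\Delta/(X_t - X^{(j\mp1)}_t)$ and hence the Bessel SDE for $Y_t$ is then carried out for $\tilde M_t$ alone, where it is exact rather than asymptotic. If you want to salvage your route, you should replace the drift expansion by this martingale-ratio comparison; otherwise you must separately prove that the asymptotics of $\PartF$ can be differentiated, which is a substantially harder analytic statement than the hypothesis you are given.
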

\begin{proof}
We prove the case where the sign $\pm$ is $-\,$; the other is similar.
We consider three probability measures $\SLEmeasure$, $\SLEmeasure_\chordal$, 
and $\tilde{\SLEmeasure}$ on $2N$-component stochastic processes
$\left( X^{(1)}_t, \ldots, X^{(j-1)}_t,X_t,X^{(j+1)}_t, \ldots, X^{(2N)}_t \right)_{t\in[0,\tau]}$. 
For each of the three, the path of the $j$:th component up to time $t$ 
is taken to determine the other components at time $t$ by $X^{(i)}_t=g_t(X^{(i)}_0)$,
where $(g_s)_{s \in [0,t]}$ is the solution to the Loewner equation \eqref{eq: Loewner equation}
with driving process $\driving_s = X_s$. The measures $\SLEmeasure$, $\SLEmeasure_\chordal$ 
and $\tilde{\SLEmeasure}$ thus essentially only differ by the law they assign to $(X_t)_{t \in [0,\tau]}$.

The statement of the proposition concerns the measure $\SLEmeasure$ 
defined by the SDE~\eqref{eq: SDE}. The second measure $\SLEmeasure_\chordal$ is taken to be the 
chordal $\SLEk$ measure, under which the driving process is $X_t=\sqrt{\kappa} \, B_t + X_0$.
Note that there exists a sequence $(\tau_n)_{n \in \bN}$ of stopping times increasing to $\tau$,
such that up to $\tau_n$ the measure 
$\SLEmeasure$ is absolutely continuous with respect to 
$\SLEmeasure_\chordal$, with Radon-Nikodym derivative $M_{\tau_n} / M_0$ 
given in terms of the
$\SLEmeasure_\chordal$-martingale
\begin{align*}
M_t=\prod_{i\neq j}(g_t'(X^{(i)}_0))^{\frac{6-\kappa}{2\kappa}}\;\times\;\PartF(X^{(1)}_t,\ldots,X^{(2N)}_t).
\end{align*}
Finally, the third measure $\tilde{\SLEmeasure}$ is defined up to time $\tau_n$ by its
Radon-Nikodym derivative $\tilde{M}_{\tau_n} / \tilde{M}_0$ with respect to 
$\SLEmeasure_\chordal$, where
\begin{align*}
\tilde{M}_t=(g_t'(X^{(j-1)}_0))^{\frac{6-\kappa}{2\kappa}}\;\times\;(X_t-X^{(j-1)}_t)^\Delta.
\end{align*}
By the assumption on $\PartF$, the ratio $M_t / \tilde{M}_t$ is bounded away
from $0$ and $\infty$ for all $t \leq \tau$, so we see that the probability measures $\SLEmeasure$
and $\tilde{\SLEmeasure}$ are mutually absolutely continuous. 
Under $\tilde{\SLEmeasure}$, we have, by Girsanov's theorem,
\begin{align*}
\ud X_t=\sqrt{\kappa}\,\ud \tilde{B}_t\,+\frac{\kappa\Delta}{X_t-X^{(j-1)}_t}\,\ud t,
\end{align*}
where $\tilde{B}_t$ is a $\tilde{\SLEmeasure}$-Brownian motion.
Consequently, the difference $Y_t=X_t-X^{(j-1)}_t$ satisfies
the SDE
\begin{align*}
\ud Y_t=\sqrt{\kappa}\,\ud \tilde{B}_t\,+\frac{\kappa\Delta+2}{Y_t}\,\ud t.
\end{align*}
By this SDE, under the measure $\tilde{\SLEmeasure}$, which is absolutely
continuous with respect to the original measure $\SLEmeasure$,
the process $Y_t$ is a linear time change of a Bessel process of dimension
$\delta=1+2\Delta+\frac{4}{\kappa}$. The last part of the statement
is essentially routine in view of
the fact that if $\Delta=\frac{2}{\kappa}$, then the dimension is
$\delta=1+\frac{8}{\kappa}>2$, and if $\Delta=\frac{\kappa-6}{\kappa}$,
then $\delta=3-\frac{8}{\kappa}<2$.
\end{proof}

%

\begin{prop}\label{prop: cascade relation}
Suppose $\PartF$ is a positive solution 
to~\eqref{eq: multiple SLE PDEs} 
such that the limit 
\begin{align}\label{eq: limit of partition functions}
\lim_{x_j,x_{j+1}\to\xi}\frac{\PartF(x_1,\ldots,x_{2N})}{|x_{j+1}-x_j|^{\frac{\kappa-6}{\kappa}}}=:\tilde{\PartF}(x_1,\ldots,x_{j-1},x_{j+2},\ldots,x_{2N})
\end{align}
exists for any $\xi \in (x_{j-1},x_{j+2})$, and assume that the limit function $\tilde{\PartF}$ is continuous and positive. 
Fix the points $(x_1,\ldots,x_{j-1},x_{j+2},\ldots,x_{2N})$ and the localization
neighborhoods $(U_1,\ldots,U_{j-1},U_{j+2},\ldots,U_{2N})$ so that they do not contain
a chosen point ${\xi \in (x_{j-1},x_{j+2})}$.
Then, as $x_{j+1},x_j\to\xi$,
the marginal law of the curves $(\gamma^{(1)}, \ldots, \gamma^{(j-1)},\gamma^{(j+2)}, \ldots, \gamma^{(2N)})$
under $\PR^{(\bH;x_1,\ldots,x_{2N})}_{(U_1,\ldots,U_{2N})}$
converges weakly to the measure $\PR^{(\bH;x_1,\ldots,x_{j-1},x_{j+2},\ldots,x_{2N})}_{(U_1,\ldots,U_{j-1},U_{j+2},\ldots,U_{2N})}$
obtained by the sampling procedure~\ref{proc: one by one} with the function $\tilde{\sZ}$.
\end{prop}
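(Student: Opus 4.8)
The plan is to reduce the convergence statement to a question about the marginal law of a single curve, via Procedure~\ref{proc: one by one}, and then apply the absolute continuity description of marginals from Equation~\eqref{eq: marginals} together with the hypothesis~\eqref{eq: limit of partition functions}. First, I would use the independence of Procedure~\ref{proc: one by one} of the sampling order: sample the curves in an order $p$ where the indices $j$ and $j+1$ come \emph{last}, so that the curves $(\gamma^{(1)},\ldots,\gamma^{(j-1)},\gamma^{(j+2)},\ldots,\gamma^{(2N)})$ are sampled first, one by one. At each step in this initial block of $2N-2$ samplings, the curve being grown, say $\gamma^{(i)}$, has a marginal law given by a Radon--Nikodym derivative of the form~\eqref{eq: marginals} with respect to an initial segment of a chordal $\SLEk$. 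The dependence on the limit $x_j,x_{j+1}\to\xi$ enters only through the arguments $x_j,x_{j+1}$ of $\PartF$ in the ratio appearing in~\eqref{eq: marginals}, or equivalently through the Girsanov martingale~\eqref{eq: martingale}.

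The key step is then to show that the relevant Girsanov martingale converges. Concretely, for the first curve $\gamma^{(i)}$ grown by the localized Loewner flow $(g_t)$, the reweighting martingale is
\[
M_t \;=\; \prod_{k\neq i}\big(g_t'(x_k)\big)^h \;\times\; \PartF\big(g_t(x_1),\ldots,g_t(x_{i-1}),\sqrt{\kappa}B_t+x_i,g_t(x_{i+1}),\ldots,g_t(x_{2N})\big).
\]
Dividing and multiplying by $|g_t(x_{j+1})-g_t(x_j)|^{\frac{\kappa-6}{\kappa}}$, I would write $M_t$ as a product of $\big(g_t'(x_j)g_t'(x_{j+1})\big)^{h}|g_t(x_{j+1})-g_t(x_j)|^{\frac{\kappa-6}{\kappa}}$, which converges as $x_j,x_{j+1}\to\xi$ to $\big(g_t'(\xi)\big)^{2h-\frac{\kappa-6}{\kappa}}=\big(g_t'(\xi)\big)^{0}=1$ by the chain-rule identity $\frac{g_t(z)-g_t(w)}{z-w}\to g_t'(\xi)$, and the factor
\[
\prod_{k\neq i,j,j+1}\big(g_t'(x_k)\big)^h \;\times\; \frac{\PartF\big(\ldots,g_t(x_j),g_t(x_{j+1}),\ldots\big)}{|g_t(x_{j+1})-g_t(x_j)|^{\frac{\kappa-6}{\kappa}}},
\]
which converges to the martingale $\widetilde M_t$ built from $\widetilde\PartF$ by the hypothesis~\eqref{eq: limit of partition functions} (applied at the image point $g_t(\xi)$, which stays in the corresponding interval as long as the curve $\gamma^{(i)}$ avoids the localization neighborhood of $\xi$). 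Since the localization neighborhoods are chosen to avoid $\xi$, the harmonic-measure estimates guarantee that all image points stay uniformly separated up to the stopping time, so the convergence of the Radon--Nikodym derivatives is uniform and the reweighted laws converge weakly. I would then propagate this through the remaining $2N-3$ curves by induction, using the domain Markov property (DMP) and the fact that at each subsequent step the conformal image of the configuration still has $g_t(\xi)$ and the (images of) the marked points uniformly separated, so the same argument applies with $\PartF$ and $\widetilde\PartF$ replaced by their pushforwards under the accumulated conformal maps.

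The main obstacle I anticipate is the uniformity of the convergence in~\eqref{eq: limit of partition functions} as the marked points are transported by the (random) Loewner flows: one needs that the convergence $\PartF/|x_{j+1}-x_j|^{(\kappa-6)/\kappa}\to\widetilde\PartF$ holds locally uniformly in the \emph{other} variables and in the limit point $\xi$, uniformly over the range of image configurations that can occur before the stopping time. This should follow from the explicit form $\PartF=\sF[v]$ of the solutions and the (ASY) part of Theorem~\ref{thm: SCCG correspondence special case}, which gives the asymptotics with an error term that is controlled locally uniformly; combined with the compactness of the set of admissible configurations (guaranteed by the localization neighborhoods avoiding $\xi$ and by the harmonic-measure separation estimates), this yields the needed uniform convergence of the martingales and hence weak convergence of the curve laws. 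A secondary technical point is to check that the limiting collection of reweighted laws is exactly the one produced by Procedure~\ref{proc: one by one} applied to $\widetilde\PartF$ — but this is immediate once the martingales are identified, since Procedure~\ref{proc: one by one} is \emph{defined} through exactly these reweightings.
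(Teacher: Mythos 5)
Your plan follows essentially the same route as the paper's proof: reduce to the single-curve marginals of the sampling procedure, observe that the Radon--Nikodym derivative \eqref{eq: marginals} converges because the conformal factor $g'(x_j)^{h}\,g'(x_{j+1})^{h}\,\bigl(\tfrac{x_{j+1}-x_j}{g(x_{j+1})-g(x_j)}\bigr)^{2h}$ tends to $1$ while the ratio of the $\PartF$'s tends to the corresponding ratio of the $\tilde{\PartF}$'s by \eqref{eq: limit of partition functions}, establish that this convergence is uniform, and propagate through the remaining curves using (DMP) and the order-independence of Procedure~\ref{proc: one by one}. Two remarks. First, your exponent bookkeeping has a slip: since $h=\frac{6-\kappa}{2\kappa}$ one has $\frac{\kappa-6}{\kappa}=-2h$, and the quantity that tends to $1$ is $(g'(x_j)g'(x_{j+1}))^{h}$ times the \emph{ratio} $\bigl(|g(x_{j+1})-g(x_j)|/|x_{j+1}-x_j|\bigr)^{\frac{\kappa-6}{\kappa}}$ --- the compensating power of $|x_{j+1}-x_j|$ coming from the denominator $M_0$ (equivalently from $\PartF$ at time zero) is essential, as $|g(x_{j+1})-g(x_j)|^{\frac{\kappa-6}{\kappa}}$ alone degenerates; the net exponent is $2h+\frac{\kappa-6}{\kappa}=0$, not $2h-\frac{\kappa-6}{\kappa}$. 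Second, for the uniformity of \eqref{eq: limit of partition functions} over the random image configurations you appeal to the explicit form $\PartF=\sF[v]$ and the (ASY) part of Theorem~\ref{thm: SCCG correspondence special case}, whereas the paper derives it purely from the stated hypotheses: the assumed continuity of $\tilde{\PartF}$ allows one to extend the ratio $\PartF\,/\,|x_{j+1}-x_j|^{\frac{\kappa-6}{\kappa}}$ continuously to the relevant boundary piece of $\cl{\chamber_{2N}}$ and invoke uniform continuity on compacts. The paper's version keeps the proposition valid for an arbitrary positive solution satisfying \eqref{eq: limit of partition functions}, not only for those produced by the quantum group correspondence; your variant would either need that extra assumption or should be replaced by this compactness argument.
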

\begin{proof}
First note that the limit~\eqref{eq: limit of partition functions} is uniform on compact subsets of the positions
of the variables $(x_1, \ldots, x_{j-1}, \xi, x_{j+2},\ldots,x_{2N})$.
Namely, let $\chamber_{2N} \subset \bR^{2N}$ be the domain of definition of $\PartF$, and
interpret a compact subset $K$ of the positions of these variables as a subset of its
boundary $K \subset \bdry \chamber_{2N}$, by using $\xi$ as the value for both variables $x_j , x_{j+1}$.
In the closure $\cl{\chamber_{2N}}$ we can take
a compact $K'$ which contains a small open neighborhood $G$ of $K$, i.e. $K \subset G \subset K'$.
Now because $\tilde{\PartF}$ is continuous,
the limit~\eqref{eq: limit of partition functions} can be used to extend the ratio
$\PartF \, \big/ \, |x_{j+1}-x_j|^{\frac{\kappa-6}{\kappa}}$ continuously to $\bdry \chamber_{2N} \cap K'$.
This extension is uniformly continuous on $K$, so the limit~\eqref{eq: limit of partition functions} is
uniform over all choices of $(x_1, \ldots, x_{j-1}, \xi, x_{j+2},\ldots,x_{2N}) \in K$.

Consider the law of the $k$:th curve $\gamma^{(k)}$, $k \neq j, j+1$.
Its Radon-Nikodym derivative with respect to the
chordal $\SLEk$ is given by \eqref{eq: marginals}. But the expression~\eqref{eq: marginals}
is uniformly close to the corresponding one with $\tilde{\PartF}$, by the uniformity on compacts of
the limits $ g'(x_j)^{h} \, g'(x_{j+1})^{h} \; \big( \frac{x_{j+1}-x_j}{g(x_{j+1})-g(x_j)} \big)^{2h} \to 1$
and \eqref{eq: limit of partition functions}.
Since the Radon-Nikodym derivatives are uniformly close, the marginal laws of
the $k$:th curve are close in the topology of weak convergence.
To handle the the joint law of all curves other than $\gamma^{(j)}$, $\gamma^{(j+1)}$, apply the
same argument also in further steps of the sampling procedure~\ref{proc: one by one}.
\end{proof}


\bibliographystyle{annotate}

\newcommand{\etalchar}[1]{$^{#1}$}

\end{document}